\begin{document}

\title{\textbf{Taming the Infinite Chase}:\\ \textbf{Query Answering}\\
  \textbf{under Expressive Relational Constraints}\footnote{This is a pre-print
    of the paper appearing in the Journal of Artificial Intelligence Research,
    vol.~48, pages~115--174, 2013, available at
    \texttt{http://www.jair.org/papers/paper3873.html}}}

\author{Andrea Cal\`\i$^{2,4}$, Georg Gottlob$^{1,4}$ and Michael Kifer$^3$\\[2em]
%
 \begin{minipage}[t]{49mm}
   \small \centering
   $^1$Dept.~of Computer Science\\
   University of Oxford, UK\\
 \end{minipage}
\hfill
 \begin{minipage}[t]{52mm}
   \small \centering
   $^2$Dept.~of Computer Science\\
   Birkbeck, University of London, UK\\
 \end{minipage}\\[1,5em]
\begin{minipage}[t]{50mm}
  \small \centering
  $^3$Dept.~of Computer Science \\
  Stony Brook University, USA
\end{minipage}\hfill
\begin{minipage}[t]{62mm}
  \small \centering
  $^4$Oxford-Man Inst.~of Quantitative Finance\\
  University of Oxford, UK\\
\end{minipage}\\[1.5em]
\small \texttt{andrea{@}dcs.bbk.ac.uk}\\[-1mm]
\small \texttt{georg.gottlob{@}cs.ox.ac.uk}\\[-1mm]
\small \texttt{kifer{@}cs.stonybrook.edu} }

\date{2013}

\maketitle
\thispagestyle{empty}

\sloppy


\newcommand{\A}{\mathcal{A}} \newcommand{\B}{\mathcal{B}}
\newcommand{\C}{\mathcal{C}} \newcommand{\D}{\mathcal{D}}
\newcommand{\E}{\mathcal{E}} \newcommand{\F}{\mathcal{F}}
\newcommand{\G}{\mathcal{G}} \renewcommand{\H}{\mathcal{H}}
\newcommand{\I}{\mathcal{I}} \newcommand{\J}{\mathcal{J}}
\newcommand{\K}{\mathcal{K}} \renewcommand{\L}{\mathcal{L}}
\newcommand{\M}{\mathcal{M}} \newcommand{\N}{\mathcal{N}}
\renewcommand{\O}{\mathcal{O}} \renewcommand{\P}{\mathcal{P}}
\newcommand{\Q}{\mathcal{Q}} \newcommand{\R}{\mathcal{R}}
\renewcommand{\S}{\mathcal{S}} \newcommand{\T}{\mathcal{T}}
\newcommand{\U}{\mathcal{U}} \newcommand{\V}{\mathcal{V}}
\newcommand{\W}{\mathcal{W}} \newcommand{\X}{\mathcal{X}}
\newcommand{\Y}{\mathcal{Y}} \newcommand{\Z}{\mathcal{Z}}


\newcommand{\setone}[2][1]{\set{#1\cld #2}}
\newcommand{\eset}{\emptyset}
\newcommand{\ol}[1]{\overline{#1}}                
\newcommand{\ul}[1]{\underline{#1}}               
\newcommand{\uls}[1]{\underline{\raisebox{0pt}[0pt][0.45ex]{}#1}}

\newcommand{\ra}{\rightarrow}
\newcommand{\Ra}{\Rightarrow}
\newcommand{\la}{\leftarrow}
\newcommand{\La}{\Leftarrow}
\newcommand{\lra}{\leftrightarrow}
\newcommand{\Lra}{\Leftrightarrow}
\newcommand{\lora}{\longrightarrow}
\newcommand{\Lora}{\Longrightarrow}
\newcommand{\lola}{\longleftarrow}
\newcommand{\Lola}{\Longleftarrow}
\newcommand{\lolra}{\longleftrightarrow}
\newcommand{\Lolra}{\Longleftrightarrow}
\newcommand{\ua}{\uparrow}
\newcommand{\Ua}{\Uparrow}
\newcommand{\da}{\downarrow}
\newcommand{\Da}{\Downarrow}
\newcommand{\uda}{\updownarrow}
\newcommand{\Uda}{\Updownarrow}


\newcommand{\incl}{\subseteq}
\newcommand{\imp}{\rightarrow}
\newcommand{\deq}{\doteq}
\newcommand{\dleq}{\dot{\leq}}                   


\newcommand{\per}{\mbox{\bf .}}                  

\newcommand{\cld}{,\ldots,}                      
\newcommand{\ld}[1]{#1 \ldots #1}                 
\newcommand{\cd}[1]{#1 \cdots #1}                 
\newcommand{\lds}[1]{\, #1 \; \ldots \; #1 \,}    
\newcommand{\cds}[1]{\, #1 \; \cdots \; #1 \,}    

\newcommand{\dd}[2]{#1_1,\ldots,#1_{#2}}             
\newcommand{\ddd}[3]{#1_{#2_1},\ldots,#1_{#2_{#3}}}  
\newcommand{\dddd}[3]{#1_{11}\cld #1_{1#3_{1}}\cld #1_{#21}\cld #1_{#2#3_{#2}}}

\newcommand{\ldop}[3]{#1_1 \ld{#3} #1_{#2}}   
\newcommand{\cdop}[3]{#1_1 \cd{#3} #1_{#2}}   
\newcommand{\ldsop}[3]{#1_1 \lds{#3} #1_{#2}} 
\newcommand{\cdsop}[3]{#1_1 \cds{#3} #1_{#2}} 


\newcommand{\quotes}[1]{{\lq\lq #1\rq\rq}}
\newcommand{\set}[1]{\{#1\}}                      
\newcommand{\Set}[1]{\left\{#1\right\}}
\newcommand{\bigset}[1]{\Bigl\{#1\Bigr\}}
\newcommand{\bigmid}{\Big|}
\newcommand{\card}[1]{|{#1}|}                     
\newcommand{\Card}[1]{\left| #1\right|}
\newcommand{\cards}[1]{\sharp #1}
\newcommand{\sub}[1]{[#1]}
\newcommand{\tup}[1]{\langle #1\rangle}            
\newcommand{\Tup}[1]{\left\langle #1\right\rangle}



\newcommand{\nop}[1]{}

 
\newcommand{\bool}{\mathit{bool}} 

\newcommand{\uu}{\vett{u}} 
\newcommand{\UU}{\vett{U}} 
\newcommand{\vv}{\vett{v}} 
\newcommand{\VV}{\vett{V}} 
\newcommand{\ww}{\vett{w}} 
\newcommand{\WW}{\vett{W}} 
\renewcommand{\ss}{\vett{s}} 
\renewcommand{\SS}{\vett{S}} 
\newcommand{\vz}{\vett{0}} 
\newcommand{\vo}{\vett{1}} 

\newcommand{\whead}{\mathit{head}}
\newcommand{\zero}{\mathit{zero}}
\newcommand{\one}{\mathit{one}}
\newcommand{\initc}{\mathit{init}}
\newcommand{\next}{\mathit{next}}
\newcommand{\config}{\mathit{config}}
\newcommand{\accept}{\mathit{accept}}
\newcommand{\accepting}{\mathit{accepting}}
\newcommand{\blankp}{\mathit{blank}}
\newcommand{\succp}{\mathit{succ}}
\newcommand{\start}{\mathit{start}}
\newcommand{\state}{\mathit{state}}
\newcommand{\existential}{\mathit{existential}}
\newcommand{\universal}{\mathit{universal}}
\newcommand{\zeroone}{\mathit{zeroone}}


\newcommand{\datalogpm}{Datalog$^\pm$}

\newcommand{\DB}{\mathit{DB}} 
\newcommand{\wrt}[0]{with respect to}

\renewcommand{\emptyset}{\varnothing} 
\renewcommand{\leq}{\leqslant} 
\renewcommand{\geq}{\geqslant} 

\newcommand{\ifdirection}{\textsl{``If''}. {}}
\newcommand{\onlyifdirection}{\textsl{``Only if''.} {}}

\newcommand{\mgu}[1]{\mathrm{mgu}(#1)}



\newcommand{\problem}[1]{\textsf{#1}}
\newcommand{\bcqans}{\problem{BCQAns}}


\newenvironment{andrea}{}{}

\newcommand{\noteac}[1]
{\noindent\framebox{\parbox{.96\columnwidth}{\textbf{Andrea:}
        #1}}} 

\newcommand{\notegg}[1]
{\noindent\framebox{\parbox{.96\columnwidth}{\textbf{Georg:}
        #1}}} 

\newcommand{\notemk}[1]
{\noindent\framebox{\parbox{.96\columnwidth}{\textbf{Michael:}
        #1}}}


\newcommand{\rel}[1]{\mathsf{#1}}
\newcommand{\attr}[1]{\mathit{#1}}
\newcommand{\const}[1]{\mathit{#1}}
\newcommand{\vett}[1]{\mathbf{#1}}

\newcommand{\ext}[2]{#1^{#2}}


\newcommand{\sol}[2]{\mathrm{sol}(#1,#2)} 




\newcommand{\dom}{\Delta} 
\newcommand{\freshdom}{\Delta_N}
\newcommand{\variables}{\Delta_V} 
\newcommand{\candom}[1]{\Delta_{#1}}

\newcommand{\adom}[1]{\mathit{dom}(#1)} 
\newcommand{\hb}{\mathit{HB}}


\newcommand{\dep}{\Sigma}
\newcommand{\tdep}{\Sigma_T}
\newcommand{\edep}{\Sigma_E}
\newcommand{\fdep}{\Sigma_F}
\newcommand{\kdep}{\Sigma_K}

\newcommand{\flldep}{\Sigma_\mathit{FLL}}
\newcommand{\flldepfull}{\Sigma_\mathit{FLL}^\mathit{full}}

\newcommand{\key}[1]{\mathit{key}(#1)}
\newcommand{\isa}[1]{\mathit{ISA}}

\newcommand{\satisfy}{\models}


\newcommand{\vars}[1]{\mathit{vars}(#1)}
\newcommand{\varshb}[1]{\mathit{vars}^\bot(#1)}
\newcommand{\varsfresh}[1]{\mathit{vars}^+(#1)}
\newcommand{\head}[1]{\mathit{head}(#1)}
\newcommand{\body}[1]{\mathit{body}(#1)}
\newcommand{\conj}[1]{\mathit{conj}(#1)}

\newcommand{\bodyp}[0]{\mathit{body}}  
\newcommand{\headp}[0]{\mathit{head}}

\newcommand{\ans}[3]{\mathit{ans}(#1,#2,#3)} 

\newcommand{\ndv}{\star}

\newcommand{\atom}[1]{\underline{#1}}

\newcommand{\pred}[0]{\mathit{pred}}


\newcommand{\data}[0]{\mathsf{data}}
\newcommand{\member}[0]{\mathsf{member}}
\newcommand{\mandatory}[0]{\mathsf{mandatory}}
\newcommand{\fsub}[0]{\mathsf{sub}}
\newcommand{\subattribute}[0]{\mathsf{subattribute}}
\newcommand{\funct}[0]{\mathsf{funct}}
\newcommand{\type}[0]{\mathsf{type}}


\newcommand{\rcover}{{$\R$}-cover}




\newcommand{\seq}[1]{\bar{#1}}


\newcommand{\chase}[2]{\mathit{chase}(#1,#2)}
\newcommand{\pchase}[3]{\mathit{chase}^{[#3]}(#1,#2)}
\newcommand{\ochase}[2]{\mathit{Ochase}(#1,#2)}
\newcommand{\pochase}[3]{\mathit{Ochase}^{[#3]}(#1,#2)}
\newcommand{\rchase}[2]{\mathit{Rchase}(#1,#2)}
\newcommand{\prchase}[3]{\mathit{Rchase}^{[#3]}(#1,#2)}
%
\newcommand{\chasehb}[2]{\mathit{chase}^{\bot}(#1,#2)}
\newcommand{\chasefresh}[2]{\mathit{chase}^{+}(#1,#2)}
\newcommand{\blockchase}[2]{\mathit{blockchase}(#1,#2)}
\newcommand{\level}[0]{\mathrm{level}}
\newcommand{\freeze}[1]{\mathrm{freeze}(#1)}
\newcommand{\maxlevel}[0]{\delta_{M}}
\newcommand{\gcg}[2]{\mathrm{gcg}(#1,#2)}
\newcommand{\gcf}[2]{\mathrm{gcf}(#1,#2)}
\newcommand{\rgcf}[2]{\mathrm{rgcf}(#1,#2)}
\newcommand{\tw}[1]{\mathrm{tw}(#1)}
\newcommand{\GCF}[0]{\mathrm{gcf}}
\newcommand{\troot}[1]{{#1_\downarrow}}
\newcommand{\trootc}[1]{\nabla#1}
\newcommand{\ggcf}[2]{\mathrm{gcf}[#1,#2]}
\newcommand{\rtrootc}[1]{\nabla^r#1}


\newcommand{\cloud}[3]{\mathit{cloud}(#1,#2,#3)}
\newcommand{\rcloud}[3]{\mathit{rcloud}(#1,#2,#3)}
\newcommand{\clouds}[2]{\mathit{clouds}(#1,#2)}
\newcommand{\cloudsplus}[2]{\mathit{clouds}^+(#1,#2)}
\newcommand{\subclouds}[3]{\mathit{subclouds}(#1,#2,#3)}
\newcommand{\subcloudsplus}[2]{\mathit{subclouds}^+(#1,#2)}

\newcommand{\can}[0]{\mathit{can}}

\newcommand{\Rel}[0]{\mathit{Rel}}


\newcommand{\newatoms}[0]{\mathsf{newatoms}}
\newcommand{\oldproved}[0]{\mathsf{oldproven}}

\newcommand{\tpoint}[0]{\mathsf{Tpoint}}
\newcommand{\subst}[0]{\mathsf{subst}}

\newcommand{\atoms}[1]{\textit{atoms}(#1)}

\newcommand{\acheck}[0]{\mathsf{Acheck}}
\newcommand{\qcheck}[0]{\mathsf{Qcheck}}
\newcommand{\tcheck}[0]{\mathsf{Tcheck}}
\newcommand{\acheckbis}[0]{\mathsf{Acheck2}}
\newcommand{\qcheckbis}[0]{\mathsf{Qcheck2}}
\newcommand{\tcheckbis}[0]{\mathsf{Tcheck2}}
\newcommand{\racheck}[0]{\mathsf{rAcheck}}
\newcommand{\rqcheck}[0]{\mathsf{rQcheck}}
\newcommand{\rtcheck}[0]{\mathsf{rTcheck}}

\newcommand{\pol}[0]{\mathit{pol}}


\newcommand{\ourpar}[1]{\textsc{#1}}


\newcommand{\prog}{\Pi}      

\newcommand{\guard}[1]{\Pi}      

\newcommand{\limp}{\ra} 
\newcommand{\pmil}{\la} 


\newcommand{\talph}{\Lambda} 
\newcommand{\init}{\kappa}   
\newcommand{\blank}{\flat}   

\newcommand{\fll}{F-Logic Lite} 

\newcommand{\select}[1]{\sigma_{\{#1\}}}



\newcounter{cefalo}
\newcounter{sarago}
\newcounter{cefalocont}


\newenvironment{enumcont}
  {\begin{list}{(\roman{cefalo})}{\usecounter{cefalo}
     \labelwidth3em
     \itemsep0cm
     \setcounter{cefalo}{\value{cefalocont}}}}
  {\setcounter{cefalocont}{\value{cefalo}}\end{list}}

\newenvironment{rplist}
  {\begin{list}{\textit{(\roman{cefalo})}}{\usecounter{cefalo}}}
  {\end{list}}

\newenvironment{plist}
  {\begin{list}{\textit{(\arabic{cefalo})}}{\usecounter{cefalo}}}
  {\end{list}}

\newenvironment{aplist}
  {\begin{list}{\textit{(\alph{cefalo})}}{\usecounter{cefalo}}}
  {\end{list}}


\newtheorem{theorem}{Theorem}[section]
\newtheorem{corollary}[theorem]{Corollary}
\newtheorem{proposition}[theorem]{Proposition} 
\newtheorem{lemma}[theorem]{Lemma}

\newtheorem{definitionAux}[theorem]{Definition} 
\newenvironment{definition}{\begin{definitionAux}\upshape 
}{\end{definitionAux}}

\newtheorem{exampleAux}[theorem]{Example}
\newenvironment{example}{\begin{exampleAux}\upshape}
  {\hfill\markempty\end{exampleAux}}





\newenvironment{proofsk}{\noindent\textit{Proof (sketch).\ }}{\hfill\markempty}

\def\qed{\hfill{\qedboxempty}      
  \ifdim\lastskip<\medskipamount \removelastskip\penalty55\medskip\fi}

\def\qedboxempty{\vbox{\hrule\hbox{\vrule\kern3pt
                 \vbox{\kern3pt\kern3pt}\kern3pt\vrule}\hrule}}

\def\qedfull{\hfill{\qedboxfull}   
  \ifdim\lastskip<\medskipamount \removelastskip\penalty55\medskip\fi}

\def\qedboxfull{\vrule height 4pt width 4pt depth 0pt}

\newcommand{\markfull}{\qedboxfull}
\newcommand{\markempty}{\qedboxempty}


\newcommand{\fpbox}[3]{\framebox[#1]{\parbox{#2}{#3}}}



\begin{abstract}
  The \emph{chase} algorithm is a fundamental tool for query evaluation and for
  testing query containment under \emph{tuple-generating dependencies (TGDs)}
  and \emph{equality-generating dependencies (EGDs)}.  So far, most of the
  research on this topic has focused on cases where the chase procedure
  terminates.
  This paper introduces expressive classes of TGDs defined via syntactic
  restrictions: \emph{guarded TGDs (GTGDs)} and \emph{weakly guarded sets of
    TGDs (WGTGDs)}.  For these classes, the chase procedure is not guaranteed
  to terminate and thus may have an infinite outcome. Nevertheless, we prove
  that the problems of conjunctive-query answering and query containment under
  such TGDs are decidable.  We provide decision procedures and tight complexity
  bounds for these problems.  Then we show how EGDs can be incorporated into
  our results by providing conditions under which EGDs do not harmfully
  interact with TGDs and do not affect the decidability and complexity of query
  answering.  We show applications of the aforesaid classes of constraints to
  the problem of answering conjunctive queries in \emph{F-Logic Lite}, 
  an object-oriented ontology language, and in some tractable Description
  Logics.
\end{abstract}


\section{Introduction}
\label{sec:introduction}

This paper studies a simple yet fundamental rule-based language for ontological
reasoning and query answering: the language of \emph{tuple-generating
  dependencies (TGDs)}.  This formalism captures a wide variety of logics that
so far were considered unrelated to each other: the OWL-based languages
$\E\L$~\cite{BaBL05} and DL-Lite~\cite{CDLL*07,ACKZ09} on the one hand and
object-based languages like F-Logic Lite~\cite{cali-kifer-06} on the other.
The present paper is a significant extension of our earlier work \cite{CaGK08},
which has since been applied in other contexts and gave rise to the
\datalogpm~family~\cite{CaGP11} of ontology languages.  The present paper
focuses on the fundamental complexity results underlying one of the key
fragments of this family.  Subsequent work has focused on the study of various
special cases of this formalism~\cite{CaGL12}, their complexity, and extensions
based on other paradigms~\cite{CaGP12}.

Our work is also closely related to the work on query answering and query
containment~\cite{ChMe77}, which are central problems in database theory and
knowledge representation and, in most cases, are reducible to each other.  They
are especially interesting in the presence of integrity constraints---or
\emph{dependencies}, in database parlance.  In databases, query containment has
been used for query optimization and schema
integration~\cite{aho-sagiv-ullman-siam-1979,JoK84,MaLF00}, while in knowledge
representation it is often used for object classification, schema integration,
service discovery, and more~\cite{calvanese-lncs2408-2002,li03software}.

A practically relevant instance of the containment problem was first studied
in~\cite{JoK84} for functional and inclusion dependencies and later
in~\cite{calvanese98decidability}.
Several additional decidability results were obtained by focusing on concrete
applications.  For instance, \cite{CaMa10} considers constraints arising from
Entity-Relationship diagrams, while~\cite{cali-kifer-06} considers constraints
derived from a relevant subset of F-logic~\cite{flogic-new}, called F-Logic
Lite.

Some literature studies variants or subclasses of \emph{tuple-generating
  dependencies (TGDs)} for the purpose of reasoning and query answering.
%
%
A TGD is a Horn-like rule with existential\-ly-quantified variables in the
head.  Some early works on this subject dubbed the resulting language
\emph{Datalog with value invention}~\cite{Mail98,Cabi98}.
More formally, a TGD $\forall \vett{X} \forall \vett{Y} \Phi(\vett{X},\vett{Y})
\ra \exists{\vett{Z}} \Psi(\vett{X},\vett{Z})$ is a first-order formula, where
$\Phi(\vett{X},\vett{Y})$ and $\Psi(\vett{X},\vett{Z})$ are conjunctions of
atoms, called \emph{body} and \emph{head} of the TGD, respectively.  A TGD is
satisfied by a relational instance $B$ if
whenever the body of the TGD is satisfied by $B$ then $B$ also satisfies the
head of the TGD.  It is possible to enforce a TGD that is \emph{not} satisfied
by adding new facts to $B$ so that the head, and thus the TGD itself, will
become satisfied.  These new facts will contain \emph{labeled null values}
(short: \emph{nulls}) in the positions corresponding to variables $\vett{Z}$.
Such nulls are similar to Skolem constants.
The \emph{chase} of a database $D$ in the presence of a set $\dep$ of TGDs is
the process of iterative enforcement of all dependencies in $\dep$, until a
fixpoint is reached.  The result of such a process, which we also call
\emph{chase}, can be infinite and, in this case, this procedure cannot be used
without modifications in decision algorithms.  Nevertheless, the result of a
chase serves as a fundamental theoretical tool for answering queries in the
presence of TGDs~\cite{CaLR03,FKMP05} because it is representative of all
models of $D \cup \dep$.





\medskip

In the present paper, we do not focus on a specific logical theory. Instead, we
tackle the common issue of the possibly non-terminating chase underlying
several of the earlier studies, including the
works~\cite{JoK84,CaMa10,cali-kifer-06}.
All these works study constraints in the language of TGDs and
\emph{equality-generating dependencies (EGDs)} using the chase technique, and
all face the problem that the chase procedure might generate an infinite
result.
We deal with this problem in a much more general way by carving out a very
large class of constraints for which the infinite chase can be ``tamed'', i.e.,
modified so that it would become a decision procedure for query answering.

\medskip

In Section~\ref{sec:decidability}, we define the notions of sets of
\emph{guarded TGDs (GTGDs)} and of \emph{weakly guarded sets of TGDs (WGTGDs)}.
A TGD is guarded if its body contains an atom called \emph{guard} that covers
all variables occurring in the body.  WGTGDs generalize guarded TGDs by
requiring guards to cover only the variables occurring at so-called
\emph{affected} positions (predicate positions that may contain some labeled
nulls generated during the chase).  Note that \emph{inclusion dependencies} (or
IDs) can be viewed as trivially guarded TGDs.  The importance of guards lies in
Theorem~\ref{theo:grid}, which shows that there is a fixed set $\dep_u$ of
GTGDs plus a single non-guarded TGD, such that query evaluation under $\dep_u$
is undecidable.
However, we show that for WGTGDs the (possibly infinite) result of the chase
has finite treewidth (Theorem~\ref{the:wgtgds-decidable}).  We then use this
result together with well-known results about the generalized tree-model
property~\cite{GoGr00,Grae99} to show that evaluating Boolean conjunctive
queries is decidable for WGTGDs (and thus also for GTGDs).  Unfortunately, this
result does not directly provide useful complexity bounds.

In Section~\ref{sec:lower}, we show lower complexity bounds for conjunctive
query answering under weakly guarded sets of TGDs.  We prove, by Turing machine
simulations, that query evaluation under weakly guarded sets of TGDs is
\textsc{exptime}-hard in case of a fixed set of TGDs, and
2\textsc{exptime}-hard in case the TGDs are part of the input.

In Section~\ref{sec:upper}, we address upper complexity bounds for query
answering under weakly guarded sets of TGDs.  Let us first remark that showing
$D \cup \dep \models Q$ is equivalent to showing that the theory $\T = D\cup
\dep \cup \{\neg Q\}$ is unsatisfiable.  Unfortunately, $\T$ is in general not
guarded because $Q$ is not and because WGTGDs are generally non-guarded
first-order sentences (while GTGDs are).  Therefore, we cannot
(as one might think at first glance) directly use known results on
guarded logics~\cite{GoGr00,Grae99} to derive complexity results for query
evaluation.  We thus develop completely new algorithms by which we
prove that the problem in question is \textsc{exptime}-complete in case of
bounded predicate arities and, even in case the TGDs are fixed,
2\textsc{exptime}-complete in general.

In Section~\ref{sec:guarded}, we derive complexity results for reasoning with
GTGDs.  In the general case, the complexity is as for WGTGDs but,
interestingly, when reasoning with a \emph{fixed} set of dependencies (which is
the usual setting in data exchange and in description logics), we get much
better results: evaluating Boolean queries is \textsc{np}-complete and
is in \textsc{ptime} in case the query is atomic.  Recall that Boolean query
evaluation is \textsc{np}-hard even in case of a simple database without
integrity constraints~\cite{ChMe77}.  Therefore, the above \textsc{np} upper
bound for general Boolean queries is optimal, i.e., there is no class of
TGDs for which query evaluation (or query containment) is more efficient.

In Section~\ref{sec:polycloud}, we describe a semantic condition on weakly
guarded sets of TGDs.
%
We prove that whenever a set of WGTGDs fulfills this condition, answering
Boolean queries is in \textsc{np}, and answering atomic queries, as well as
queries of bounded treewidth, is in \textsc{ptime}.

Section~\ref{sec:multiple-atoms} extends our
results to the case of TGDs with multiple-atom heads.
The extension is trivial for all cases except for the case of bounded
predicate arity.

Section~\ref{sec:egds} deals with \emph{equality generating dependencies
  (EGDs)}, a generalization of functional dependencies.  Unfortunately, as
shown in~\cite{ChV85,Mitch83,JoK84,Koch02,CaLR03}, query answering and many
other problems become undecidable in case we admit both TGDs and EGDs.  It
remains undecidable even if we mix the simplest class of guarded TGDs, namely,
inclusion dependencies, with the simplest type of EGDs, namely functional
dependencies, and even key dependencies~\cite{ChV85,Mitch83,JoK84,CaLR03}.
In Section~\ref{sec:egds}, we present a sufficient semantic condition for
decidability of query-answering under sets of TGDs and general EGDs.  We call
EGDs \emph{innocuous} when, roughly speaking, their application (i.e.,
enforcement) does not introduce new atoms, but only eliminates atoms.  We show
that innocuous EGDs can be essentially \emph{ignored} for conjunctive query
evaluation and query containment testing.


The TGD-based ontology languages in this paper are part of the larger family of
ontology languages called \datalogpm~\cite{CaGP11}.  Our results subsume 
the main decidability and \textsc{np}-complexity result in~\cite{JoK84}, the
decidability and complexity results on F-Logic Lite in~\cite{cali-kifer-06},
and those on DL-Lite as special cases.  In fact, Section~\ref{sec:applications}
shows that our results are even more general than that.

The complexity results of this paper, together with some of their immediate
consequences, are summarized in Figure~\ref{fig:results}, where all complexity
bounds are tight.  Notice that the complexity in the case of fixed queries
\emph{and} fixed TGDs is the so-called \emph{data complexity}, i.e., the
complexity with respect to the data only, which is of particular interest in
database applications.  The complexity for variable Boolean conjunctive queries
(BCQs) and variable TGDs is called \emph{combined complexity}.  It is easy to
see (but we will not prove it formally for all classes) that all complexity
results for atomic or fixed queries extend to queries of bounded width, where
by \emph{width} we mean treewidth or even hypertree width~\cite{GoLS02}---see
also~\cite{AdGG07,GoLS01}.

\begin{figure}[tb]
  \centering
  \begin{tabular}{|l||c|c|c|}
    \hline
    \textbf{BCQ type} & \textbf{GTGDs} &
    \textbf{WGTGDs}\\ \hline\hline
    \textbf{general} & 2\textsc{exptime} & 2\textsc{exptime}\\ \hline
      \textbf{atomic or fixed} & 2\textsc{exptime} & 2\textsc{exptime}\\
    \hline
%
%
  \end{tabular}\\[.3em]
  Query answering for variable TGDs.\\[.9em]

  \begin{tabular}{|l||c|c|c|}
    \hline
    \textbf{BCQ type} & \textbf{GTGDs} &
    \textbf{WGTGDs}\\ \hline\hline
    \textbf{general} & \textsc{np} & \textsc{exptime}\\ \hline
      \textbf{atomic or fixed} & \textsc{ptime} & \textsc{exptime}\\
    \hline
%
%
  \end{tabular}\\[.3em]
  Query answering for fixed TGDs.\\[.9em]

  \begin{tabular}{|l||c|c|c|}
    \hline
    \textbf{BCQ type} & \textbf{GTGDs} &
    \textbf{WGTGDs}\\ \hline\hline
    \textbf{general} & \textsc{exptime} & \textsc{exptime}\\ \hline
      \textbf{atomic or fixed} & \textsc{exptime} & \textsc{exptime}\\
    \hline
%
%
  \end{tabular}\\[.3em]
  Query answering for fixed predicate arity.

\caption{Summary of results. All complexity bounds are tight.}
\label{fig:results}
\end{figure}


\section{Preliminaries}
\label{sec:preliminaries}

In this section we define the basic notions that we use throughout the paper.

\subsection{Relations, Instances and Queries}

A \emph{relational schema} $\R$ is a set of relational predicates, each having
an \emph{arity}---a non-negative integer that represents the number of
arguments the predicate takes. We write $r/n$ to say that a relational
predicate $r$ has arity $n$.  
Given an $n$-ary predicate $r\in\R$, a \emph{position} $r[k]$, where $1 \leq k
\leq n$, refers to the $k$-th argument of $r$.  We will assume an underlying
relational schema $\R$ and postulate that all queries and constraints use only
the predicates in $\R$.  The schema $\R$ will sometimes be omitted when it is
clear from the context or is immaterial.

We introduce the following pairwise disjoint sets of symbols: \textit{(i)} A
(possibly infinite) set $\dom$ of data \emph{constants}, which constitute the
``normal'' domain of the databases over the schema $\R$; \textit{(ii)} a set
$\freshdom$ of \textit{labeled nulls}, i.e., ``fresh'' Skolem constants; and
\textit{(iii)} an infinite set $\variables$ of variables, which are used in
queries and constraints.  Different constants represent different values
(\emph{unique name assumption}), while different nulls may represent the same
value.  We also assume a lexicographic order on $\dom \cup \freshdom$, with
every labeled null in $\freshdom$ following all constant symbols in $\dom$.
Sets of variables (or sequences, when the order is relevant) will be denoted by
$\vett{X}$, i.e., $\vett{X} = \dd{X}{k}$, for some $k$. 
The notation $\exists \vett{X}$ is a shorthand for $\exists X_1 \ldots \exists
X_k$, and similarly for $\forall \vett{X}$.

An \emph{instance} of a relational predicate $r/n$ is a (possibly infinite) set
of atomic formulas (atoms) of the form $r(\dd{c}{n})$, where $\set{\dd{c}{n}}
\subseteq \dom \cup \freshdom$.  Such atoms are also called \emph{facts}.  When
the fact $r(\dd{c}{n})$ is true, we say that the \emph{tuple} $\tup{\dd{c}{n}}$
belongs to the instance of $r$ (or just that it is in $r$, if confusion does
not arise).
An instance of the relational schema $\R = \set{\dd{r}{m}}$ is the set
comprised of the instances of $\dd{r}{m}$.  When instances are treated as
first-order formulas, each labeled null is viewed as an existential variable
with the same name, and relational instances with nulls correspond to a
conjunction of atoms preceded by the existential quantification of all the
nulls.
For instance, $\set{r(a, z_1, z_2, z_1), s(b,z_2,z_3)}$, where
$\set{z_1,z_2,z_3} \subseteq \freshdom$ and $\set{a,b} \subseteq \dom$, is
expressed as $\exists z_1 \exists z_2 \exists z_3\, r(a, z_1, z_2, z_1) \land
s(b,z_2,z_3)$.  In the following, we will omit these
quantifiers.

A fact $r(\dd{c}{n})$ is said to be \emph{ground} if $c_i \in \dom$ for all $i
\in \set{1, \ldots, n}$.  In such a case, also the tuple $\tup{\dd{c}{n}}$ is
said to be ground.  A relation or schema instance all of whose facts are ground
is said to be ground, and a ground instance of $\R$ is also called a
\emph{database}.

If $A$ is a \emph{sequence} of atoms $\tup{\dd{\atom{a}}{k}}$ or a
\emph{conjunction} of atoms $\atom{a}_1 \land \ldots \land \atom{a}_k$, we use
$\atoms{A}$ to denote the \emph{set} of the atoms in $A$: $\atoms{A} =
\set{\dd{\atom{a}}{k}}$.
Given a (ground or non-ground) atom $\atom{a}$, the \emph{domain} of
$\atom{a}$, denoted by $\adom{\atom{a}}$, is the set of all values (variables,
constants or labeled nulls) that appear as arguments in $\atom{a}$. If $A$ is a
set of atoms, we define $\adom{A} = \bigcup_{\atom{a} \in A} \adom{\atom{a}}$.
If $A$ is a sequence or a conjunction of atoms then we define $\adom{A} =
\adom{\atoms{A}}$.  If $A$ is an atom, a set, a sequence, or a conjunction of
atoms, we write $\vars{A}$ to denote the set of variables in $A$.

Given an instance $B$ of a relational schema $\R$, the \emph{Herbrand Base} of
$B$, denoted $\hb(B)$, is the set of all atoms that can be formed using the
predicate symbols of $\R$ and arguments in $\adom{B}$.  Notice that this is an
extension of the classical notion of Herbrand Base, which includes ground atoms
only.

An $n$-ary \emph{conjunctive query (CQ)} over $\R$ is a formula of the form
$q(\dd{X}{n}) \la \Phi(\vett{X})$, where $q$ is a predicate not appearing in
$\R$, 
all the variables $\dd{X}{n}$ appear in $\vett{X}$, and $\Phi(\vett{X})$,
called the \emph{body} of the query, is a conjunction of atoms constructed with
predicates from $\R$.  The arity of a query is the arity of its head predicate
$q$.  If $q$ has arity $0$, then the conjunctive query is called \emph{Boolean}
(BCQ).  For BCQs, it is convenient to drop the head predicate and simply view
the query as the \emph{set} of atoms in $\Phi(\vett{X})$.
If not stated otherwise, we assume that queries contain no constants, since
constants can be eliminated from queries by a simple polynomial time
transformation.  We will also sometimes refer to conjunctive queries by just
``queries''.  The \emph{size} of a conjunctive query $Q$ is denoted by $|Q|$;
it represents the number of atoms in $Q$.


%
%

\subsection{Homomorphisms}

A \emph{mapping} from a set of symbols $S_1$ to another set of symbols $S_2$
can be seen as a function $\mu: S_1 \ra S_2$ defined as follows: \textit{(i)}
$\emptyset$ (the empty mapping) is a mapping; \textit{(ii)} if $\mu$ is a
mapping, then $\mu \cup \{X \ra Y\}$, where $X \in S_1$ and $Y \in S_2$ is a
mapping if $\mu$ does not already contain some $X \ra Y'$ with $Y \neq Y'$.  If
$X \ra Y$ is in a mapping $\mu$, we write $\mu(X) = Y$.
The notion of a mapping is naturally extended to atoms as follows.  If $\atom{a}
= r(\dd{c}{n})$ is an atom and $\mu$ a mapping, we define $\mu(\atom{a}) =
r(\mu(c_1), \ldots, \mu(c_n))$.  For a \emph{set} of atoms, $A =
\set{\dd{\atom{a}}{m}}$, $\mu(A) = \set{\mu(\atom{a}_1), \ldots,
  \mu(\atom{a}_m)}$.  The set of atoms $\mu(A)$ is also called \emph{image} of
$A$ \wrt~$\mu$.
For a \emph{conjunction} of atoms $C = \atom{a}_1 \land \ldots \land
\atom{a}_m$, $\mu(C)$ is a shorthand for $\mu(\atoms{C})$, that
is, $\mu(C) = \set{\mu(\atom{a}_1), \ldots, \mu(\atom{a}_m)}$.

A \emph{homomorphism} from a set of atoms $A_1$ to another set of atoms $A_2$,
with $\adom{A_1 \cup A_2} \subseteq \dom \cup \freshdom \cup \variables$
is a mapping $\mu$ from $\adom{A_1}$ to $\adom{A_2}$
such that the following conditions hold:
\textit{(1)} if $c \in \dom$ then $\mu(c) = c$;
\textit{(2)}
$\mu(A_1) \subseteq A_2$, i.e.,
if an atom, $\atom{a}$, is in $A_1$, then the atom $\mu(\atom{a})$ is in
$A_2$.
In this case, we will say that $A_1$ \emph{maps} to $A_2$ via $\mu$.

The answer to a conjunctive query $Q$ of the form $q(\dd{X}{n}) \la
\Phi(\vett{X})$ over an instance $B$ of $\R$, denoted by $Q(B)$, is defined as
follows: a tuple $\vett{t} \in (\dom \cup \freshdom)^n$, is in $Q(B)$ iff there
is a homomorphism $\mu$ that maps $\Phi(\vett{X})$ to atoms of $B$, and
$\tup{\dd{X}{n}}$ to $\vett{t}$.  In this case, by abuse of notation, we also
write $q(\vett{t}) \in Q(B)$.  A Boolean conjunctive query $Q$ has a
\emph{positive} answer on $B$ iff $\tup{}$ (the tuple with no elements) is in
$Q(B)$; otherwise, it is said to have a \emph{negative} answer.

\subsection{Relational Dependencies}
\label{sec:dependencies}

We now define the main type of dependencies used in this paper, the
\emph{tuple-generating dependencies}, or \emph{TGDs}.

\begin{definition}\label{def:tgd}\rm
  Given a relational schema $\R$, a TGD $\sigma$ over $\R$ is a first-order
  formula of the form $\forall \vett{X} \forall \vett{Y}
  \Phi(\vett{X},\vett{Y}) \ra \exists{\vett{Z}} \Psi(\vett{X},\vett{Z})$, where
  $\Phi(\vett{X},\vett{Y})$ and $\Psi(\vett{X},\vett{Z})$ are conjunctions of
  atoms over $\R$, called \emph{body} and \emph{head} of the TGD, respectively;
  they are denoted by $\body{\sigma}$ and $\head{\sigma}$.  Such a dependency
  is satisfied in an instance $B$ of $\R$ if, whenever there is a homomorphism
  $h$ that maps the atoms of $\Phi(\vett{X},\vett{Y})$ to atoms of $B$, there
  exists an \emph{extension} $h_2$ of $h$ (i.e., $h_2 \supseteq h$) that maps
  the atoms of $\Psi(\vett{X},\vett{Z})$ to atoms of $B$.
\end{definition}

\begin{andrea}
  To simplify the notation, we will usually omit the universal quantifiers in
  TGDs.  We will also sometimes call TGDs \emph{rules} because of the
  implication symbol in them.  Notice that, in general, constants of~$\dom$ can
  appear not only in the body, but also in the heads of TGDs.  For simplicity
  and without loss of generality, we assume that all constants that appear in
  the the head of a TGDs also appear in the body of the same TGD.
\end{andrea}

\begin{andrea}
  The symbol $\models$ will be used henceforth for the usual logical
  entailment, where sets of atoms and TGDs are viewed as first-order
  theories.  For such theories, we do not restrict ourselves to finite models:
  we consider arbitrary models that could be finite or infinite.  This aspect
  is further discussed in Section~\ref{sec:conclusions}.
\end{andrea}

\subsection{Query Answering and Containment under TGDs}
\label{sec:answering-containment}

We now define the notion of \emph{query answering} under TGDs.  A similar
notion is used in data exchange~\cite{FKMP05,GoNa06} and in query answering
over incomplete data~\cite{CaLR03}.
Given a 
database that does not satisfy all the constraints in $\dep$, we first define
the set of completions (or \emph{repairs}---see~\cite{ArBC99}) of that
database, which we call \emph{solutions}.

\begin{definition}\label{def:solutions}
  Consider a relational schema $\R$, a set of TGDs $\dep$, and a database $D$
  for $\R$.  The set of instances $\set{B \,\mid\, B \models D \cup \dep}$
  is called the set of \emph{solutions} of $D$ given $\dep$, and is denoted by
  $\sol{D}{\dep}$.
\end{definition}


The following is the definition of the problem, which we denote by
\problem{CQAns}, of answering conjunctive queries under TGDs. The answers
defined here are also referred to as \emph{certain answers}
(see~\cite{FKMP05}).

\begin{definition}\label{def:answers}
  Consider a relational schema $\R$, a set of TGDs $\dep$, a database $D$ for
  $\R$, and a conjunctive query $Q$ on $\R$.  The \emph{answer} to a
  conjunctive query $Q$ on $D$ given $\dep$, denoted by $\ans{Q}{D}{\dep}$, is
  the set of tuples $\vett{t}$ such that for every $B \in \sol{D}{\dep}$,
  $\vett{t} \in Q(B)$ holds.
\end{definition}
Notice that the components of $\vett{t}$ in the above definition are
necessarily constants from $\dom$.  When $\vett{t} \in \ans{Q}{D}{\dep}$, we
also write $D \cup \dep \cup \set{Q} \models q(\vett{t})$, where 
$Q$ is represented as a rule $\body{Q} \rightarrow q(\vett{X})$.

\medskip

Containment of queries over relational databases has long been considered a
fundamental problem in query optimization, especially query containment under
constraints such as TGDs. 
Below we formally define this problem, which we call \problem{CQCont}.

\begin{definition}\label{def:containment}
  Consider a relational schema $\R$, a set $\dep$ of TGDs on $\R$, and two
  conjunctive queries $Q_1, Q_2$ expressed over $\R$.  We say that \emph{$Q_1$
    is contained in $Q_2$ under $\dep$}, denoted by $Q_1 \subseteq_\dep Q_2$,
  if for every instance $B$ for $\R$ such that $B \models \dep$ we have
  $Q_1(B)$ is a subset of $Q_2(B)$.
\end{definition}

\subsection{The Chase}
\label{sec:chase}

The \emph{chase} was introduced as a procedure for testing implication of
dependencies~\cite{MaMS79}, but later also employed for checking query
containment~\cite{JoK84} and query answering on incomplete data under
relational dependencies~\cite{CaLR03}.
Informally, the chase procedure is a process of repairing a database with
respect to a set of dependencies, so that the result of the chase satisfies the
dependencies.  By ``chase'' we may refer either to the chase procedure or to
its output.  The chase works on a database through the so-called TGD
\emph{chase rule}, which defines the result of the applications of a TGD and
comes in two flavors: \emph{oblivious} and \emph{restricted}.


\begin{definition}[Oblivious Applicability]\label{def-o-applicability}
Consider an instance $B$ of a schema $\R$, and a TGD $\sigma =
\Phi(\vett{X},\vett{Y}) \rightarrow \exists
\vett{Z}\,\Psi(\vett{X},\vett{Z})$ over $\R$. We say that $\sigma$
is \emph{obliviously applicable} to $B$ if there exists a
homomorphism $h$ such that $h(\Phi(\vett{X},\vett{Y})) \subseteq B$.
\end{definition}


\begin{definition}[Restricted Applicability]\label{def:r-applicability}
  Consider an instance $B$ of a schema $\R$, and a TGD $\sigma =
  \Phi(\vett{X},\vett{Y}) \rightarrow \exists
  \vett{Z}\,\Psi(\vett{X},\vett{Z})$ over $\R$. We say that $\sigma$ is
  \emph{restrictively applicable} to $B$ if there exists a homomorphism $h$
  such that $h(\varphi(\vett{X},\vett{Y})) \subseteq B$, but there is \emph{no}
  extension $h'$ of $h|_{\vett{X}}$ such that $h'(\psi(\vett{X},\vett{Z}))
  \subseteq B$.\footnote{$h|_{\vett{X}}$ denotes the restriction of $h$ to the set of variables of $\vett{X}$.} 
\end{definition}

The oblivious form of applicability is called this way because it ``forgets''
to check whether the TGD is already satisfied.  In contrast, a TGD is
restrictively applicable only if it is not already satisfied.


\begin{definition}[TGD Chase Rule]\label{tgd-chase-rule}
  Let $\sigma$ be a TGD of the form $\Phi(\vett{X},\vett{Y}) \rightarrow
  \exists \vett{Z}\,\Psi(\vett{X},\vett{Z})$ and suppose that it is obliviously
  (resp., restrictively) applicable to an instance $B$ via a homomorphism
  $h$. Let $h'$ be an extension of $h|_{\vett{X}}$ such that, for each $Z \in
  \vett{Z}$, $h'(Z)$ is a ``fresh'' labeled null of $\freshdom$ not occurring
  in $B$, and following lexicographically all those in $B$. The result of the
  \emph{oblivious} (resp., \emph{restricted}) \emph{application} of $\sigma$ on
  $B$ with $h$ is $B' = B \cup h'(\Psi(\vett{X},\vett{Z}))$. We write $B
  \stackrel{\sigma,h}{\longrightarrow}_{O} B'$ (resp., $B
  \stackrel{\sigma,h}{\longrightarrow}_{R} B'$) to denote that $B'$ is obtained
  from $B$ through a single oblivious (resp., restricted) chase step.
\end{definition}

%
%

\smallskip

The TGD chase rule, defined above, is the basic building block to construct the
chase of a database under a set of TGDs.  Depending on the notion of
applicability in use---oblivious or restricted---we get the oblivious or the
restricted chase. The formal definition of the chase is given below.


\begin{definition}[Oblivious and Restricted Chase]\label{def:chase}
  Let $D$ be a database and $\dep$ a set of TGDs.  An \emph{oblivious (resp.,
    restricted) chase sequence} of $D$ with respect to $\dep$ is a sequence of
  instances $B_0, B_1, B_2, \ldots$ such that $B_0 = D$ and, for all $i
  \geqslant 0$, $B_i\stackrel{\sigma_i,h_i}{\longrightarrow}_{O} B_{i+1}$
  (resp., $B_i \stackrel{\sigma_i,h_i}{\longrightarrow}_{R} B_{i+1}$) and
  $\sigma_i \in \dep$.  We also assume that in any chase sequence the same pair
  $\tup{\sigma_i,h_i}$ is never applied more than once.  The oblivious (resp.,
  restricted) chase of $D$ with respect to $\dep$, denoted
  $\mathit{Ochase}(D,\dep)$ (resp., $\mathit{Rchase}(D,\dep)$), is defined as
  follows:
\begin{itemize}\itemsep-\parsep
\item A \emph{finite oblivious (resp., restricted) chase} of $D$ with respect
  to $\dep$ is a finite oblivious (resp., restricted) chase sequence $B_0,
  \ldots, B_m$ such that $B_i \stackrel{\sigma_i,h_i}{\longrightarrow}_{O}
  B_{i+1}$ (resp., $B_i \stackrel{\sigma_i,h_i}{\longrightarrow}_{R} B_{i+1}$)
  for all $0 \leqslant i < m$, and there is no $\sigma \in \dep$ such that its
  application yields an instance $B' \neq B_m$.  We define
  $\mathit{Ochase}(D,\dep) = B_m$ (resp., $\mathit{Rchase}(D,\dep) = B_m$).
\item An \emph{infinite oblivious (resp., restricted) chase sequence} $B_0,
  B_1, \ldots$, where $B_i \stackrel{\sigma_i,h_i}{\longrightarrow}_{O}
  B_{i+1}$ (resp., $B_i \stackrel{\sigma_i,h_i}{\longrightarrow}_{R} B_{i+1}$)
  for all $i \geqslant 0$, is \emph{fair} if whenever a TGD $\sigma =
  \Phi(\vett{X},\vett{Y}) \rightarrow \exists
  \vett{Z}\,\Psi(\vett{X},\vett{Z})$ of $\dep$ is obliviously (resp.,
  restrictedly) applicable to $B_i$ with homomorphism $h$, then there exists an
  extension $h'$ of $h|_{\vett{X}}$ and $k > 0$ such that $h'(\head{\sigma})
  \subseteq B_k$.
  An \emph{infinite oblivious chase} of $D$ with respect to $\dep$ is a fair
  infinite chase sequence $B_0, B_1, \ldots$ such that $B_i
  \stackrel{\sigma_i,h_i}{\longrightarrow}_{O} B_{i+1}$ (resp., $B_i
  \stackrel{\sigma_i,h_i}{\longrightarrow}_{R} B_{i+1}$) for all $i \geqslant
  0$.  In this case, we define
  $\mathit{Ochase}(D,\dep) = \lim_{i \ra \infty} B_i$ (resp.,
  $\mathit{Rchase}(D,\dep) = \lim_{i \ra \infty} B_i$).
\end{itemize}
\end{definition}

%
It is easy to see that the chase can be infinite, if the sequence of
applications of the chase rule is infinite.  We remark that the chase was
defined for databases of ground tuples.  However, the definition
straightforwardly applies also to arbitrary instances, possibly containing
labeled nulls.
We assume a fair deterministic strategy for constructing chase sequences.  We
use $\pochase{D}{\dep}{i}$ (resp., $\prchase{D}{\dep}{i}$) to denote the result
of the $i$-th step of the oblivious (resp., restricted) chase of $D$ with
respect to $\dep$.  Notice that $\pochase{D}{\dep}{i}$ (resp.,
$\prchase{D}{\dep}{i}$) is called the oblivious (resp., restricted) chase of
$D$ \wrt~$\dep$ \emph{up to the derivation level $i$}, as in~\cite{CaGL12}.

\begin{example}\label{exa:chase} 
  In this example, 
  we show an oblivious chase
  procedure. 
  Consider the following set $\dep = \set{\sigma_1, \sigma_2, \sigma_3,
    \sigma_4}$ of TGDs.
\[
  \begin{array}{crcl}
    \sigma_1\colon& r_3(X,Y) &\ra &r_2(X)\\
    \sigma_2\colon& r_1(X,Y) &\ra& \exists Z\, r_3(Y,Z)\\
    \sigma_3\colon& r_1(X,Y), r_2(Y) &\ra& \exists Z\, r_1(Y,Z)\\
    \sigma_4\colon& r_1(X,Y) &\ra &r_2(Y)
  \end{array}
\]  
and let $D = \set{r_1(a,b)}$.  The chase procedure adds to $D$ the following
sequence of atoms: $r_3(b,z_1)$ via $\sigma_2$, $r_2(b)$ via $\sigma_4$,
$r_1(b,z_2)$ via $\sigma_3$, $r_3(z_2,z_3)$ via $\sigma_2$, $r_2(z_2)$ via
$\sigma_4$, and so on.
%
\end{example}

%

\subsection{Query Answering and the Chase}

The problems of query containment and answering under TGDs are closely related
to each other and to the notion of chase, as explained below.


\begin{andrea}
  \begin{theorem}[see~\cite{NaDR06}]\label{the:ans-by-chase}
    Consider a relational schema $\R$, a database $D$ for $\R$, a set $\dep$ of
    TGDs on $\R$, an $n$-ary conjunctive query $Q$ with head-predicate $q$, and
    an $n$-ary ground tuple $\vett{t}$ (with values in $\dom$). Then $\vett{t}
    \in \ans{Q}{D}{\dep}$
    iff there exists a homomorphism $h$ such that $h(\body{Q}) \subseteq
    \rchase{D}{\dep}$ and $h(\head{Q}) = q(\vett{t})$.
  \end{theorem}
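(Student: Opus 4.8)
The plan is to use that $\rchase{D}{\dep}$ is a \emph{universal model} of $D \cup \dep$: it is itself a solution, and it admits a homomorphism into every other solution that fixes all constants. Granting this, both directions follow almost immediately. I would therefore first establish a universality lemma in two parts. Part (a): $\rchase{D}{\dep} \in \sol{D}{\dep}$, i.e.\ $\rchase{D}{\dep} \models D \cup \dep$. Since $D = B_0$ and each chase step only adds atoms, $D \subseteq \rchase{D}{\dep}$, so all ground atoms of $D$ hold; and $\rchase{D}{\dep} \models \dep$ is exactly the content of fairness. Indeed, if some $\sigma \in \dep$ were violated in the limit, its (finite) body would map into some finite stage $B_i$; at that stage the witnessing homomorphism is either already satisfied in $B_i$ (hence in the limit) or restrictively applicable, in which case fairness forces an extension satisfying $\head{\sigma}$ at some $B_k \subseteq \rchase{D}{\dep}$ — either way a contradiction.

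Part (b): for every $B \in \sol{D}{\dep}$ there is a homomorphism $\lambda \colon \rchase{D}{\dep} \to B$ with $\lambda|_{\dom} = \mathrm{id}$, built by induction along the sequence $B_0, B_1, \ldots$. The base case uses $D \subseteq B$, so the identity on $\adom{D} \subseteq \dom$ is a homomorphism $\lambda_0$. For the inductive step, assume $\lambda_i \colon B_i \to B$ fixes constants, and let $B_{i+1} = B_i \cup g'(\Psi(\vett{X},\vett{Z}))$ arise from $\sigma = \Phi(\vett{X},\vett{Y}) \to \exists \vett{Z}\,\Psi(\vett{X},\vett{Z})$ applied via $g$. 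Then $\lambda_i \circ g$ maps $\Phi$ into $B$; since $B \models \sigma$, this extends to a map $e$ sending $\Psi$ into $B$. Because the nulls $g'(\vett{Z})$ are fresh (not in $B_i$), I can set $\lambda_{i+1} = \lambda_i \cup \{\, g'(Z) \mapsto e(Z) \mid Z \in \vett{Z} \,\}$ without conflict, and one checks $\lambda_{i+1}(g'(\Psi)) = e(\Psi) \subseteq B$ while constants stay fixed. Since the $\lambda_i$ form a chain, $\lambda = \bigcup_i \lambda_i$ is the desired homomorphism on the limit.

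Given the lemma, the \textsl{``if''} direction runs as follows. Assume $h$ with $h(\body{Q}) \subseteq \rchase{D}{\dep}$ and $h(\head{Q}) = q(\vett{t})$, and let $B \in \sol{D}{\dep}$ be arbitrary. Composing with $\lambda \colon \rchase{D}{\dep} \to B$, the map $\lambda \circ h$ sends $\body{Q}$ into $B$; and since $\vett{t}$ consists of constants fixed by $\lambda$, we get $(\lambda\circ h)(\head{Q}) = q(\vett{t})$, so $\vett{t} \in Q(B)$. As $B$ was arbitrary, $\vett{t} \in \ans{Q}{D}{\dep}$. For the \textsl{``only if''} direction, suppose $\vett{t} \in \ans{Q}{D}{\dep}$. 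By part (a), $\rchase{D}{\dep}$ is itself a solution, so $\vett{t} \in Q(\rchase{D}{\dep})$; unfolding the definition of query answers on an instance yields exactly a homomorphism $h$ with $h(\body{Q}) \subseteq \rchase{D}{\dep}$ mapping the answer variables to $\vett{t}$, i.e.\ $h(\head{Q}) = q(\vett{t})$.

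I expect the main obstacle to be the inductive step of part (b): one must verify that extending the homomorphism on the freshly introduced nulls is both well defined (which is precisely why freshness of $g'(\vett{Z})$ is essential, so these values lie outside the domain of $\lambda_i$) and compatible with the previously mapped atoms, and then that the partial homomorphisms amalgamate into a single homomorphism on the (possibly infinite) limit instance. Everything else is bookkeeping, resting on the fact that $\vett{t}$ is a tuple of constants and is therefore pointwise fixed by every constant-preserving homomorphism.
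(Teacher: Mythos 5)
Your proof is correct and follows exactly the route the paper takes: the paper states this theorem as a cited result and justifies it by observing that the restricted chase is a universal solution (itself a model of $D\cup\dep$, and homomorphically embeddable into every solution while fixing constants), which is precisely the two-part lemma you establish before deriving both directions. The only difference is that you spell out the universality argument in full, whereas the paper defers it to the literature.
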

  Notice that the fact that $h(\body{Q}) \subseteq \rchase{D}{\dep}$ and
  $h(\head{Q}) = q(\vett{t})$ is equivalent to saying that $q(\vett{t}) \in
  Q(\rchase{D}{\dep})$, or that
  $\rchase{D}{\dep} \cup \set{Q} \models q(\vett{t})$.
\end{andrea}
The result of Theorem~\ref{the:ans-by-chase} is important, and it holds because
the (possibly infinite) restricted chase is a \emph{universal
  solution}~\cite{FKMP05}, i.e., a representative of all instances in
$\sol{D}{\dep}$.
More formally, a universal solution for $D$ under $\dep$ is a (possibly
infinite) instance $U$ such that, for every instance $B \in \sol{D}{\dep}$,
there exists a homomorphism that maps $U$ to $B$.  In~\cite{NaDR06} it is shown
that the chase constructed with respect to TGDs is a universal solution.

A \emph{freezing homomorphism} for a query is a homomorphism that maps every
distinct variable in the query into a distinct labeled null in $\freshdom$.
The following well known result is a slight extension of a result
in~\cite{ChMe77}.
%
%
\begin{theorem} \label{the:cont-by-chase} Consider a relational schema $\R$, a
  set $\dep$ of TGDs on $\R$, and two conjunctive queries $Q_1, Q_2$ on $\R$.
  Then $Q_1 \subseteq_\dep Q_2$ iff $\lambda(\head{Q_1}) \in
  Q_2(\rchase{\lambda(\body{Q_1})}{\dep}$ for some freezing homomorphism
  $\lambda$ for $Q_1$.
\end{theorem}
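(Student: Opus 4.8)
The plan is to reduce both directions to two facts about the \emph{frozen body} $D_1 = \lambda(\body{Q_1})$ and its restricted chase $U = \rchase{\lambda(\body{Q_1})}{\dep}$. The first fact is that $U$ is itself a model of $\dep$: in a finite chase this holds because the process stops only when no TGD is restrictively applicable, and in an infinite fair chase it holds because fairness forces the head of every applicable TGD to eventually appear. The second fact is the \emph{universality} of the chase: every homomorphism from $D_1$ into an instance $B$ with $B \models \dep$ extends to a homomorphism from $U$ into $B$. Throughout I treat the frozen nulls of $D_1$ as distinguished symbols, and I use that $\lambda$ is injective on the variables of $Q_1$, so that $\lambda^{-1}$ is well defined on $\adom{D_1}$.

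\onlyifdirection Assume $Q_1 \subseteq_\dep Q_2$. Since queries contain no constants, $\lambda$ is itself a homomorphism from $\body{Q_1}$ onto $D_1 \subseteq U$ that sends the tuple of distinguished variables to $\lambda(\head{Q_1})$; hence $\lambda(\head{Q_1}) \in Q_1(U)$. As $U \models \dep$ (first fact) and the semantics of containment ranges over all, possibly infinite, models of $\dep$, the instance $U$ is admissible and containment gives $Q_1(U) \subseteq Q_2(U)$, whence $\lambda(\head{Q_1}) \in Q_2(U)$, as required.

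\ifdirection Assume $\lambda(\head{Q_1}) \in Q_2(U)$, witnessed by a homomorphism $\nu$ with $\nu(\body{Q_2}) \subseteq U$ that maps the distinguished tuple of $Q_2$ to $\lambda(\head{Q_1})$. Let $B$ be any instance with $B \models \dep$ and let $\vett{t} \in Q_1(B)$, witnessed by a homomorphism $g$ with $g(\body{Q_1}) \subseteq B$ and $g(\head{Q_1}) = \vett{t}$. Then $\mu_0 = g \circ \lambda^{-1}$ is a homomorphism from $D_1$ into $B$, and by universality (second fact) it extends to a homomorphism $\mu$ from $U$ into $B$. The composition $\mu \circ \nu$ maps $\body{Q_2}$ into $B$ and sends the distinguished tuple of $Q_2$ to $\mu(\lambda(\head{Q_1})) = g(\head{Q_1}) = \vett{t}$, so $\vett{t} \in Q_2(B)$. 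As $B$ and $\vett{t}$ were arbitrary, $Q_1 \subseteq_\dep Q_2$.

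The main obstacle is the universality (second fact), which I would establish by induction along a fair restricted chase sequence $B_0 = D_1, B_1, \ldots$ defining $U$. Given a homomorphism $\mu_i$ from $B_i$ into $B$, whenever the step $B_i \to B_{i+1}$ applies a TGD $\sigma$ via a trigger $h$, the map $\mu_i \circ h$ sends $\body{\sigma}$ into $B$; since $B \models \sigma$, there is an extension covering the freshly invented nulls of the step, which yields $\mu_{i+1} \supseteq \mu_i$. Taking unions at limit stages gives the desired $\mu$ on all of $U$. This is precisely the universal-solution property of the chase recalled above after Theorem~\ref{the:ans-by-chase}; the only care needed is that the argument runs uniformly for the infinite chase, which is exactly what fairness guarantees.
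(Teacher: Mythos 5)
Your proof is correct. Note that the paper itself offers no proof of Theorem~\ref{the:cont-by-chase}: it is stated as a well-known slight extension of a result in~\cite{ChMe77}, so there is no in-paper argument to diverge from. What you give is the standard argument, and it rests on precisely the two facts the paper establishes or cites elsewhere --- that the fair restricted chase satisfies $\dep$, and that it is a universal solution (Theorem~\ref{the:ans-by-chase}, the discussion of~\cite{NaDR06}, and the same chase-step induction that appears in the proof of Theorem~\ref{the:oblivious-is-universal} for the oblivious chase). One point worth making explicit: your ``if'' direction needs the \emph{strong} form of universality, in which the homomorphism from $\rchase{\lambda(\body{Q_1})}{\dep}$ into $B$ extends the \emph{prescribed} homomorphism $g\circ\lambda^{-1}$ on the frozen body (so that the distinguished tuple lands on $\vett{t}$), whereas the paper's stated definition of universal solution only asserts the existence of some homomorphism. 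Your inductive construction does deliver the strong form, so there is no gap, but flagging that you are using more than the bare definition would make the argument airtight.
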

From this and the results in~\cite{JoK84,NaDR06}, we easily obtain the
following result, which is considered folklore.
\begin{corollary}\label{cor:answering-containment}
  The problems \problem{CQAns} and \problem{CQCont} are mutually
  \textsc{logspace}-reducible.
%
\end{corollary}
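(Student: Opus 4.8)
The plan is to exhibit two \textsc{logspace} reductions, one in each direction, and to observe that both are realized by purely syntactic manipulations (copying, renaming, and substituting atoms), which are comfortably within \textsc{logspace}. The two chase characterizations, Theorem~\ref{the:ans-by-chase} and Theorem~\ref{the:cont-by-chase}, do essentially all the semantic work; the reductions just repackage their right-hand sides.

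For \problem{CQAns}~$\to$~\problem{CQCont}: given a database $D$, a query $Q$ with answer variables $\vett{X}$, and a candidate tuple $\vett{t}$, I would build two Boolean queries. Let $Q_1$ be the Boolean query whose body is exactly $\atoms{D}$, and let $Q_2$ be the Boolean query obtained from $Q$ by substituting $\vett{t}$ for $\vett{X}$ in $\body{Q}$. Since $D$ is ground, the freezing homomorphism $\lambda$ for $Q_1$ acts as the identity, so $\lambda(\body{Q_1}) = D$; Theorem~\ref{the:cont-by-chase} then gives that $Q_1 \subseteq_\dep Q_2$ iff $Q_2$ has a positive answer on $\rchase{D}{\dep}$, i.e.\ iff $\body{Q}$ with $\vett{X}$ replaced by $\vett{t}$ maps into $\rchase{D}{\dep}$. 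By Theorem~\ref{the:ans-by-chase} this is exactly the condition $\vett{t} \in \ans{Q}{D}{\dep}$. The construction only copies $D$ and substitutes a constant tuple, hence is \textsc{logspace}.

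For \problem{CQCont}~$\to$~\problem{CQAns}: given $Q_1, Q_2$, fix a freezing homomorphism $\lambda$ sending the variables of $Q_1$ to pairwise distinct fresh constants of $\dom$ not occurring in $\dep$ or $Q_2$, and set $D := \lambda(\body{Q_1})$, a ground database, and $\vett{t} := \lambda(\head{Q_1}) \in \dom^n$. By Theorem~\ref{the:cont-by-chase}, $Q_1 \subseteq_\dep Q_2$ iff $\vett{t} \in Q_2(\rchase{D}{\dep})$, and by Theorem~\ref{the:ans-by-chase} the latter holds iff $\vett{t} \in \ans{Q_2}{D}{\dep}$. Here too $D$ and $\vett{t}$ arise from $Q_1$ by a fresh renaming, computable in \textsc{logspace}.

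The one delicate point is that both reductions introduce constants inside the queries (the constants of $D$ and the components of $\vett{t}$ in the first direction, the frozen symbols in the second), clashing with the convention that queries be constant-free. I would discharge this with the standard constant-elimination transformation already invoked in the Preliminaries---replacing each constant by a fresh variable controlled by a dedicated fresh unary predicate---noting that this transformation is itself \textsc{logspace} and so preserves the claimed bound. A secondary subtlety is the mismatch between freezing to fresh constants (which I use, so that Theorem~\ref{the:ans-by-chase} applies with $\vett{t} \in \dom^n$) and the freezing to nulls in the statement of Theorem~\ref{the:cont-by-chase}: since the frozen symbols are fresh and pairwise distinct, the two resulting chases are isomorphic under renaming, and the unique-name assumption forces these symbols to behave rigidly, so the characterization transfers verbatim. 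This interface between the two chase characterizations is the only step requiring real care; everything else is routine bookkeeping.
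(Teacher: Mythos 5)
Your reductions are correct and constitute precisely the folklore argument the paper alludes to: the paper gives no actual proof of this corollary, merely asserting that it follows from Theorem~\ref{the:cont-by-chase} and the cited prior work, and your two directions (packaging the ground database $D$ as the body of a Boolean $Q_1$, and conversely freezing $Q_1$ into a database with the frozen head as the candidate answer tuple) are the standard way to discharge that claim via Theorems~\ref{the:ans-by-chase} and~\ref{the:cont-by-chase}. Your treatment of the two side issues --- constant-freeness of queries and freezing to fresh constants rather than to nulls --- is also sound, since only TGDs are involved and the two frozen instances are isomorphic, so nothing further is needed.
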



\subsection{Oblivious vs. Restricted Chase}

As observed in~\cite{JoK84} in the case of functional and inclusion
dependencies, things are more complicated if the restricted chase is used
instead of the oblivious one, since applicability of a TGD depends on the
presence of other atoms previously added to the database by the chase.  It is
technically easier to use the oblivious chase and it can be used in lieu of the
restricted chase because, as we shall prove now, a result similar to
Theorem~\ref{the:ans-by-chase} holds for the oblivious chase, i.e., it is also
universal.  This result, to the best of our knowledge, has never been
explicitly stated before.  For the sake of completeness, we present a full
proof here.

\begin{theorem}\label{the:oblivious-is-universal}
  Consider a set $\dep$ of TGDs on a relational schema $\R$, and let $D$ be a
  database on $\R$.  Then there exists a homomorphism $\mu$ such that
  $\mu(\ochase{D}{\dep}) \subseteq \rchase{D}{\dep}$.
\end{theorem}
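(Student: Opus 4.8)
The plan is to construct $\mu$ as the limit of a chain of partial homomorphisms defined step by step along an oblivious chase sequence $B_0, B_1, B_2, \ldots$ of $D$ \wrt~$\dep$, whose union is $\ochase{D}{\dep}$. The key fact I would isolate first is that $\rchase{D}{\dep}$ is a \emph{model} of $\dep$: by the fairness requirement in the definition of the (possibly infinite) restricted chase, every TGD that ever becomes restrictively applicable is eventually enforced, so $\rchase{D}{\dep} \models \dep$. This is exactly what lets me transport a body match discovered in the oblivious chase over to a head match in the restricted chase.

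First I would set up the induction. For the base case, take $\mu_0$ to be the identity on $\adom{D}$; since $D \subseteq \rchase{D}{\dep}$ and constants are fixed, $\mu_0$ is a homomorphism from $B_0 = D$ into $\rchase{D}{\dep}$. For the inductive step, suppose $\mu_i(B_i) \subseteq \rchase{D}{\dep}$ and that $B_i \stackrel{\sigma,h}{\longrightarrow}_O B_{i+1}$ for $\sigma = \Phi(\vett{X},\vett{Y}) \ra \exists \vett{Z}\,\Psi(\vett{X},\vett{Z})$, so $h(\Phi(\vett{X},\vett{Y})) \subseteq B_i$ and $B_{i+1} = B_i \cup h'(\Psi(\vett{X},\vett{Z}))$, where $h'$ sends each $Z \in \vett{Z}$ to a fresh null. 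Then $\mu_i \circ h$ maps $\Phi(\vett{X},\vett{Y})$ into $\rchase{D}{\dep}$, and because $\rchase{D}{\dep} \models \sigma$, there is an extension $g$ of $\mu_i \circ h$ with $g(\Psi(\vett{X},\vett{Z})) \subseteq \rchase{D}{\dep}$.

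Now I would define $\mu_{i+1}$ to agree with $\mu_i$ on $\adom{B_i}$ and to send each fresh null $h'(Z)$ to $g(Z)$. This is well defined because the nulls $h'(Z)$ are new and hence lie outside $\adom{B_i}$, so no earlier assignment is contradicted; moreover $h'$ and $h$ agree on $\vett{X}$ and $g$ extends $\mu_i \circ h$, whence $\mu_{i+1}(h'(\Psi)) = g(\Psi) \subseteq \rchase{D}{\dep}$, confirming $\mu_{i+1}(B_{i+1}) \subseteq \rchase{D}{\dep}$. Since each $\mu_{i+1}$ extends $\mu_i$, the union $\mu = \bigcup_i \mu_i$ is a single well-defined mapping; it fixes all constants and sends every atom of $\ochase{D}{\dep} = \bigcup_i B_i$ into $\rchase{D}{\dep}$, which is exactly the claim.

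The main obstacle I anticipate is bookkeeping rather than conceptual: I must make sure the chain of $\mu_i$'s is genuinely consistent across the (possibly infinite) sequence, i.e.\ that the compatibility $\mu_{i+1}|_{\adom{B_i}} = \mu_i$ is preserved so the union is a function, and that the images $g(Z)$ of the fresh nulls never clash with constraints imposed at later steps (they cannot, since later steps only extend the domain on yet newer nulls). The single genuine ingredient, which I would state explicitly up front, is the fairness-based fact $\rchase{D}{\dep} \models \dep$; everything else is a routine extension-of-homomorphism argument pushed through the limit.
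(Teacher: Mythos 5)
Your proposal is correct and follows essentially the same route as the paper's proof: induction along the oblivious chase sequence, using the fact that $\rchase{D}{\dep}$ satisfies $\dep$ to extend $\mu_i \circ h$ over the head, sending each fresh null to its image under that extension, and taking the union of the resulting compatible chain. Your explicit appeal to fairness to justify $\rchase{D}{\dep} \models \dep$ is a point the paper leaves implicit, but the argument is otherwise the same.
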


\begin{proof}
  The proof is by induction on the number $m$ of applications of the TGD
  chase rule in the construction of the oblivious chase $\ochase{D}{\dep}$.
%
%
  We want to prove that, for all $m$ with $m \geq 0$, there is a homomorphism
  from $\pochase{D}{\dep}{m}$ to $\rchase{D}{\dep}$.

  \noindent
  \textsl{Base case.}  In the base case, where $m=0$, no TGD rule has yet been
  applied, so $\pochase{D}{\dep}{0} = D \subseteq \rchase{D}{\dep}$ and the
  required homomorphism
  is simply the identity homomorphism $\mu_0$.

  \noindent
  \textsl{Inductive case.} Assume we have applied the TGD chase rule $m$
  times and
  obtained
  $\pochase{D}{\dep}{m}$.  By the induction hypothesis, there exists a
  homomorphism $\mu_m$ that maps $\pochase{D}{\dep}{m}$ into
  $\rchase{D}{\dep}$.  Consider the $(m+1)$-th application of the TGD chase
  rule, for a TGD of the form
  $\Phi(\vett{X}, \vett{Y}) \ra \exists\vett{Z} \Psi(\vett{X}, \vett{Z})$.
%
  By definition of applicability of TGDs, there is a
  homomorphism $\lambda_O$ that maps $\Phi(\vett{X}, \vett{Y})$ to atoms of
  $\ochase{D}{\dep}$ and that it can be suitably extended to another
  homomorphism,
  $\lambda'_O$, such that $\lambda'_O$ maps each
  of the variables in $\vett{Z}$ to a fresh null in $\freshdom$ not already
  present in $\pochase{D}{\dep}{m}$. As a result of the application of this
  TGD, all atoms in
  $\lambda'_O(\Psi(\vett{X}, \vett{Z}))$ are added to $\pochase{D}{\dep}{m}$,
  thus obtaining $\pochase{D}{\dep}{m+1}$.
  Consider the homomorphism $\lambda_R = \mu_m \circ \lambda_O$, which maps $\Phi(\vett{X},
  \vett{Y})$ to atoms of $\rchase{D}{\dep}$.
  Since $\rchase{D}{\dep}$ satisfies all the
  dependencies in $\dep$ (and so does $\ochase{D}{\dep}$), there is an
  extension $\lambda'_R$ of $\lambda_R$ that maps $\Psi(\vett{X}, \vett{Z})$ to
  tuples of $\rchase{D}{\dep}$.  Denoting $\vett{Z} = \dd{Z}{k}$, we now define
  \(
  \mu_{m+1} = \mu_m \cup \{ \lambda'_O(Z_i) \ra \lambda'_R(Z_i) \}_{1 \leq i
    \leq k}.
  \)
  To complete the proof, we now need to show that $\mu_{m+1}$ is indeed a
  homomorphism.
  The addition of $\lambda'_O(Z_i) \ra \lambda'_R(Z_i)$, with $1 \leq i \leq
  k$, is compatible with $\mu_m$ because none of the $\lambda'_O(Z_i)$ appears
  in $\mu_m$. Therefore $\mu_{m+1}$ is a well-defined mapping.  Now,
  consider an
  atom $r(\vett{X}, \vett{Z})$ in $\Psi(\vett{X}, \vett{Z})$. Then the atom
  $\lambda'_O(r(\vett{X}, \vett{Z}))$ is added to
  $\ochase{D}{\dep}$ in the $(m+1)$-th step and
  \(
  \mu_{m+1}(r(\vett{X},\vett{Y})) = \mu_{m+1}(r(\lambda'_O(\vett{X}),
  \lambda'_O(\vett{Z}))) = r(\mu_{m+1}(\lambda'_O(\vett{X}),
  \mu_{m+1}(\lambda'_O(\vett{Z}))
  \).
  %
  Notice that $\mu_{m+1}(\lambda'_O(\vett{X})) =
  \mu_{m+1}(\lambda_O(\vett{X})) = \lambda_R(\vett{X}) = \lambda'_R(\vett{X})$,
  and $\mu_{m+1}(\lambda'_O(\vett{Z})) = \lambda'_R(\vett{Z})$.  Therefore,
  \(
  \mu_{m+1}(r(\vett{X},\vett{Z})) = r(\lambda'_R(\vett{X}), \lambda'_R(\vett{Z}))
  = \lambda'_R(r(\vett{X}, \vett{Z})),
  \)
  which is in $\rchase{D}{\dep}$, by construction.  The desired homomorphism
  from $\ochase{D}{\dep}$ to $\rchase{D}{\dep}$ is therefore $\mu =
  \bigcup_{i=0}^{\infty} \mu_i$.
\end{proof}

\begin{corollary}\label{cor:ochase-universal}
  Given a set $\dep$ of TGDs over a relational schema $\R$ and a database $D$ for
  $\R$, $\ochase{D}{\dep}$ is a universal solution for $D$ under $\dep$.
\end{corollary}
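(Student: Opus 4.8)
The plan is to obtain this corollary by combining Theorem~\ref{the:oblivious-is-universal} with the known universality of the restricted chase; the corollary is essentially the packaging of the former against the latter. Recall that, by the definition recalled just above the corollary, $\ochase{D}{\dep}$ is a universal solution precisely when, for every $B \in \sol{D}{\dep}$, there is a homomorphism mapping $\ochase{D}{\dep}$ to $B$.

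First I would record, as a natural sanity check, that $\ochase{D}{\dep}$ is genuinely a solution, i.e.\ that $\ochase{D}{\dep} \models D \cup \dep$. Containment of $D$ is immediate since $B_0 = D$ and the chase only adds atoms. Satisfaction of every TGD in $\dep$ follows from the fairness requirement built into Definition~\ref{def:chase}: any homomorphism witnessing oblivious applicability of some $\sigma \in \dep$ to the limit instance already witnesses applicability to some finite $B_i$, and fairness guarantees that the corresponding head atoms appear at some later stage $B_k$. Hence $\ochase{D}{\dep} \in \sol{D}{\dep}$.

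Next I would invoke the fact---established in~\cite{NaDR06} and recalled after Theorem~\ref{the:ans-by-chase}---that the restricted chase $\rchase{D}{\dep}$ is a universal solution for $D$ under $\dep$. Thus for every $B \in \sol{D}{\dep}$ there is a homomorphism $\nu \colon \rchase{D}{\dep} \to B$. Combining this with the homomorphism $\mu \colon \ochase{D}{\dep} \to \rchase{D}{\dep}$ supplied by Theorem~\ref{the:oblivious-is-universal}, I would form the composition $\nu \circ \mu$, which maps $\ochase{D}{\dep}$ into $B$.

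The only thing left to verify---and the single point requiring a (routine) argument---is that $\nu \circ \mu$ is again a homomorphism in the sense of our definition: it fixes every constant of $\dom$ (because both $\mu$ and $\nu$ do), and it sends each atom of $\ochase{D}{\dep}$ to an atom of $B$ (because $\mu$ sends it to an atom of $\rchase{D}{\dep}$, which $\nu$ in turn sends to an atom of $B$). Since $B$ was an arbitrary solution, this exhibits the required homomorphism for every member of $\sol{D}{\dep}$, and the corollary follows. I do not expect any genuine obstacle here: the whole content is carried by Theorem~\ref{the:oblivious-is-universal}, and the remaining steps are just composition of homomorphisms together with the already-available universality of the restricted chase.
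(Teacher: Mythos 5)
Your proposal is correct and matches the paper's intended argument: the paper states this corollary without proof immediately after Theorem~\ref{the:oblivious-is-universal}, the point being exactly that the homomorphism $\mu$ from that theorem composes with the homomorphisms guaranteed by the universality of $\rchase{D}{\dep}$ (from~\cite{NaDR06}). Your extra check that $\ochase{D}{\dep}$ is itself a solution is harmless but not required by the paper's definition of universal solution, which only demands the existence of a homomorphism into every member of $\sol{D}{\dep}$.
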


\begin{corollary}\label{cor:chase-answering}
  Given a Boolean query $Q$ over a schema $\R$, a database $D$ for $\R$, and a
  set of TGDs $\dep$, $\ochase{D}{\dep} \models Q$ if and only if
  $\rchase{D}{\dep} \models Q$.
\end{corollary}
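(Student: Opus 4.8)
The plan is to show that $\ochase{D}{\dep}$ and $\rchase{D}{\dep}$ are homomorphically equivalent and then to invoke the standard fact that satisfaction of a Boolean conjunctive query is preserved along homomorphisms. Concretely, for a BCQ $Q$ and instances $I, J$, if there is a homomorphism $I \to J$ and $I \models Q$, then $J \models Q$: indeed, $I \models Q$ means there is a homomorphism from $\atoms{Q}$ into $I$, and composing it with the homomorphism $I \to J$ yields a homomorphism from $\atoms{Q}$ into $J$, witnessing $J \models Q$. Homomorphism equivalence will then give both implications at once.

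First I would record the two homomorphisms needed. Theorem~\ref{the:oblivious-is-universal} already supplies a homomorphism $\mu$ with $\mu(\ochase{D}{\dep}) \subseteq \rchase{D}{\dep}$, i.e., a homomorphism $\ochase{D}{\dep} \to \rchase{D}{\dep}$. For the opposite direction I would observe that, by the fairness requirement in Definition~\ref{def:chase}, the instance $\ochase{D}{\dep}$ satisfies every TGD in $\dep$ and contains $D$, so $\ochase{D}{\dep} \in \sol{D}{\dep}$. Since the restricted chase is a universal solution (Theorem~\ref{the:ans-by-chase} together with the results of~\cite{NaDR06}), there is a homomorphism from $\rchase{D}{\dep}$ into every solution, in particular a homomorphism $\rchase{D}{\dep} \to \ochase{D}{\dep}$.

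With both homomorphisms in hand, the two directions of the biconditional follow symmetrically. If $\ochase{D}{\dep} \models Q$, then the homomorphism $\mu \colon \ochase{D}{\dep} \to \rchase{D}{\dep}$ and the above preservation property give $\rchase{D}{\dep} \models Q$. Conversely, if $\rchase{D}{\dep} \models Q$, then the homomorphism $\rchase{D}{\dep} \to \ochase{D}{\dep}$ gives $\ochase{D}{\dep} \models Q$.

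I expect no serious obstacle here; the only point requiring care is justifying the reverse homomorphism $\rchase{D}{\dep} \to \ochase{D}{\dep}$, which rests on the universality of the restricted chase together with the observation that $\ochase{D}{\dep}$ is itself a solution (the forward homomorphism is handed to us directly by Theorem~\ref{the:oblivious-is-universal}). Everything else reduces to the routine composition of homomorphisms and the homomorphism-invariance of BCQs.
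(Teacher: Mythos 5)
Your proposal is correct and follows exactly the route the paper intends: the corollary is stated without proof precisely because it is the immediate consequence of Theorem~\ref{the:oblivious-is-universal} (giving the homomorphism $\ochase{D}{\dep}\to\rchase{D}{\dep}$) together with the universality of the restricted chase and the fact that the fair oblivious chase is itself a solution, yielding homomorphic equivalence and hence invariance of BCQ satisfaction. Your care in justifying the reverse homomorphism via $\ochase{D}{\dep}\in\sol{D}{\dep}$ is exactly the right (and only nontrivial) step.
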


In the following, unless explicitly stated otherwise,
``chase'' will mean the oblivious chase, and $\chase{D}{\dep}$ will stand for
$\ochase{D}{\dep}$.


\subsection{Decision Problems}

Recall that,
by Theorem~\ref{the:ans-by-chase}, $D \cup \dep \models
Q$ iff $\chase{D}{\dep} \models Q$.  Based on this, we define two relevant
decision problems and prove their \textsc{logspace}-equivalence.

\begin{definition}\label{def:cq-eval}\rm
  The \emph{conjunctive query evaluation decision problem} \problem{CQeval} is
  defined as follows.  Given a conjunctive query $Q$ with $n$-ary head
  predicate $q$, a set of TGDs $\dep$, a database $D$ and a ground $n$-tuple
  $\vett{t}$, decide whether $\vett{t} \in \ans{Q}{D}{\dep}$
  or, equivalently, whether $\chase{D}{\dep} \cup \set{Q} \models q(\vett{t})$.
\end{definition}

\begin{definition}\label{def:bcq-eval}\rm
  The \emph{Boolean conjunctive query evaluation problem} \problem{BCQeval} is
  defined as follows.  Given a Boolean conjunctive query $Q$, a set of TGDs
  $\dep$, and a database $D$, decide whether $\chase{D}{\dep} \models Q$.
\end{definition}

The following result is implicit in~\cite{ChMe77}.
\begin{lemma}\label{lem:bcq-cq}
  The problems \problem{CQeval} and \problem{BCQeval} are
  \textsc{logspace}-equivalent.
\end{lemma}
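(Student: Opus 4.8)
The plan is to establish the claimed equivalence by exhibiting a logspace reduction in each direction, using the chase characterization of certain answers from Theorem~\ref{the:ans-by-chase} (together with Corollary~\ref{cor:chase-answering}, which permits me to work with the oblivious chase $\chase{D}{\dep}$).

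\textbf{Reducing \problem{BCQeval} to \problem{CQeval}.} This direction is essentially definitional. A Boolean conjunctive query is exactly a conjunctive query whose head predicate $q$ has arity $0$, so I map an instance $(Q,\dep,D)$ of \problem{BCQeval} to the instance $(Q,\dep,D,\tup{})$ of \problem{CQeval}, where $\tup{}$ is the empty tuple. This transformation is the identity on $Q$, $\dep$, $D$ and outputs the fixed symbol $\tup{}$, so it is trivially computable in logspace. By Definition~\ref{def:cq-eval} the target instance is a \textsc{yes}-instance iff $\chase{D}{\dep}\cup\set{Q}\models q(\tup{})$, which for a $0$-ary head predicate is precisely $\chase{D}{\dep}\models Q$, the acceptance condition of \problem{BCQeval}.

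\textbf{Reducing \problem{CQeval} to \problem{BCQeval}.} This is the substantive direction. Given an instance $(Q,\dep,D,\vett{t})$ with $Q$ of the form $q(\dd{X}{n})\la\Phi(\vett{X})$ and $\vett{t}=\tup{\dd{t}{n}}$, I would build the Boolean query $Q'$ whose body is obtained from $\Phi$ by replacing each distinguished variable $X_i$ with the constant $t_i\in\dom$, leaving the remaining (non-distinguished) variables of $\Phi$ existentially quantified, and output $(Q',\dep,D)$. Producing $Q'$ amounts to scanning $Q$ once and, for each occurrence of a distinguished variable, writing the corresponding component of $\vett{t}$; this requires only logarithmic working memory, so the reduction is in logspace. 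For correctness I invoke Theorem~\ref{the:ans-by-chase}: $\vett{t}\in\ans{Q}{D}{\dep}$ holds iff there is a homomorphism $h$ with $h(\body{Q})\subseteq\chase{D}{\dep}$ and $h(\head{Q})=q(\vett{t})$, i.e.\ $h(X_i)=t_i$ for all $i$. Since every homomorphism fixes constants, such an $h$ exists exactly when the constant-substituted body $\body{Q'}$ maps into $\chase{D}{\dep}$, that is, when $\chase{D}{\dep}\models Q'$, the acceptance condition of \problem{BCQeval} on the output instance.

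The only delicate point I expect to handle is the appearance of the constants $t_i$ in $Q'$. The paper's convention that queries are constant-free is only an ``unless stated otherwise'' simplification, whereas the general definition of a conjunctive query permits constants, so $Q'$ is a legitimate Boolean conjunctive query; if one insists on constant-free queries, the $t_i$ can equivalently be folded into $D$ via a fresh marking predicate at no extra cost in space. Apart from this bookkeeping, both reductions are straightforward and their correctness is an immediate consequence of the chase-based characterization of answers, so the two problems are \textsc{logspace}-equivalent.
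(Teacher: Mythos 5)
Your proposal is correct, and the first direction (viewing a BCQ as a $0$-ary instance of \problem{CQeval}) coincides with the paper's. For the substantive direction the paper takes a slightly different route: instead of substituting the constants $t_i$ into the query body, it keeps $Q'$ constant-free by conjoining a fresh atom $q'(X_1,\ldots,X_n)$ to $\body{Q}$ and adding the single fact $q'(c_1,\ldots,c_n)$ to the database, so that the target instance is $\chase{D\cup\set{q'(\dd{c}{n})}}{\dep}\models Q'$. The two constructions are interchangeable here --- your observation that homomorphisms fix constants is exactly what makes the direct substitution sound, and your fallback of ``folding the $t_i$ into $D$ via a fresh marking predicate'' \emph{is} the paper's construction --- but the paper's version has the minor advantage of respecting its own standing convention that queries are constant-free, whereas yours modifies the query alone and leaves $D$ and $\dep$ untouched. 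Both are plainly computable in logarithmic space, and both correctness arguments reduce to the same application of Theorem~\ref{the:ans-by-chase}, so either way the equivalence is established.
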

\begin{proof}
  Notice that \problem{BCQeval} can be trivially made into a special instance
  of \problem{CQeval}, e.g., by adding a propositional atom as head atom.  It
  thus suffices to show that \problem{CQeval} polynomially reduces to
  \problem{BCQeval}.  Let $\tup{Q,D,\dep,q(\vett{t})}$ be an instance of
  \problem{CQeval}, where $q/n$ is the head predicate of $Q$ and $\vett{t}$ is
  a ground $n$-tuple.  Assume the head atom of $Q$ is $q(X_1, \ldots, X_n)$ and
  $\vett{t} = \tup{c_1, \ldots, c_n}$.  Then define $Q'$ to be the Boolean
  conjunctive query whose body is $\body{Q}\wedge q'(X_1,
  \ldots, X_n)$, where $q'$ is a fresh predicate symbol not occurring in $D$,
  $Q$, or $\dep$
  It is easy to see that $q(\vett{t}) \in Q(\chase{D}{\dep})$ iff
  $\chase{D \cup \set{q'(\dd{c}{n})}}{\dep} \models Q'$.
\end{proof}

By the above lemma and by the well-known equivalence of the problem of query
containment under TGDs with the \problem{CQeval} problem
(Corollary~\ref{cor:answering-containment}), the three following problems are
\textsc{logspace}-equivalent:
\textit{(1)} CQ-eval under TGDs, \textit{(2)} BCQeval under TGDs, \textit{(3)}
query containment under TGDs.
Henceforth, we will consider only one of these problems, the
BCQ-eval problem.  By the above,
all complexity results carry over to the other problems.

\paragraph{Dealing with multiple head-atoms.}  It turns out that dealing with
multiple atoms in TGD heads complicates the proof techniques, so we
assume that all TGDs have a single atom in their head.  After
proving our results for single-headed TGDs, we will extend
these results to the case of multiple-atom heads in
Section~\ref{sec:multiple-atoms}. 


\subsection{Tree Decomposition and Related Notions}
\label{sec:tree-decomp}

We now introduce the required notions about tree decompositions.  A
\emph{hypergraph} is a pair $\H = \tup{V,H}$, where $V$ is the set of nodes and
$H \subseteq 2^V$.  The elements of $H$ are thus subsets of $V$; they are
called \emph{hyperedges}.
The \emph{Gaifman graph} of a hypergraph $\H = \tup{V,H}$, denoted by $\G_\H$,
is an undirected graph where $V$ is the set of nodes and an edge $(v_1, v_2)$
is in the graph if $v_1$ and $v_2$ jointly occur in some hyperedge in $H$.

Given a graph $\G = \tup{V,E}$, a \emph{tree decomposition} of $\G$ is a pair
$\tup{T,\lambda}$, where $T = \tup{N,A}$ is a tree, and $\lambda$ a labeling
function $\lambda: N \ra 2^V$ such that:
\begin{rplist} \itemsep-\parsep
\item for all $v \in V$ there is $n \in N$ such that $v \in \lambda(n)$;
  that is,
  $\lambda(N) = \bigcup_{n \in N} \lambda(n) = V$;
\item for every edge $e = (v_1,v_2) \in E$ there is $n \in N$
  such that $\lambda(n) \supseteq \{v_1,v_2\}$;
\item for every $v \in V$, the set $\{n \in N ~\mid~ v\in \lambda(n)\}$ induces
  a connected subtree in $T$.
\end{rplist}
The \emph{width} of a tree decomposition $\tup{T,\lambda}$ is the integer value
$\max \{|\lambda(n)|-1 \mid n \in N\}$.  The \emph{treewidth} of a graph $\G =
\tup{V,E}$, denoted by $\tw{\G}$, is the minimum width of all tree
decompositions of $G$.  Given a hypergraph $\H$, its treewidth $\tw{\H}$ is
defined as the treewidth of its Gaifman graph: $\tw{\H} = \tw{\G_\H}$.
%
Notice that the notion of treewidth immediately extends to relational
structures.





\section{Guarded and Weakly-Guarded TGDs: Decidability Issues}
\label{sec:decidability}

This section introduces \emph{guarded TGDs (GTGDs)} and \emph{weakly guarded
  sets of TGDs (WGTGDs)}, which enjoy several useful properties.  In
particular, we show that query answering under these TGDs is decidable.

\begin{definition}\label{def:gtgd}\rm
  Given a TGD $\sigma$ of the form $\Phi(\vett{X},\vett{Y}) \ra
  \Psi(\vett{X},\vett{Z})$, we say that $\sigma$ is \emph{a (fully) guarded TGD
    (GTGD)} if there exists an atom in the body, called a \emph{guard}, that
  contains all the universally quantified variables of $\sigma$, i.e., all the
  variables $\vett{X}, \vett{Y}$ that occur in $\Phi(\vett{X},\vett{Y})$.
\end{definition}



To define \emph{weakly guarded} sets of TGDs, we first give the notion of an
\emph{affected} position in a predicate of a relational schema, given a set of
TGDs $\dep$.  Intuitively, a position $\pi$ is affected in a set of TGDs $\dep$
if there exists a database $D$ such that a labeled null appears in some atom of
$\chase{D}{\dep}$ at position $\pi$.  The importance of affected positions for
our definitions is that no labeled null can appear in non-affected positions.
We define this notion below.

\begin{definition}\label{def:affected}\rm
  Given a relational schema $\R$ and a set of TGDs $\dep$ over $\R$, a position
  $\pi$ of a predicate $p$ of $\R$ is \emph{affected} with respect to $\dep$ if
  either:
  \begin{itemize}\itemsep-\parsep
  \item \textit{(base case)} for some $\sigma \in \dep$, an existentially
    quantified variable appears in $\pi$ in $\head{\sigma}$, or
  \item \textit{(inductive case)} for some $\sigma \in \dep$, the variable
    appearing at position $\pi$ in $\head{\sigma}$ also appears in
    $\body{\sigma}$, and \emph{only} at affected positions.
  \end{itemize}
\end{definition}

\begin{example} \label{exa:affected} 
Consider the following set of TGDs:
\[
\begin{array}{crcl}
  \sigma_1: & p_1(X,Y), p_2(X,Y) &\ra& \exists Z\, p_2(Y,Z)\\
  \sigma_2: & p_2(X,Y), p_2(W,X) &\ra& p_1(Y,X)
\end{array}
\]
Notice that $p_2[2]$ is affected since $Z$ is existentially quantified in
$\sigma_1$.  The variable $Y$ in $\sigma_1$ appears in $p_2[2]$ (which is an
affected position) and also in $p_1[2]$ (which is not an affected position).
Therefore $Y$ in $\sigma_1$ does \emph{not} make the position $p_2[1]$ an
affected one.  Similarly, in $\sigma_2$, $X$ appears in the affected position
$p_2[2]$ and also in the non-affected position $p_2[1]$.  Therefore, $p_1[2]$
is not affected.  On the other hand, $Y$ in $\sigma_2$ appears in $p_2[2]$ and
nowhere else. Since we have already established that $p_2[2]$ is an affected
position, this makes $p_1[1]$ also an affected position.
\end{example}


\begin{definition}\label{def:wgtg}\rm
  Consider a set of TGDs $\dep$ on a schema $\R$.  
  A TGD $\sigma \in \dep$ of the form $\Phi(\vett{X},\vett{Y}) \ra \exists\
  \Psi(\vett{X},\vett{Z})$ is said to be \emph{weakly guarded with respect
    to~$\dep$} ($WGTGD$) if there is an atom in $\body{\sigma}$, called a \emph{weak
    guard}, that contains all the universally quantified variables of $\sigma$
  that appear in affected positions \wrt~$\dep$ and do not also appear in
  non-affected positions \wrt~$\dep$.  The set $\dep$ is said to be a
  \emph{weakly guarded set of TGDs} if each TGD $\sigma \in \dep$ is
  weakly guarded \wrt~$\dep$.
\end{definition}

A GTGD or WGTGD may have more than one guard.  In such a case, we will pick a
lexicographically first guard or use some other criterion for fixing \emph{the}
guard of a rule.  The actual choice will not affect our proofs and results.



The following theorem shows the undecidability of conjunctive query answering
under TGDs.  This result, in its general form, follows from undecidability
results for TGD implication (see~\cite{BeVa81,ChLM81}).  We show here that the
CQ answering problem remains undecidable even in case of a \emph{fixed} set
$\dep$ of single-headed TGDs with a \emph{single} non-guarded rule, and a
ground atom as query.  Our proof is ``from first principles'' as it reduces the
well-known halting problem for Turing machines to query-answering under TGDs.
More recently, it was shown in~\cite{BLMS11} that CQ answering is undecidable
also in case $\dep$ contains a single TGD, which, however, contains multiple
atoms in its head.
%

\begin{theorem}\label{theo:grid} 
  There exists a fixed atomic BCQ $Q$ and a fixed set of TGDs $\dep_u$,
  where
  all TGDs in $\dep_u$ are guarded except one, such that it is undecidable to
  determine whether for a database $D$, $D \cup \dep_u \models Q$ or,
  equivalently, whether $\chase{D}{\dep_u} \models Q$.
\end{theorem}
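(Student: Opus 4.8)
The plan is to reduce the halting problem for Turing machines to the entailment problem $D \cup \dep_u \models Q$. Since $\dep_u$ and $Q$ are fixed, the input Turing machine $M$ together with its (say, empty) input must be encoded entirely in the database: I would let $D_M$ store the transition table of $M$ as a set of ground facts (one fact per transition, recording the current state, the scanned symbol, the new symbol, the move direction, and the new state), together with facts describing the initial configuration. The fixed rules $\dep_u$ then act as a \emph{universal} simulator that, reading this table, generates in the chase a representation of the entire run of $M$, and the fixed ground query $Q = \accept$ is entailed precisely when $M$ reaches a halting (accepting) state. Establishing this equivalence gives undecidability, since the halting problem is undecidable.

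The heart of the construction is to force the chase to build a two-dimensional \emph{grid} whose axes are tape position (space) and computation step (time), so that a cell with coordinates $(i,j)$ carries the content of tape cell $i$ at time $j$ together with the head position and the state. I would generate the grid with guarded rules that, from any cell, create a fresh right-neighbour and a fresh up-neighbour, e.g. $\mathit{cell}(x) \ra \exists y\, (H(x,y) \land \mathit{cell}(y))$ and $\mathit{cell}(x) \ra \exists y\, (V(x,y) \land \mathit{cell}(y))$, where $H$ and $V$ are the horizontal (space) and vertical (time) successor relations. On their own these guarded rules produce only a tree of cells, never a grid, because every existential introduces an unrelated fresh null. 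The single \emph{non-guarded} rule
\[
 H(x,x') \land V(x,y) \land V(x',y') \ra H(y,y')
\]
is what ``closes'' the grid squares: it forces the right-neighbour of an up-step and the up-step of a right-neighbour to be joined by an $H$-edge, and since its body variables $x,x',y,y'$ are not covered by any single atom, it is genuinely non-guarded. Once the grid is in place, further guarded rules encode $M$ itself: rules that copy a cell's symbol from time $j$ to time $j+1$ when the head is elsewhere, and rules that consult the transition facts in $D_M$ to update the symbol, head position, and state around the head. A final guarded rule derives $\accept$ from any cell whose state is halting.

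The main obstacle will be the interaction between the freely generating guarded rules and the grid-closing rule: because TGDs cannot equate nulls (we are deliberately barred from using EGDs, which would trivialise undecidability), a cell may acquire several right- or up-neighbours, so the generated structure is not a clean grid but a tangle that merely \emph{contains} a faithful grid. I must therefore design the simulation rules so that the computation of $M$ is propagated along \emph{some} correct grid path, and so that $\accept$ is derivable if and only if an accepting configuration genuinely appears there; in particular the rules must be robust to spurious extra edges and must never derive $\accept$ from a ``short-circuit'' that does not correspond to a legal step of $M$. The deeper conceptual point, which explains why exactly one non-guarded rule is both sufficient and indispensable, is that guarded and weakly guarded sets of TGDs have chases of bounded treewidth (Theorem~\ref{the:wgtgds-decidable}) and hence decidable query answering; a genuine space--time grid has unbounded treewidth, so the undecidability must be smuggled in precisely through the one rule that defeats the treewidth bound, namely the grid-closing rule above.
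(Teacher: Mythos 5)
Your high-level strategy coincides with the paper's: reduce the halting problem, store the machine's transition table and initial configuration in the database so that $\dep_u$ and $Q$ can be fixed, force a space--time grid to appear in $\chase{D}{\dep_u}$, and simulate the machine on that grid with guarded rules, the single non-guarded rule being the one that creates the grid. Where you diverge is in how the grid is built, and your version has a concrete gap beyond the one you flag. The paper does not grow a tree of cells with two successor relations and then close squares; it generates one clean linear chain via the guarded rules $\mathit{index}(X) \ra \exists Y\, \mathit{next}(X,Y)$ and $\mathit{next}(X,Y) \ra \mathit{index}(Y)$, and its unique non-guarded rule is a \emph{product} rule, $\mathit{trans}(\vett{T}), \mathit{next}(X_1,X_2), \mathit{next}(Y_1,Y_2) \ra \mathit{grid}(\vett{T},X_1,Y_1,X_2,Y_2)$, which pairs every two edges of that single chain (together with every transition tuple) into one wide grid atom. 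Because both coordinates are drawn from the same chain, there are no spurious neighbours and no tangle at all, so the robustness problem you identify as your main obstacle never arises.

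The gap you do not address is guardedness of the simulation rules. A transition rule must simultaneously mention a transition fact from $D$ (whose state and symbol arguments are necessarily variables, since the machine lives in the database), the current cell, its time successor, and a space neighbour's time successor; with only binary $H$ and $V$ atoms no single body atom covers all of these variables, so these rules are not guarded and your construction would contain many non-guarded rules rather than exactly one. The paper's wide $\mathit{grid}(\vett{T},X_1,Y_1,X_2,Y_2)$ atom is precisely the device that fixes this: it bundles a transition tuple together with a whole $2\times 2$ square of coordinates into a single atom, which then serves as the guard of every subsequent transition, inertia, and marking rule (with tape symbols instantiated by constants where needed to preserve guardedness). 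To salvage your square-closing construction you would need to introduce such a ``local neighbourhood'' atom and arrange for it to be produced without adding further non-guarded rules --- and you would still owe the soundness argument, which you only promise, that the tangled structure containing the genuine grid cannot produce a spurious derivation of $\accept$.
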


\begin{proof}
  The proof hinges on the observation that, with appropriate input facts $D$,
  using a fixed set of TGDs that consists of guarded TGDs and a single
  unguarded TGD, it is possible to force an infinite grid to appear in
  $\chase{D}{\dep_u}$.  By a further set of guarded rules, one can then easily
  simulate a deterministic universal Turing machine (TM) $\M$, which executes
  every deterministic TM with an empty input tape, whose transition table is
  specified in the database $D$.  This is done by using the infinite grid,
  where the $i$-th horizontal line of the grid represents the tape content at
  instant $i$.
  We assume that transitions of the Turing machine $\M$ are encoded into a
  relation $\mathit{trans}$ of $D$, where for example, the ground atom
  $\mathit{trans}(s_1,a_1,s_2,a_2,\mathit{right})$ means \emph{``if the current
    state is $s_1$ and symbol $a_1$ is read, then switch to state $s_2$, write
    $a_2$, and move to the right''}.

  We show how the infinite grid is defined.  Let $D$ contain (among other
  initialization atoms that specify the initial configuration of $\M$) the atom
  $\mathit{index}(0)$, which defines the initial point of the grid.  Also, we
  make use of three constants $\mathit{right}, \mathit{left}, \mathit{stay}$
  for encoding the three types of moves.  Consider the following TGDs:
\[
\begin{array}{c}
  \mathit{index}(X) \ra \exists Y\ \mathit{next}(X,Y)\\[1.7mm]
  next(X,Y) \ra \mathit{index}(Y)\\[1.7mm]
  \mathit{trans}(\vett{T}),
  next(X_1,X_2), next(Y_1,Y_2) \ra
  \mathit{grid}(\vett{T},X_1,Y_1,X_2,Y_2)
\end{array}
\]
where $\vett{T}$ stands for the sequence of argument variables
$S_1,A_1,S_2,A_2,M$, as appropriate for the predicate $\mathit{trans}$.  Note
that only the last of these three TGDs is non-guarded.  The above TGDs define
an infinite grid whose points have co-ordinates $X$ and $Y$ (horizontal and
vertical, respectively) and where for each point its horizontal and vertical
successors are also encoded.  In addition, each point appears together with
each possible transition rule.
It is not hard to see that we can simulate the progress of our Turing machine
$\M$ using suitable initialization atoms in $D$ and guarded TGDs.  To this end,
we need additional predicates $\mathit{cursor}(Y,X)$, meaning that the cursor
is in position $X$ at time $Y$, $\mathit{state}(Y,S)$, expressing that $\M$ is
in state $S$ at time $Y$, and $\mathit{content}(X,Y,A)$, expressing that at
time $Y$, the content of position $X$ in the tape is $A$.  The following rule
encodes the behavior of $\M$ on all transition rules that move the cursor to
the right:
%
\begin{multline*}
  \mathit{grid}(S_1,A_1,S_2,A_2,\mathit{right},X_1,Y_1,X_2,Y_2),\\
  \mathit{cursor}(Y_1,X_1),
  \mathit{state}(Y_1,S_1), \mathit{content}(X_1,Y_1,A_1) \ra\\
  \mathit{cursor}(Y_2,X_2), \mathit{content}(X_1,Y_2,A_2),
  \mathit{state}(Y_2,S_2), \mathit{mark}(Y_1,X_1)
\end{multline*}
%
Such a rule has also obvious sibling rules for ``left'' and ``stay'' moves.
For the sake of brevity only, the above rule contains multiple atoms in the
head. This is not a problem, as such rules have no existentially quantified
variables in the head.  Therefore, each TGD with multiple head-atoms can be
replaced by an equivalent set of TGDs with single-atom heads and identical
bodies.
%
%

Notice that the $\mathit{mark}$ predicate in the head marks the tape cell that
is modified at instant $Y_1$.  We now need additional ``inertia'' rules, which
ensure that all other positions in the tape are not modified between $Y_1$ and
the following time instant $Y_2$.  To this end, we use two different markings:
$\mathit{keep}_f$ for the tape positions that follow the one marked with
$\mathit{mark}$, and $\mathit{keep}_p$ for the preceding tape positions.  In
this way, we are able, by making use of guarded rules only, to ensure that, at
every instant $Y_1$, every tape cell $X$, such that $\mathit{keep}_p(Y_1,X)$ or
$\mathit{keep}_f(Y_1,X)$ is true, keeps the same symbol at the instant $Y_2$
following $Y_1$.  The rules below then propagate the aforementioned markings
forward and backwards, respectively, starting from the marked tape positions.
\[
\begin{array}{l}
  \mathit{mark}(Y_1,X_1), \mathit{grid}(\vett{T},X_1,Y_1,X_2,Y_2) \ra
  \mathit{keep}_f(Y_1,X_2)\\
  \mathit{keep}_f(Y_1,X_1), \mathit{grid}(\vett{T},X_1,Y_1,X_2,Y_2) \ra
  \mathit{keep}_f(Y_1,X_2)\\
  \mathit{mark}(Y_1,X_2), \mathit{grid}(\vett{T},X_1,Y_1,X_2,Y_2) \ra
  \mathit{keep}_p(Y_1,X_1)\\
  \mathit{keep}_p(Y_1,X_2), \mathit{grid}(\vett{T},X_1,Y_1,X_2,Y_2) \ra
  \mathit{keep}_p(Y_1,X_1)
\end{array}
\]
We also have inertia rules for all $a\in \set{a_1, \ldots, a_\ell, \blank}$,
where $\set{a_1, \ldots, a_\ell, \blank}$ is the tape alphabet:
\[
\begin{array}{l}
\mathit{keep}_f(Y_1,X_1), \mathit{grid}(\vett{T},X_1,Y_1,X_2,Y_2),
\mathit{content}(X_1,Y_1,a) \ra \mathit{content}(X_1,Y_2,a)\\
\mathit{keep}_p(Y_1,X_1), \mathit{grid}(\vett{T},X_1,Y_1,X_2,Y_2),
\mathit{content}(X_1,Y_1,a) \ra \mathit{content}(X_1,Y_2,a)
\end{array}
\]
%
Notice that we use the constant $a$ instead of a variable in the above rules in
order to have the guardedness property.  We therefore need two rules as above
for every tape symbol, that is, $2\ell + 2$ inertia rules altogether.

Finally, we assume, without loss of generality, that our Turing machine $\M$
has a single halting state $s_0$ which is encoded by the atom
$\mathit{halt}(s_0)$ in $D$.  We then add a guarded rule \(
\mathit{state}(Y,S), \mathit{halt}(S) \ra \mathit{stop} \).  It is now clear
that the machine halts iff $\chase{D}{\dep_u} \models \mathit{stop}$, i.e., iff
$D \cup \dep_u \models \mathit{stop}$.  We have thus reduced the halting
problem to the problem of answering atomic queries over a database under
$\dep_u$.  The latter problem is therefore undecidable.
\end{proof}


\begin{definition}[Guarded chase forest, restricted GCF] \label{def:gcgraph} Given a set of
  WGTGDs $\dep$ and a database $D$, the \emph{guarded chase forest (GCF)} for $D$ and
  $\dep$, denoted $\gcf{D}{\dep}$, is constructed as follows.
  \begin{aplist} \itemsep-\parsep
  \item For each atom (fact) $\atom{d}$ in $D$, add a node labeled with
    $\atom{d}$.
  \item For every node $v$ labeled with $\atom{a} \in \chase{D}{\dep}$ and for
    every atom $\atom{b}$ obtained from $\atom{a}$ (and possibly other
    atoms) by
    a one-step application of a TGD $\sigma \in \dep$, if $\atom{a}$ is the
    image
    of the guard of $\sigma$ then add one node $v'$ labeled with $\atom{b}$ and
    an arc going from $v$ to $v'$.
\end{aplist}
Assuming the chase forest $\gcf{D}{\dep}$ is built inductively, following
precisely the strategy of a fixed deterministic chase procedure, the set of all
non-root nodes of the chase forest is totally ordered by a relation $\prec$
that reflects their order of generation. The \emph{restricted GCF} for $D$ and
$\dep$, denoted $\rgcf{D}{\dep}$, is obtained from $\gcf{D}{\dep}$ by
eliminating 
each subtree rooted in a node $w$ whose label is a duplicate of an earlier
generated node.  Thus, if $v$ and $w$ are nodes labeled by the same atom, and
$v\prec w$, $w$ and all nodes of the subtree rooted in $w$ are eliminated from
$\gcf{D}{\dep}$ so as to obtain $\rgcf{D}{\dep}$.  Note that in
$\rgcf{D}{\dep}$ each label occurs only once, therefore we can identify the
nodes with their labels and say, for instance, ``the node $\atom{a}$'' instead
of ``the node $v$ labeled by $\atom{a}$''.
\end{definition}

\begin{example}\label{exa:chase-forest}
  Consider again Example~\ref{exa:chase} on page~\pageref{exa:chase}.  The
  corresponding (infinite) guarded chase forest is shown in
  Figure~\ref{fig:chase-forest}.
  Every edge from an $\atom{a}$-node to a $\atom{b}$-node is labeled with the
  TGD whose application causes the introduction of $\atom{b}$.  Notice that
  some atoms (e.g., $r_2(b)$ or $r_2(z_2)$) label more than one node in the
  forest.  The nodes belonging also to the restricted GCF are shaded in the
  figure.
\end{example}

\begin{figure}[tbh]
  \centering
  \resizebox{9cm}{!}{\input{chase_forest_new.pstex_t}}
  \caption{Chase forest for Example~\ref{exa:chase-forest}.}
  \label{fig:chase-forest}
\end{figure}


The goal of the following material is to show that, for weakly guarded sets
$\dep$ of TGDs, the possibly infinite set of atoms $\chase{D}{\dep}$ has finite
treewidth (Lemma~ \ref{lem:finite-treewidth}).  This will then be used to show
the decidability of query-answering under WGTGDs
(Theorem~\ref{the:wgtgds-decidable}).  As a first step towards proving that
$\chase{D}{\dep}$ has finite treewidth, we generalize the notion of acyclicity
of an instance, and then point out the relationship between this notion and
treewidth.  We will then show that $\chase{D}{\dep}$ enjoys (a specific version
of) our generalized form of acyclicity (Lemma~\ref{lem:d-acyclic}), from which
the finite treewidth result immediately follows.

\begin{definition}\label{def:s-acyclic}\rm
  Let $B$ be a (possibly infinite) 
  instance for a schema $\R$ and let $S \subseteq \adom{B}$.
%
  \begin{itemize} \itemsep-\parsep
  \item An $[S]$-join forest $\tup{F,\mu}$ of $B$ is an undirected labeled
    forest $F = \tup{V,E}$ (finite or infinite), whose labeling function $\mu:
    V \ra B$ is such that:
    \begin{plist}\itemsep-\parsep
    \item $\mu$ is an epimorphism, i.e., $\mu(V) = B$;
    \item $F$ is $[S]$-connected, i.e., for each $c \in \adom{B} - S$, the set
      $\{v \in V ~\mid~ c \in \adom{\mu(v)}\}$ induces a connected subtree in
      $F$.
    \end{plist}
  \item We say that $B$ is \emph{$[S]$-acyclic} if $B$ has an $[S]$-join
    forest.
  \end{itemize}
\end{definition}

Notice that we are dealing with a relational instance, but the above definition
works for any relational structure, including queries.
Definition~\ref{def:s-acyclic} generalizes the classical notion of hypergraph
acyclicity~\cite{BFMM*81} of an instance or of a query: an instance or a query,
seen as a hypergraph, is hypergraph-acyclic (which is the same as
$\alpha$-acyclic according to~\cite{Fagi83}) if and only if it is
$[\emptyset]$-acyclic .


The following Lemma follows from the definitions of $[S]$-acyclicity.

\begin{lemma}\label{lem:acyctreewidth}
  Given an instance $B$ for a schema $\R$, and a set $S \subseteq \adom{B}$, if
  $B$ is $[S]$-acyclic, then $\tw{B} \leq |S| + w$, where $w$ is the maximum
  predicate arity in $\R$ and $\tw{B}$ is the treewidth of $B$.
\end{lemma}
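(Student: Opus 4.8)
The plan is to turn the given $[S]$-join forest into a tree decomposition of (the Gaifman graph of) $B$ by \emph{padding every bag with~$S$}. Concretely, let $\tup{F,\mu}$ be an $[S]$-join forest of $B$ with $F = \tup{V,E}$, which exists because $B$ is $[S]$-acyclic. Define a labeling $\lambda\colon V \ra 2^{\adom{B}}$ by $\lambda(v) = S \cup \adom{\mu(v)}$. First I would dispose of the trivial case $B = \emptyset$, and otherwise connect the (possibly infinitely many) components of $F$ into a single tree: introduce a fresh root node $\rho$ with $\lambda(\rho) = S$ and add one edge from $\rho$ to a chosen node of each component of $F$. The resulting structure $T$ is connected and acyclic (we add exactly one edge per component, none of which closes a cycle), hence a tree, even when $V$ is infinite.

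Next I would check that $\tup{T,\lambda}$ satisfies the three conditions of a tree decomposition of the Gaifman graph $\G_B$. Since $\mu$ is an epimorphism, every atom of $B$ equals $\mu(v)$ for some $v$; thus every $c \in \adom{B}$ lies in $\adom{\mu(v)} \subseteq \lambda(v)$ for some $v$ (node coverage), and every Gaifman edge $(c_1,c_2)$, arising from two elements jointly occurring in some atom $\atom{a} \in B$, is contained in the bag of any $v$ with $\mu(v) = \atom{a}$ (edge coverage). For the connectivity condition I would split on whether the element lies in $S$: if $c \in S$ then $c \in \lambda(v)$ for \emph{every} node of $T$ (including $\rho$), so the induced set of bags is all of $T$, which is connected; if $c \notin S$ then $c \in \lambda(v)$ iff $c \in \adom{\mu(v)}$, so the relevant node set is exactly $\{v \in V \mid c \in \adom{\mu(v)}\}$, which the $[S]$-connectedness of $F$ guarantees induces a connected subtree of $F$ --- and since this set lives inside a single component of $F$ and excludes $\rho$, the edges added to form $T$ do not disturb its induced subgraph, so it remains connected in $T$.

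Finally I would bound the width. For each $v \in V$ we have $|\lambda(v)| \leq |S| + |\adom{\mu(v)}|$, and $\adom{\mu(v)}$ is the set of arguments of a single atom, so $|\adom{\mu(v)}| \leq w$, the maximum predicate arity; also $|\lambda(\rho)| = |S|$. Hence every bag has size at most $|S| + w$, the width $\max_n(|\lambda(n)|-1)$ is at most $|S| + w - 1$, and therefore $\tw{B} = \tw{\G_B} \leq |S| + w$, as claimed.

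I expect the only delicate point to be the passage from the forest to a tree and the verification that $[S]$-connectedness survives this passage; padding every bag with $S$ (and giving the new root the bag $S$) is precisely what makes the connectivity condition hold for the elements of $S$ ``for free'' while leaving the within-component subtrees for the remaining elements untouched. The infinite case requires no extra work beyond noting that a tree (and a tree decomposition) may be infinite and that the star-like attachment of components to $\rho$ remains acyclic for an arbitrary index set.
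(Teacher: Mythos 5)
Your proposal is correct and follows essentially the same route as the paper's proof: pad every bag of the $[S]$-join forest with $S$, attach the components to a fresh auxiliary node labeled $S$, and verify the three tree-decomposition conditions by splitting the connectivity check on whether an element lies in $S$. The only (immaterial) differences are that you explicitly dispose of the empty case and note that the width is in fact bounded by $|S|+w-1$.
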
 

\begin{proof}
  By hypothesis, $B$ is $[S]$-acyclic and therefore has an $[S]$-join forest
  $\tup{F,\mu}$, where $F = \tup{V,E}$.
  \begin{andrea}
    A tree decomposition $\tup{T,\lambda}$ with $T = \tup{N,A}$, is
    constructed as follows.  First, we take $N = V \cup \set{n_0}$, where $n_0$ is an
    auxiliary node. 
    Let $V_r \subseteq V$ be the set of nodes that
    are roots in the $[S]$-join forest $F$ and let
    $A_r$ be the set of edges from $n_0$ to each node in $V_r$. We define  $A = E \cup A_r$.  The
    labeling function is defined as follows: $\lambda(n_0) = S$, and for all nodes $v\neq n_0$,
$\lambda(v)=  \adom{\mu(v)} \cup S$.
We now show that $\tup{T,\lambda}$ is a tree decomposition.  Recalling the
    definition of tree decompositions in
    Section~\ref{sec:tree-decomp}, \textit{(i)} holds
    trivially because $F$ is a join forest and $\mu(V) = B$.
    As for \textit{(ii)}, we notice that edges in the Gaifman graph of $B$
    are such that for each atom $\atom{d} = r(\dd{c}{m})$ in $B$ there is a
    clique among nodes $\dd{c}{m}$.  Since for the same atom there exists $v
    \in V$ such that $\mu(v) = \atom{d}$ and $\lambda(v) \supseteq
    \adom{\mu(v)}$, \textit{(ii)} holds immediately.  Finally we consider
    connectedness. Let us take a value $c$ appearing in $B$ as argument.  If $c
    \in S$, the set $\set{v \in N \mid c \in \lambda(v)}$ is the entire $N$, by
    construction, so connectedness holds. If $c \not\in S$, the set $\set{v \in
      N \mid c \in \lambda(v)}$ induces a connected subtree in $F$ and
    therefore in $T$, since $\lambda(v) = \mu(v) \cup S$.  Therefore,
    \textit{(iii)} holds.  Notice also that the width of such a tree
    decomposition is at most $|S|+w$ by construction.
  \end{andrea}
\end{proof}


\begin{definition}\label{def:dom}\rm
  Let $D$ be a database for a schema $\R$, and $\hb(D)$ be the Herbrand Base of
  $D$ as defined in Section~\ref{sec:preliminaries}.
  We define:
  \begin{itemize}\itemsep-\parsep
  \item $\chasehb{D}{\dep} = \chase{D}{\dep} \cap \hb(D)$, and
  \item $\chasefresh{D}{\dep}= \chase{D}{\dep} - \chasehb{D}{\dep}$
  \end{itemize}
\end{definition}
In plain words,
$\chasehb{D}{\dep}$ is the finite set of all null-free atoms in
$\chase{D}{\dep}$. In contrast,
$\chasefresh{D}{\dep}$ may be infinite; it is the set of
all atoms in $\chase{D}{\dep}$ that have at least one null as an argument.
Note that $\chasehb{D}{\dep} \cup \chasefresh{D}{\dep} = \chase{D}{\dep}$ and
$\chasehb{D}{\dep} \cap \chasefresh{D}{\dep} =\emptyset$.

\begin{lemma}\label{lem:d-acyclic}
  If $\dep$ is a weakly guarded set of TGDs and $D$ a database, then
  $\chasefresh{D}{\dep}$ is $[\adom{D}]$-acyclic, and so is therefore
$\chase{D}{\dep}$.
\end{lemma}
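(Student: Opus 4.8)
The plan is to build an explicit $[\adom{D}]$-join forest (Definition~\ref{def:s-acyclic}) for $\chasefresh{D}{\dep}$ out of the guarded chase forest $\gcf{D}{\dep}$ of Definition~\ref{def:gcgraph}. Everything rests on one structural fact about weakly guarded sets, which I would isolate first: \emph{in $\chase{D}{\dep}$ every labeled null occurs only at affected positions} (Definition~\ref{def:affected}). This is a routine induction on the number of chase steps. A freshly created null is placed by an existential head variable, hence sits at a position affected by the base case; a null copied from a body atom into a head atom travels along some universally quantified variable $X$, and since every occurrence of $X$ holds that null, the inductive hypothesis puts all body occurrences of $X$ at affected positions, so the target head position is affected by the inductive case.

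From this I would derive the \emph{key property}, which is the heart of the argument: when an atom $\atom{b}$ is produced by applying a WGTGD $\sigma$ via a homomorphism whose weak guard maps to the atom $\atom{a}$, every null occurring in $\atom{b}$ is \emph{either} freshly introduced in this very step \emph{or} already occurs in $\atom{a}$. Indeed, a non-fresh null $c$ of $\atom{b}$ is the image $h(X)$ of some universally quantified variable $X$ of $\sigma$. Because $c$ is a null, every occurrence of $X$ in $\sigma$ must lie at an affected position---otherwise $c$ would land at a non-affected position, contradicting the fact established above. Hence $X$ is a universal variable that appears at affected positions and at no non-affected position, so by Definition~\ref{def:wgtg} it is covered by the weak guard of $\sigma$; therefore $c = h(X)$ occurs in $\atom{a}$. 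I expect this step to demand the most care, since it is the unique place where weak guardedness is actually used.

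Granting the key property, the forest structure almost writes itself. Each null $c$ is created fresh at exactly one chase step, and because that step's output atom contains the globally fresh symbol $c$ it labels a \emph{unique} node $v_c$ of $\gcf{D}{\dep}$. Any other node whose label contains $c$ is non-root (roots are null-free database atoms) and carries $c$ as a non-fresh null, so by the key property its guard-parent also contains $c$. Following guard-parents thus stays inside the set $N_c$ of $c$-containing nodes until it reaches $v_c$, which shows that $N_c$ induces a connected subtree of $\gcf{D}{\dep}$. Crucially, every node of $N_c$ is labeled by an atom with a null, i.e.\ by an atom of $\chasefresh{D}{\dep}$, and all parent edges used inside $N_c$ run between two such nodes.

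I would then let $F$ be the subforest of $\gcf{D}{\dep}$ induced by the nodes whose labels lie in $\chasefresh{D}{\dep}$, and let $\mu$ send each node to its label. Restricting a forest to a set of nodes yields a forest, and $\mu$ is onto $\chasefresh{D}{\dep}$ because $\gcf{D}{\dep}$ contains every chase atom as a label. The only elements of $\adom{\chasefresh{D}{\dep}} - \adom{D}$ are labeled nulls---every constant of the chase already belongs to $\adom{D}$, by the standing assumption that constants in TGD heads also occur in their bodies---and the connected-subtree condition for each such null survives the restriction by the previous paragraph. Hence $\tup{F,\mu}$ is an $[\adom{D}]$-join forest and $\chasefresh{D}{\dep}$ is $[\adom{D}]$-acyclic. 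Finally, since $\chase{D}{\dep} = \chasefresh{D}{\dep} \cup \chasehb{D}{\dep}$ and the atoms of $\chasehb{D}{\dep}$ are null-free (Definition~\ref{def:dom}), attaching them as extra isolated nodes introduces no value outside $\adom{D}$ and so cannot violate connectivity for any null; this yields an $[\adom{D}]$-join forest for $\chase{D}{\dep}$ as well.
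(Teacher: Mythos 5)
Your proof is correct and follows essentially the same route as the paper: the paper also exhibits the (restricted) guarded chase forest as the $[\adom{D}]$-join forest and relies on an auxiliary lemma stating that a null, once introduced at a node, appears in every atom on the path down to any descendant containing it. Your ``key property'' is precisely the single-step form of that lemma, and you are in fact more explicit than the paper about the one place where weak guardedness is actually used (the variable carrying a non-fresh null occurs only at affected positions, hence is covered by the weak guard).
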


In order to prove this result, we resort to an auxiliary lemma.
\begin{lemma} \label{lem:wgtgds-chase-connectedness} Let $D$ be a database and
  $\dep$ a weakly guarded set of TGDs.  Let $\atom{a}_s$ be a node of
  $\rgcf{D}{\dep}$ where the null value $\zeta \in \freshdom$ is \emph{first}
  introduced, and let $\atom{a}_f$ be a descendant node of $\atom{a}_s$ in
  $\rgcf{D}{\dep}$ that has $\zeta$ as an argument.  Then, $\zeta$ appears in
  every node (=atom) on the (unique) path from $\atom{a}_s$ to $\atom{a}_f$.
  \end{lemma}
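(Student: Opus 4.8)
The plan is to walk down the unique path $\atom{a}_s = \atom{b}_0, \atom{b}_1, \ldots, \atom{b}_k = \atom{a}_f$ of $\rgcf{D}{\dep}$ and to prove the following \emph{local propagation property}: for each $i$ with $0 \le i < k$, if $\zeta$ occurs in the child $\atom{b}_{i+1}$ then $\zeta$ already occurs in the parent $\atom{b}_i$. Since $\zeta$ occurs in $\atom{a}_f = \atom{b}_k$ by hypothesis, a backward induction on $i$ from $k$ down to $0$ then yields that $\zeta$ occurs in every $\atom{b}_i$, which is exactly the claim.

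The central tool I would use is the invariant that \emph{labeled nulls occur only at affected positions} throughout $\chase{D}{\dep}$. This is the very property motivating Definition~\ref{def:affected}, and it is established by a routine induction on the number of chase steps: in the base case $D$ is null-free; for the inductive step, a null occurring in a newly produced head atom is either a freshly invented null sitting at an existential head position (affected by the base case of Definition~\ref{def:affected}), or the image of a frontier variable that---by the induction hypothesis---already sits only at affected positions, which makes its head position affected by the inductive case of Definition~\ref{def:affected}.

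Next I would carry out the local propagation step itself. By construction of the restricted guarded chase forest, the edge $\atom{b}_i \ra \atom{b}_{i+1}$ corresponds to a one-step application of some TGD $\sigma_i = \Phi(\vett{X},\vett{Y}) \ra \exists \vett{Z}\, \Psi(\vett{X},\vett{Z})$ via a homomorphism $\lambda$ in which $\atom{b}_i$ is the image of the (weak) guard of $\sigma_i$ and $\atom{b}_{i+1}$ is the produced head atom. Because $\zeta$ is \emph{first} introduced at $\atom{a}_s = \atom{b}_0$, and every application along the path occurs strictly later (its output is a descendant of $\atom{b}_0$) and invents only nulls lexicographically newer than $\zeta$, none of these applications invents $\zeta$. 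Hence, if $\zeta$ occurs in $\atom{b}_{i+1}$ it cannot be the image of an existential variable of $\vett{Z}$, so $\zeta = \lambda(X)$ for some frontier variable $X \in \vett{X}$ occurring in the head. Since $\lambda(X) = \zeta$ is a null, the invariant above forces every body position at which $X$ occurs to be affected, so $X$ occurs at affected positions and at no non-affected position. By Definition~\ref{def:wgtg}, such a variable must be covered by the weak guard of $\sigma_i$; therefore $X$ occurs in the guard atom, and consequently $\zeta = \lambda(X)$ occurs in its image $\atom{b}_i$. This proves the local propagation property and closes the backward induction.

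The step deserving the most care---the main obstacle---is the implication ``$\lambda(X)$ is a null, therefore $X$ is covered by the weak guard''. It is tempting but \emph{false} to conclude covering merely from $X$ occurring at some affected position; weak guardedness only guarantees covering for variables that occur at affected positions \emph{and not} at any non-affected position. The argument therefore genuinely hinges on the null-only-at-affected-positions invariant to rule out that $X$ occurs at a non-affected position: if it did, then $\zeta$ would appear at a non-affected position in $\lambda(\body{\sigma_i}) \subseteq \chase{D}{\dep}$, contradicting the invariant. Once this point is secured, the remainder is bookkeeping over the structure of the chase forest.
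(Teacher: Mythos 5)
Your proposal is correct and follows essentially the same route as the paper's proof: both rest on the invariant that labeled nulls occur only at affected positions, combined with the fact that $\zeta$ cannot be re-invented after $\atom{a}_s$, so the only way $\zeta$ can enter a child atom is via a frontier variable that weak guardedness forces into the guard, whose image is the parent. Your write-up is in fact more explicit than the paper's (which compresses the weak-guardedness step into a single sentence and argues by contradiction at the first ``gap'' rather than by backward induction), and the point you flag as the main obstacle---that covering requires the variable to avoid \emph{all} non-affected positions, which is exactly what the null-only-at-affected-positions invariant supplies---is precisely the detail the paper leaves implicit.
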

\begin{proof}
  Let $\atom{a}_1 = \atom{a}_s, \atom{a}_2, \ldots, \atom{a}_n =
  \atom{a}_f$ be the path from $\atom{a}_s$ to $\atom{a}_f$.  By the
  definition of affected positions, $\zeta$ appears \emph{only}
  in affected positions in the atoms in the chase.
  Suppose, to the contrary,
  that $\zeta$ does not
  appear in some intermediate atom in the above path.  Then, there is $i$, $2
  \leq i \leq n-1$, such that $\zeta$ does not appear in $\atom{a}_i$, but
  appears in $\atom{a}_{i+1}$.  Since $\zeta$ appears only in affected
  positions, in order to be in $\atom{a}_{i+1}$ it must either appear in
  $\atom{a}_i$ or to be invented when $\atom{a}_{i+1}$ was added.  The
  first case is ruled out by the assumption, and the second is impossible because
  $\zeta$ was first introduced in $\atom{a}_1$, not in $\atom{a}_{i+1}$---a
  contradiction. 
\end{proof}
We now come back to the proof of Lemma~\ref{lem:d-acyclic}.
\begin{andrea}
  \begin{proof}
    The proof is constructive, by exhibiting a $[\adom{D}]$-join forest $F =
    \tup{V,E}$ for $\chase{D}{\dep}$.  We take $F$ as $\rgcf{D}{\dep}$ and
    define, for each atom $\atom{d} \in \rgcf{D}{\dep}$, the labeling function
    $\mu$ for $F$ as $\mu(\atom{d}) = \atom{d}$.  Since every atom of
    $\chase{D}{\dep}$ is ``covered'' by its corresponding node of $F$, it only
    remains to show that $\chase{D}{\dep}$ is $[\adom{D}]$-connected.  Take a
    pair of distinct atoms $\atom{a}_1, \atom{a}_2$ in $\rgcf{D}{\dep}$ that
    both have the same value $c \in \freshdom$ in an argument. The atoms
    $\atom{a}_1$ and $\atom{a}_2$ must have a common ancestor $\atom{a}$ in
    $\rgcf{D}{\dep}$ where $c$ was first invented: if they do not, the value $c$
    would have to be introduced twice in $\chase{D}{\dep}$.  By
    Lemma~\ref{lem:wgtgds-chase-connectedness}, $c$ appears in all atoms on the
    paths from $\atom{a}$ to $\atom{a}_1$ and from $\atom{a}$ to $\atom{a}_2$.
    It thus follows that the set $\set{v \in V ~\mid~ c \in \mu(v)}$ induces a
    connected subtree in $F$.
  \end{proof}
\end{andrea}

\begin{lemma} \label{lem:finite-treewidth} If $\dep$ is a weakly guarded set of
  TGDs and $D$ a database for a schema $\R$, then $\tw{\chase{D}{\dep}} \leq
  |\adom{D}| + w$,
  where $w$ is the maximum predicate arity in $\R$.
\end{lemma}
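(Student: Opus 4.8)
The plan is to read off the result directly by chaining the two immediately preceding lemmas; essentially all the work has already been done. First I would invoke Lemma~\ref{lem:d-acyclic} with $S = \adom{D}$, which guarantees that the (possibly infinite) instance $\chase{D}{\dep}$ is $[\adom{D}]$-acyclic, i.e., that it admits an $[\adom{D}]$-join forest---concretely, the restricted guarded chase forest $\rgcf{D}{\dep}$ itself, labeled by the identity. This is the step that encapsulates all the genuine combinatorial content, since it rests in turn on the connectedness of each null along the (unique) chase path from the point where it is invented (Lemma~\ref{lem:wgtgds-chase-connectedness}) and on the fact that a null can occur only at affected positions.

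With $[\adom{D}]$-acyclicity in hand, I would then apply Lemma~\ref{lem:acyctreewidth} under the substitution $B := \chase{D}{\dep}$ and $S := \adom{D}$. That lemma turns an $[S]$-join forest into a genuine tree decomposition of width at most $|S| + w$, where $w$ is the maximum predicate arity in $\R$, by padding every bag $\adom{\mu(v)}$ with the whole of $S$ and attaching all forest roots to one fresh node whose bag is exactly $S$. Instantiating $S = \adom{D}$ immediately yields $\tw{\chase{D}{\dep}} \leq |\adom{D}| + w$, which is the claim.

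Since both ingredients are already established, there is no real obstacle: the entire argument is a two-line composition of Lemma~\ref{lem:d-acyclic} and Lemma~\ref{lem:acyctreewidth}. The only points worth a moment's care are purely bookkeeping---that the set $S$ fed to both lemmas is literally the same $\adom{D}$, and that the parameter $w$ denotes the same maximum arity over $\R$ in both places. The bound is genuinely \emph{finite} because $\adom{D}$ is finite (as $D$ is a finite database) and $w$ is a fixed schema parameter; hence the treewidth stays bounded no matter how large the infinite chase grows, which is the point of the lemma.
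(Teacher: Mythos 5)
Your proof is correct and is exactly the paper's argument: the paper's proof of Lemma~\ref{lem:finite-treewidth} is the one-line composition of Lemma~\ref{lem:d-acyclic} (giving $[\adom{D}]$-acyclicity of the chase) with Lemma~\ref{lem:acyctreewidth} instantiated at $S=\adom{D}$. Your additional remarks about where the combinatorial content lives and about finiteness of the bound are accurate but not needed.
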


\begin{proof}
  The claim follows from
  Lemmas~\ref{lem:acyctreewidth} and~\ref{lem:d-acyclic}.
\end{proof}  


\begin{theorem} \label{the:wgtgds-decidable} 
  Given a relational schema $\R$, a weakly guarded set of TGDs $\dep$, a
  Boolean conjunctive query $Q$, and a database $D$ for $\R$, the problem of
  checking whether $D \cup \dep \models Q$, or equivalently $\chase{D}{\dep}
  \models Q$, is decidable.
\end{theorem}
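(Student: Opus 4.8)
The plan is to decide the entailment by turning it into a bounded-treewidth satisfiability question and then invoking the decidability of the monadic theory of trees. First I would use the fact that the (oblivious) chase is a universal model: by Theorem~\ref{the:ans-by-chase} and Corollary~\ref{cor:ochase-universal}, $D \cup \dep \models Q$ holds if and only if $\chase{D}{\dep} \models Q$. Equivalently, writing $\neg Q$ for the (universally quantified, first-order) negation of the Boolean conjunctive query $Q$, the entailment $D \cup \dep \models Q$ fails precisely when the first-order theory $\mathcal{T} = D \cup \dep \cup \{\neg Q\}$ is satisfiable. Thus it suffices to decide satisfiability of $\mathcal{T}$.

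The crucial observation, and the step that makes the whole argument go through, is that $\mathcal{T}$ is satisfiable if and only if it has a model whose treewidth is at most $k := |\adom{D}| + w$, where $w$ is the maximum predicate arity of $\R$. The forward implication is the non-trivial one, and it is exactly where Lemma~\ref{lem:finite-treewidth} is used: if $\mathcal{T}$ is satisfiable then $D \cup \dep \not\models Q$, hence, by the universality step above, $\chase{D}{\dep} \not\models Q$; since $\chase{D}{\dep}$ is itself a model of $D \cup \dep$, it satisfies $\neg Q$ as well, so it is a model of $\mathcal{T}$, and by Lemma~\ref{lem:finite-treewidth} its treewidth is at most $k$. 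The reverse implication is immediate. Note that this equivalence is not a generic model-theoretic fact about first-order theories (unrestricted first-order satisfiability is undecidable); it holds here only because the chase supplies a canonical counter-model of bounded treewidth whenever entailment fails.

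It then remains to decide whether $\mathcal{T}$ has a model of treewidth at most $k$. Since $D$, $\dep$ and $\neg Q$ are all first-order sentences and $k$ is an effectively computable constant, I would encode structures of treewidth at most $k$ as labelled trees, via their width-$k$ tree decompositions, and translate the property ``$M$ codes a structure of treewidth at most $k$ satisfying $\mathcal{T}$'' into an equivalent monadic second-order sentence over trees. By the decidability of the monadic second-order theory of trees together with the generalized tree-model property~\cite{GoGr00,Grae99}, satisfiability over the class of structures of treewidth at most $k$ is decidable; running this procedure on $\mathcal{T}$ and negating its output yields a decision procedure for $D \cup \dep \models Q$.

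The main obstacle is the bounded-treewidth reduction of the second paragraph: one must resist the temptation to treat $\mathcal{T}$ as a guarded theory (it is not --- neither $\neg Q$ nor the weakly guarded rules are guarded sentences) and cannot appeal directly to the decidability of guarded logics; instead one must exploit that the chase is \emph{simultaneously} a model of $D \cup \dep$ and of bounded treewidth, which is precisely what Lemmas~\ref{lem:d-acyclic} and~\ref{lem:finite-treewidth} secure. The remaining effort is routine bookkeeping in the translation to the monadic theory of trees, which is standard once the bounded-treewidth model property has been established.
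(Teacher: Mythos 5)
Your proposal is correct and follows essentially the same route as the paper: both reduce $D \cup \dep \models Q$ to (un)satisfiability of $D \cup \dep \cup \{\neg Q\}$, observe via Lemma~\ref{lem:finite-treewidth} that the chase supplies a bounded-treewidth model whenever this theory is satisfiable, and conclude by decidability of satisfiability over bounded-treewidth structures. The only cosmetic difference is that the paper cites Courcelle's theorem as a black box for classes with the finite-treewidth model property, whereas you unfold that black box into the standard encoding of width-$k$ tree decompositions into the monadic second-order theory of trees; the substance, including the warning that $D \cup \dep \cup \{\neg Q\}$ is not itself a guarded theory, is the same.
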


\begin{proofsk} We first remind a key result in~\cite{courcelle1990mso}, that
  generalizes an earlier result in~\cite{rabin1969dso}.  Courcelle's result
  states that the satisfiability problem is decidable for classes of
  first-order theories (more generally, theories of monadic second-order logic)
  that enjoy the finite treewidth model property.  A class $\C$ of theories has
  the finite-treewidth model property if for each satisfiable theory $\T \in
  \C$ it is possible to compute an integer $f(\T)$ such that $\T$ has a model
  of treewidth at most $f(\T)$---see also the works~\cite{GoGr00,Grae99}, where
  a more general property, called the generalized tree-model property, is
  discussed.  We apply this to prove our theorem.

  Let $\neg Q$ be the universal sentence obtained by negating the existentially
  quantified conjunctive query $Q$.  For all classes of TGDs, $D \cup
  \dep\models Q$ iff $\chase{D}{\dep} \models Q$ iff $D\cup \dep \cup \neg Q$
  is unsatisfiable.  Trivially, deciding whether $D \cup \dep \models Q$ is
  equivalent under Turing reductions to deciding whether $D \cup \dep \not \models Q$.
  The latter holds iff $D \cup \dep\cup \{\neg Q\}$ is satisfiable or,
  equivalently, iff $\chase{D}{\dep}$ is a model of $\neg Q$ which, in turn,
  holds iff $\chase{D}{\dep}$ is a model of $D \cup \dep \cup \{\neg Q\}$.  By
  Lemma~\ref{lem:finite-treewidth}, for WGTGDs, $\chase{D}{\dep}$ has finite
  treewidth.  Our decision problem thus amounts to checking whether a theory
  belonging to a class $\C^*$ of first-order theories (of the form $D\cup
  \dep\cup \{\neg Q\}$) is satisfiable, where it is guaranteed that whenever a
  theory in this class is satisfiable, then it has a model of finite treewidth
  (namely, $\chase{D}{\dep}$), and where $\C^*$ therefore enjoys the finite
  treewidth model property.  Decidability thus follows from Courcelle's result.
\end{proofsk}

\smallskip Determining the precise complexity of query answering under sets of
guarded and weakly guarded sets of TGDs will require new techniques, which are
the subject of the next sections.


\section{Complexity: Lower Bounds}
\label{sec:lower}

In this section we prove several lower bounds for the complexity of the
decision problem of answering Boolean conjunctive queries under
guarded and weakly guarded sets of TGDs.


\begin{theorem} \label{theo:hardness} 
  \begin{andrea}
    The problem \problem{BCQeval} under WGTGDs is \textsc{exptime}-hard in case
    the TGDs are fixed.
    The same problem is \textsc{2exptime}-hard when the predicate arity is not
    bounded.  Both hardness results also hold for fixed atomic ground queries.
  \end{andrea}
\end{theorem}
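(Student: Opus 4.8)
The plan is to prove both hardness results by Turing machine simulation, mirroring the construction already used in the proof of Theorem~\ref{theo:grid} but now carefully allocating the computational resources so that the right complexity classes are simulated and, crucially, so that the resulting TGDs are \emph{weakly guarded} rather than requiring the single unguarded rule that Theorem~\ref{theo:grid} used. The key observation is that in Theorem~\ref{theo:grid} the only non-guarded rule was the one generating the infinite grid by joining two independent $\mathit{next}$-atoms; here I want to avoid that unguarded join while still obtaining enough addressable tape space. For the \textsc{exptime}-hardness result with \emph{fixed} TGDs, I would simulate an alternating (or deterministic) polynomial-space Turing machine — equivalently, an \textsc{exptime} computation via \textsc{apspace} $=$ \textsc{exptime}. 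The tape has polynomial length, so tape positions can be encoded as bit-vectors of length polynomial in the input; the input-dependent information (transition table, initial configuration, and the polynomial bound) is placed in the database $D$, keeping the rule set $\dep$ fixed.

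\textbf{For the 2\textsc{exptime}-hardness result} (arities unbounded, TGDs part of the input) I would simulate an alternating \textsc{expspace} machine, using \textsc{aexpspace} $=$ 2\textsc{exptime}. Here the tape length is exponential, so a single tape cell must be addressed by a bit-vector of exponentially many bits, which is represented across an unbounded-arity predicate. The standard device is to build a counter of exponentially many positions using chase-generated nulls, so that the chase lays out the full exponential configuration space; unbounded arity is exactly what lets a single guard atom carry an entire (exponentially indexed) address and thereby keep every rule weakly guarded. First I would set up predicates that, by repeated chase applications, generate successive configurations, each guarded by an atom holding the complete current tape-and-head encoding, so that the weak-guard condition is met: every universally quantified variable lying in an affected position occurs in that one guard atom. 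The alternation is encoded in the usual way — existential/universal states correspond to disjunctive/conjunctive branching, captured by having the query $Q$ (a fixed atomic ground query such as $\mathit{accept}$) be derivable iff the machine accepts, with the conjunctive nature of the chase and the universality of the query handling the two quantifier types via a reachability-style propagation of an $\mathit{accepting}$ marker back up the computation tree.

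\textbf{The main obstacle} I expect is enforcing weak guardedness throughout, because the natural way to advance a Turing-machine configuration is to join the transition relation with several cell-atoms (current cell, neighbours, head position), and such joins normally range over variables spread across affected positions of distinct atoms with no single covering atom. The fix is to maintain one ``master'' configuration atom per time step whose arguments encode the entire relevant local window (or, in the exponential case, the full addressing scheme) so that it serves as a weak guard for every rule that fires at that step; getting the bookkeeping right — propagating unchanged cells (the ``inertia'' rules), incrementing the address counter, and threading the accept-marker back through the alternation tree — while never introducing a rule whose affected-position variables escape its guard, is the delicate part. I would also need to verify that the chase generates the intended (doubly) exponential structure in the allotted space and that the derivation of $Q$ coincides exactly with acceptance, i.e. both soundness (any derivation of $Q$ corresponds to a real accepting computation) and completeness (every accepting computation yields a derivation) of the encoding.
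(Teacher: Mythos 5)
Your overall strategy (ATM simulation with \textsc{apspace}~$=$~\textsc{exptime} and \textsc{aexpspace}~$=$~\textsc{2exptime}, input-dependent data pushed into the database, acceptance propagated back up the alternation tree to a fixed ground atom) is exactly the paper's, and you correctly identify the central difficulty: the transition rules join several cell-atoms, so their variables are not covered by a single atom. However, your proposed resolution of that difficulty contains a genuine gap. The paper's construction works because \emph{tape cells never become affected positions}: cells are addressed either by $n$ database constants $c_1,\dots,c_n$ with a stored binary $\mathit{succ}$ relation (fixed-TGD case) or by bit-vectors of length $n{+}1$ over the database constants $0,1$, with vectorized $\mathit{succ}$, $\mathit{neq}$ and $\mathit{less\_than}$ defined by existential-free Datalog rules from the $\bool$ facts (unbounded-arity case). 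Consequently the only affected positions are the configuration positions, nulls are invented only as configuration identifiers, and the single atom $\mathit{next}(V,V_1,V_2)$ (or its vectorized analogue) already weakly guards every rule --- the cell and state variables need no guard at all. Your proposal instead builds the exponential tape as ``a counter of exponentially many positions using chase-generated nulls''; this would make the cell positions affected, and then precisely the joins you worry about (transition relation with current cell, neighbours, head position) would have affected variables scattered over atoms with no common cover, and no ``master'' atom of polynomial arity can repair this. Your fallback --- a guard atom carrying ``the full addressing scheme'' --- is either of exponential arity (breaking the polynomial-time reduction) or does not actually contain the offending variables.

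Two further quantitative slips reinforce that the mechanism is not yet right. For the \textsc{2exptime} case you say a cell ``must be addressed by a bit-vector of exponentially many bits'': $2^n$ cells need $n$-bit addresses, and exponential-length vectors would again make the rule set exponential. For the fixed-TGD \textsc{exptime} case, bit-vector addresses spread across predicate arguments are incompatible with a \emph{fixed} signature of fixed arity; there the cells must be plain database constants. Both are fixable, but together with the affected-position issue they mean the proof as proposed does not yet establish weak guardedness, which is the whole point of Theorem~\ref{theo:hardness} as opposed to Theorem~\ref{theo:grid}. (For the case where the guard genuinely must carry cell addresses because the database is fixed, see the separate device of the $g(\vett{V},\vett{W},X,Y,Z)$ guard relation in the proof of Theorem~\ref{theo:icdt}.)
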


\begin{proof}
  We start with the \textsc{exptime}-hardness result for fixed WGTGD sets
  $\dep$.  It is well-known that \textsc{apspace} (alternating \textsc{pspace},
  see~\cite{ChKS81}) equals \textsc{exptime}.  Notice that there are
  \textsc{apspace}-hard languages that are accepted by alternating
  polynomial-space machines that use at most $n$ worktape cells, where $n$ is
  the input size, and where the input is initially placed on the
  worktape. (This is well-known and can be shown by trivial padding arguments).
  To prove our claim, it thus suffices to simulate the behavior of such a
  restricted linear space (\textsc{linspace}) Alternating Turing Machine (ATM)
  $\M$ on an input bit string $I$ by means of a weakly guarded set of TGDs
  $\dep$ and a database $D$.
  Actually, we will show a stronger result: that a {\em fixed} set $\dep$ of
  WGTGDs can simulate a \emph{universal} ATM that in turn simulates every
  \textsc{linspace} ATM that uses at most $n$ tape cells on every input.  Here
  both the ATM transition table and the ATM input string will be stored in the
  database $D$. Then $D \cup \dep \models Q$ for some atomic ground query $Q$
  iff the ATM accepts the given input.\footnote{This technique was proposed
    in~\cite{CaGK08}.  It is similar to a technique later described
    in~\cite{HeLS11} in the proof of undecidability of the existence of
    so-called CWA (closed-world assumption) universal solutions in data
    exchange.} 

  Without loss of generality, we can assume that the ATM $\M$ has exactly one
  accepting state $s_a$.  We also assume that $\M$ never tries to read beyond
  its tape boundaries.  Let $\M$ be defined as
  \[
  \M = (S, \talph, \blank, \delta, s_0, \set{s_a})
  \]
  where $S$ is the set of states, $\talph = \{ 0, 1, \blank \}$ is the tape
  alphabet,
  $\blank$ is the blank tape symbol, $\delta$ is the
  transition function, defined as $\delta: S \times \talph \ra (S \times \talph
  \times \{\ell, r, \bot\})^2$ ($\bot$ denotes the ``stay'' head move, while
  $\ell$ and $r$ denote ``left'' and ``right'', respectively), $s_0 \in S$ is
  the initial state, and $\set{s_a}$ is the singleton-set of accepting states.  Since $\M$ is an
  alternating Turing machine (ATM), its set of states $S$ is \emph{partitioned}
  into two sets: $S_\forall$ and $S_\exists$ (universal and existential states,
  respectively).
  The general idea of the encoding is that configurations of $\M$ (except for the initial configuration $\init$)  will be
  represented by fresh nulls $v_i$, $i\geq 1$, that are generated by the chase.

  \ourpar{The relational schema}.  We now describe the predicates of the schema
  which we use in the reduction.  Notice that the schema is \emph{fixed} and
  does not depend on the particular ATM that we encode.  The schema predicates
  are as follows.
  \begin{plist} \itemsep-\parsep
  \item \textsl{Tape.}  The ternary predicate $\mathit{symbol}(a,c,v)$ denotes
    that in configuration $v$ the cell $c$ contains the symbol $a$, with $a
    \in \talph$.  Also, a binary predicate $\mathit{succ}(c_1,c_2)$ denotes
    the fact that cell $c_1$ \emph{follows} cell $c_2$ on the tape.  Finally,
    $\mathit{neq}(c_1,c_2)$ says that two cells are distinct.
  \item \textsl{States.}  A binary predicate $\mathit{state}(s,v)$ says that in
    configuration $v$ the ATM $\M$ is in state $s$.  We use three additional
    unary predicates: $\mathit{existential}$, $\mathit{universal}$, and
    $\mathit{accept}$.  The atom $\mathit{existential}(s)$ (resp., $
    \mathit{universal}(s)$) denotes that the state $s$ is existential (resp.,
    universal), while $\mathit{accept}(c)$ says that $c$ is an accepting
    configuration, that is, one whose state is the accepting state.
  \item \textsl{Configurations.}  A unary predicate $\mathit{config}(v)$
    expresses the fact that the value $v$ identifies a configuration.  A
    ternary predicate $\mathit{next}(v,v_1,v_2)$ is used to say that both
    configurations $v_1$ and $v_2$ are derived from $v$.  Similarly, we use
    $\mathit{follows}(v,v')$ to say that configuration $v'$ is derived from
    $v$.  Finally, a unary predicate $\mathit{init}(v)$ states that the
    configuration $v$ is initial.
  \item \textsl{Head (cursor).}  We use the fact $\mathit{cursor}(c,v)$ to say
    that the head (cursor) of the ATM is on cell $c$ in configuration $v$.
  \item \textsl{Marking.}  Similarly to what is done in the proof of
    Theorem~\ref{theo:grid}, we use $\mathit{mark}(c,v)$ to say that a cell $c$
    is marked in a configuration $v$.  Our TGDs will ensure that all non-marked
    cells keep their symbols in a transition from one configuration to another.
  \item \emph{Transition function.}  To represent the transition
    function $\delta$ of $\M$, we use a single 8-ary predicate
    $\textit{transition}$: for every transition rule $\delta(s,a) =
    ((s_1,a_1,m_1), (s_2,a_2,m_2))$ we will have $\mathit{transition}(s,a,
    s_1,a_1,m_1, s_2,a_2,m_2)$.
  \end{plist}

  \ourpar{The database $D$}.  The data constants of the database $D$ are
  used to identify cells, configurations, states and so on.  In particular, we
  will use an accepting state $s_a$ and an initial state $s_0$ plus a special
  initial configuration $\init$.  The database describes the initial
  configuration of the ATM with some technicalities.
  \begin{aplist} \itemsep-\parsep
  \item We assume, without loss of generality, the $n$ symbols of the input $I$
    to occupy the cells numbered from $1$ to $n$, i.e., $c_1, \ldots, c_n$.
    For technical reasons, in order to obtain a simpler TGD set below, we also
    use the dummy cell constants $c_0$ and $c_{n+1}$, that intuitively
    represent border cells without symbols.  For the $i$-th symbol $a_i$ of
    $I$, the database has the fact $\mathit{symbol}(a,c_i,\init)$, for all $i
    \in \set{1,\ldots,n}$.
  \item An atom $\mathit{state}(s_0,\init)$ specifies that $\M$ is in state
    $s_0$ in its initial configuration $\init$.
  \item For every existential state $s_E$ and universal state $s_U$, we have
    the facts $\mathit{existential}(s_E)$ and $\mathit{universal}(s_U)$.  For
    the accepting state, the database has the fact $\mathit{accept}(s_a)$.
  \item An atom $\mathit{cursor}(c_1, \init)$ indicates that, in the initial
    configuration, the cursor points at the first cell.
  \item The atoms $\mathit{succ}(c_1, c_2), \ldots, \mathit{succ}(c_{n-1},
    c_n)$ encode the fact that the cells $c_1, \ldots, c_n$ of the tape (beyond
    which the ATM does not operate) are adjacent.  For technical reasons, we
    also use the analogous facts $\mathit{succ}(c_0, c_1)$ and
    $\mathit{succ}(c_n, c_{n+1})$.  
Also, atoms of the form
  $\mathit{neq}(c_i,c_j)$, for all $i,j$ such that $1 \leq i \leq n$, $1 \leq
   j \leq n$ and $i \neq j$, denote the fact that the cells $c_1, \ldots, c_{n}$
   are pairwise distinct.
  \item The atom $\mathit{config(\init)}$ says that $\init$ is a valid
    configuration.
  \item The database has atoms of the form
\(
\mathit{transition}(s,a,
    s_1,a_1,m_1, s_2,a_2,m_2),
\)
which encode the transition function $\delta$, as described above.
\end{aplist}

\ourpar{The TGDs}.  We now describe the TGDs that define the transitions and
the accepting configurations of the ATM.
\begin{aplist} \itemsep-\parsep
\item \textsl{Configuration generation.}  The following TGDs say that, for
  every configuration (halting or non halting---we do not mind having
  configurations that are derived from a halting one), there are two
  configurations that follow it, and that a configuration that follows another
  configurations is also a valid configuration:
    \begin{eqnarray*}
      \mathit{config}(V), 
      & \ra & 
      \exists V_1\exists V_2\,\mathit{next}(V,V_1,V_2)\\
      \mathit{next}(V,V_1,V_2) & \ra & \mathit{config}(V_1), 
      \mathit{config}(V_2)\\
      \mathit{next}(V,V_1,V_2) & \ra & \mathit{follows}(V,V_1)\\
      \mathit{next}(V,V_1,V_2) & \ra & \mathit{follows}(V,V_2)
    \end{eqnarray*}
  \item \textsl{Configuration transition.}  The following TGD encodes the
    transition where the ATM starts at an existential state, moves right in
    its first configuration and left in the second.
    Here $C$ denotes the current cell, $C_1$ and $C_2$ are the new cells in the
    first and the second configuration (on the right and on the left of $C$,
    respectively), and the constants $r$, $\ell$, and $\bot$
    represent the ``right,'' the ``left,'' and the ``stay'' moves,
    respectively.
%
    \begin{multline*}
      \mathit{transition}(S,A,S_1,A_1,r, S_2, A_2, \ell),
      \mathit{next}(V,V_1,V_2),\\
      \mathit{state}(S,V), 
      \mathit{cursor}(C,V), \mathit{symbol}(A,C,V),
      \mathit{succ}(C_1,C), \mathit{succ}(C,C_2) \ra \\
      \mathit{state}(S_1,V_1), \mathit{state}(S_2,V_2),
      \mathit{symbol}(A_1,C_1,V_1), \mathit{symbol}(A_2,C_2,V_2),\\
      \mathit{cursor}(C_1,V_1), \mathit{cursor}(C_2,V_2),
      \mathit{mark}(C,V),
    \end{multline*}
%
    There are nine rules like the above one, corresponding to all the
    possible moves of the head in the child configurations $C_1$ and $C_2$.
    These
    other moves are encoded via similar TGDs.  These rules suitably mark
    the cells that are written by the transition by means of the predicate
    $\mathit{mark}$.  The cells that are not involved in the transition must
    retain their symbols, which is specified by the following TGD:
 \begin{multline*}\hspace*{-2em}
   \mathit{config}(V), \mathit{follows}(V,V_1), \mathit{mark}(C,V),
   \mathit{symbol}(C_1,A,V), \mathit{neq}(C_1,C) \ra \mathit{symbol}(C_1,A,V_1)
  \end{multline*}
\item \textsl{Termination.} The rule \( \mathit{state}(s_a,V) \ra
  \mathit{accept}(V) \) defines a configuration $V$ to be accepting if its
  state is the accepting state.  The following TGDs state that, for an
  existential state, at least one configuration derived from it must be
  accepting. For universal states, both configurations must be accepting.
    \[
    \begin{array}{l}
      \mathit{next}(V,V_1,V_2), \mathit{state}(S,V), \mathit{existential}(S),
      \mathit{accept}(V_1) \ra  \mathit{accept}(V)\\[2mm]
      \mathit{next}(V,V_1,V_2), \mathit{state}(S,V), \mathit{existential}(S),
      \mathit{accept}(V_2) \ra  \mathit{accept}(V)\\[2mm]
      \mathit{next}(V,V_1,V_2), \mathit{state}(S,V), \mathit{universal}(S),
      \mathit{accept}(V_1),\mathit{accept}(V_2) \ra
      \mathit{accept}(V)
    \end{array}
    \]
  \end{aplist}

  Note that, for brevity, some of the TGDs we used have multiple atoms in the
  head.  However, these heads have no existentially quantified variables, so
  such multi-headed TGDs can be replaced with sets of TGDs that have only one
  head-atom.  Note also that the database constants ($r$, $\ell$, and $\bot$,
  and $s_a$) appearing in some rules can be eliminated by introducing
  additional predicate symbols and database atoms.  For example, if we add the
  predicate $\mathit{acceptstate}$ to the signature and the fact
  $\mathit{acceptstate}(s_a)$ to the database $D$, the rule \(
  \mathit{state}(s_a,V) \ra \mathit{accept}(V) \) can be replaced by the
  equivalent constant-free rule \( \mathit{acceptstate}(X), \mathit{state}(X,V)
  \ra \mathit{accept}(V) \).

  It is not hard to show that the encoding described above is sound and
  complete.  That is, $\M$ accepts the input $I$ if and only if
  $\chase{D}{\dep} \models \mathit{accept}(\init)$.  It is also easy to verify
  that the set of TGDs we have used is weakly guarded---this can be done by
  checking that each variable appearing only in affected positions also appears
  in a guard atom.  For instance, take the above rule
\(
\mathit{next}(V,V_1,V_2), \mathit{state}(S,V), \mathit{existential}(S),
\mathit{accept}(V_1) \ra \mathit{accept}(V).
\)
It is immediate to see that $\state[1]$ and $\existential[1]$ are non-affected
(the TGDs never invent new states), and that all variables appearing in
affected positions only, namely $V,V_1,V_2$, appear in the guard atom
$\mathit{next}(V,V_1,V_2)$.  This proves the claim.

\smallskip

We now turn to the case where $\dep$ is not fixed and has unbounded predicate
arities.  For obtaining the \textsc{2exptime} lower bound, it is sufficient to
adapt the above proof so as to simulate an ATM having $2^n$ worktape cells,
i.e., an \textsc{aexpspace} machine whose space is restricted to $2^n$ tape
cells. Actually, to accommodate two dummy cells to the left and right of the
$2^n$ effective tape cells, that are used for technical reasons, we will
feature $2^{n+1}$ tape cells instead of just $2^n$.

We will make sure that the input string is put on cells $1,\ldots, n$ of the
worktape. Given that there are now many more than $n$ worktape cells, we will
fill all cells to the right of the input string with the blank symbol $\blank$.

This time, the WGTGD set $\dep$ will not be fixed, but will depend on $n$.
Since a much stronger result will be shown in Section~\ref{sec:guarded}
(Theorem~\ref{theo:icdt}), we do not belabor all the details in what follows,
but just explain how the above proof for fixed sets of TGDs needs to be
changed.

Rather than representing each tape cell by a data constant, each tape cell is
now represented by a vector $(b_0,b_1,b_2,\ldots,b_n)$ of Boolean values from
$\{0,1\}$.
The database $D$ is the same as before, except for the following changes:
\begin{itemize} \itemsep-\parsep
\item $D$ contains the additional facts $\bool(0)$, $\bool(1)$, $\zero(0)$,
  $\one(1)$.
\item Each fact $\mathit{symbol}(a,c_i,\kappa)$ is replaced by the fact
  $\mathit{symbol}(a,b_0,b_1,b_2,\ldots,b_n,\kappa)$, where
  $(b_0,b_1,b_2,\ldots,b_n)$ is the Boolean vector of length $n$ representing
  the integer $i$, with $0\leq i\leq n\leq 2^{n+1}$.
\item The fact $\mathit{cursor}(c_1, \init)$ is replaced by the $(n+2)$-ary
  fact $\mathit{cursor}(0,0,\cdots,0,1, \init)$.
\item All $\mathit succ$ and $\mathit neq$ facts described under item
  \textit{(e)} are eliminated.  (Vectorized versions of these predicates will
  be defined via Datalog rules---see below).
\end{itemize}

The TGD set from before is changed as follows.  In all rules, each
cell-variable $C$ is replaced by a vector $\vett{C}$ of $n$ variables.  For
example, the atom $\mathit{succ}(C_1,C)$ now becomes
$\mathit{succ}(\vett{C}_1,\vett{C}) = \mathit{succ}(C_1^0,C_1^1,\ldots C_1^n\,
,\,C^0,C^1,\ldots,C^n)$.

We add Datalog rules for $n$-ary $\mathit{succ}$ and $\mathit{neq}$ predicates.
For example, the $n$-ary predicate $\succp$ can be implemented by the following
rules:

    \[
    \begin{array}{rcl}
      \bool(X_0),\ldots, \bool(X_{n-1}) &\ra& 
      \succp(X_0,\ldots,X_{n-1},0\, {\bf ,}\, X_0, \ldots, X_{n-1},1),\\ 
      \bool(X_0),\ldots, \bool(X_{n-2}) &\ra& 
      \succp(X_0,\ldots,X_{n-2},0,1\, {\bf ,}\, X_0, \ldots, X_{n-2},1,0),\\
      &\vdots\\
      \bool(X_0),\ldots, \bool(X_{n-i}) &\ra& 
      \succp(X_0,\ldots,X_{n-i},0,1\ldots 1\, {\bf ,} \,X_0, \ldots, X_{n-i},1,0,\ldots,0),\\
      &\vdots\\
     &\ra& 
      \succp(0,1,\ldots, 1\, {\bf ,} \, 1,0,\ldots,0)
    \end{array}
    \]

    These rules contain constants which can be easily eliminated by use of the
    $\mathit{zero}$ and $\mathit{one}$ predicates, which are extensional
    database (EDB) predicates.  We further add simple Datalog rules that use
    the vectorized $\succp$ predicate to define vectorized versions for the
    $\mathit{less\_than}$ and the $\mathit{neq}$ predicates.  Using
    $\mathit{less\_than}$, we add a single rule that, for the initial
    configuration $\init$, puts blanks into all tape cells beyond the last cell
    $n$ of the input: $\mathit{less\_than}(\vett{n}, \vett{C}) \ra
    \mathit{symbol}(\blank,\vett{C},\init)$,
%
    where $\vett{n}$ is an $n$-ary binary vector representing the number $n$
    (i.e., the input size).

    The resulting set of rules is weakly guarded and correctly simulates the
    \textsc{aexpspace} (alternating exponential space) Turing machine whose
    transition table is stored in the database $D$.  Our reduction is
    polynomial in time.
    Since \textsc{aexpspace}$ = $\textsc{2exptime}, it immediately follows that
    when the arity is not bounded the problem is \textsc{2exptime}-hard.
\end{proof}



\section{Complexity: Upper Bounds}
\label{sec:upper}

In this section we present upper bounds for query answering under weakly
guarded TGDs.

\subsection{Squid Decompositions}

We now define the notion of a \emph{squid decomposition}, and prove a lemma
called ``Squid Lemma'' which will be a useful tool for proving our complexity
results in the following sub-sections.

\begin{definition}\label{def:rcover}
  Let $Q$ be a Boolean conjunctive query with $n$ body atoms
  over a schema $\R$.
  An \emph{\rcover} of $Q$ is a Boolean conjunctive query
  $Q^+$ over $\R$ that contains in its body all the body
  atoms of $Q$. In
  addition, $Q^+$ may contain at most $n$ other $\R$-atoms.
\end{definition}

\begin{example}\label{exa:rcover}
  Let ${\R} = \{r/2, s/3, t/3\}$ and $Q$ be the Boolean conjunctive query with
  body atoms $\set{r(X,Y), r(Y,Z), t(Z,X,X)}$.  The following query $Q^+$ is an
  \rcover{} of $Q$: $Q^+ = \{ r(X,Y), r(Y,Z), t(Z,X,X), t(Y,Z,Z), s(Z,U,U) \}$.
\end{example}

\begin{lemma} \label{lem:cover-vs-query} Let $B$ be an 
  instance over a schema $\R$ and $Q$ a Boolean conjunctive query over $B$.
  Then $B \models Q$ iff there exists an \rcover{} $Q^+$ of $Q$ such that
  $B \models Q^+$.
\end{lemma}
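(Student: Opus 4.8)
The plan is to prove the two implications of the biconditional separately; both follow by directly unwinding Definition~\ref{def:rcover}, and I expect no genuine technical obstacle, since the lemma's role is purely organizational: it licenses replacing the homomorphism test $B \models Q$ by a search over the \rcover{}s of $Q$, which is exactly what the later squid-decomposition arguments exploit.

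For the \emph{``if''} direction I would assume that some \rcover{} $Q^+$ of $Q$ satisfies $B \models Q^+$, and produce a homomorphism from $Q$ into $B$. By Definition~\ref{def:rcover} every body atom of $Q$ is also a body atom of $Q^+$, so $\atoms{Q} \subseteq \atoms{Q^+}$. Taking any homomorphism $\mu$ witnessing $B \models Q^+$, the mapping $\mu$ sends \emph{all} atoms of $Q^+$ into $B$, in particular those of $Q$; restricting $\mu$ to $\vars{Q}$ therefore yields a homomorphism from $Q$ into $B$, whence $B \models Q$. The only point I would spell out is that the restriction of a homomorphism defined on $Q^+$ to the atoms of $Q$ is again a homomorphism, but this is immediate: the defining conditions of a homomorphism (constants are fixed, and $\mu(\atom{a}) \in B$ for each source atom $\atom{a}$) are quantified over the source atoms and so remain valid when we pass to the subset $\atoms{Q} \subseteq \atoms{Q^+}$.

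For the \emph{``only if''} direction the key observation is that $Q$ is itself an \rcover{} of $Q$: it contains all of its own body atoms and adds zero further $\R$-atoms, which is allowed because $0 \leq n$. Hence, assuming $B \models Q$, I would simply take $Q^+ = Q$; this is an \rcover{} of $Q$ satisfying $B \models Q^+$, as required. Thus neither direction presents any real difficulty—the content of the lemma lies entirely in having set up the notion of \rcover{} so that it is simultaneously rich enough (up to $n$ extra atoms) for the forthcoming constructions and trivially satisfiable by $Q$ itself.
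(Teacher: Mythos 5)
Your proof is correct and follows essentially the same route as the paper's: the ``only if'' direction by observing that $Q$ is an \rcover{} of itself, and the ``if'' direction by restricting a witnessing homomorphism for $Q^+$ to $\vars{Q}$, using $\atoms{Q} \subseteq \atoms{Q^+}$. No issues.
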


\begin{proof}
  The \emph{only-if} direction follows trivially from the fact that $Q$ is an
  \rcover{} of itself.  The \emph{if} direction follows straightforwardly from
  the fact that whenever there is a homomorphism $h:\vars{Q^+} \ra \adom{B}$,
  such that $h(Q^+)\subseteq B$, then, given that $Q$ is a subset of $Q^+$, the
  restriction $h'$ of $h$ to $\vars{Q}$ 
  is a homomorphism $\vars{Q} \rightarrow \adom{B}$ such that
  $h'(Q) = h(Q) \subseteq B$.
    Therefore $B \models Q^+$ implies $B \models Q$.
\end{proof}

\begin{definition}\label{def:squid}
  Let $Q$ be a Boolean conjunctive query over a schema $\R$.  A \emph{squid
    decomposition} 
  $\delta=(Q^+,h,H,T)$ of $Q$ consists of an \rcover{} $Q^+$ of $Q$, a mapping
  $h: \vars{Q^+} \ra \vars{Q^+}$, and a partition of $h(Q^+)$ into two sets $H$
  and $T$, with $T = h(Q^+) - H$, for which there exists a set of variables
  $V_\delta \subseteq h(\vars{Q^+})$ such that: \textsl{(i)} $H = \{ \atom{a}
  \in h(Q^+) \mid \vars{\atom{a}} \subseteq V_\delta \}$, and \textsl{(ii)} $T$
  is $[V_\delta]$-acyclic.
  If an appropriate set $V_\delta$ is given together with a squid decomposition
  $\delta=(Q^+,h,H,T)$, then, by a slight terminology overloading, we may just
  speak about \emph{the squid decomposition} $(Q^+,h,H,T,V_\delta)$.
%
\end{definition}


%
Note that a squid decomposition $\delta=(Q^+,h,H,T)$ of $Q$ does not
necessarily define a query folding~\cite{ChMe77,Qian96} of $Q^+$, because $h$
does not need to be an endomorphism of $Q^+$; in other terms, we do not require
that $h(Q^+) \subseteq Q^+$.  However, $h$ is trivially a homomorphism from
$Q^+$ to $h(Q^+)$.

Intuitively, a squid decomposition $\delta=(Q^+,h,H,T,V_\delta)$ describes a
way how a query $Q$ may be mapped homomorphically to $\chase{D}{\dep}$.  First,
instead of mapping $Q$ to $\chase{D}{\dep}$, we equivalently map $h(Q^+)=H\cup
T$ to $\chase{D}{\dep}$.  The set $V_\delta$ specifies those variables of
$h(Q^+)$ that ought to be mapped to constants, i.e., to elements of
$\adom{D}$. The atoms set $H$ is thus mapped to ground atoms, that is, elements
of the finite set $\chasehb{D}{\dep}$, which may be highly cyclic. The
[$V_\delta$]-acyclic atom set $T$ shall be mapped to the possibly infinite set
$\chasefresh{D}{\dep}$ which, however, is [$\adom{D}$]-acyclic. The
``acyclicities'' of $\chasefresh{D}{\dep}$ and of $T$ will be exploited for
designing appropriate decision procedures for determining whether
$\chase{D}{\dep}\models Q$. All this will be made formal in the sequel.

One can think of the set $H$ in a squid decomposition $\delta=(Q^+,h,H,T,V_\delta)$ as the \emph{head} of a
squid, and the set $T$ as a join-forest of \emph{tentacles} attached to that head. This will become clear in the following example and the associated 
Figure~\ref{fig:squid}.


\begin{example}\label{exa:squid}
  Consider the following Boolean conjunctive query:
  \[
  \begin{array}{l}
    Q = \{ r(X,Y), r(X,Z), r(Y,Z),\\
    r(Z,V_1), r(V_1,V_2), r(V_2,V_3), r(V_3,V_4), r(V_4,V_5),\\
    r(V_1,V_6), r(V_6,V_5), r(V_5,V_7),
    r(Z,U_1), s(U_1,U_2,U_3),\\ 
    r(U_3,U_4), r(U_3,U_5), r(U_4,U_5) \}.
  \end{array}
  \]
  Let $Q^+$ be the following Boolean query:
  $Q^+ = Q \cup \set{s(U_3,U_4,U_5)}$.
  A possible squid decomposition $(Q^+,h,H,T,V_\delta)$ can be based on the homomorphism
  $h$, defined as follows: $h(V_6)=V_2$, $h(V_4)=h(V_5)=h(V_7)=V_3$, and
  $h(X)=X$ for any other variable $X$ in $Q^+$.  The result of the squid
  decomposition with $V_\delta=\{X,Y,Z\}$ is the query shown in  Figure~\ref{fig:squid}. 
  Here the cyclic head $H$ (encircled in the oval) is represented by its {\em
    join graph},\footnote{The join graph of $H$ has the atoms as nodes. An edge
    between two atoms exists iff the atoms share at least one variable.} and
  the [$V_\delta$]-acyclic ``tentacle" set $T$ is depicted as a
  [$V_\delta$]-join forest. Moreover, the forest representing $T$ is rooted in
  the ``bag'' of $H$-atoms, so that the entire decomposition takes on the shape
  of a squid.  Note that if we eliminated the additional atom $s(U_3,U_4,U_5)$,
  the original set of atoms $\set{r(U_3,U_4), r(U_3,U_5), r(U_4,U_5)}$ would
  form a non-$[V_\delta]$-acyclic cycle, and therefore they would not all be
  part of the tentacles.
\end{example}

\begin{figure}[tbh]
  \centering 
  \input{squid.pstex_t}
  \caption{Squid decomposition from Example~\ref{exa:squid}.  Atoms in $h(Q^+)$
    are shown.}
  \label{fig:squid}
\end{figure}

The following two lemmas are auxiliary technical results.

\begin{lemma}\label{lem:nice}
  Let $Q$ be a Boolean conjunctive query
  and let $U$ be a (possibly infinite) $[A]$-acyclic instance, where $A
  \subseteq \adom{U}$.  Assume $U \models Q$, i.e., there is a homomorphism $f:
  \adom{Q} \ra \adom{U}$ with $f(Q) \subseteq U$.  Then:
  \begin{plist}\itemsep-\parsep
  \item There is an $[A]$-acyclic subset $W \subseteq U$ such that:
    \textsl{(i)} $f(Q)\subseteq W$ and \textsl{(ii)} $|W| < 2|Q|$.
  \item There is a cover
    $Q^+$ of $Q$ such that $|Q^+| < 2 |Q|$, and there is a
    homomorphism $g$ that extends $f$ and $g(Q^+) = W$.
  \end{plist}
\end{lemma}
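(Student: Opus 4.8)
The plan is to read off $W$ from the $[A]$-join forest that $U$ possesses by $[A]$-acyclicity (Definition~\ref{def:s-acyclic}), keeping only a ``Steiner'' skeleton that still covers $f(Q)$. First I would fix an $[A]$-join forest $\tup{F,\mu}$ of $U$ and mark one node as a \emph{terminal} for each of the at most $|Q|$ atoms of $f(Q)$. Let $F_S$ be the minimal subforest of $F$ that, within each connected component, links all the terminals lying in it; by minimality every leaf of $F_S$ is a terminal. I would then check that $(F_S,\mu|_{F_S})$ is again an $[A]$-join forest, so that its label set $\mu(F_S)\supseteq f(Q)$ is $[A]$-acyclic. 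The point is forest convexity: for $c\in\adom{U}-A$ and two nodes $u,w\in F_S$ whose labels contain $c$, the $F_S$-path between them is the unique $F$-path between them, and since the $c$-nodes of $F$ form a connected subtree $T_c$ (the $[A]$-connectedness of $F$) containing $u$ and $w$, this path stays inside $T_c$; hence all its nodes carry $c$ and lie in $F_S$, so the $c$-nodes of $F_S$ are connected.

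The delicate step, and the one I expect to be the main obstacle, is the size bound $|W|<2|Q|$, obtained by \emph{smoothing away} every non-terminal node $m$ of degree $2$: with $a,b$ its two neighbours, delete $m$ and add the edge $(a,b)$. I would verify that this preserves the $[A]$-join-forest property. For $c\in\adom{U}-A$: if $c$ is absent from $\mu(m)$, then $a$ and $b$ cannot both carry $c$, since their unique path runs through $m$ and would otherwise force $c\in\adom{\mu(m)}$; thus the new edge joins no two $c$-nodes and $c$-connectedness is unaffected. If instead $c\in\adom{\mu(m)}$, deleting $m$ cuts the connected set of $c$-nodes into the pieces hanging at $a$ and at $b$, which the new edge $(a,b)$ rejoins. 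Since terminals are never deleted and every leaf stays a terminal, the process ends with a forest $F_S'$ in which every non-terminal node has degree $\geq 3$.

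A standard forest count then closes part~(1): in a tree with leaves all terminal and all non-terminals of degree $\geq 3$, the number of degree-$\geq 3$ nodes is at most (\#leaves)$-2$, so a component with $t_i$ terminals has at most $2t_i-1$ nodes. Summing over the $c$ components gives $|W|=|\mu(F_S')|\leq 2n_t-c<2|Q|$, where $n_t\leq|Q|$ is the number of terminals. As $W\supseteq f(Q)$ and $W$ is $[A]$-acyclic, part~(1) holds.

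For part~(2) I would pull $W$ back into a query. Writing $W=f(Q)\cup E$ with $E$ the added (non-terminal) atoms, $|E|<|Q|$, I replace each connector $r(c_1,\dots,c_k)\in E$ by an atom $r(w_1,\dots,w_k)$, taking $w_i$ to be a variable $X$ of $Q$ with $f(X)=c_i$ when one exists and a fresh variable $z_{c_i}$ (the same for equal values) otherwise. Then $Q^+:=Q\cup\{\text{these atoms}\}$ is an \rcover{} of $Q$ with $|Q^+|<2|Q|$, and extending $f$ by $g(z_c)=c$ gives a homomorphism $g\supseteq f$ with $g(Q^+)=f(Q)\cup E=W$, as required. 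The only real work, as noted, lies in the two structural verifications above (validity of the Steiner forest and of the degree-$2$ smoothing); the Steiner selection and the pull-back to a cover are then routine.
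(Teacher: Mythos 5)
Your proof is correct and follows essentially the same route as the paper: both prune the $[A]$-join forest of $U$ down to the terminals labelled by $f(Q)$ together with the branching points between them (the paper defines this skeleton directly as the $f(Q)$-nodes plus the ancestors with at least two children, with edges contracting the intermediate degree-2 paths, whereas you reach it via a Steiner forest followed by degree-2 smoothing), verify $[A]$-connectedness by the same unique-path argument, and obtain the bound $2|Q|-1$ by the same leaf/branching-node count. The only cosmetic difference is in part~(2), where the paper introduces entirely fresh variables for the atoms of $W-f(Q)$ while you reuse variables of $Q$ when possible; both yield a valid \rcover{} and an extending homomorphism $g$ with $g(Q^+)=W$.
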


\begin{proof}\mbox{}

  \textit{Part~(1).}  By the assumption,\footnote{
    One may be tempted to conjecture that $W
    = f(Q)$, but this does not work because acyclicity (and thus also
    $[A]$-acyclicity) is not a hereditary property:  it may well be the case
    that $U$ is acyclic, while the subset $f(Q) \subseteq U$ is not.  
    However, taking $W = f(Q)$ works in case of arities at most~$2$.
  }
  $U$ is $[A]$-acyclic and $f: \adom{Q} \ra \adom{U}$ is a homomorphism such
  that $f(Q) \subseteq U$.  Since $U$ is $[A]$-acyclic, it has a (possibly
  infinite) $[A]$-join forest $T = \tup{\tup{V,E},\lambda}$.  We assume,
  without loss of generality, that distinct vertices $u,v$ of $T$ have
  different labels, i.e., $\lambda(u) \neq \lambda(v)$.  This assumption can be
  made by removing all subforests rooted at nodes labeled by duplicate atoms.
  Let $T_Q$
  be the finite subforest of
  $T$ that contains all ancestors in $T$ of nodes $s$ such that $\lambda(s)\in
  f(Q)$.
  Let $F = \tup{\tup{V',E'},\lambda'}$ be the forest obtained from $T$ as
  follows.
  \begin{itemize} \itemsep-\parsep
  \item $V' = \{ v \in V \mid \lambda(v)\in f(Q)\} \cup K$, where $K$ is the
    set of all vertices of
    $T_Q$ that have at least two children.
  \item If $v,w \in V'$ then there is an edge from $v$ to $w$ in $E'$ iff $w$
    is a descendant of $v$ in $T$, and if the unique shortest path from $v$ to
    $w$ in $T$ does not contain any other node from $V'$.
  \item Finally, for each $v\in V'$, $\lambda'(v)=\lambda(v)$.
  \end{itemize}
  Let us define $W = \lambda(V')$.  We claim that the forest $F$ is an
  $[A]$-join forest of $W$.  Since Condition~\textit{(1)} of
  Definition~\ref{def:s-acyclic} ($[S]$-join forest) is immediately satisfied,
  it suffices to show Condition~\textit{(2)}, that is, that $F$ satisfies the
  $[A]$-connectedness condition.  Assume, for any pair of distinct vertices $v_1$
  and $v_2$ in $F$, that for some value $b \in \adom{U}-A$ it holds $b\in
  \adom{\lambda'(v_1)} \cap \adom{\lambda'(v_2)}$.
  \begin{andrea}
    In order to prove the aforementioned $[A]$-connectedness condition, we need
    to show that there is at least one path in $F$ between $v_1$ and $v_2$
    (here we view $F$ as a undirected graph), and that for every node $v \in
    V'$ lying on each such path we have $b \in \adom{\lambda'(v)}$.  By
    construction of $F$, $v_1$ and $v_2$ are connected in $T$ and $v$ lies on
    the (unique) path between $v_1$ and $v_2$ in $T$. Since $T$ is an
    $[A]$-join forest, we have $b\in
    \adom{\lambda(v)}=\adom{\lambda'(v)}$. Thus $F$ is an $[A]$-join forest of
    $W$.
  \end{andrea}

%
  Moreover, by construction of $F$, the number of children
  of each inner vertex of $F$ is at least~$2$, and $F$ has at most $|Q|$
  leaves.  It follows that $F$ has at most $2|Q| - 1$ vertices.  Therefore $W$
  is an $[A]$-acyclic set of atoms such that $|W| \leq 2|Q|$ and $W \supseteq
  f(Q)$.

  \textit{Part~(2).} $Q$ can be extended to $Q^+$ as follows.  For each atom
  $r(t_1,\ldots,t_k)$ in $W - f(Q)$, add to $Q$ a new query atom
  $r(\xi_1,\ldots,\xi_k)$ such that for each $1\leq i\leq k$, $\xi_i$ is a
  newly invented variable.
  Obviously, $W \models Q^+$ and thus there is a homomorphism $g$ extending
  $f$ such that $g(Q^+) = W$.  Moreover, by construction $|Q^+| < 2|Q|$.
\end{proof}

\begin{lemma}~\label{lem:nizza} Let $G$ be an $[A]$-acyclic instance and let
  $G'$ be an instance obtained from $G$ by eliminating a set $S$ of atoms where
  $\adom{S}\subseteq A$.  Then $G'$ is $[A]$-acyclic.
\end{lemma}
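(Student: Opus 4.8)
The plan is to build an $[A]$-join forest for $G'$ by pruning the one we already have for $G$. Since $G$ is $[A]$-acyclic, by Definition~\ref{def:s-acyclic} it has an $[A]$-join forest $\tup{F,\mu}$ with $F = \tup{V,E}$ and an epimorphism $\mu\colon V \ra G$ satisfying the $[A]$-connectedness condition. Writing $G' = G - S$, I would set $V' = \set{v \in V \mid \mu(v) \in G'}$, take $F'$ to be the subforest of $F$ induced by $V'$ (with edge set $E' = E \cap (V' \times V')$), and let $\mu'$ be the restriction of $\mu$ to $V'$. Any induced subgraph of a forest is again a forest, so $F'$ is a forest, and $\tup{F',\mu'}$ is the candidate $[A]$-join forest of $G'$.

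First I would dispatch the epimorphism condition (item (1) of Definition~\ref{def:s-acyclic}): this is immediate, since $\mu(V) = G \supseteq G'$ and $V'$ keeps exactly the vertices whose labels lie in $G'$, whence $\mu'(V') = G'$. The substance of the argument lies in the $[A]$-connectedness condition (item (2)). The single observation driving everything is that every vertex discarded in passing from $F$ to $F'$ carries a label contained in $A$: indeed, $v \notin V'$ means $\mu(v) \in S$, and by hypothesis $\adom{S} \subseteq A$, so $\adom{\mu(v)} \subseteq A$. Now fix any value $c \in \adom{G'} - A$; as $\adom{G'} \subseteq \adom{G}$ we also have $c \in \adom{G} - A$. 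Consider the witness set $U_c = \set{v \in V \mid c \in \adom{\mu(v)}}$. No element of $U_c$ can have been discarded, because a discarded vertex has its label-domain inside $A$ whereas every vertex of $U_c$ contains the value $c \notin A$; hence $U_c \subseteq V'$, and $U_c$ coincides with the witness set $\set{v \in V' \mid c \in \adom{\mu'(v)}}$ computed in $G'$.

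It then remains only to note that connectedness is preserved verbatim. The set $U_c$ induces a connected subtree in $F$ by the $[A]$-connectedness of $\tup{F,\mu}$, and since $U_c \subseteq V'$ the edges of $F$ among vertices of $U_c$ are exactly the edges of $F'$ among them; therefore $U_c$ induces the same connected subtree in $F'$. Thus $\tup{F',\mu'}$ satisfies both conditions and $G'$ is $[A]$-acyclic.

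I do not expect any genuine obstacle here. The only conceivable worry is that deleting vertices might sever one of the witness subtrees; but this is ruled out precisely by the hypothesis $\adom{S} \subseteq A$, which guarantees that deletions touch only vertices irrelevant to values outside $A$. (If one wishes to respect the convention $A \subseteq \adom{\cdot}$ from Definition~\ref{def:s-acyclic} literally, one may replace $A$ by $A \cap \adom{G'}$ throughout; this yields the identical forest and affects nothing in the argument.)
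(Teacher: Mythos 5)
Your proof is correct and follows essentially the same route as the paper's: prune from the $[A]$-join forest of $G$ the vertices labeled by atoms of $S$, and observe that, since every such atom has its domain inside $A$, no witness subtree for a value $c \notin A$ loses a vertex, so $[A]$-connectedness is inherited. Your write-up is somewhat more explicit about the epimorphism condition and the induced edge set, but the key observation is identical.
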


\begin{proof}
  If $T = \tup{\tup{V,E},\lambda}$ is an $[A]$-join forest for $G$ then an
  $[A]$-join forest $T'$ for $G'$ can be obtained from $G$ by repeatedly
  eliminating each vertex $v$ from $T$ where $\lambda(v)\in S$.
  \begin{andrea}
    By construction, each atom $\atom{e}$ eliminated from $G$ in this way has
    the property that $\adom{\atom{e}} \subseteq A$.  Hence, for every value $b
    \in \adom{G} - A$, the node $u \in V$ such that $\lambda(u) = \atom{e}$
    cannot belong to the induced (connected) subtree $\set{v \in V ~\mid~ b \in
      \adom{\lambda(v)}}$.  We thus get that $G'$ enjoys the
    $[A]$-connectedness property.
  \end{andrea}
\end{proof}

The following Lemma will be used as a main tool in the subsequent complexity
analysis.

  \begin{lemma}[Squid Lemma]\label{lem:squid}
    Let $\dep$ be a weakly guarded set of TGDs on a schema $\R$, $D$ a database
    for $\R$, and $Q$ a Boolean conjunctive query. Then $\chase{D}{\dep}
    \models Q$ iff there is a squid decomposition $\delta=(Q^+,h,H,T)$
    and a homomorphism $\theta: \adom{h(Q^+)} \ra \adom{\chase{D}{\dep}}$ such
    that: \textit{(i)} $\theta(H)\subseteq \chasehb{D}{\dep}$, and
    \textit{(ii)} $\theta(T)\subseteq \chasefresh{D}{\dep}$.
  \end{lemma}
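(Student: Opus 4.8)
The statement is an equivalence whose two directions have very different difficulty, so the plan is to dispatch the \emph{if} direction by a one-line composition argument and to concentrate on the \emph{only if} direction, where a squid decomposition must be \emph{constructed} from an arbitrary witnessing homomorphism.

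For the \emph{if} direction, suppose a squid decomposition $\delta=(Q^+,h,H,T)$ and a homomorphism $\theta$ as in the statement are given. Since $h$ is a homomorphism from $Q^+$ onto $h(Q^+)=H\cup T$ and $\theta(H)\cup\theta(T)\subseteq\chasehb{D}{\dep}\cup\chasefresh{D}{\dep}=\chase{D}{\dep}$, the composition $\theta\circ h$ is a homomorphism from $Q^+$ into $\chase{D}{\dep}$, whence $\chase{D}{\dep}\models Q^+$. As $Q^+$ is an \rcover{} of $Q$, Lemma~\ref{lem:cover-vs-query} gives $\chase{D}{\dep}\models Q$.

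For the \emph{only if} direction, assume $\chase{D}{\dep}\models Q$, i.e.\ there is a homomorphism $f$ with $f(Q)\subseteq\chase{D}{\dep}$. By Lemma~\ref{lem:d-acyclic} the instance $\chase{D}{\dep}$ is $[\adom{D}]$-acyclic, so Lemma~\ref{lem:nice} (applied with $A=\adom{D}$) supplies an $[\adom{D}]$-acyclic set $W\subseteq\chase{D}{\dep}$, an \rcover{} $Q^+$ of $Q$, and a homomorphism $g\supseteq f$ with $g(Q^+)=W$. I would then \emph{factor} $g$: let $h:\vars{Q^+}\to\vars{Q^+}$ send each variable to a fixed representative of its $g$-class (so $g(x)=g(x')$ iff $h(x)=h(x')$), and define $\theta$ on $\adom{h(Q^+)}$ by $\theta(r)=g(r)$, so that $g=\theta\circ h$ and $\theta(h(Q^+))=W$. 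Set $V_\delta=\{\,r\in h(\vars{Q^+})\mid\theta(r)\in\adom{D}\,\}$, let $H=\{\atom{a}\in h(Q^+)\mid\vars{\atom{a}}\subseteq V_\delta\}$ and $T=h(Q^+)-H$. Conditions \textsl{(i)} and \textsl{(ii)} are then immediate: an atom of $H$ has all its arguments sent by $\theta$ into $\adom{D}$, so $\theta(\atom{a})$ is ground and lies in $W\cap\hb(D)\subseteq\chasehb{D}{\dep}$; an atom of $T$ has some argument sent to a null, so $\theta(\atom{a})\in\chasefresh{D}{\dep}$.

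The one genuinely non-trivial point is the $[V_\delta]$-acyclicity of $T$. Here the plan is as follows. Writing $W_H=W\cap\chasehb{D}{\dep}$ and $W_T=W-W_H$, we have $\adom{W_H}\subseteq\adom{D}$, so Lemma~\ref{lem:nizza} (with $A=\adom{D}$, $S=W_H$) shows that $W_T$ is still $[\adom{D}]$-acyclic. The key observation is that, because $h$ collapses exactly the variables identified by $g$, distinct elements of $\adom{h(Q^+)}$ receive distinct $\theta$-values; hence $\theta$ is \emph{injective} on $\adom{h(Q^+)}$. By injectivity $\theta(H)$ and $\theta(T)$ are disjoint, so from $\theta(H)\subseteq W_H$, $\theta(T)\subseteq W_T$ and $\theta(H)\cup\theta(T)=W$ we obtain $\theta(T)=W_T$, and $\theta$ restricts to an isomorphism of atom-structures $T\cong W_T$ under which a value $c\in\adom{T}$ lies in $V_\delta$ precisely when $\theta(c)\in\adom{D}$. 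I would then take any $[\adom{D}]$-join forest $\tup{F,\lambda}$ of $W_T$ and relabel it by $\lambda'=\theta^{-1}\circ\lambda$; since $c\in\adom{\lambda'(v)}$ iff $\theta(c)\in\adom{\lambda(v)}$, the $[\adom{D}]$-connectedness of $F$ transfers verbatim to $[V_\delta]$-connectedness, yielding a $[V_\delta]$-join forest of $T$. This establishes the required squid decomposition and completes the harder direction; the transfer of acyclicity through the injective $\theta$ is the step I expect to demand the most care.
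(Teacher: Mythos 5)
Your proposal is correct and follows essentially the same route as the paper's own proof: the \emph{if} direction by composing $\theta\circ h$ and invoking Lemma~\ref{lem:cover-vs-query}, and the \emph{only if} direction by applying Lemma~\ref{lem:nice} (via Lemma~\ref{lem:d-acyclic}) to obtain $W$, $Q^+$ and $g$, factoring $g$ through a representative-choosing map $h$, defining $V_\delta$, $H$, $T$ exactly as the paper does, and transferring $[\adom{D}]$-acyclicity of $W_T=\theta(T)$ to $[V_\delta]$-acyclicity of $T$ through the injectivity of $\theta$ on $\adom{h(Q^+)}$ and Lemma~\ref{lem:nizza}. The only (harmless) cosmetic differences are that you apply Lemma~\ref{lem:nizza} with $S=W_H$ and then identify $\theta(H)=W_H$ rather than removing $\theta(H)$ directly, and that you spell out the join-forest relabeling that the paper leaves as ``it immediately follows.''
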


  \begin{proof}

    \begin{andrea}
      \ifdirection If there is a squid decomposition $\delta=(Q^+,h,H,T)$ of
      $Q$ and a homomorphism $\theta$ as described, then the
      composition $\theta \circ h$ is a homomorphism such that $(\theta \circ
      h)(Q^+) = \theta(h(Q^+)) \subseteq \chase{D}{\dep}$.  Hence,
      $\chase{D}{\dep} \models Q^+$ and, by Lemma~\ref{lem:cover-vs-query},
      $\chase{D}{\dep} \models Q$.
    \end{andrea}

    \onlyifdirection Assume $U = \chase{D}{\dep} \models Q$.  Then, there
    exists a homomorphism $f: \vars{Q} \ra \adom{U}$ with $f(Q) \subseteq
    \chase{D}{\dep}$.  By Lemma~\ref{lem:d-acyclic}, $\chasefresh{D}{\dep}$ is
    $[\adom{D}]$-acyclic. By Lemma~\ref{lem:nice}, it then follows that there
    is a Boolean query $Q^+$ with $< 2|Q|$ atoms, such that all atoms of
    $Q$ are contained in $Q^+$, and there is a homomorphism $g: \adom{Q^+}
    \rightarrow \adom{U}$ with $g(Q^+)\subseteq U$, such that $g(Q^+)$ is
    $[\adom{D}]$-acyclic.
 
    Partition $\vars{Q^+}$ into two sets $\varshb{Q^+}$ and $\varsfresh{Q^+}$
    as follows:
    \begin{itemize} \itemsep-\parsep
    \item $\varshb{Q^+} = \{X \in \vars{Q^+} ~\mid~ g(X) \in
      \adom{D}\}$
    \item $\varsfresh{Q^+} = \vars{Q^+} - \varshb{Q^+}$.
    \end{itemize}
    Define a mapping $h: \vars{Q^+} \ra \vars{Q^+}$ as follows.  For each $X
    \in \vars{Q^+}$, let $h(X)$ be the lexicographically first variable in the
    set $\{ Y \in \vars{Q^+} \mid g(Y) = g(X) \}$.
    Let us define $V_\delta$ as $V_\delta = h(\varshb{Q^+})$.
    Moreover, let $H$ be the set of all those atoms $\atom{a}$ of $h(Q^+)$, such
    that $\vars{\atom{a}} \subseteq V_\delta = h(\varshb{Q^+})$, and let
    $T = h(Q^+)-H$.  Note that, by definition of $H$, $g(H)\subseteq
    \chasehb{D}{\dep}$ and, by definition of $T$, $g(T) \subseteq
    \chasefresh{D}{\dep}$.  Let $\theta$ be the restriction of $g$ to
    $\adom{h(Q^+)}$.  Clearly, $\theta$, $h$, $H$, and $T$ fulfill the
    conditions \textit{(i)} and \textit{(ii)} of the statement of this lemma.
    It thus remains to prove that $\delta=(Q^+,h,H,T)$ is actually a squid
    decomposition of $Q$.  For this, we only need to show that $T$ is
    $[V_\delta]$-acyclic.
    \begin{andrea}
      \begin{andrea}
        To prove this, first observe that for each pair of variables $X,Y$ in
        $\vars{Q^+}$ such that $g(X) = g(Y)$ we have $h(X) = h(Y)$.
        Therefore
        $\theta$ is, by construction, a bijection between $h(\adom{Q^+})$ and
        $\adom{\theta(Q^+)}$.
      \end{andrea}
      In particular, $T \subseteq h(Q^+)$ is isomorphic to $\theta(T)$ via the
      restriction $\theta_T$ of $\theta$ to $\adom{T}$.  Since
      $\theta_T(T)=\theta(T)$ is obtained from the $[\adom{D}]$-acyclic
      instance $\theta(Q^+)$ by eliminating only atoms all of whose arguments
      are in $\adom{D}$ (namely the atoms in $\theta(H)$), by
      Lemma~\ref{lem:nizza}, $\theta_T(T)$ is itself $[\adom{D}]$-acyclic
      and,
      therefore, trivially also $[\adom{D}\cap \adom{\theta(T)}]$-acyclic.
      Now, since for every $X \in \adom{T}$ it holds that $X \in V_\delta$ iff
      $\theta_T(X) \in \adom{D}$, it immediately follows that, since
      $\theta_T(T)$ is $[\adom{D}]$-acyclic, $T$ is $[V_\delta]$-acyclic.
    \end{andrea}
 %
  \end{proof}


  \subsection{Clouds and the Complexity of Query Answering under WGTGDs}
  \label{sec:clouds}

The goal of this subsection is to prove the following theorem:

\begin{theorem}\label{theo:upperbound}
  Let $\dep$ be a weakly guarded set of TGDs, $D$ a database for a schema $\R$,
  and $Q$ a Boolean conjunctive query.  The problem of determining whether $D
  \cup \dep \models Q$ or, equivalently, whether $\chase{D}{\dep} \models Q$ is
  in \textsc{exptime} in case of bounded arity, and in \textsc{2exptime} in
  general.
\end{theorem}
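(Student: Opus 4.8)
The plan is to turn the two-part criterion of the Squid Lemma (Lemma~\ref{lem:squid}) into an effective decision procedure by replacing the infinite object $\chase{D}{\dep}$ with two finite pieces of data: the null-free part $\chasehb{D}{\dep}$, and a finite ``symbolic'' abstraction of the null-bearing part $\chasefresh{D}{\dep}$. First I would bound and compute $\chasehb{D}{\dep}$. Since every atom of $\chasehb{D}{\dep}$ is a null-free atom over $\adom{D}$, we have $|\chasehb{D}{\dep}| \leq |\R|\cdot|\adom{D}|^{w}$, where $w$ is the maximum predicate arity; this is polynomial for bounded $w$ and single-exponential in general. The point of weak guardedness is that whether a given null-free atom is produced by the chase depends only on the \emph{local} null-bearing environment of the guard atoms that trigger the rules, and there are only boundedly many such environments up to renaming of nulls. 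I would formalise this environment as the \emph{cloud} of a guard atom $\atom{a}$: the set of all chase atoms whose arguments are drawn from $\adom{\atom{a}}$ together with constants of $\adom{D}$. Up to a renaming of the (at most $w$) nulls of $\atom{a}$, the number of distinct cloud ``types'' is at most $2^{|\R|\cdot(w+|\adom{D}|)^{w}}$, i.e.\ single-exponential for bounded arity and double-exponential otherwise.

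The key structural fact to establish is that clouds are \emph{self-determining along the chase forest}: the cloud-type of a guard atom determines both which null-free atoms it helps generate and the cloud-types of its children in $\rgcf{D}{\dep}$. This yields a fixpoint over the finite set of cloud-types that simultaneously computes $\chasehb{D}{\dep}$ and the set of realisable cloud-types; the fixpoint stabilises after a number of rounds bounded by the size of these finite sets, hence within \textsc{exptime} for bounded arity and \textsc{2exptime} in general.

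With these two finite objects in hand, I would decide $\chase{D}{\dep}\models Q$ via the Squid Lemma. Enumerate all squid decompositions $\delta=(Q^{+},h,H,T,V_\delta)$ of $Q$; since $|Q^{+}|<2|Q|$ (Lemma~\ref{lem:nice}), the number of candidate covers, foldings $h$, and partitions is bounded so that this enumeration is subsumed by the overall time budget. For each $\delta$, I must find a single homomorphism $\theta$ that maps the cyclic head $H$ into the finite set $\chasehb{D}{\dep}$ and the $[V_\delta]$-acyclic tentacle forest $T$ into $\chasefresh{D}{\dep}$. The head part is a homomorphism-existence check into an explicit finite relational structure. For the tentacles I would exploit the join-forest structure: each tentacle is rooted in the head, and by Lemma~\ref{lem:wgtgds-chase-connectedness} a null persists along a connected path of the chase, so a tentacle can be threaded down through the chase forest one guard-cloud at a time. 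Concretely, matching $T$ amounts to a top-down traversal that, starting from the anchoring head atoms, repeatedly checks whether the next tentacle atoms can be placed inside the cloud of the current guard atom and then descends to a child guard whose cloud-type is already known to be realisable. Because both the tentacle forest and the space of cloud-types are finite, this traversal terminates within the claimed bounds.

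The main obstacle I anticipate is proving that the cloud abstraction is \emph{complete} for the tentacle-matching: one must show that a $[V_\delta]$-acyclic set of atoms embeds into the genuine infinite $\chasefresh{D}{\dep}$ if and only if it can be threaded through the finite cloud-type graph. Soundness is immediate, but completeness requires arguing that no ``long-range'' coincidence of nulls is needed---every shared null in the image lies on a single downward path of the forest (Lemma~\ref{lem:wgtgds-chase-connectedness}), so the acyclic tentacle matching never has to reconcile two independently generated nulls. Getting this correspondence exactly right, together with the careful complexity bookkeeping that separates the bounded-arity (\textsc{exptime}) case from the general (\textsc{2exptime}) case, is where the real work lies.
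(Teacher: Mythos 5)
Your outline is essentially the paper's own self-contained proof: bound and compute $\chasehb{D}{\dep}$, use the Squid Lemma to split $Q$ into a cyclic head matched into $\chasehb{D}{\dep}$ and $[V_\delta]$-acyclic tentacles matched into $\chasefresh{D}{\dep}$, and make the infinite forest finite via clouds and canonical renaming (the counting in Lemma~\ref{lem:cloudsize} and the determinacy statement of Theorem~\ref{theo:dependcloud}/Corollary~\ref{col:sim} are exactly your ``self-determining'' fact). The only presentational difference is that you determinise everything as a bottom-up fixpoint over cloud-types, whereas the paper runs alternating procedures ($\acheck$, $\qcheck$, $\tcheck$) and invokes $\textsc{apspace}=\textsc{exptime}$ and alternating $\textsc{expspace}=\textsc{2exptime}$.

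The one step you should not wave at is the claim that the cloud-type of a guard atom \emph{locally} determines the cloud-types of its children, so that a round-by-round top-down fixpoint suffices. As a mathematical fact the pair $(\atom{a},\cloud{D}{\dep}{\atom{a}})$ does determine everything below $\atom{a}$, but computationally the cloud of a child is not obtainable in one step from the cloud of the parent: a cloud contains ``side atoms'' whose arguments happen to lie in $\adom{\atom{a}}\cup\adom{D}$ but whose derivations may pass through remote parts of the chase, or through the very subtree rooted at $\atom{a}$, so naive local propagation is circular. The paper resolves this by \emph{guessing} a subcloud $S$, imposing a total order $\prec$ on it, and certifying each guessed atom by a separate (universally branched) derivation that may only use the atoms preceding it in $\prec$; this breaks the circularity and is the real content of the $\acheck$ construction. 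Your deterministic fixpoint can be repaired to do the same thing --- iterate over certified triples (canonical atom, subcloud, ordering) rather than bare cloud-types --- and then the complexity bookkeeping you describe goes through, but as stated this is the gap-shaped spot in the argument. The tentacle-threading and the completeness worry you raise at the end are handled exactly as you anticipate, via Lemma~\ref{lem:wgtgds-chase-connectedness} and the fact that the chase of a WGTGD set is $[\adom{D}]$-acyclic.
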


For the general case (of unbounded arities), we first outline a short
high-level proof aimed at specialists in Computational Logic.  This proof makes
sophisticated use of previous results.  We will then give a much longer,
self-contained proof, that works for both the general case and the case of
bounded arities.  The self-contained proof also introduces some concepts that
will be used in the following sections.

\bigskip

\noindent \textit{High Level Proof Sketch of Theorem~\ref{theo:upperbound}
  (General Case).}  We transform the original problem instance $(D,\dep,Q)$
into a guarded first-order theory $\Gamma=\tau(D,\dep,Q)$ such that
$\chase{D}{\dep} \models Q$ iff $\Gamma$ is unsatisfiable.
The signature $\sigma$ of $\Gamma$ uses $\Sigma$ as the set of constants plus a
constant for each element of $\adom{D}$. Moreover, $\sigma$ includes all
predicate symbols occurring in $D$, $\dep$, or $Q$, plus a special nullary
(i.e., propositional) predicate symbol $q$.

$\Gamma$ contains all ground facts of $D$, plus all instances of each rule
$r\in\dep$ obtained from $r$ by replacing all variables of $r$ that occur in
non-affected positions with constants. Note that the resulting rules are
universally quantified guarded sentences.  Moreover, for each squid
decomposition $\delta=(Q^+,h,H,T,V_\delta)$, and each possible replacement
$\theta$ of the set of variables $V_\delta$ by constants of the signature
$\sigma$, $\Gamma$ contains a guarded sentence $\phi_\delta^\theta$ obtained as
follows. Notice that $Q_\delta^\theta:=\theta(H)\cup T$ is a Boolean acyclic
conjunctive query.  By the results in~\cite{GoLS03},\footnote{See Theorem~3 in
  that paper, its proof, the remark after that proof, and Corollary~3.}
$Q_\delta^\theta$ can thus be rewritten (in polynomial time) into an equivalent
guarded sentence $\psi_\delta^\theta$.  We define $\phi_\delta^\theta$ to be
$(\psi_\delta^\theta\rightarrow q)$, which is obviously guarded, too.  Let
$\Gamma^-$ denote the sentences of $\Gamma$ mentioned so far.  From this
construction and the Squid Lemma (Lemma~\ref{lem:squid}), it follows that
$\chase{D}{\dep} \models Q$ iff $\Gamma^-\models q$. Now let
$\Gamma=\Gamma^-\cup \{\neg q\}$. Obviously, $\Gamma$ is unsatisfiable iff
$\chase{D}{\dep} \models Q$.

Note that the reduction $\tau$ is an arity-preserving
\textsc{exptime}-reduction. Let $t$ be an exponential upper bound on the
runtime required by reduction $\tau$ (and thus also on the size of
$\tau(D,\dep,Q)$).  A deterministic version of the algorithm in~\cite{Grae99}
for deciding whether a guarded theory of unbounded arity is satisfiable or
unsatisfiable runs in double-exponential time $O(2^{O(s\cdot w^w)})$, where $s$
is the size of the theory and $w$ is its maximum predicate arity.  Therefore,
the overall runtime of first computing $\Gamma=\tau(D,\dep,Q)$ for an input
$(D,\dep,Q)$ of size $n$ and maximum arity $w$, and then checking whether
$\Gamma$ is unsatisfiable is $O(2^{O(t(n)\cdot w^w)})$, which is still only
double-exponential.  
Deciding $D \cup \dep \models Q$ is thus in \textsc{2exptime}. \hfill$\Box$

\smallskip

Note that in case of bounded $w$, a similar proof does not provide an
\textsc{exptime} bound, as $2^{t(n)\cdot w^w}$ would still be doubly
exponential due to the exponential term $t(n)$, even if $w$ is constant.
Actually, as noted in~\cite{BaGO10}, evaluating non-atomic conjunctive queries
against guarded first-order theories of bounded predicate arity is in fact
\textsc{2exptime}-complete. Surprisingly, this remains true even for guarded
theories $D\cup \dep$ where $D$ is a (variable) database and $\dep$ a {\em
  fixed} guarded theory of a very simple form involving
disjunctions~\cite{bourhis-etal-13}.  We therefore needed to develop different
proof ideas.

\smallskip

In the rest of this section we present an independent and self-contained proof
of Theorem~\ref{theo:upperbound} by developing tools for analyzing the
complexity of query answering under WGTGDs. To this end we introduce the notion
of a \emph{cloud} of an atom $\atom{a}$ in the chase of a database $D$ under a
set $\dep$ of WGTGDs.  Intuitively, the cloud of $\atom{a}$ is the set of atoms
of $\chase{D}{\dep}$ whose arguments belong to $\adom{\atom{a}} \cup \adom{D}$.
In other words, the atoms in the cloud cannot have nulls that do not appear in
$\atom{a}$.  The cloud is important because we will show that the subtree of
$\gcf{D}{\dep}$ rooted in $\atom{a}$ depends only on $\atom{a}$ and its cloud.

\begin{definition}\label{def:cloud}
  Let $\dep$ be a weakly guarded set of TGDs on a schema $\R$ and $D$ be a
  database for $\R$.  For every atom $\atom{a}\in \chase{D}{\dep}$ the
  \emph{cloud} of $\atom{a}$ with respect to $\dep$ and $D$ is the following
  set: $\cloud{D}{\dep}{\atom{a}} = \{ \atom{b} \in \chase{D}{\dep} \mid
  \adom{\atom{b}} \subseteq \adom{\atom{a}} \cup \adom{D} \}$.
  Notice that for every atom $\atom{a} \in \chase{D}{\dep}$ we have $D
  \subseteq \cloud{D}{\dep}{\atom{a}}$.
  Moreover, we define
  \[
  \begin{array}{rcl}
    \clouds{D}{\dep} &=& 
    \{ \cloud{D}{\dep}{\atom{a}} \mid \atom{a} \in \chase{D}{\dep}\}\\
%
    \cloudsplus{D}{\dep} &=& \{ (\atom{a}, \cloud{D}{\dep}{\atom{a}})
    \mid \atom{a} \in \chase{D}{\dep} \}
  \end{array}
  \]
  Any subset $S \subseteq \cloud{D}{\dep}{\atom{a}}$ is called a \emph{subcloud} of
  $\atom{a}$ (with respect to $\dep$ and $D$).  The set of all subclouds of an
  atom $\atom{a}$ is denoted by $\subclouds{D}{\dep}{\atom{a}}$.  Finally, we
  define $\subcloudsplus{D}{\dep}= \{(\atom{a},C) ~\mid~ \atom{a} \in
  \chase{D}{\dep} ~\textrm{and}~ C \subseteq \cloud{D}{\dep}{\atom{a}} \}$.
\end{definition}

\begin{definition}\label{def:d-isomorphic}
  Let $B$ be an instance (possibly with nulls) over a schema, $\R$, and $D$ be
  a database over $\R$.  Let $\alpha$ and $\beta$ be atoms from the Herbrand
  Base $\hb(B)$.  We say that $\alpha$ and $\beta$ are \emph{$D$-isomorphic},
  denoted $\alpha \simeq_D \beta$, or simply $\alpha \simeq \beta$ in case $D$
  is understood, if there is a bijective homomorphism\footnote{
    Recall that, by definition, the
    restriction of a homomorphism to $\adom{D}$ is the identity mapping.
  }
  $f:~\adom{\alpha} \ra \adom{\beta}$ such that $f(\alpha) = \beta$ (and thus
  also $f^{-1}(\beta) = \alpha$).  This definition extends directly to the
  cases when $\alpha$ and $\beta$ are sets of atoms or atom-set pairs (in a
  similar fashion as in $\cloudsplus{D}{\dep}$).
\end{definition}

\begin{example}
  If $\set{a,b} \subseteq \adom{D}$ and $\set{\zeta_1, \zeta_2, \zeta_3,
    \zeta_4} \subseteq \freshdom$, we have: $p(a,\zeta_1,\zeta_2) \simeq p(a,
  \zeta_3, \zeta_4)$ and $(p(a,\zeta_3), \{q(a,\zeta_3), q(\zeta_3, \zeta_3),
  r(\zeta_3)\}) \simeq (p(a,\zeta_1), \{q(a,\zeta_1), q(\zeta_1,\zeta_1),
  r(\zeta_1)\})$.  On the other hand, $p(a, \zeta_1, \zeta_2) \not\simeq p(a,
  \zeta_1, \zeta_1)$ and $p(a, \zeta_1, \zeta_2) \not\simeq p(\zeta_3, \zeta_1,
  \zeta_2)$.
\end{example}

\begin{lemma} \label{lem:equivalence-rel} Given a database $D$ for a schema
  $\R$ and an instance $B$ for $\R$, the $D$-isomorphism relation $\simeq_D$ on
  $\hb(B)$ (resp., $2^{\hb(B)}$ or $\hb(B) \times 2^{\hb(B)}$, as in
  Definition~\ref{def:d-isomorphic}) is an equivalence relation.
\end{lemma}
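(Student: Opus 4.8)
The plan is to verify directly the three defining properties of an equivalence relation — reflexivity, symmetry, and transitivity — for $\simeq_D$ on $\hb(B)$, using the identity map, the inverse map, and function composition, respectively. Since the definition of $D$-isomorphism on sets of atoms and on atom-set pairs has exactly the same shape (a bijective homomorphism of the combined active domain that carries one object onto the other), the same three constructions settle those cases verbatim; so I would prove everything for single atoms and then remark that the argument transfers unchanged.

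For \emph{reflexivity}, the identity mapping $\mathrm{id}\colon \adom{\alpha} \ra \adom{\alpha}$ is a bijection, fixes every constant trivially, hence is a homomorphism, and satisfies $\mathrm{id}(\alpha) = \alpha$; thus $\alpha \simeq_D \alpha$. For \emph{symmetry}, given a bijective homomorphism $f$ witnessing $\alpha \simeq_D \beta$, I would take $f^{-1}\colon \adom{\beta} \ra \adom{\alpha}$, which is again a bijection with $f^{-1}(\beta) = \alpha$; the point to check is that $f^{-1}$ is a homomorphism, i.e.\ that it fixes constants. For \emph{transitivity}, given witnesses $f$ for $\alpha\simeq_D\beta$ and $g$ for $\beta\simeq_D\gamma$, the composite $g\circ f\colon \adom{\alpha}\ra\adom{\gamma}$ is a bijection, fixes constants as a composition of constant-fixing maps, and satisfies $(g\circ f)(\alpha) = g(\beta) = \gamma$, hence witnesses $\alpha\simeq_D\gamma$.

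The one genuinely subtle step — and the main obstacle — is establishing that $f^{-1}$ is a homomorphism in the symmetry argument. This is not automatic, because the homomorphism conditions are not symmetric in general: an arbitrary homomorphism may send a labelled null onto a constant of $\dom$, and were $f$ to do so, $f^{-1}$ would be forced to move that constant and would violate condition \textit{(1)} of a homomorphism. Thus the proof must rely on reading $D$-isomorphism as a \emph{genuine} isomorphism, i.e.\ a bijective homomorphism whose inverse also satisfies the homomorphism conditions (this is what the parenthetical ``and thus also $f^{-1}(\beta)=\alpha$'' is intended to assert, together with the footnoted identity-on-$\adom{D}$ property applied in both directions). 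Under this reading the witness map necessarily sends constants to constants and nulls to nulls — a null mapped to a constant $c$ would give $f^{-1}(c)\neq c$ — so $f^{-1}$ is again a bijective homomorphism, and symmetry follows. Once this is pinned down, the extensions to sets of atoms and to atom-set pairs require only the observation that the witnessing bijection respects the structure (carrying the distinguished atom to the distinguished atom and the accompanying set to the accompanying set), which is immediate from $f(\alpha)=\beta$ read componentwise, so all three properties transfer unchanged.
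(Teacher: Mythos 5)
Your proof is correct and matches the paper's intent: the paper gives no argument at all here, stating only that the lemma ``follows directly from the definitions,'' and the direct verification of reflexivity, symmetry, and transitivity via the identity, the inverse, and composition is exactly what is meant. You have also correctly isolated the one point where anything could go wrong --- under a purely literal reading, a bijective homomorphism could send a labelled null of $\freshdom$ onto a constant of $\dom$ (e.g.\ $p(\zeta)\mapsto p(c)$), which would break symmetry --- and your resolution, reading the witness as a genuine isomorphism whose inverse is also a homomorphism (as the parenthetical $f^{-1}(\beta)=\alpha$ and the footnote indicate), so that constants map to constants and nulls to nulls, is the intended reading and makes the whole argument, including the extension to sets and to atom--set pairs, go through.
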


The above lemma follows directly from the definitions; it lets us define, for
every set $A$ of atoms of $\hb(B)$, the quotiont set $A/_{\simeq_D}$ with
respect to the above defined equivalence relation $\simeq_D$.  Such notion of
quotient set naturally extends to sets of sets of atoms such as
$\clouds{D}{\dep}$, or sequences (pairs, in particular) thereof.

%
\begin{lemma}
  \label{lem:cloudsize}
  Let $\dep$ be a weakly guarded set of TGDs and let $D$ be a database for a schema
  $\R$.  Let $|\R|$ denote the number of predicate symbols in $\R$, and $w$ be
  the maximum arity of the symbols in $\R$.  Then:
  \begin{plist} \itemsep-\parsep
  \item For every atom $\atom{a} \in \chase{D}{\dep}$, we have \(
    |\cloud{D}{\dep}{\atom{a}}| \leq |\R| \cdot (|\adom{D}| + w)^w \).  Thus,
    $\cloud{D}{\dep}{\atom{a}}$ is polynomial in the size of the database $D$
    if the arity $w$ is bounded and exponential otherwise.
  \item For each atom $\atom{a} \in \chase{D}{\dep}$,
    $|\subclouds{D}{\dep}{\atom{a}}| \leq 2^{|\R| \cdot (|\adom{D}| + w)^w}$.
  \item $|\clouds{D}{\dep}/_\simeq| \leq
    2^{|\R| \cdot (|dom(D)| + w)^w}$, i.e., there are---up to isomorphism---at
    most exponentially many possible clouds or subclouds in a chase, if the
    arity $w$ is bounded, otherwise doubly exponentially many.
\item $|\cloudsplus{D}{\dep}/_\simeq| \leq |\subcloudsplus{D}{\dep}/_{\simeq}|
  \leq |\R| \cdot (|\adom{D}| + w)^w \cdot 2^{|\R| \cdot (|\adom{D}| + w)^w}$.
\end{plist}
\end{lemma}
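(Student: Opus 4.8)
The plan is to prove all four parts by direct counting, using two observations: everything in sight lives over a domain whose size is bounded by $|\adom{D}| + w$, and $D$-isomorphism collapses the infinitely many possible null names into a fixed finite pool. First I would settle Part~(1). Since every predicate of $\R$ has arity at most $w$, the atom $\atom{a}$ has at most $w$ distinct arguments, so $|\adom{\atom{a}} \cup \adom{D}| \leq |\adom{D}| + w$. By the definition of $\cloud{D}{\dep}{\atom{a}}$, every atom of the cloud draws its arguments only from $\adom{\atom{a}} \cup \adom{D}$, hence the cloud is a subset of the set of all atoms built from the $|\R|$ predicate symbols over a domain of size at most $|\adom{D}| + w$. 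A predicate of arity $k \leq w$ yields at most $(|\adom{D}| + w)^{k} \leq (|\adom{D}| + w)^{w}$ atoms, and summing over all predicates gives the bound $|\R| \cdot (|\adom{D}| + w)^{w}$. Part~(2) is then immediate, since every subcloud is by definition a subset of $\cloud{D}{\dep}{\atom{a}}$, so there are at most $2^{|\cloud{D}{\dep}{\atom{a}}|} \leq 2^{|\R| \cdot (|\adom{D}| + w)^{w}}$ of them.

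For Part~(3) I would exploit $D$-isomorphism to normalize the null names. I fix once and for all a canonical pool $N_0 \subseteq \freshdom$ of exactly $w$ labeled nulls. Since $\atom{a}$ contains at most $w$ nulls, any cloud $\cloud{D}{\dep}{\atom{a}}$ is $D$-isomorphic to a cloud all of whose nulls lie in $N_0$: apply a bijection fixing $\adom{D}$ pointwise and renaming the nulls of $\atom{a}$ into $N_0$. This is legitimate precisely because, by the cloud definition, no atom in the cloud contains a null outside $\adom{\atom{a}}$. Thus every $\simeq_D$-class of $\clouds{D}{\dep}$ has a representative that is a set of atoms over the fixed domain $\adom{D} \cup N_0$, which has size at most $|\adom{D}| + w$. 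As in Part~(1) there are at most $|\R| \cdot (|\adom{D}| + w)^{w}$ such atoms, hence at most $2^{|\R| \cdot (|\adom{D}| + w)^{w}}$ such sets, which bounds $|\clouds{D}{\dep}/_\simeq|$.

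Finally, Part~(4). The inclusion $\cloudsplus{D}{\dep} \subseteq \subcloudsplus{D}{\dep}$, valid because each cloud is a subcloud of itself, gives the first inequality $|\cloudsplus{D}{\dep}/_\simeq| \leq |\subcloudsplus{D}{\dep}/_\simeq|$. For the second, I would again normalize with the same pool $N_0$, but now applied to the whole pair: each $\simeq_D$-class of a pair $(\atom{a}, C)$ has a representative in which the at most $w$ nulls of $\atom{a}$, and hence, since $\adom{C} \subseteq \adom{\atom{a}} \cup \adom{D}$, also all nulls of $C$, lie in $N_0$. Such a representative is determined by choosing $\atom{a}$ among the at most $|\R| \cdot (|\adom{D}| + w)^{w}$ atoms over $\adom{D} \cup N_0$ and then choosing $C$ among the at most $2^{|\R| \cdot (|\adom{D}| + w)^{w}}$ subsets of atoms over $\adom{D} \cup N_0$. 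Multiplying these counts yields the stated bound.

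The only delicate point, and where I would take care, is the \emph{simultaneity} of the normalization in Parts~(3) and~(4): $D$-isomorphism demands a single bijection witnessing the isomorphism of the atom and of the (sub)cloud together, so I must verify that renaming the at most $w$ nulls of $\atom{a}$ into $N_0$ automatically renames every null occurring in $C$. This holds exactly because the cloud definition confines all nulls of $C$ to $\adom{\atom{a}}$, so the atom's renaming freedom suffices to normalize the pair. Everything else is elementary counting of atoms over a domain of size $|\adom{D}| + w$.
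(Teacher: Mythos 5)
Your proposal is correct and follows essentially the same combinatorial counting argument as the paper: all parts reduce to counting atoms over a domain of size at most $|\adom{D}| + w$, with isomorphism classes handled by renaming nulls into a fixed pool (which the paper formalizes slightly later as the canonical renaming $\can_{\atom{a}}$). Your explicit treatment of the simultaneity of the renaming for the pair $(\atom{a},C)$ in Part~(4) is a welcome clarification of a point the paper leaves implicit, but it is not a different proof.
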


\begin{proof}
  The claims are proved by combinatorial arguments as follows.
  \begin{plist} \itemsep-\parsep
  \item All distinct atoms in a cloud are obtained by placing the symbols of
    $\atom{a}$, plus possibly symbols from $\adom{D}$, in at most $w$ arguments
    of some predicate symbol in $\R$.  For each predicate in $\R$, the number
    of symbols to be thus placed is $|\adom{D}| + w$.
  \item The different ways we can choose $\subclouds{D}{\dep}{\atom{a}}$
    clearly determines the set of all subsets of $\cloud{D}{\dep}{\atom{a}}$.
  \item It is easy to see that the size of the set of all
    non-pairwise-isomorphic clouds in the chase is bounded by the number of
    possible subclouds of a fixed atom.
  \item Here, we are counting the number of all possible subclouds, each
    associated with its ``generating'' atom.  The inequality holds because,
    once we choose all non-pairwise-isomorphic clouds, each of their possible
    generating atoms can have as arguments only $|\adom{D}| + w$ symbols with
    which to construct the subclouds.
  \end{plist}
\vspace{-0.7cm}
\end{proof}

\vspace{-0.4cm}

\begin{definition}
  \label{def:newgcf}
  Given a database $D$ and a set of WGTGDs, let $\atom{a}$ be an atom in
  $\chase{D}{\dep}$.  We define the following notions:
  \begin{itemize} \itemsep-\parsep
  \item $\troot{\atom{a}}$ is the set of all atoms that label nodes of the
    subtrees of $\gcf{D}{\dep}$ rooted in $\atom{a}$;
  \item $\nabla \atom{a} = \troot{\atom{a}} \cup \cloud{D}{\dep}{\atom{a}}$;
  \item if $S$ is a subset of atoms in $\gcf{D}{\dep}$, then
    $\ggcf{\atom{a}}{S}$\footnote{
      $D$ and $\dep$ are implicit here, to avoid clutter.
    }
    is inductively defined as follows: \textit{(i)} $S \cup\{ \atom{a} \}
    \subseteq
    \ggcf{\atom{a}}{S}$;\\
    \textit{(ii)} $\atom{b} \in \ggcf{\atom{a}}{S}$ if $\atom{b} \in
    \troot{\atom{a}}$, and $\atom{b}$ is obtained via the chase rule applied
    using a TGD with body $\Phi$ and head-atom $\atom{\psi}$, and a homomorphism
    $\theta$, such that $\theta(\atom{\psi}) = \atom{b}$
   and $\theta(\Phi) \subseteq \ggcf{\atom{a}}{S}$.
  \end{itemize}
\end{definition}


%

\begin{theorem}\label{theo:dependcloud}
  If $D$ is a database for a schema $\R$, $\dep$ is a weakly guarded set of
  TGDs, and $\atom{a} \in \chase{D}{\dep}$, then $\trootc{\atom{a}} =
  \ggcf{\atom{a}}{\cloud{D}{\dep}{\atom{a}}}$.
\end{theorem}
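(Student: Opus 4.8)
The plan is to prove the two inclusions of $\trootc{\atom{a}} = \ggcf{\atom{a}}{\cloud{D}{\dep}{\atom{a}}}$ separately, the substantial direction being $\trootc{\atom{a}} \subseteq \ggcf{\atom{a}}{\cloud{D}{\dep}{\atom{a}}}$. Write $C = \cloud{D}{\dep}{\atom{a}}$. The inclusion $\ggcf{\atom{a}}{C} \subseteq \trootc{\atom{a}}$ is immediate from Definition~\ref{def:newgcf}: clause \textit{(i)} throws in $C \cup \{\atom{a}\} \subseteq C \cup \troot{\atom{a}} = \trootc{\atom{a}}$ (recall $\atom{a} \in \troot{\atom{a}}$), and clause \textit{(ii)} only ever adds atoms that already belong to $\troot{\atom{a}}$. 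For the converse, since $C \subseteq \ggcf{\atom{a}}{C}$ by clause \textit{(i)}, it remains to show $\troot{\atom{a}} \subseteq \ggcf{\atom{a}}{C}$, which I would establish by induction on the generation order $\prec$ of the atoms labelling the subtree of $\gcf{D}{\dep}$ rooted at $\atom{a}$.

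The base case is $\atom{a}$ itself, which lies in $\ggcf{\atom{a}}{C}$ by clause \textit{(i)}. For the inductive step, let $\atom{b} \in \troot{\atom{a}}$ with $\atom{b} \neq \atom{a}$; then $\atom{b}$ was produced by applying some $\sigma \in \dep$ through a homomorphism $\theta$ whose restriction to the guard of $\sigma$ has image the parent node $\atom{c}$ of $\atom{b}$. Since $\atom{c}$ lies on the path from $\atom{a}$ to $\atom{b}$, it belongs to $\troot{\atom{a}}$ and satisfies $\atom{c} \prec \atom{b}$, so $\atom{c} \in \ggcf{\atom{a}}{C}$ by the induction hypothesis. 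By clause \textit{(ii)} of Definition~\ref{def:newgcf} it then suffices to show that the image $\atom{d} = \theta(\beta)$ of every non-guard body atom $\beta$ of $\sigma$ also lies in $\ggcf{\atom{a}}{C}$.

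The heart of the argument is a null-tracking dichotomy for such a $\atom{d}$. First, weak guardedness forces $\adom{\atom{d}} \cap \freshdom \subseteq \adom{\atom{c}}$: if a null $\zeta = \theta(v)$ occurs in $\atom{d}$, then $v$ sits at an affected position (nulls occur only at affected positions), so $v$ cannot also occur at a non-affected position, and hence by Definition~\ref{def:wgtg} the weak guard covers $v$; thus $\zeta \in \adom{\atom{c}}$. Second, I would record the auxiliary fact that every atom of $\chase{D}{\dep}$ carrying a null $\zeta$ is a descendant of the node where $\zeta$ is first introduced --- this follows from the same frontier-variable observation together with Lemma~\ref{lem:wgtgds-chase-connectedness}. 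Now fix $\zeta \in \adom{\atom{d}} \subseteq \adom{\atom{c}}$ with birth node $\atom{a}_s$. Since $\atom{a}_s$ and $\atom{a}$ are both ancestors of $\atom{c}$ they are comparable. If $\atom{a}_s$ is an ancestor of (or equal to) $\atom{a}$, then $\atom{a}$ lies on the path from $\atom{a}_s$ to $\atom{c}$, so by connectedness $\zeta \in \adom{\atom{a}}$. If instead $\atom{a}_s$ is a strict descendant of $\atom{a}$, then $\atom{a}_s \in \troot{\atom{a}}$ and, since $\atom{d}$ is a descendant of $\atom{a}_s$, we obtain $\atom{d} \in \troot{\atom{a}}$. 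Hence either every null of $\atom{d}$ lies in $\adom{\atom{a}}$, so that $\adom{\atom{d}} \subseteq \adom{\atom{a}} \cup \adom{D}$ and $\atom{d} \in C \subseteq \ggcf{\atom{a}}{C}$, or $\atom{d}$ carries a null born strictly inside the subtree, whence $\atom{d} \in \troot{\atom{a}}$ with $\atom{d} \prec \atom{b}$ and $\atom{d} \in \ggcf{\atom{a}}{C}$ by the induction hypothesis.

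The main obstacle will be precisely this dichotomy: making the null-tracking rigorous requires carefully combining weak guardedness (to confine the nulls of non-guard body atoms to those of the guard image $\atom{c}$) with the connectedness of Lemma~\ref{lem:wgtgds-chase-connectedness} and the descendant property, while ensuring that these statements, established for $\rgcf{D}{\dep}$, are correctly transferred to the subtree of $\gcf{D}{\dep}$ rooted at the chosen occurrence of $\atom{a}$. Once every non-guard body atom $\atom{d}$ has been placed in $\ggcf{\atom{a}}{C}$, clause \textit{(ii)} delivers $\atom{b} \in \ggcf{\atom{a}}{C}$, closing the induction and hence the proof.
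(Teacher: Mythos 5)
Your proof is correct, and it reorganizes the paper's argument rather than reproducing it. The paper also reduces everything to placing the images of the non-guard body atoms into $\ggcf{\atom{a}}{\cloud{D}{\dep}{\atom{a}}}$, but it does so by strengthening the induction hypothesis: it inducts on the distance from $\atom{a}$ in $\gcf{D}{\dep}$ and proves \emph{two} statements in parallel for every $\atom{b}$ in the subtree --- that $\atom{b} \in \ggcf{\atom{a}}{\cloud{D}{\dep}{\atom{a}}}$ \emph{and} that the entire cloud $\cloud{D}{\dep}{\atom{b}}$ is contained in $\ggcf{\atom{a}}{\cloud{D}{\dep}{\atom{a}}}$. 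With that invariant the side atoms of the rule producing $\atom{b}$ are disposed of in one line (their arguments lie in $\adom{\atom{b}^-}\cup\adom{D}$, so they belong to the parent's cloud), but the price is paid in the inductive step for the cloud part, which needs your dichotomy in disguise: either $\atom{b}'$ has all its arguments among those of the parent, or it carries a null born below $\atom{b}$, in which case the paper invokes a nested sub-induction along the path from $\atom{b}$ to $\atom{b}'$. You keep a one-part invariant, induct on the generation order $\prec$ instead of tree depth, and run the null-birth case analysis directly on the side atoms; both proofs ultimately rest on the same two facts (weak guardedness confines the nulls of a side atom to the guard image, and a null occurs only at descendants of its birth node and along every path down to such a descendant). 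What your version buys is the elimination of the strengthened cloud-invariant and of the nested induction; what it costs is that you must argue explicitly that the \emph{first} node labelled by a side atom $\atom{d}$ lies in the subtree and precedes $\atom{b}$'s node in $\prec$ (since $\prec$ orders nodes, not atoms, and $\atom{d}$ may label several nodes). Two small points to tighten: the descendant property you invoke (every atom carrying $\zeta$ labels only descendants of $\zeta$'s birth node) does not follow from Lemma~\ref{lem:wgtgds-chase-connectedness}, whose hypothesis already assumes descendance --- it follows from iterating your frontier-variable observation up the tree, and it is what then licenses applying that lemma along the path through $\atom{a}$; and the lemma is stated for $\rgcf{D}{\dep}$, so you should note (as you partly do) that its proof goes through verbatim in $\gcf{D}{\dep}$.
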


\begin{proof}
  \begin{andrea}
    By the definitions of $\trootc{\atom{a}}$ and
    $\ggcf{\atom{a}}{\cloud{D}{\dep}{\atom{a}}}$, we have
    $\ggcf{\atom{a}}{\cloud{D}{\dep}{\atom{a}}} \subseteq \trootc{\atom{a}}$.
    It remains to show the converse inclusion: $\trootc{\atom{a}} \subseteq
    \ggcf{\atom{a}}{\cloud{D}{\dep}{\atom{a}}}$.  Define
    $\level_{\atom{a}}(\atom{a}) = 0$ and for each fact $\atom{b} \in
    \cloud{D}{\dep}{\atom{a}} - \trootc{\atom{a}}$ we also define
    $\level_{\atom{a}}(\atom{b}) = 0$. For every other atom $\atom{c} \in
    \troot{\atom{a}}$, $\level_{\atom{a}}(\atom{c})$ is defined as 
    the distance (i.e., the
    length of the path) from $\atom{a}$ to $\atom{c}$ in $\gcf{D}{\dep}$.

    We first show the following facts in parallel by induction on
    $\level_{\atom{a}}(\atom{b})$:
    \begin{plist} \itemsep-\parsep
    \item If $\atom{b} \in \trootc{\atom{a}}$
      then $\cloud{D}{\dep}{\atom{b}} \subseteq
      \ggcf{\atom{a}}{\cloud{D}{\dep}{\atom{a}}}$.
    \item If $\atom{b} \in \trootc{\atom{a}}$ then $\atom{b} \in
      \ggcf{\atom{a}}{\cloud{D}{\dep}{\atom{a}}}$.
    \end{plist}
    Statement \textit{(2)} above is the converse inclusion we are after.

    \ourpar{Induction basis.}  If $\level_{\atom{a}}(\atom{b}) = 0$, we
    have either
    \textit{(a)} $\atom{b} \in \cloud{D}{\dep}{\atom{a}} - \set{\atom{a}}$, or
    \textit{(b)} $\atom{b} = \atom{a}$.
%
    In case
    \textit{(a)}, $\cloud{D}{\dep}{\atom{a}} \subseteq
    \ggcf{\atom{a}}{\cloud{D}{\dep}{\atom{a}}}$ and therefore $\atom{b} \in
    \ggcf{\atom{a}}{\cloud{D}{\dep}{\atom{a}}}$, which proves \textit{(1)}.
    Moreover, since $\atom{b} \in \cloud{D}{\dep}{\atom{a}}$,
    $\atom{b}$ cannot contain more labeled nulls than $\atom{a}$,
    so $\adom{\atom{b}} - \adom{D} \subseteq \adom{\atom{a}} - \adom{D}$.
    Therefore $\cloud{D}{\dep}{\atom{b}} \subseteq \cloud{D}{\dep}{\atom{a}}
    \subseteq \ggcf{\atom{a}}{\cloud{D}{\dep}{\atom{a}}}$, which proves
    \textit{(2)}.  In case \textit{(b)}, $\atom{b} =
    \atom{a}$ and thus $\cloud{D}{\dep}{\atom{a}} = \cloud{D}{\dep}{\atom{b}}
    \subseteq \ggcf{\atom{a}}{\cloud{D}{\dep}{\atom{a}}}$, which proves
    \textit{(1)}.  Since $\atom{b} = \atom{a} \in
    \ggcf{\atom{a}}{\cloud{D}{\dep}{\atom{a}}}$, \textit{(2)} follows as
    well.

    \ourpar{Induction step}.  Assume that \textit{(1)} and \textit{(2)} are
    satisfied for all $\atom{c} \in \trootc{\atom{a}}$ such that
    $\level_{\atom{a}}(\atom{c}) \leq i$ and  assume $\level_{\atom{a}}(\atom{b})
    = i+1$, where $i \geq 0$.  The atom $\atom{b}$ is produced by a TGD whose guard
    $\atom{g}$ matches some atom $\atom{b}^-$ at level $i$, which is, by
    the induction hypothesis, in $\ggcf{\atom{a}}{\cloud{D}{\dep}{\atom{a}}}$.
    The body atoms of such a TGD then match atoms whose arguments must be in
    $\cloud{D}{\dep}{\atom{b}}$ and thus also in
    $\ggcf{\atom{a}}{\cloud{D}{\dep}{\atom{a}}}$, again by the induction
    hypothesis.  Therefore, \textit{(2)} holds for $\atom{b}$.  To show
    \textit{(1)}, consider an atom $\atom{b}' \in \cloud{D}{\dep}{\atom{b}}$.
    In case $\adom{\atom{b}'} \subseteq \adom{\atom{b}^-}$, we have
    $\cloud{D}{\dep}{\atom{b}'} \subseteq \cloud{D}{\dep}{\atom{b}^-} \subseteq
    \ggcf{\atom{a}}{\cloud{D}{\dep}{\atom{a}}}$.  Otherwise, $\atom{b}'$
    contains
    at least one new labeled null that was introduced during the
    generation of $\atom{b}$.  Given that $\dep$ is a weakly guarded set and
    each labeled null in $\freshdom$ is introduced only once in the chase,
    there must be a path from $\atom{b}$ to $\atom{b}'$ in $\gcf{D}{\dep}$
    (and therefore also in $\trootc{\atom{b}}$).  A simple additional
    induction
    on $\level_{\atom{b}}(\atom{b}')$ shows that all the applications of
    TGDs on that path must have been fired on elements of
    $\ggcf{\atom{a}}{\cloud{D}{\dep}{\atom{a}}}$ only.  Therefore,
    $\atom{b}'\in \ggcf{\atom{a}}{\cloud{D}{\dep}{\atom{a}}}$, which
    proves \textit{(1)}.
  \end{andrea}
\end{proof}

The corollary below follows directly from the above theorem.

\begin{corollary}\label{col:sim}
  If $D$ is a database for a schema $\R$, $\dep$ is a weakly guarded set of TGDs,
  $\atom{a}, \atom{b} \in \chase{D}{\dep}$, and
  $(\atom{a},\cloud{D}{\dep}{\atom{a}}) \simeq
  (\atom{b},\cloud{D}{\dep}{\atom{b}})$, then $\nabla \atom{a} \simeq \nabla
  \atom{b}$.
\end{corollary}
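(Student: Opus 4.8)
The plan is to reduce the statement to the explicit inductive construction $\ggcf{\cdot}{\cdot}$ and then show that this construction is equivariant under $D$-isomorphisms. First I would invoke Theorem~\ref{theo:dependcloud} to rewrite $\nabla\atom{a} = \trootc{\atom{a}} = \ggcf{\atom{a}}{\cloud{D}{\dep}{\atom{a}}}$ and, symmetrically, $\nabla\atom{b} = \ggcf{\atom{b}}{\cloud{D}{\dep}{\atom{b}}}$. By hypothesis there is a bijective homomorphism $f$ (fixing $\adom{D}$) witnessing $(\atom{a},\cloud{D}{\dep}{\atom{a}}) \simeq (\atom{b},\cloud{D}{\dep}{\atom{b}})$; in particular $f(\atom{a})=\atom{b}$ and $f$ maps $\cloud{D}{\dep}{\atom{a}}$ bijectively onto $\cloud{D}{\dep}{\atom{b}}$. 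The goal is to extend $f$ to a bijective homomorphism $\hat f$ from $\nabla\atom{a}$ onto $\nabla\atom{b}$, which then witnesses $\nabla\atom{a}\simeq\nabla\atom{b}$.

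The extension would be carried out by induction along the generation order of the atoms of $\ggcf{\atom{a}}{\cloud{D}{\dep}{\atom{a}}}$ (equivalently, by induction on $\level_{\atom{a}}(\cdot)$, as in the proof of Theorem~\ref{theo:dependcloud}), mirroring each chase step on the $\atom{b}$-side. Consider an atom $\atom{c}'$ generated at the next level from a guard atom $\atom{c}$ via a TGD $\sigma\in\dep$ and a homomorphism $\theta$ with $\theta(\head{\sigma})=\atom{c}'$ and $\theta(\body{\sigma})$ already covered by $\hat f$. Then $\hat f\circ\theta$ is a homomorphism that maps the guard of $\sigma$ to $\hat f(\atom{c})$ and $\body{\sigma}$ into the already-built part of $\nabla\atom{b}$, so the same TGD $\sigma$ is applicable there and produces the corresponding child $\atom{d}'$ of $\hat f(\atom{c})$. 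I would then set $\hat f$ on the existentially generated fresh nulls of $\atom{c}'$ so that $\hat f(\atom{c}')=\atom{d}'$. Because these nulls do not occur anywhere earlier in $\nabla\atom{a}$, the extension is well defined, and by construction $\hat f$ remains a homomorphism; a symmetric argument starting from $f^{-1}$ shows that every atom of $\nabla\atom{b}$ is reached, so $\hat f$ is onto.

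The main obstacle is to keep the two generation processes in lockstep and to guarantee that $\hat f$ stays injective. Applicability itself is harmless: since $\dep$ is weakly guarded and $\hat f$ is bijective on the already-constructed clouds, a TGD fires on the $\atom{a}$-side exactly when its image fires on the $\atom{b}$-side, and the other body atoms it needs are, by statement~(1) in the proof of Theorem~\ref{theo:dependcloud}, confined to clouds already lying in the image of $\hat f$. Injectivity is the delicate point: I would rely on the fact that each labeled null of $\freshdom$ is introduced exactly once in the chase, together with Lemma~\ref{lem:wgtgds-chase-connectedness}, which forces every null to travel along a connected path of the forest. This makes the correspondence between the newly invented nulls of $\atom{c}'$ and those of $\atom{d}'$ unambiguous and collision-free, so distinct nulls on the $\atom{a}$-side are always sent to distinct nulls on the $\atom{b}$-side. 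Collecting the levels, $\hat f=\bigcup_i \hat f_i$ is a bijective homomorphism fixing $\adom{D}$, i.e.\ a $D$-isomorphism, and hence $\nabla\atom{a}\simeq\nabla\atom{b}$.
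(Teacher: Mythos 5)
Your proof is correct and is essentially the argument the paper intends: the paper gives no explicit proof, stating only that the corollary ``follows directly'' from Theorem~\ref{theo:dependcloud}, and your level-by-level extension of the $D$-isomorphism along the inductive construction of $\ggcf{\atom{a}}{\cloud{D}{\dep}{\atom{a}}}$ is precisely the fleshed-out version of that remark. The points you flag as delicate (fairness guaranteeing the mirrored chase step actually occurs, and one-time invention of nulls guaranteeing injectivity) are handled appropriately.
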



\begin{definition}\label{def:cancloud}
  Let $D$ be a database and $\atom{a}$ an atom.  The \emph{canonical renaming}
  $\can_{\atom{a}}: \adom{\atom{a}} \cup \adom{D} \ra \candom{\atom{a}} \cup
  \adom{D}$, where $\candom{\atom{a}} = \{\xi_1, \ldots,
  \xi_h\}\subset \freshdom$ is a set of labeled nulls \emph{not} appearing in
  $\atom{a}$, is a 1-1
  substitution that maps each element of $\adom{D}$ into itself and each
  null-argument of $\atom{a}$ to the first unused element
  $\xi_i\in \candom{\atom{a}}$.
  If $S \subseteq \cloud{D}{\dep}{\atom{a}}$ 
  then $\can_{\atom{a}}(S)$ is well-defined
  and the pair
  $(\can_{\atom{a}}(\atom{a}),\can_{\atom{a}}(S))$ will be denoted by
  $\can(\atom{a},S)$.
\end{definition}

\begin{example}
  Let $\atom{a} = g(d, \zeta_1, \zeta_2, \zeta_1)$ and $S = \{p(\zeta_1),
  r(\zeta_2, \zeta_2), s(\zeta_1, \zeta_2, b)\}$, where $\set{d,b} \subseteq
  \adom{D}$ and $\set{\zeta_1, \zeta_2} \subseteq \freshdom$.  Then
  $\can_{\atom{a}}(\atom{a}) = g(d,\xi_1,\xi_2,\xi_1)$, and $\can_{\atom{a}}(S)
  = \set{p(\xi_1), r(\xi_2,\xi_2)$, $s(\xi_1,\xi_2,b)}$.
\end{example}

\begin{andrea}
  \begin{definition}\label{def:app-answersubtree}
    If $D$ is a database for a schema $\R$, $\dep$ is a weakly guarded set of
    TGDs on $\R$, $S$ is a set of atoms and $\atom{a} \in S$, then we write
    $(D, \dep, \atom{a}, S) \models Q$ iff there exists a homomorphism $\theta$
    such that $\theta(Q) \subseteq S \cup \troot{\atom{a}}$.
  \end{definition}
\end{andrea}

The following result straightforwardly follows from
Theorem~\ref{theo:dependcloud} and the previous definitions.

\begin{corollary}\label{col:app-sssim}
  If $D$ is a database for a schema $\R$, $\dep$ is a weakly guarded set of
  TGDs, $\atom{a} \in \chase{D}{\dep}$, and $Q$ is a Boolean conjunctive query,
  then the following statements are equivalent:
  \begin{plist}\itemsep-\parsep
  \item $\nabla \atom{a} \models Q$
  \item $(D, \dep, \atom{a}, \cloud{D}{\dep}{\atom{a}}) \models Q$
  \item $(D, \dep, \can_{\atom{a}}(a),
    can_{\atom{a}}(\cloud{D}{\dep}{\atom{a}})) \models Q$
  \item there is a subset $S' \subseteq \cloud{D}{\dep}{\atom{a}}$ such
    that $(D, \dep, \can_{\atom{a}}(\atom{a}), \can_{\atom{a}}(S')) \models Q$.
  \end{plist}
\end{corollary}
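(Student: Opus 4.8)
The plan is to prove the chain of equivalences (1)$\Leftrightarrow$(2)$\Leftrightarrow$(3)$\Leftrightarrow$(4), with Theorem~\ref{theo:dependcloud} as the workhorse: it lets us regard the subtree $\nabla\atom{a}$ not as a literal part of the fixed forest $\gcf{D}{\dep}$, but as the output $\ggcf{\atom{a}}{\cloud{D}{\dep}{\atom{a}}}$ of a chase-based construction seeded by the pair $(\atom{a},\cloud{D}{\dep}{\atom{a}})$.

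First I would dispatch (1)$\Leftrightarrow$(2), which is immediate. Since $\nabla\atom{a}=\troot{\atom{a}}\cup\cloud{D}{\dep}{\atom{a}}$, unfolding Definition~\ref{def:app-answersubtree} with $S=\cloud{D}{\dep}{\atom{a}}$ shows that $(D,\dep,\atom{a},\cloud{D}{\dep}{\atom{a}})\models Q$ asserts exactly the existence of a homomorphism $\theta$ with $\theta(Q)\subseteq\cloud{D}{\dep}{\atom{a}}\cup\troot{\atom{a}}=\nabla\atom{a}$, i.e.\ $\nabla\atom{a}\models Q$. The same unfolding applied to the renamed seed shows that (3) is equivalent to $\nabla\can_{\atom{a}}(\atom{a})\models Q$, where $\nabla\can_{\atom{a}}(\atom{a})$ is read, via Theorem~\ref{theo:dependcloud}, as $\ggcf{\can_{\atom{a}}(\atom{a})}{\can_{\atom{a}}(\cloud{D}{\dep}{\atom{a}})}$.

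The crux is then (2)$\Leftrightarrow$(3), which I would reduce to the claim $\nabla\atom{a}\simeq_D\nabla\can_{\atom{a}}(\atom{a})$. Observe that $\can_{\atom{a}}$ is, by Definition~\ref{def:cancloud}, a $D$-isomorphism carrying $(\atom{a},\cloud{D}{\dep}{\atom{a}})$ to $\can(\atom{a},\cloud{D}{\dep}{\atom{a}})$. By Theorem~\ref{theo:dependcloud}, $\nabla\atom{a}$ equals $\ggcf{\atom{a}}{\cloud{D}{\dep}{\atom{a}}}$, i.e.\ the set generated from the seed $\cloud{D}{\dep}{\atom{a}}\cup\set{\atom{a}}$ by chase-rule applications. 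Such a generation inspects only which argument positions carry constants and which carry coinciding nulls, never the actual identities of the nulls; it therefore commutes with any constant-fixing bijection, and in particular running it on the canonically renamed seed yields $\pi(\nabla\atom{a})$ for a bijection $\pi$ extending $\can_{\atom{a}}$ to the fresh nulls invented inside the subtree. (This renaming-invariance is the same phenomenon that yields Corollary~\ref{col:sim}.) Hence $\nabla\can_{\atom{a}}(\atom{a})=\pi(\nabla\atom{a})$. Since a bijective renaming fixing constants preserves satisfaction of a Boolean conjunctive query — if $\theta(Q)\subseteq M$ then $(\pi\circ\theta)(Q)\subseteq\pi(M)$, and symmetrically via $\pi^{-1}$ — we get $\nabla\atom{a}\models Q$ iff $\nabla\can_{\atom{a}}(\atom{a})\models Q$, establishing (2)$\Leftrightarrow$(3).

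Finally, (3)$\Leftrightarrow$(4) is by monotonicity in the set argument. The direction (3)$\Rightarrow$(4) is the special case $S'=\cloud{D}{\dep}{\atom{a}}$. For (4)$\Rightarrow$(3), given $S'\subseteq\cloud{D}{\dep}{\atom{a}}$ and a witnessing $\theta$ with $\theta(Q)\subseteq\can_{\atom{a}}(S')\cup\troot{\can_{\atom{a}}(\atom{a})}$, I note that $\troot{\can_{\atom{a}}(\atom{a})}$ does not depend on $S'$ while $\can_{\atom{a}}(S')\subseteq\can_{\atom{a}}(\cloud{D}{\dep}{\atom{a}})$, so the same $\theta$ witnesses (3). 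The main obstacle — entirely absorbed by Theorem~\ref{theo:dependcloud} — is assigning meaning to $\troot{\can_{\atom{a}}(\atom{a})}$ and $\nabla\can_{\atom{a}}(\atom{a})$ when the renamed seed $\can_{\atom{a}}(\atom{a})$ is not literally a node of $\gcf{D}{\dep}$; recasting the subtree as the renaming-invariant closure $\ggcf{\cdot}{\cdot}$ of the cloud makes these well defined and the renaming argument rigorous.
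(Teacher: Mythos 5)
Your proposal is correct and follows exactly the route the paper intends: the paper offers no explicit proof, stating only that the corollary ``straightforwardly follows from Theorem~\ref{theo:dependcloud} and the previous definitions,'' and your writeup is precisely that argument made explicit --- unfolding Definition~\ref{def:app-answersubtree} for (1)$\Leftrightarrow$(2), invoking the renaming-invariance of the $\ggcf{\cdot}{\cdot}$ construction (the same phenomenon as Corollary~\ref{col:sim}) for (2)$\Leftrightarrow$(3), and monotonicity in the seed set for (3)$\Leftrightarrow$(4). Your remark that Theorem~\ref{theo:dependcloud} is what gives meaning to $\troot{\can_{\atom{a}}(\atom{a})}$ when the renamed atom is not literally a node of $\gcf{D}{\dep}$ is a useful clarification of a point the paper leaves implicit.
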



We will use the pair $\can(\atom{a}, \cloud{D}{\dep}{\atom{a}})$ as a unique
canonical representative of the equivalence class \( \set{
  (\atom{b},\cloud{D}{\dep}{\atom{b}}) \mid (\atom{b},
  \cloud{D}{\dep}{\atom{b}}) \simeq (\atom{a},\cloud{D}{\dep}{\atom{a}})} \) in
\linebreak $\cloudsplus{D}{\dep}$.  Therefore, the set \( \set{
  \can(\atom{a},\cloud{D}{\dep}{\atom{a}}) ~\mid~ \atom{a} \in \chase{D}{\dep}}
\) and the quotient set $\cloudsplus{D}{\dep} /_\simeq$ are isomorphic.  Note
that, by Lemma~\ref{lem:cloudsize}, these sets are finite and have size
exponential in $|D| + |\dep|$ if the schema is fixed (and double exponential
otherwise).

Now, given a database $D$ for a schema $\R$, a weakly guarded set of TGDs
$\dep$ on $\R$, and an \emph{atomic} Boolean conjunctive query $Q$, we describe
an alternating algorithm $\acheck(D,\dep,Q)$ that decides whether $D \cup \dep
\models Q$.  We assume that $Q$ has the form $\exists Y_1,\ldots, Y_\ell$,
$q(t_1,t_2, \ldots, t_r)$, where the $t_1, \ldots, t_r$, with $r \geq \ell$,
are terms (constants or variables) in $\adom{D} \cup \set{Y_1, Y_2, \ldots,
  Y_\ell}$.

The algorithm $\acheck${} returns ``true'' if it accepts some configuration,
according to the criteria explained below; otherwise, it returns
``\emph{false''}.
$\acheck$ uses tuples of the form $(\atom{a},S,S',\prec,\atom{b})$ as its basic
data structures (\emph{configurations}).  Intuitively, each such configuration
corresponds to an atom $\atom{a}$ derived at some step of the chase
computation together with a set $S'$ of already derived atoms belonging  the cloud of $\atom{a}$. 
The informal meaning of the parameters of a configuration is as follows.
\begin{plist}\itemsep-\parsep
\item $\atom{a}$ is the root atom of the chase subtree under consideration.
\item $S \subseteq \cloud{D}{\dep}{\atom{a}}$; $S$ is intuitively a subcloud containing a set of  atoms of $\cloud{D}{\dep}{\atom{a}}$ that, while computing $\chase{D}{\dep}$, are originally derived \emph{outside} the subtree of the guarded chase forest rooted in $\atom{a}$ (and are thus outside 
the subtree rooted in $\atom{a}$ of $\rgcf{D}{\dep}$). We expect these atoms to serve as  ``side atoms" (i.e., atoms matching  non-guard atoms of a TGD) when deriving the desired atom $\atom{b}$ starting at  $\atom{a}$. 
  \item $S'$ contains, at every step in the computation, the subset of
    $\cloud{D}{\dep}{\atom{a}}$ that has been computed so far, or can be assumed to be valid, as it will be verified in another branch of the computation.
  \item $\prec$ is a total ordering of the atoms in $S$ consistent with the
    order in which the atoms of $S$ are proved by the algorithm (by simulating
    the chase procedure).
\item $\atom{b}$ is an atom that needs to be derived.  In some cases (namely,
  on the ``main'' path in the proof tree developed by $\acheck$), the algorithm
  will not try to derive a specific atom, but will just match the query atom
  $q(t_1, \ldots, t_r)$ against the atoms of that path. In that case, we use
  the symbol $\star$ in place of $\atom{b}$.
\end{plist}
We are now ready to describe the algorithm $\acheck$ at a sufficiently detailed level. However, we  omit many low-level details.  

\bigskip

$\acheck$  first checks if $D
\models Q$.  If so, $\acheck$ returns ``true'' and halts.
Otherwise, the algorithm attempts to guess a path, the so called {\em main branch},  that contains an atom
$\atom{q}$ that is an instance of $Q$. This is done as follows.

\ourpar{Initialization.}  The algorithm $\acheck$ starts at $D$ and guesses
some atom $\atom{a} \in D$, which it will expand into a main branch that will
eventually lead to an atom $\atom{q}$ matching the query $Q$.  To this end, the
algorithm guesses a set $S \subseteq \cloud{D}{\dep}{\atom{a}}$ and a total
order $\prec$ on $S$, and then generates the configuration $c_0 =
(\atom{a},S,S',\prec,\star)$.  The set $S'$ is initialized as $S'=S$.


\ourpar{Form of a configuration---additional specifications.} 
  In each configuration, the set 
set $S$ is implicitly partitioned into two sets $S^\bot$ and $S^+$, where $S^\bot
\subseteq D$ and $S^+ = \{\atom{a}_1, \atom{a}_2, \ldots, \atom{a}_k\}$ is
disjoint from $D$.  The total order $\prec$ is such that all elements of
$S^\bot$ precede those of $S^+$.  On $S^+$, $\prec$ is defined as $\atom{a}_1
\prec \atom{a}_2 \prec \cdots \prec \atom{a} \prec \cdots \prec \atom{a}_k$.


\ourpar{Summary of tasks $\acheck$ performs for each configuration.}  Assume
the $\acheck$ algorithm generates a configuration
$c=(\atom{a},S,S',\prec,\atom{b})$, where $\atom{b}$ might be
$\star$. $\acheck$ then performs the following tasks on $c$:
\begin{itemize} \itemsep-\parsep
\item $\acheck$ verifies that the guessed set $S$ of $c$ is actually a subset
  of $\cloud{D}{\dep}{\atom{a}}$. This is achieved by a massive universal
  branching that will be described below under the heading ``Universal
  Branching''.  Let us, however, anticipate here how it works, as this may
  contribute to the understanding of the other steps.  $\acheck$ will verify
  that each of the atoms $\atom{a}_1,\ldots, \atom{a}_k$ is in
  $\chase{D}{\dep}$, where, for each $i \in \set{1, \ldots, k}$, the proof of
  $\atom{a}_i \in \chase{D}{\dep}$ can use as premises only the atoms of $S$
  that precede $\atom{a}_i$, according to $\prec$.  The algorithm thus finds
  suitable atoms $\atom{d}_1, \ldots, \atom{d}_k \in D$ and builds proof trees
  for $\atom{a}_1, \ldots, \atom{a}_k$.  For each $1 \leq i \leq k$, it
  generates configurations of the form $(\atom{d}_i, S, S^{\bot}\cup
  \{\atom{a}_1, \atom{a}_2, \ldots, \atom{a}_{i-1}\}, \prec, \atom{a}_i)$.
  Each such configuration will be used as a starting point in a proof of
  $\atom{a}_i \in \chase{D}{\dep}$ assuming that $\atom{a}_1, \ldots,
  \atom{a}_{i-1}\in \chase{D}{\dep}$ has already been established.  $\acheck$
  thus simulates a sequential proof of all atoms of $\cloud{D}{\dep}{\atom{a}}$
  that are in $S$ via a parallel universal branching from $c$.
\item $\acheck$ tests whether $c$ is a final configuration (i.e., an accepting
  or rejecting one).  This is described under the heading ``Test for final
  Configuration" below.
\item If $c$ is not a final configuration of $\acheck$, this means that its
  first component $\atom{a}$ is not yet the one that will be matched to
  $\atom{b}$ (or the query, if $\atom{b}=\star$).  $\acheck$ then ``moves down"
  the chase tree by one step by replacing $\atom{a}$ with a child of
  $\atom{a}$. This step is described under the heading ``Existential Branching".
\end{itemize} 
In the following, let $c = (\atom{a},S,S',\prec,\atom{b})$ be a configuration,
where $\atom{b}$ may be $\star$.


\ourpar{Test for final configuration.}  If $\atom{b} \in D$, then $\acheck$
accepts this configuration, and does not expand it further.  If $\atom{b} =
\star$, then $\acheck$ checks (via a simple subroutine) whether $Q$ matches
$\atom{a}$, i.e., if $\atom{a}$ is a homomorphic image of the query atom
$q(t_1, \ldots, t_r)$.
If so, $\acheck$ accepts $c$ (and thus returns ``true'') and does not expand it
further.  If $\atom{b} \neq \star$, $\acheck$ checks whether
$\atom{a}=\atom{b}$.  
If this is true, then $\acheck$ 
accepts the configuration $c$ and does not expand it further.  Otherwise, the
configuration tree is expanded as described next.


\ourpar{Existential Branching.} $\acheck$ guesses a TGD $\rho \in \dep$ having
body $\Phi$ and head-atom $\atom{\psi}$, and whose guard $\atom{g}$ matches
$\atom{a}$ via some substitution $\theta$ (that is, $\theta(\atom{g}) =
\atom{a}$) such that $\theta(\Phi) \subseteq S'$. $\theta(\atom{\psi})$ then
corresponds to a newly generated atom (possibly containing some fresh labeled
nulls in $\freshdom$).  Note that, if no such guess can be made, this
existential branching automatically fails and $\acheck$ returns \emph{false}.
To define the configuration $c_1$ that $\acheck$ creates out of $c$, we first
introduce an intermediate auxiliary configuration $\hat{c} = (\hat{\atom{a}},
\hat{S}, \hat{S'}, \hat{\prec}, \hat{\atom{b}})$, where:
\begin{aplist}\itemsep-\parsep
\item $\hat{\atom{a}} = \theta(\atom{\psi})$ is the new atom generated by the
  application of $\rho$ with the substitution $\theta$.
\item $\hat{S}$ contains $\hat{\atom{a}}$ and each atom $\atom{d}$ of $S$ such
  that $\adom{\atom{d}} \subseteq \adom{\hat{\atom{a}}} \cup \adom{D}$.  Thus,
  in addition to the new atom $\hat{\atom{a}}$, $\hat{S}$ \emph{inherits} all
  atoms that were in the subcloud $S$ of the parent configuration $c$ that are
  ``compatible'' with $\hat{\atom{a}}$.  In addition, $\hat{S}$ includes a set
  $\newatoms(\hat{c})$ of new atoms that are guessed by the $\acheck$
  algorithm. All arguments of each atom of $\newatoms(\hat{c})$ must be
  elements of the set $\adom{\hat{\atom{a}}} \cup \adom{D}$.

\item $\hat{S'} = \hat{S}$.
\item $\hat{\prec}$ is a total order on $\hat{S'}$ obtained from $\prec$ by
  eliminating all atoms
  in $S - \hat{S}$ and by
  ordering the atoms from $\newatoms(\hat{c})$
  after all the atoms from the set $\oldproved(\hat{c}) = \hat{S'} \cap S'$
  (these are 
  assumed to have already been proven at the parent configuration $c$).
\item $\hat{\atom{b}}$ is defined as $\hat{\atom{b}} = \atom{b}$.
\end{aplist}

Next, $\acheck$ constructs the configuration $c_1$ out of $\hat{c}$ by
canonicalization: $c_1 = \can_{\hat{\atom{a}}}(\hat{c})$, that is
\(
c_1 = (\can_{\hat{\atom{a}}}(\hat{\atom{a}}),\, \can_{\hat{\atom{a}}}(\hat{S}),\,
\can_{\hat{\atom{a}}}(\hat{S'}),\, \can_{\hat{\atom{a}}}(\hat{\prec}),\,
\can_{\hat{\atom{a}}}(\hat{\atom{b}}))
\),
where $\can_{\hat{\atom{a}}}(\hat{\prec})$ is the total order on the atoms in
$\can_{\hat{\atom{a}}}(\hat{S'})$ derived from $\hat{\prec}$.

Intuitively, $c_1$ is the ``main'' child of $c$ on the way to deriving the
query atom $q(t_1, \ldots, t_r)$ assuming that all atoms of the guessed
subcloud $S$ are derivable.


\ourpar{Universal Branching.}  In the above generated configuration $\hat{c}$,
the set $\hat{S'}$ is equal to $\hat{S}$. As already said, this means that it
is assumed for that configuration that the set of atoms $\hat{S}$ is derivable.
To verify that this is indeed the case, $\acheck$ generates in parallel, using
universal computation branching, a set of auxiliary configurations for proving
that all the guessed atoms in $\can_{\hat{\atom{a}}}(\newatoms(\hat{c}))$ are
indeed derivable through the chase of $D$ \wrt~$\dep$.

Let $\can_{\hat{\atom{a}}}(\newatoms(\hat{c})) = \set{\atom{n}_1, \ldots,
  \atom{n}_m}$ and let the linear order $\hat{\prec}$ on $\hat{S}$ be a
concatenation of the order $\prec$, restricted to $\oldproved(\hat{c})$, and
the order $\atom{n}_1 \hat{\prec} \atom{n}_2 \hat{\prec} \cdots \hat{\prec}
\atom{n}_m$.
For each $1 \leq i \leq m$, $\acheck$ generates a configuration $c^{(i)}_2$
defined as
\[
c^{(i)}_2 = (\can_{\hat{a}}(\atom{\hat{a}}),\, \can_{\hat{a}}(\hat{S}),\,
\can_{\hat{a}}(\oldproved(\hat{c})) \cup \set{\atom{n}_1, \ldots,
  \atom{n}_{i-1}},\, \can_{\hat{\atom{a}}}(\hat{\prec}),\, \atom{n}_i).
\]
%
This completes the description of the $\acheck$ algorithm.

\begin{theorem}\label{theo:app-acheck}
  The $\acheck$ algorithm is correct and runs in exponential time in case of
  bounded arities, and in double exponential time otherwise.
\end{theorem}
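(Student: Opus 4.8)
The plan is to prove the two assertions—correctness of $\acheck$ and the running-time bounds—separately, and to establish correctness by the usual split into soundness and completeness, using the cloud-dependency machinery as the main lever. First I would fix the invariant maintained along every branch of an $\acheck$ computation: each generated configuration $(\atom{a},S,S',\prec,\atom{b})$ has $\atom{a}$ an atom producible in $\chase{D}{\dep}$, has $S \subseteq \cloud{D}{\dep}{\atom{a}}$, and the total order $\prec$ (with $S^\bot \subseteq D$ preceding $S^+$) records an order in which the atoms of $S$ can be derived using only $\prec$-earlier atoms as side premises. For \textbf{soundness} I would show that an accepting computation subtree yields a genuine chase derivation. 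The universal branching forces every guessed side atom $\atom{a}_i \in S^+$ to be proved in the chase from the configuration $(\atom{d}_i,S,S^\bot\cup\{\atom{a}_1,\ldots,\atom{a}_{i-1}\},\prec,\atom{a}_i)$, i.e.\ using only the $\prec$-earlier atoms of $S$, which is exactly what rules out circular justification; the existential branching on the main path fires a TGD $\rho$ only when its guard matches $\atom{a}$ and its remaining body atoms lie in $S'\subseteq\cloud{D}{\dep}{\atom{a}}$, so by Theorem~\ref{theo:dependcloud} the atom $\theta(\atom{\psi})$ produced really belongs to $\trootc{\atom{a}}$. Composing these facts, the atom finally matched against $q(t_1,\ldots,t_r)$ lies in $\chase{D}{\dep}$, whence $\chase{D}{\dep}\models Q$.

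For \textbf{completeness} I would start from the observation that, $Q$ being atomic, $\chase{D}{\dep}\models Q$ means some atom $\atom{q}\in\chase{D}{\dep}$ is a homomorphic image of $q(t_1,\ldots,t_r)$. If $\atom{q}\in D$, the initial test $D\models Q$ already accepts; otherwise $\atom{q}\in\trootc{\atom{d}}$ for some database atom $\atom{d}$, and it suffices to reconstruct the path from $\atom{d}$ to $\atom{q}$ in $\rgcf{D}{\dep}$ as the main branch. At each step of that path the non-guard (``side'') atoms used by the applied TGD belong to the cloud of the current atom, by definition of the cloud and by Theorem~\ref{theo:dependcloud}; I would have $\acheck$ guess exactly those atoms into $S$ and into the successive $\newatoms$ sets, together with the order $\prec$ in which the chase genuinely derives them. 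Corollary~\ref{col:app-sssim} then guarantees that passing to the canonical renaming $\can_{\atom{a}}$ loses nothing (and Corollary~\ref{col:sim} guarantees canonical representatives behave uniformly), so the accepting configuration is reachable, and each universal branch for a guessed side atom closes by the same construction applied recursively—the $\prec$-ordering making the recursion well-founded on derivation level.

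For the \textbf{complexity} bound I would bound the size of a single configuration and then invoke the standard equivalence between alternating space and deterministic time. By Lemma~\ref{lem:cloudsize}, every $S,S'\subseteq\cloud{D}{\dep}{\atom{a}}$ contains at most $|\R|\cdot(|\adom{D}|+w)^w$ atoms, each of arity at most $w$ over the $|\adom{D}|+w$ symbols available after canonicalization; hence a configuration occupies space polynomial in $|D|+|\dep|$ when the arity $w$ is bounded, and exponential otherwise. Since $\acheck$ is alternating and needs to store only the current configuration on each branch, it runs in alternating polynomial space for bounded arity and alternating exponential space in general. Using $\textsc{apspace}=\textsc{exptime}$ and $\textsc{aexpspace}=\textsc{2exptime}$ (i.e.\ $\textsc{aspace}(s)=\textsc{dtime}(2^{O(s)})$), the two running-time bounds follow. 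Termination is ensured because canonicalization keeps the number of distinct reachable configurations bounded by $|\cloudsplus{D}{\dep}/_\simeq|$, which is finite by Lemma~\ref{lem:cloudsize}.

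The main obstacle will be the completeness half of the correctness argument. One must show that the guesses $\acheck$ is permitted—choosing $S$ and $\prec$, the TGD and matching substitution at each existential step, and the $\newatoms$ set at each universal step—can be coordinated so that the main branch follows an actual $\rgcf{D}{\dep}$ path to $\atom{q}$ while every universal branch simultaneously certifies its assigned side atom. Theorem~\ref{theo:dependcloud} and the $\prec$-consistency requirement are the tools that make this coordination work, but the delicate point is verifying that the assumption/verification split between $S'$ (assumed derivable on the main branch) and the atoms actually re-proved on universal branches never creates a circular dependency; this is precisely where the restriction ``$\atom{a}_i$ may use only $\prec$-earlier atoms as premises'' must be shown to suffice.
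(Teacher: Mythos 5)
Your proposal is correct and follows essentially the same route as the paper's proof: soundness by observing that an accepting computation reenacts valid chase steps, completeness by steering the alternating computation along an actual finite chase proof of the query atom (choosing $S$, $\newatoms$, and $\prec$ to match that proof, with Corollary~\ref{col:app-sssim} covering canonicalization), and the time bounds via configuration-size estimates together with $\textsc{apspace}=\textsc{exptime}$ and alternating~$\textsc{expspace}=2\textsc{exptime}$. The only difference is one of emphasis: you spell out the non-circularity of the $\prec$-ordered assumption/verification split more explicitly than the paper does, which is a welcome sharpening rather than a deviation.
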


\begin{proof}\mbox{}

  \ourpar{Soundness.}  It is easy to see that the algorithm is sound with
  respect to the standard chase, i.e., if $\acheck(D,\dep,Q)$ returns
  ``\emph{true}'', then $\chase{D}{\dep} \models Q$.  In fact, modulo variable
  renaming, which preserves soundness according to
  Corollary~\ref{col:app-sssim}, the algorithm does nothing but chasing $D$
  with respect to $\dep$, even if the chase steps are not necessarily in the
  same order as in the standard chase.  Thus, each atom derived by $\acheck$
  occurs in \emph{some} chase.  Since every chase computes a universal solution
  that is complete with respect to conjunctive query answering, whenever
  $\acheck$ returns ``true'', $Q$ is entailed by some chase, and thus also by
  the standard chase, $\chase{D}{\dep}$.

  \ourpar{Completeness.}  The completeness of $\acheck$ with respect to
  $\chase{D}{\dep}$ can be shown as follows.  Whenever $\chase{D}{\dep} \models
  Q$, there is a finite proof of $Q$, i.e., a finite sequence
  $\mathit{proof}_Q$ of generated atoms that ends with some atom $\atom{q}$,
  which is an instance of $Q$.  This proof can be simulated by the alternating
  computation $\acheck$ as follows: \textit{(i)} steer the main branch of
  $\acheck$ towards (a variant of) $\atom{q}$ by choosing successively the same
  TGDs and substitutions $\theta$ (modulo the appropriate variable renamings)
  as those used in the standard chase for the branch of $\atom{q}$;
  \textit{(ii)} whenever a subcloud $S$ has to be chosen for some atom
  $\atom{a}$ by $\acheck$, choose the set of atoms $\cloud{D}{\dep}{\atom{a}}
  \cap (D \cup \atoms{\mathit{proof}_Q})$, modulo appropriate variable
  renaming; \textit{(iii)} for the ordering $\prec$, always choose the one
  given by $\mathit{proof}_Q$.  The fact that no $Q$-instance is lost when
  replacing configurations 
  by their canonical versions 
  is guaranteed by Corollary~\ref{col:app-sssim}.


  \ourpar{Computational cost.}  In case of bounded arity, the size of each
  configuration $c$ is polynomial in $D \cup \dep$.  Thus, $\acheck$ describes
  an alternating \textsc{pspace} (i.e., \textsc{apspace}) computation.  It is
  well-known that \textsc{apspace} = \textsc{exptime}.  In case the arity is
  not bounded, each configuration requires at most exponential space. The
  algorithm then describes a computation in Alternating \textsc{expspace},
  which is equal to 2\textsc{exptime}.
\end{proof}

\begin{corollary} \label{col:app-chasebottom} Let $\dep$ be a weakly guarded
  set of TGDs, and let $D$ be a database over a schema $\R$.  Then, computing
  $\chasehb{D}{\dep}$ can be done in exponential time in case of bounded arity,
  and in double exponential time otherwise.
\end{corollary}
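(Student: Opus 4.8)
The plan is to reduce the computation of $\chasehb{D}{\dep}$ to polynomially (resp.\ exponentially) many independent invocations of the $\acheck$ algorithm from Theorem~\ref{theo:app-acheck}. First I would recall that, by Definition~\ref{def:dom}, $\chasehb{D}{\dep} = \chase{D}{\dep} \cap \hb(D)$ is exactly the set of null-free atoms of $\chase{D}{\dep}$ whose arguments all lie in $\adom{D}$. For any such ground atom $\atom{a} \in \hb(D)$, the condition $\atom{a} \in \chase{D}{\dep}$ is equivalent to $\chase{D}{\dep} \models \atom{a}$, i.e.\ to $D \cup \dep \models \atom{a}$, when $\atom{a}$ is read as an atomic Boolean conjunctive query with no existential variables --- precisely the query shape accepted by $\acheck$.

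Next I would enumerate all candidate atoms of $\hb(D)$ and test each one. Since each atom of $\hb(D)$ is obtained by filling the (at most $w$) argument positions of one of the $|\R|$ predicate symbols with constants from $\adom{D}$, there are at most $|\R| \cdot |\adom{D}|^w$ such atoms, where $w$ is the maximum arity in $\R$. For each candidate $\atom{a}$ I run $\acheck(D,\dep,\atom{a})$ and place $\atom{a}$ into the output iff the algorithm returns ``true''. By the soundness and completeness of $\acheck$ (Theorem~\ref{theo:app-acheck}), the atoms collected in this way are exactly those $\atom{a} \in \hb(D)$ with $\chase{D}{\dep} \models \atom{a}$, i.e.\ exactly $\chasehb{D}{\dep}$.

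It remains to bound the total running time, which is the product of the number $|\R| \cdot |\adom{D}|^w$ of candidates and the cost of one $\acheck$ call given by Theorem~\ref{theo:app-acheck}. In the bounded-arity case $w$ is a constant, so $|\hb(D)|$ is only polynomial in the input, while each $\acheck$ call runs in exponential time; the product is therefore still exponential. In the general case $|\hb(D)|$ is single-exponential in the input (since $|\adom{D}|^w = 2^{w \log |\adom{D}|}$), whereas each $\acheck$ call costs double-exponential time; multiplying a single-exponential factor by a double-exponential one leaves a double-exponential bound. This bookkeeping --- in particular the observation that $\hb(D)$ is merely polynomially large once $w$ is fixed, so that the exponential blow-up of a single $\acheck$ call is not amplified --- is the only delicate point, as all the genuine algorithmic work has already been carried out in Theorem~\ref{theo:app-acheck}.
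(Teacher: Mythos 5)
Your proof is correct and follows essentially the same route as the paper: enumerate the (at most $|\R|\cdot|\adom{D}|^w$) ground atoms of $\hb(D)$, decide each membership query with $\acheck$, and observe that the number of candidates (polynomial for bounded arity, single-exponential otherwise) does not raise the per-call cost of $\acheck$ beyond exponential, respectively double-exponential, time. Your version merely spells out the counting that the paper leaves as ``the claimed time bounds follow straightforwardly.''
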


\begin{proof}
  It is sufficient to start with an empty set $A$ and then cycle over 
  ground atoms $\atom{b}$ in the Herbrand base $\hb(D)$ while checking
  whether $\chase{D}{\dep} \models \atom{b}$.  If this holds, we add $\atom{b}$
  to $A$.  The result is $\chasehb{D}{\dep}$.  The claimed time bounds follow
  straightforwardly.
\end{proof}

\medskip

We can now finally state our independent proof of
Theorem~\ref{theo:upperbound}.

\medskip

\noindent{\em Proof of Theorem~\ref{theo:upperbound}}.\   
  We construct an algorithm $\qcheck$ such that 
  $\qcheck(D,\dep,Q)$ outputs ``true'' iff $D \cup \dep \models Q$
  (i.e., iff $\chase{D}{\dep} \models Q$).  The algorithm relies
  on the notion of squid decompositions, and on Lemma~\ref{lem:squid};
  it works as follows.

  \begin{plist} \itemsep-\parsep
  \item $\qcheck$ starts by computing $\chasehb{D}{\dep}$.
  \item $\qcheck$ nondeterministically guesses a squid decomposition $\delta =
    (Q^+,h,H,T)$ of $Q$ based on a set $V_\delta \subseteq \vars{h(Q^+)}$,
    where $H = \{\atom{a} \in h(Q^+) \mid \vars{\atom{a}} \subseteq
    V_\delta\}$ and $T$ is $[V_\delta]$-acyclic. Additionally,
    $\qcheck$ guesses
    a substitution $\theta_0: V_\delta \ra \adom{D}$
    such that $\theta_0(H) \subseteq \chasehb{D}{\dep}$.  Note that this is an
    \textsc{np}~guess, because the number of atoms in $Q^+$ is at most twice
    the the number of atoms in $Q$.
%
%
  \item $\qcheck$ checks whether $\theta_0$ can be extended to a homomorphism
    $\theta$ such that $\theta(T) \subseteq \chasefresh{D}{\dep}$.  By
    Lemma~\ref{lem:squid}, this is equivalent to check if $\chase{D}{\dep}
    \models Q$.  Such a $\theta$ exists iff for each connected subgraph $t$ of
    $\theta_0(T)$, there is a homomorphism $\theta_t$
    such that $\theta_t(t) \subseteq \chasefresh{D}{\dep}$.  The $\qcheck$
    algorithm thus identifies the connected components of $\theta_0(T)$.  Each
    such component is a $[dom(D)]$-acyclic conjunctive query, some of whose
    arguments may contain constants from $\adom{D}$.  Each such component can
    thus be represented as a $[\adom{D}]$-join tree $t$.  For each such join
    tree $t$, $\qcheck$ now tests whether there exists a homomorphism
    $\theta_t$ such that $\theta_t(t) \subseteq \chasefresh{D}{\dep}$.  This is
    done by the subroutine $\tcheck$, that takes the TGD set $\dep$, the
    database $D$, and a connected subgraph (i.e., a subtree) $t$ of
    $\theta_0(T)$ as input.  The inner workings of $\tcheck(D, \dep, t)$ are
    described below.
\item $\qcheck$ outputs ``true'' iff 
  the above check 
  \textit{(3)}
  gives a positive result.
\end{plist}

The correctness of $\qcheck$ follows from Lemma~\ref{lem:squid}.  Given that
step \textit{(2)} is nondeterministic, the complexity of $\qcheck$ is in
\textsc{np}$^X$, i.e., \textsc{np} with an oracle in $X$, where $X$ is a
complexity class that is sufficiently powerful for: \textit{(i)} computing
$\chasehb{D}{\dep}$, and \textit{(ii)} performing the tests $\tcheck(D, \dep,
t$).

\medskip

\noindent
We now describe the $\tcheck$ subroutine.

\ourpar{General notions.}  $\tcheck(D, \dep, t$) is obtained from $\acheck$ via
the following modifications.  Each configuration of $\tcheck$ maintains a
pointer $\tpoint$ to a vertex of $t$ (an atom $\atom{a}_q$).  Intuitively, this
provides a link to the root of the subtree of $t$ that still needs to be
matched by descendant configurations of $c$.
In addition to the data structures carried by each configuration of $\acheck$,
each configuration of $\tcheck$ also maintains an array $\subst$ of length $w$,
where $w$ is the maximum predicate arity in $\R$.  Informally, $\subst$ encodes
a substitution that maps the current atom of $t$ to (the canonicalized version
of) the current atom of $\chase{D}{\dep}$.
%


$\tcheck$ works like $\acheck$, but instead of nondeterministically
constructing a main configuration path of the configuration tree such that
eventually some atom matches the query, it nondeterministically constructs a
main configuration (sub)tree $\tau$ of the configuration tree, such that
eventually all atoms of the join tree $t$ get consistently translated into some
vertices of $\tau$.  An important component of each main configuration $c$ of
$\tcheck$ is its \emph{current atom} $\atom{a}$.  Initially, $\atom{a}$ is some
nondeterministically chosen atom of $D$.  For subsequent configurations of the
alternating computation tree, $\atom{a}$ will take on nodes of $\gcf{D}{\dep}$.


\ourpar{Initialization.}  Similarly to $\acheck$, the computation starts by
generating an initial configuration
$(\atom{a},S,S,\prec,\star,\tpoint,\subst)$, where $\atom{a}$ is
nondeterministically chosen from the database $D$, $\tpoint$ points to the root
$\atom{r}$ of $t$, and $\subst$ is a homomorphic substitution that
$\subst(\atom{r}) = \atom{a}$, if $\atom{r}$ is homomorphically mappable on
$\atom{a}$; otherwise $\subst$ is empty.  This configuration will now be the
root of the main configuration tree.  

In general, the pointer $\tpoint$ of each main configuration
$c=(\atom{a},S,S',\prec,\star,\tpoint,\subst)$ points to some atom $\atom{a}_q$
of $t$, which has not yet been matched.  The algorithm attempts to expand this
configuration by successively guessing a subtree of configurations, mimicking a
suitable subtree of $\gcf{D}{\dep}$ that satisfies the subquery of $t$ rooted
at $\atom{a}_q$.

Whenever $\tcheck$ generates a further configuration, just as for $\acheck$,
$\tcheck$ generates via universal branching a number of configurations whose
joint task is to verify that all elements of $S$ are indeed provable. (We do not provide further details on how this branching is done.)

\ourpar{Expansion.}  The expansion of a main configuration $c =
(\atom{a},S,S',\prec,\star,\tpoint,\subst)$ works as follows.
For a configuration $c$, $\tcheck$ first checks whether there exists a
homomorphism $\mu$
such that $\mu(\subst(\atom{a}_q)) = \atom{a}$.  

\begin{enumerate}[label*=\arabic*.] \itemsep-\parsep
\item \textsl{($\mu$ exists.)}  If $\mu$ exists, we have two cases:
  \begin{enumerate}[label*=\arabic*.] \itemsep-\parsep
  \item If $\atom{a}_q$ is a leaf of $t$, then the current configuration turns
    into an accepting one.
  \item If 
    $\atom{a}_q$ is not a leaf of $t$, then $\tcheck$ nondeterministically
    guesses whether $\mu$ is a \emph{good match}, i.e.,
    one that contributes to a global query answer and can be expanded to map
    the entire tree $t$ into $\gcf{D}{\dep}$.
    \begin{enumerate}[label*=\arabic*.] \itemsep-\parsep
    \item \textsl{(Good match).}  In case of a good match, $\tcheck$ branches
      universally and does the following {\em for each} child $\atom{a}_{qs}$
      of $\atom{a}_q$ in $t$.  It nondeterministically (i.e., via existential
      branching) creates a new configuration
      \[
      c_s = \can_{\atom{a}_s}(\atom{a}_s, S_s, S'_s, \prec_s, \star, \tpoint_s,
      \subst_s)
      \]
      where $\tpoint_s$ points to $\atom{a}_{qs}$, and where $\subst_s$ encodes
      the composition $\mu\circ \subst_s$. 
      The atom $\atom{a}_s$ is guessed, analogously to what is done in
      $\acheck$, by guessing some TGD $\rho \in \dep$ having body $\Phi$ and
      head atom $\atom{\psi}$, such that the guard atom $\atom{g}$ matches
      $\atom{a}$ via some homomorphism $\theta$ (that is, $\theta(\atom{g}) =
      \atom{a}$) and where $\theta(\Phi) \subseteq S'$.  The cloud subsets
      $S_s$ and $S'_s$ are chosen again as in $\acheck$.  Intuitively, here
      $\tcheck$, having found a good match of $\atom{a}_q$ on $\atom{a}$, tries
      to match the children of $\atom{a}_q$ in $t$ to children (and,
      eventually, descendants) of $\atom{a}$ in $\gcf{D}{\dep}$. Finally, the function 
$\can_{\atom{a}_s}$ indicates that appropriate canonizations are made to obtain $c_s$ from $c$ (we omit the tedious details).
    \item \textsl{(No good match).}  In case no good match exists, a child
      configuration 
      \[
      c_{new} = can_{\atom{a}_{new}}(\atom{a}_{new},S,S',\prec,\star,\tpoint,\subst)
      \]
      of $c$ is nondeterministically created, whose first component represents
      a child $\atom{a}_{new}$ of $\atom{a}$, and where $c_{new}$ inherits all
      of its remaining components from $c$.  Intuitively, after having failed
      at matching $\atom{a}_q$ (to which, we remind, $\tpoint$ points) to
      $\atom{a}$, $\tcheck$ attempts at matching the same $\atom{a}_q$ to some
      child of $\atom{a}$ in $\gcf{D}{\dep}$.  By analogy with the previous
      case, $a_{\mathit{new}}$ is obtained by guessing some TGD $\rho \in \dep$
      having body $\Phi$ and head atom $\atom{\psi}$, such that the guard atom
      $\atom{g}$ matches $\atom{a}$ via some homomorphism $\theta$ (that is,
      $\theta(\atom{g}) = \atom{a}$), $\theta(\Phi) \subseteq S'$, and where
      $\atom{a}_{new}:=\theta(\atom{\psi})$. Again, the function term
      $can_{\atom{a}_{new}}$ indicates that appropriate canonizations are
      applied (which we do not describe in detail).
    \end{enumerate}
  \end{enumerate}
  \item \textsl{($\mu$ does not exist.)}  In this case, $\tcheck$ proceeds
    exactly as in case 1.2.2, namely, it attempts at matching the same
    $\atom{a}_q$ to some child (or eventually some descendant) of $\atom{a}$ in
    $\gcf{D}{\dep}$.
  \end{enumerate}

  \ourpar{Correctness.}  The correctness of $\tcheck$ can be shown along
  similar lines as for $\acheck$.  An important additional point to
  consider for $\tcheck$ is that, given that the query $t$ is acyclic, it is
  actually sufficient to remember at each configuration $c$ only the latest
  ``atom'' substitution $\subst$.  The correctness of $\qcheck$ then follows
  from the correctness of $\tcheck$ and from Lemma~\ref{lem:squid}.

  \ourpar{Computational cost.}  As for the complexity of $\qcheck$, note that
  in case the arity is bounded, $\tcheck$ runs in \textsc{apspace} =
  \textsc{exptime}, and computing $\chasehb{D}{\dep}$ is in \textsc{exptime} by
  Corollary~\ref{col:app-chasebottom}.  Thus, $\qcheck$ runs in time
  $\textsc{np}^\textsc{exptime} = \textsc{exptime}$.  In case of unbounded
  arities, both computing $\chasehb{D}{\dep}$ and running $\tcheck$ are in
  \textsc{2exptime}, therefore $\qcheck$ runs in time
  $\textsc{np}^{\textsc{2exptime}} = \textsc{2exptime}$. \hfill $\Box$


By combining Theorems~\ref{theo:hardness} and~\ref{theo:upperbound} we
immediately get the following characterization for the complexity of reasoning
under weakly guarded sets of TGDs.

\begin{theorem}\label{theo:mainresult}
  Let $\dep$ be a weakly guarded set of TGDs on a schema $\R$, $D$ a database
  for $\R$, and $Q$ a Boolean conjunctive query.  Determining whether $D\cup
  \dep \models Q$ or, equivalently, whether $\chase{D}{\dep} \models Q$ is
  \textsc{exptime}-complete in case of bounded predicate arities, even if
  $\dep$ is fixed and $Q$ is atomic.  In the general case of unbounded
  predicate arities, the same problem is \textsc{2exptime}-complete.  The same
  completeness results hold for the problem of query containment under weakly
  guarded sets of TGDs.
\end{theorem}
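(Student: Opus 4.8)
The plan is to obtain Theorem~\ref{theo:mainresult} as a direct combination of the lower bounds of Theorem~\ref{theo:hardness} and the upper bounds of Theorem~\ref{theo:upperbound}, together with the reductions already established for query containment. No genuinely new construction is required; the work lies entirely in checking that the upper and lower bounds meet in each of the two arity regimes, and that the claimed strengthenings are consistent with what the cited results deliver.

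First I would dispatch the bounded-arity case. By Theorem~\ref{theo:upperbound}, deciding whether $\chase{D}{\dep} \models Q$ is in \textsc{exptime} when the predicate arity is bounded. For the matching lower bound I would invoke the \textsc{exptime}-hardness clause of Theorem~\ref{theo:hardness}, which already holds for a \emph{fixed} weakly guarded set $\dep$ and a fixed atomic ground query $Q$. The observation to record here is that a fixed $\dep$ is defined over a fixed finite schema, hence over predicates of constant arity; thus this hardness result falls squarely within the bounded-arity regime. Combining the two yields \textsc{exptime}-completeness for bounded arity, and since the hardness already holds for fixed $\dep$ and atomic $Q$, the strengthening ``even if $\dep$ is fixed and $Q$ is atomic'' is obtained at no extra cost.

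The general (unbounded-arity) case is handled identically: the \textsc{2exptime} upper bound is the unbounded-arity clause of Theorem~\ref{theo:upperbound}, and the matching \textsc{2exptime} lower bound is the unbounded-arity clause of Theorem~\ref{theo:hardness} (again valid for a fixed atomic ground query). This gives \textsc{2exptime}-completeness.

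Finally, to carry both completeness results over to query containment, I would appeal to Corollary~\ref{cor:answering-containment}, which establishes that \problem{CQAns} and \problem{CQCont} are mutually \textsc{logspace}-reducible (via Lemma~\ref{lem:bcq-cq} identifying \problem{BCQeval} with \problem{CQeval}). Since both \textsc{exptime} and \textsc{2exptime} are closed under \textsc{logspace} reductions, membership transfers from answering to containment and hardness transfers in the reverse direction, so the containment problem inherits exactly the same completeness bounds. The closest thing to an obstacle is therefore purely a matter of bookkeeping: one must check that these reductions preserve the dependency set $\dep$, so that the ``fixed $\dep$'' flavor of the \textsc{exptime} lower bound survives the passage to containment. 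Since the reductions of Corollary~\ref{cor:answering-containment} leave $\dep$ untouched, this holds and the proof is complete.
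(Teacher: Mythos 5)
Your proposal is correct and follows exactly the route the paper takes: the theorem is stated there as an immediate combination of the hardness results of Theorem~\ref{theo:hardness} with the upper bounds of Theorem~\ref{theo:upperbound}, with the containment claim transferred via the \textsc{logspace}-equivalence of Corollary~\ref{cor:answering-containment}. Your explicit check that a fixed $\dep$ lives in the bounded-arity regime, and that the reductions leave $\dep$ untouched, is exactly the right bookkeeping.
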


\medskip

\noindent\textbf{Generalization.}
The definition of WGTGDs can be generalized to classes of TGDs whose unguarded
positions are guaranteed to contain a controlled finite number of null-values
only. Let $f$ be a computable integer function in two variables.  Call a
predicate position $\pi$ of a TGD set $\dep$ \emph{$f$-bounded} if no more than
$f(|D|,|\dep|)$ null values appear in $\chase{D}{\dep}$ as arguments in
position $\pi$; otherwise call $\dep$ \emph{$f$-unbounded}.  A set $\dep$ of
TGDs is \emph{$f$-weakly guarded} if each each rule of $\dep$ contains an atom
in its body that covers all variables which occur within this rule in
$f$-unbounded positions only. By a very minor adaptation of the proof of
Theorem~\ref{the:wgtgds-decidable}, it can be seen that CQ-answering for the
class of $f$-weakly guarded TGDs is decidable. Moreover, by a simple
modification of the $\qcheck$ and $\tcheck$ procedures, allowing a polynomial
number of nulls to enter ``unguarded" positions, it can be shown that
CQ-answering for fixed sets $\Sigma$ of W$\small^*$\!GTGDs is
\textsc{exptime}-complete in the worst case, where the class of
W$\small^*\!$GTGD sets is defined as follows.  A set $\dep$ of TGDs belongs to
this class if $\dep$ is $f$-weakly guarded for some function $f$ for which
there exists a function $g$, such that $f(|D|,|\dep|)|\leq |D|^{g(|\dep|)}$.


\section{Guarded TGDs}
\label{sec:guarded}

We now turn our attention to GTGDs.  We first consider the case of a variable
database $D$ as input.  Later, we prove part of the complexity bounds under the
stronger condition of fixed database.

\subsection{Complexity---Variable Database}
\label{sec:gtgds-complexity}

\begin{theorem}\label{theo:GTDTbottom}
  Let $\dep$ be a set of GTGDs over a schema $\R$ and $D$ be a database for
  $\R$.  Let, as before, $w$ denote the maximum predicate arity in $\R$ and
  $|\R|$ the total number of predicate symbols in $\R$.  Then:
  \begin{plist} \itemsep-\parsep
  \item Computing $\chasehb{D}{\dep}$ can be done in polynomial time if both
    $w$ and $|\R|$ are bounded and, thus, also in case of a fixed set $\dep$.
    The same problem is in \textsc{exptime} (and thus
    \textsc{exptime}-complete) if $w$ is bounded, and in \textsc{2exptime}
    otherwise.
  \item If $Q$ is an atomic or fixed Boolean query then checking whether
    $\chase{D}{\dep} \models Q$ is \textsc{ptime}-complete when both $w$ and
    $|\R|$ are bounded.  The same problem remains \textsc{ptime}-complete even
    in case $\dep$ is fixed.  This problem is \textsc{exptime}-complete if $w$
    is bounded and \textsc{2exptime}-complete in general.  It remains
    \textsc{2exptime}-complete even when $|\R|$ is bounded.
  \item If $Q$ is a general conjunctive query, checking
    $\chase{D}{\dep} \models Q$ is \textsc{np}-complete in case both $w$ and
    $|\R|$ are bounded and, thus, also in case of a fixed set $\dep$.  Checking
    $\chase{D}{\dep} \models Q$ is \textsc{exptime}-complete if $w$ is
    bounded and \textsc{2exptime}-complete in general.  It remains
    \textsc{2exptime}-complete even when $|\R|$ is bounded.
  \item BCQ answering under GTGDs is \textsc{np}-complete if both $w$ and
    $|\R|$ are bounded, even in case the set $\dep$ of GTGDs is fixed.
  \item BCQ answering under GTGDs is \textsc{exptime}-complete if $w$ is
    bounded and \textsc{2exptime}-complete in general.  It remains
    \textsc{2exptime}-complete even when $|\R|$ is bounded.
  \end{plist}
\end{theorem}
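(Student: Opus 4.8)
The plan is to obtain the upper bounds partly for free from the weakly guarded case and partly from a refined ``locality'' analysis special to guardedness, and to obtain the matching lower bounds from Turing machine simulations and from Chandra--Merlin. Since every set of GTGDs is a (trivially) weakly guarded set of TGDs, Corollary~\ref{col:app-chasebottom}, Theorem~\ref{theo:upperbound} and Theorem~\ref{theo:mainresult} immediately yield the \textsc{exptime} upper bound when $w$ is bounded and the \textsc{2exptime} upper bound in general, both for computing $\chasehb{D}{\dep}$ and for all three query-answering problems. The \textsc{2exptime}-hardness (even with $|\R|$ bounded) I would delegate to the stronger Theorem~\ref{theo:icdt}, and the \textsc{exptime}-hardness for bounded $w$ with unbounded $|\R|$ I would obtain by adapting the fixed-arity \textsc{apspace} simulation in the proof of Theorem~\ref{theo:hardness} so that every rule becomes fully guarded, at the cost of introducing one predicate symbol per tape cell (hence unbounded $|\R|$ but bounded arity). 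Thus the only genuinely new work concerns the sharper \textsc{ptime} and \textsc{np} bounds when both $w$ and $|\R|$ are bounded.

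The key observation, and the technical heart of the argument, is a guarded analogue of Theorem~\ref{theo:dependcloud} in which the relevant neighborhood of an atom no longer mentions $\adom{D}$. Because the guard of a GTGD covers all universally quantified variables, every body atom of a firing rule is mapped inside $\adom{\atom{a}}$ together with the finitely many constants occurring in $\dep$; no ``side atom'' over an arbitrary element of $\adom{D}$ is ever needed. Hence I would prove that the subtree $\troot{\atom{a}}$ of $\gcf{D}{\dep}$ is determined, up to $D$-isomorphism, by $\atom{a}$ together with the set $\gamma(\atom{a})$ of atoms of $\chase{D}{\dep}$ whose arguments lie in $\adom{\atom{a}}$ plus the constants of $\dep$. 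This \emph{guarded cloud} satisfies $|\gamma(\atom{a})| \leq |\R|\cdot (w + |\dep|)^w$, which is polynomial in $|\dep|$ when $w$ and $|\R|$ are bounded and constant when $\dep$ is fixed --- crucially independent of $|\adom{D}|$. In particular there are only polynomially many (constantly many, for fixed $\dep$) isomorphism types of pairs $(\atom{a},\gamma(\atom{a}))$.

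Given this locality, Part~(1) in the bounded case follows from a monotone bottom-up fixpoint: starting from $D$ and saturating over the polynomially many guarded-cloud types, I would compute exactly the derivable null-free atoms; the domain of the fixpoint has polynomial size, so the computation runs in \textsc{ptime}. For Parts~(2)--(4) I would invoke the Squid Lemma (Lemma~\ref{lem:squid}): a positive answer to $Q$ is witnessed by a squid decomposition $(Q^+,h,H,T)$ together with a homomorphism sending $H$ into $\chasehb{D}{\dep}$ (already computed) and $T$ into $\chasefresh{D}{\dep}$. Since $|Q^+| < 2|Q|$, guessing the decomposition and the grounding of the head is an \textsc{np} step, and for a \emph{fixed} or \emph{atomic} $Q$ it is only a constant or polynomial search. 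The remaining task --- matching each $[\adom{D}]$-acyclic tentacle into $\chasefresh{D}{\dep}$ --- is carried out by a guarded specialization of $\tcheck$ whose configurations now carry only the polynomial-size guarded cloud $\gamma(\atom{a})$ instead of the full cloud; by the locality lemma this runs \emph{deterministically} in \textsc{ptime}. Combining, atomic or fixed queries are decided in \textsc{ptime} (Part~2), and general conjunctive queries in \textsc{np} (Parts~3 and~4). The matching lower bounds are comparatively routine: \textsc{np}-hardness of Boolean conjunctive query evaluation holds already without constraints by~\cite{ChMe77}, and \textsc{ptime}-hardness for a fixed atomic query follows by encoding a \textsc{ptime}-complete monotone fixpoint (e.g.\ path systems) with guarded Datalog rules.

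I expect the principal obstacle to be two-fold and concentrated in the bounded case. First, proving the guarded locality lemma rigorously requires showing that no derivation in the subtree below $\atom{a}$ ever has to appeal to a body atom ranging over $\adom{D}\setminus(\adom{\atom{a}}\cup\mathrm{const}(\dep))$; this is intuitively clear from guardedness but needs the same careful induction on chase level as Theorem~\ref{theo:dependcloud}, now tracking that every firing rule's guard image keeps all its body atoms local. Second, and harder, is turning $\tcheck$ into a \emph{deterministic} \textsc{ptime} procedure for tentacle matching: one must argue that, because the guarded clouds are polynomial and closed, the alternating guessing used in the unbounded-arity $\tcheck$ collapses to a bottom-up evaluation over the polynomially many cloud types, so that the \textsc{apspace} bound improves to \textsc{ptime}. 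The lower-bound side, by contrast, should be straightforward modulo the deferred Theorem~\ref{theo:icdt}.
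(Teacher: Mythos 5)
Your proposal is correct and follows essentially the same route as the paper: the \textsc{exptime}/\textsc{2exptime} upper bounds are inherited from the WGTGD case, the hardness results are delegated to Theorem~\ref{theo:icdt}, Chandra--Merlin, and guarded Datalog, and the sharper bounds for bounded $w$ and $|\R|$ rest on exactly the locality observation you identify --- the paper calls your $\gamma(\atom{a})$ the \emph{restricted cloud} $\rcloud{D}{\dep}{\atom{a}}$ and proves the analogue of Theorem~\ref{theo:dependcloud} for it. The only presentational difference is that the paper obtains \textsc{ptime} by noting that configurations of the restricted-cloud versions of $\acheck$/$\tcheck$ fit in logarithmic space (so the alternating procedures run in $\textsc{alogspace}=\textsc{ptime}$), which is the clean way to realize the ``collapse to a bottom-up evaluation over cloud types'' that you flag as your main remaining obstacle.
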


\begin{proof} 
  First, note that items \textit{(4)} and \textit{(5)} immediately follow from
  the first three items, given that $\chase{D}{\dep}$ is a universal model. We
  therefore just need to prove items \textit{(1)}-\textit{(3)}. We first
  explain how the hardness results are obtained, and then deal with the
  matching membership results.

\smallskip

\noindent \ourpar{Hardness Results}.
The \textsc{ptime}-hardness of checking $\chase{D}{\dep} \models Q$ for atomic
(and thus also fixed) queries $Q$ and for fixed $\dep$ follows from the fact
that ground atom inference from a fixed fully guarded Datalog program over
variable databases is \textsc{ptime}-hard.  In fact, in the proof of
Theorem~4.4 in~\cite{voronkov-csur-2001} it is shown that fact inference from a
single-rule Datalog program whose body has a guard atom that contains all
variables is \textsc{ptime}-hard.
The \textsc{np}-hardness in item \textit{(3)} is immediately derived from the
hardness of CQ containment (which in turn is polynomially equivalent to query
answering) without constraints~\cite{ChMe77}.
The hardness results for \textsc{exptime} and \textsc{2exptime} are all derived
via minor variations of the proof of Theorem~\ref{theo:hardness}.  For example,
when $|\R|$ is unbounded and $w$ is bounded, the tape cells of the polynomial
worktape are simulated by using polynomially many predicate symbols. For
example, the fact that in configuration $v$ cell $5$ contains symbol 1 can be
encoded as $S^1_5(v)$.  We omit further details, given that a much stronger
hardness result will be established via a full proof in
Theorem~\ref{theo:icdt}.


\noindent \ourpar{Membership results}. 
The membership results are proved exactly as those for weakly guarded sets of
TGDs, except that instead of using the concept of cloud, we now use a similar
concept of \emph{restricted cloud}, which coincides with that of a \emph{type}
of an atom in~\cite{CaGL12}.  The restricted cloud $\rcloud{D}{\dep}{\atom{a}}$
of an atom $\atom{a} \in \chase{D}{\dep}$ is the set of all atoms $\atom{b} \in
\chase{D}{\dep}$ such that $\adom{\atom{b}} \subseteq \adom{\atom{a}}$. By a
proof that is almost identical to the one of Theorem~\ref{theo:dependcloud}, we
can show that if $D$ is a database, $\dep$ a set of GTGDs, and if $\atom{a} \in
\chase{D}{\dep}$, then $\rtrootc \atom{a} =
\ggcf{\atom{a}}{\rcloud{D}{\dep}{\atom{a}}}$, where $\rtrootc{\atom{a}}$ is
defined as $\rtrootc{\atom{a}} = \set{\troot{\atom{a}}} \cup
\rcloud{D}{\dep}{\atom{a}}$. It follows that, for the main computational tasks,
we can use algorithms $\racheck$, $\rqcheck$, and $\rtcheck$, which differ from
the already familiar $\acheck$, $\qcheck$, and $\tcheck$ only in that
restricted clouds instead of the ordinary clouds are used. However, unlike the
case when both $|\R|$ and $w$ are bounded and a cloud (or subcloud) can have
polynomial size in $|D \cup \dep|$, a restricted cloud
$\rcloud{D}{\dep}{\atom{a}}$ has a constant number of atoms, and storing its
canonical version $\can_{\atom{a}}(\rcloud{D}{\dep}{\atom{a}})$ thus requires
logarithmic space only. In total, in case both $|\R|$ and $w$ are bounded, due
to the use of restricted clouds (and subsets thereof) each configuration $c$ of
$\racheck$ and of $\rtcheck$ only requires logarithmic space. Since
\textsc{alogspace} = \textsc{ptime}, the \textsc{ptime}-results for atomic
queries in items \textit{(1)} and \textit{(2)} follow.
Moreover, if both $|\R|$ and $w$ are bounded, for general (non-atomic and
non-fixed) queries, the $\rqcheck$ algorithm decides if $\chase{D}{\dep}
\models Q$ in \textsc{np} by guessing a squid decomposition (in
nondeterministic polynomial time) and checking (in
\textsc{alogspace}=\textsc{ptime}) if there is a homomorphism from this squid
decomposition into $\chase{D}{\dep}$. Thus, in this case, $\rqcheck$ runs in
$\textsc{np}^\textsc{ptime} = \textsc{np}$, which proves the \textsc{np} upper
bound of Item~\textit{(3)}.
If, in addition, $Q$ is fixed, then $Q$ has only a constant number of squid
decompositions, and therefore $\rqcheck$ runs in $\textsc{ptime}^\textsc{ptime}
= \textsc{ptime}$, which proves the \textsc{ptime} upper bound for fixed
queries mentioned in item~\textit{(2)}.
The \textsc{exptime} and \textsc{2exptime} upper bounds are inherited from the
same upper bounds for WGTGDs.
\end{proof}

Note that one of the main results in~\cite{JoK84}, namely, that query
containment under inclusion dependencies of bounded arities is
\textsc{np}-complete, is a special case of Item~\textit{(3)} of
Theorem~\ref{theo:GTDTbottom}.


\subsection{Complexity---Fixed Database}

The next result tightens parts of Theorem~\ref{theo:GTDTbottom} by showing that
the above \textsc{exptime} and \textsc{2exptime}-completeness results hold even
in case of a fixed input database.

\begin{theorem}\label{theo:icdt}
  Let $\dep$ be set of GTGDs on a schema $\R$.  As before, let $w$ denote the
  maximum arity of predicate in $\R$ and $|\R|$ be the total number of
  predicate symbols.  Then, for fixed databases $D$, checking whether
  $\chase{D}{\dep} \models Q$ is \textsc{exptime}-complete if $w$ is bounded
  and \textsc{2exptime}-complete for unbounded $w$.  For unbounded $w$, this problem remains
  \textsc{2exptime}-complete even when $|\R|$ is bounded.
\end{theorem}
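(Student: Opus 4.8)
The upper bounds are inherited for free: a fixed database is a special case of a variable one, so membership in \textsc{exptime} (bounded $w$) and \textsc{2exptime} (unbounded $w$) follows immediately from Theorem~\ref{theo:mainresult} together with the GTGD membership results of Theorem~\ref{theo:GTDTbottom}. All the difficulty lies in the matching \emph{lower} bounds when $D$ is \emph{fixed}. The plan is to reduce the acceptance problem of alternating Turing machines to $\chase{D}{\dep}\models Q$, where $D$ is a single fixed seed fact, $Q$ is the fixed atomic query $\mathit{accept}(\iota)$ (with $\iota$ a fixed constant), and the machine \emph{together with} its input are encoded entirely inside the \emph{guarded} rule set $\dep$ rather than in $D$. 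Since \textsc{exptime}$=$\textsc{apspace} and \textsc{2exptime}$=$\textsc{aexpspace}, I would simulate an alternating polynomial-space machine for the bounded-arity case and an alternating exponential-space machine for the unbounded-arity case, reusing the configuration-tree bookkeeping of Theorem~\ref{theo:hardness} but re-engineering every rule to be \emph{fully} guarded.

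For the \textsc{exptime} bound with bounded $w$ (and unbounded $|\R|$), I would give each configuration a single null identifier generated by the chase and push all tape, cursor and state information into the \emph{predicate names}: unary predicates $\mathit{sym}_{c,a}$, $\mathit{cursor}_c$ and $\mathit{state}_s$ applied to a configuration, plus a ternary $\mathit{next}(V,V_1,V_2)$ linking a configuration to its two alternating successors, created by the guarded existential rule $\mathit{config}(V)\ra\exists V_1\exists V_2\,\mathit{next}(V,V_1,V_2)$. Because cells, symbols and states are baked into predicate symbols, the only genuine variables in any transition or inertia rule are the configuration identifiers $V,V_1,V_2$, all covered by the single body atom $\mathit{next}(V,V_1,V_2)$; hence every rule is fully guarded with arity at most three, while $|\R|$ stays polynomial in the space bound. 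Initialization rules fire off the fixed fact $\mathit{initial}(\iota)\in D$ to write the input $x$ into $\iota$, acceptance is propagated up the configuration tree exactly as in Theorem~\ref{theo:hardness}, and multi-atom heads are either the single existential atom $\mathit{next}$ or existential-free and thus splittable into single-head GTGDs.

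For the \textsc{2exptime} bound with unbounded $w$, the predicate-per-cell trick is no longer affordable, so I would vectorize: cell addresses become $n$-bit vectors, yielding predicates of arity $O(n)$ and, crucially, only a \emph{constant} number of them, which is precisely what the ``even when $|\R|$ is bounded'' clause requires. The configuration tree and the acceptance propagation carry over unchanged from the \textsc{aexpspace} sketch in Theorem~\ref{theo:hardness}. \textbf{The main obstacle} is the tape-consistency (inertia) rules: in that sketch they are only \emph{weakly} guarded, since no single body atom simultaneously covers the two configuration identifiers, the $n$-bit cell address, and the symbol being copied. To repair this I would replace the single unguarded copy rule by a guarded \emph{scanning} process that propagates the tape from a configuration to each of its successors cell by cell: an $n$-bit counter, incremented by fully guarded rules, drives a lockstep traversal of the $2^n$ cells of a configuration and its successor, and each propagation step is guarded by one wide atom of arity $O(n)$ carrying the corresponding old and new cell identifiers, the current address, the local window of symbols, and the cursor/state. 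Since this window is a local structure of polynomial arity, every such rule has a genuine guard, and the Turing-machine transition (including window consistency) is enforced guardedly, without ever needing an atom that stores a whole exponential tape.

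Finally I would check that the construction is a polynomial-time reduction producing a set of single-head GTGDs, and that $\chase{D}{\dep}\models\mathit{accept}(\iota)$ iff the encoded machine accepts its input; soundness and completeness of the simulation then yield the three hardness results, which combined with the inherited membership bounds give \textsc{exptime}-completeness for bounded $w$ and \textsc{2exptime}-completeness (even for bounded $|\R|$) for unbounded $w$. I expect the genuinely delicate part to be the guarded lockstep propagation of the third paragraph: packaging the scanning counter, the vertical cell-to-cell pointers across consecutive configurations, and the transition window into single guard atoms of polynomial arity is exactly where the fixed-database, fully guarded simulation departs from the weakly guarded, variable-database construction of Theorem~\ref{theo:hardness}.
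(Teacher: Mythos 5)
Your proposal is correct and follows the paper's architecture almost exactly: upper bounds inherited from Theorem~\ref{theo:GTDTbottom}, lower bounds by ATM simulation with the machine \emph{and} its input compiled into the guarded rule set and only a trivial seed database; the \textsc{exptime} case handled by baking cell indices, symbols and states into predicate names so that every rule's variables are just the three configuration identifiers covered by $\mathit{next}(V,V_1,V_2)$; and the \textsc{2exptime} case handled by vectorizing cell addresses into $n$-bit tuples over a constant-size signature. This is precisely the paper's proof of Theorem~\ref{theo:icdt}.

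The one place where you depart from the paper is the step you yourself flag as delicate: making the inertia (tape-consistency) rules fully guarded in the exponential-space simulation. You propose a lockstep scan of the $2^n$ cells driven by a guarded binary counter, threading a ``window'' of symbols through a wide scan atom. The paper instead \emph{precomputes a guard relation}: starting from a single seed atom per configuration triple, self-guarded bit-flipping rules (each with a one-atom body) generate an atom $g(\vv,\ww,\uu,x,y,z)$ for \emph{every} pair of addresses $\vv$ (with successor $\ww$) and $\uu$ together with each configuration $x$ and its successors $y,z$. Since tape symbols and states are constants or predicate names, the side atoms $\whead(\ww,x)$, $\zero(\uu,x)$, etc.\ introduce no variables beyond those already in $g$, so the inertia rule $g(\vv,\ww,\uu,x,x_1,x_2), \whead(\ww,x), \zero(\uu,x) \ra \zero(\uu,x_1)$ is immediately guarded --- no traversal order and no symbol payload in the guard atom are needed. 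Your scanning variant can be made to work, but as stated it is under-specified exactly where guarding bites: the rule that advances the scan must join in the symbol of the \emph{next} cell, and if that symbol is carried as a variable in the scan atom's payload, no single body atom of the advancing rule covers both the old payload and the newly read symbol. The fix is the same device the paper uses throughout (a constant alphabet baked into predicate names, so the new symbol atom contributes no fresh variables); once you adopt it, the extra bookkeeping of a counter and a traversal order buys you nothing over the paper's static $g$-relation, which is the cleaner route.
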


\begin{proof} 
  First, observe that the upper bounds (i.e., the membership results for
  \textsc{exptime} and \textsc{2exptime}) are inherited from
  Theorem~\ref{theo:GTDTbottom}, so it suffices to prove the hardness
  results for the cases where $Q$ is a fixed atomic query.

  We start by proving that checking $\chase{D}{\dep} \models Q$ is
  \textsc{exptime}-hard if $w$ is bounded.
  It is well-known that $\textsc{apspace}$ (alternating $\textsc{pspace}$)
  equals $\textsc{exptime}$.  

  As already noted in the proof of Theorem~\ref{theo:hardness}, it is
  sufficient to simulate an \textsc{linspace} alternating Turing machine (ATM)
  $\M$ that uses at most $n$ worktape cells on every input (bit string) $I$ of
  size $n$, where the input string is initially present on the worktape.  In
  particular, we will show that $\M$ accepts the input $I$ iff $\chase{D}{\dep}
  \models Q$.

  Without loss of generality, we assume that \textit{(i)} ATM $\M$ has exactly
  one accepting state, $a$, which is also a halting state; \textit{(ii)} the
  initial state of $\M$ is an existential state; \textit{(iii)} $\M$ alternates
  at each transition between existential and universal states; and
  \textit{(iv)} $\M$ never tries to read beyond its tape boundaries.  

  Let $\M$ be defined as
  \(
  \M = (S, \Lambda, \delta, q_0, \set{s_a})
  \),
  where $S$ is the set of states, $\Lambda = \{ 0, 1, \blank \}$ is the tape
  alphabet, $\blank \in \Lambda$ is the blank tape symbol, $\delta: S \times
  \Lambda \ra (S \times \Lambda \times \{\ell,r,\bot\})^2$ is the transition
  function ($\bot$ denotes the ``stay'' head move, while $\ell$ and $r$ denote
  ``left'' and ``right'' respectively), $q_0 \in S$ is the initial state, and
  $\set{s_a}$ is the singleton set of final (accepting) states.
  Since $\M$ is an alternating TM, its set of states $S$ is
  \emph{partitioned} into two sets, $S_\forall$ and $S_\exists$---universal and
  existential states, respectively.
  The general idea of the encoding is that the different {\em configurations}
  of $\M$ on input $I$ of length $n$ will be represented by fresh nulls that
  are generated in the construction of the chase.

  Let us now describe the schema $\R$.  First, for
  each integer $1\leq i\leq n$,
  $\R$ contains the predicate $\whead_i/1$, such that
  $\whead_i(c)$ be true iff at configuration $c$ the head of $\M$ is over
  the tape
  cell $i$.  $\R$ also has the predicates $\zero_i/1$, $\one_i/1$, and
  $\blankp_i/1$, where $\zero_i(c)$, $\one_i(c)$, and $\blankp_i(c)$ are true
  if in configuration $c$ the tape cell $i$ contains the symbol 0, 1, or
  $\blank$, respectively.
  Furthermore, for each state $s\in S$, $\R$ has a predicate $\state_s/1$, such
  that $\state_s(c)$ is true iff the state of configuration $c$ is $s$.  $\R$
  also contains: the predicate $\start/1$, where $\start(c)$ is true iff $c$ is
  the starting configuration; the predicate $\config/1$, which is true iff its
  argument identifies a configuration; and the predicate $\next/3$, where
  $\next(c,c_1,c_2)$ is true if $c_1$ and $c_2$ are the two successor
  configurations of $c$.  There are also predicates $\universal/1$ and
  $\existential/1$, such that $\universal(c)$ and $\existential(c)$ are true if
  $c$ is a universal (respectively, existential) configuration.
  Finally, there is a predicate $\accepting/1$, where $\accepting(c)$ is true
  only for accepting configurations $c$, and a propositional symbol $\accept$,
  which is true iff the Turing Machine $\M$ accepts the input $I$.

\smallskip

  We now describe a set $\dep(\M,I)$ of GTGDs that simulates the behavior of
  $\M$ on input $I$.  The rules of $\dep(\M,I)$ are as follows.

  \begin{enumerate} \itemsep-\parsep
  \item \textit{Initial configuration generation rules.}  The following rule
    creates an initial state: $\ra \exists X\, \initc(X)$.  We also add a rule
    $\initc(X) \ra \config(X)$, which says that the initial configuration is,
    in fact, a configuration.
  \item \textit{Initial configuration rules.}  The following set of rules
    encodes the tape content of the initial configuration, that is,  the input string $I$.  For each $1\leq i\leq n$, if the $i$-th
    cell of the tape contains a $0$, we add the rule $\initc(X) \ra
    \zero_i(X)$; if it contains a $1$, we add $\initc(X) \ra \one_i(X)$.
We also add the
 rule $\initc(X) \ra \existential(X)$ in order to say, without loss of
    generality, that the initial configuration is an existential one.
    Moreover, we add the rules $\initc(X) \ra \whead_1(X)$ and $\initc(X)\ra
    \state_{s_0}(X)$ to define the initial values of the state and the head
    position of $\M$ on input $I$.
  \item \textit{Configuration generation rules.}  We add a rule that creates
    two successor configuration identifiers for each configuration identifier.
    Moreover, we add rules stating that these new configuration identifiers
    indeed identify configurations:
    \begin{eqnarray*}
      \config(X) & \ra & \exists X_1,\!X_2\, \next(X,X_1,X_2),\\
      \next(X,Y,Z) & \ra & \config(Y),\\ 
      \next(X,Y,Z) & \ra & \config(Z). 
    \end{eqnarray*}
  \item \textit{Transition rules.}  We show by example how
    transition rules are generated for each
    transition in the finite control.
    Assume, for instance, that the transition table contains a specific
    transition of the form: $(s,0) \ra (\, (s1,1,r)\,,\, (s2,0,\ell)\,) $.
    Then we assert the following rules, for $1 \leq i \leq n$:
    \begin{eqnarray*}
      \whead_i(X), \zero_i(X), \state_s(X), \next(X,X_1,X_2) & \ra 
      & \state_{s_1}(X_1)\\
      \whead_i(X), \zero_i(X), \state_s(X), \next(X,X_1,X_2) & \ra 
      & \state_{s_2}(X_2).
    \end{eqnarray*}
    Moreover, for each $1 \leq i < n$ we have these rules:
    \begin{eqnarray*}
      \whead_i(X), \zero_i(X), \state_s(X), \next(X,X_1,X_2)& \ra & 
      \one_i(X_1)\\
      \whead_i(X), \zero_i(X), \state_s(X), \next(X,X_1,X_2)& \ra & 
      \whead_{i+1}(X_1),
    \end{eqnarray*}
    and for each $1 < i \leq n$ we add these rules:
    \begin{eqnarray*}
      \whead_i(X), \zero_i(X), \state_s(X), \next(X,X_1,X_2)& \ra & 
      \zero_i(X_2)\\
      \whead_i(X), \zero_i(X), \state_s(X), \next(X,X_1,X_2)& \ra & 
      \whead_{i-1}(X_2)
    \end{eqnarray*}
    The other types of transition rules are constructed analogously.
    Note that the total number of rules added is $6n$ times the
    number of transition rules. Hence it is linearly bounded by the size $n$ of
    the input string $I$ to $\M$.
  \item \textit{Inertia rules.}  These rules state that tape cells in positions
    not under the head keep their values.  Thus, for each 
    $1 \leq i,j \leq n$  such that $i \neq j$ we add the rules:
    \begin{eqnarray*}
      \whead_i(X), \zero_j(X), \next(X,X_1,X_2)& \ra & 
      \zero_j(X_1)\\
      \whead_i(X), \one_j(X), \next(X,X_1,X_2)& \ra & 
      \one_{j}(X_1)\\
      \whead_i(X), \blankp_j(X), \next(X,X_1,X_2)& \ra & 
      \blankp_{j}(X_1),
    \end{eqnarray*}
  \item \textit{Configuration-type rules.}  These rules say that the
    immediate successor configurations of an existential configuration are
    universal, and vice-versa:
    \begin{eqnarray*}
      \existential(X), \next(X,X_1,X_2)& \ra & \universal(X_1)\\
      \existential(X), \next(X,X_1,X_2)& \ra & universal(X_2)\\
      \universal(X), \next(X,X_1,X_2)& \ra & \existential(X_1)\\
      \universal(X), \next(X,X_1,X_2)& \ra & \existential(X_2).
    \end{eqnarray*}
  \item \textit{Acceptance rules.}  These recursive rules state when a
    configuration is accepting:
    \begin{eqnarray*}
      \state_{s_a}(X) & \ra &  accepting(X) \\
      \existential(X), \next(X,X_1,X_2), \accepting(X_1) 
      & \ra & \accepting(X)\\
      \existential(X), \next(X,X_1,X_2), \accepting(X_2) 
      & \ra &  \accepting(X)\\
      \universal(X), \next(X,X_1,X_2), \accepting(X_1),
      \accepting(X_2) &\ra &  \accepting(X)\\
      \initc(X), \accepting(X) & \ra & \accept.
    \end{eqnarray*}
  \end{enumerate}
  This completes the description of the set of TGDs $\dep(\M,I)$.  Note that
  this set is guarded, has maximum predicate arity~3, can be obtained in
  logarithmic space from $I$ and the constant machine description of $\M$. It
  faithfully simulates the behavior of the alternating linear space machine
  $\M$ in input $I$.  It follows that $\dep(\M,I) \models accept$ iff $\M$
  accepts input $I$. Let $D_0$ denote the empty database, and let $Q_0$ be the
  ground-atom query $\accept$.  We then have that $\dep(\M,I)\cup D_0 \models
  Q_0$ iff $\M$ accepts input $I$.  This shows that answering ground atom
  queries on fixed databases constrained by bounded arity GTGDs is
  \textsc{exptime}-hard.

  \medskip

  Let us now illustrate how we obtain the \textsc{2exptime} hardness result for
  guarded TGDs when arities are unbounded, but when the number $|\R|$ of
  predicate symbols of the schema $\R$ is bounded by a constant. 
  Given that \textsc{aexpspace}$=$\textsc{2exptime} (\textsc{aexpspace} $\equiv$
  alternating \textsc{aexpspace}), our aim is now to simulate an
  \textsc{aexpspace} Turing machine. It is sufficient to simulate one that uses no more than 
$2^n$ worktape cells, since the acceptance problem for such machines is already \textsc{2exptime}-hard.
In fact, by trivial padding arguments, the acceptance problem for every  \textsc{aexpspace}
machine can be transformed in polynomial time into the acceptance problem for one using at most 
$2^n$ worktape cells.

The problem is, however,  that now we can no longer construct a polynomial number of
  rules that explicitly address each worktape cell $i$, or each pair of cells
  $i,j$, since now there is an \emph{exponential} number of worktape cells.
  The idea now is to encode tape cell indexes as \emph{vectors} of symbols
  $(v_1,\ldots,v_k)$ where $v_i \in \set{0,1}$.  As in the proof of
Theorem~\ref{theo:hardness}, we could define, with a
  polynomial number of rules, a successor relation $\succp$ that stores pairs
  of indexes as $\succp(v_1,\ldots,v_k,w_1,\ldots,w_k)$.  However, there is a
  further difficulty: we now  have two different
  types of variables: the variables $V_i,W_j$ that range over the bits
  $v_i,w_i$ in the above-described bit vectors, and the variables $X,Y,Z$ that
  range over configurations.  A major difficulty is that, given that our rules
  are all guarded, we must make sure that these two types of variables,
  whenever they occur elsewhere in a rule body, also occur in some guard.
  To this end, we will use a fixed database $D_{01}$ that contains the single
  fact $\zeroone(0,1)$, and we will construct a ``guard'' relation $g$ such
  that for each vector $\vv$ of $n$ bits and its binary successor $\ww$, and
  for each configuration $x$ with its two successor configurations $y$ and $z$,
  the relation $g$ contains a tuple $g(\vv,\ww,x,y,z)$.  We will use several
  auxiliary relations to construct $g$.

  For technical reasons, the first two arguments of some atoms below will be
  dummy variables $T_0$ and $T_1$ that will always be forced to take the values
  $0$ and $1$, respectively. This way, where convenient, we will have the
  values $0$ and $1$ available implicitly in form of variables, and we will not need to
  use these constants explicitly in our rules.

  Given that our database is now non-empty, we do not need to create the
  initial configuration identifier via an existential rule as before.  We can
  simply take $0$ as the identifier of this initial configuration:
  $\zeroone(T_0,T_1)\, \ra\, \initc(T_0,T_1,T_0)$. (Here, the first two
  arguments of $\initc(T_0,T_1,T_0)$ just serve, as explained, to carry the
  values $0$ and $1$ along.)  We also add: $\initc(T_0,T_1,T_0)\, \ra\,
  \config(T_0,T_1,T_0)$ to assert that $0$ is the identifier of the initial
  configuration.
  Next we present the new configuration generation rules.
  \begin{eqnarray*}
    \config(T_0,T_1,X) & \ra & \exists Y,\exists Z\, \next(T_0,T_1,X,Y,Z),\\
    \next(T_0,T_1,X,Y,Z) & \ra & \config(T_0,T_1,Y),\\ 
    \next(T_0,T_1,X,Y,Z) & \ra & \config(T_0,T_1,Z). 
  \end{eqnarray*}

  We use further rules to create a relation $b$ such that each atom
  $b(0,1,\vv,x,y,z)$ contains a tuple for each vector $\vv$ of $n$ bits, and for
  each configuration $x$.  For better readability, whenever useful, we will use 
  superscripts for indicating the arity of vector
  variables: 
  for instance, $\VV^{(n)}$ denotes $\dd{V}{n}$.
  Moreover, $\vz^{(j)}$ denotes the vector of $j$ zeros and $\vo^{(j)}$ the
  vector of $j$ ones.
%
  We start with the rule
\(
\next(T_0,T_1,X,Y,Z) \ra b(T_0,T_1,{\bf T_0}^{(n)},X,Y,Z),
\)
which defines an atom $b(0,1,\vz^{(n)},x,y,z)$, for each configuration $x$ and its
$\mathit next$-successors $y$ and $z$.  
  
The following $n$ rules, for $1 \leq i \leq n$, generate an exponential number
of new atoms, for each triple $X,Y,Z$, by swapping $0$s to $1$s in all possible
ways.  Eventually, the chase will generate all possible prefixes of $n$ bits.
\begin{multline*} 
b(T_0,T_1,U_1,\ldots, U_{i-1},T_0, U_{i+1},\ldots, U_n, X,Y,Z) \ra\\
b(T_0,T_1,U_1,\ldots,U_{i-1}, T_1,U_{i+1},\ldots, U_n, X,Y,Z).
\end{multline*}
  We are now ready to define the guard-relation $g$ through another
  group of guarded rules.
%
%
  For each $0 \leq r < n$, we add:
  \[b(T_0,T_1,\UU^{(r)},T_0,{\bf T_1}^{(n-r-1)},X,Y,Z) \ra
  g(\UU^{(r)},T_0,{\bf T_1}^{(n-r-1)},\UU^{(r)},T_1,{\bf T_0}^{(n-r-1)},X,Y,Z).
  \]
  The above $n$ rules define an exponential number of cell-successor pairs for
  each triple of configuration identifiers $X,Y,Z$, where $Y$ and $Z$ are the
  ``next'' configurations following $X$.  In particular, the relation $g$
  contains precisely all tuples $g(\vv,\ww,x,y,z)$, such that $\vv$ is an
  $n$-ary bit vector, 
  $\ww$ is its binary successor, $x$ is a configuration identifier, $y$ is its
  first successor via the $\mathit next$ relation, and $z$ is its second
  successor via the $\mathit next$ relation.

  We are now ready to simulate an \textsc{aexpspace} Turing machine $\M'$ over
  an input string $I$ by a set of GTGDs $\dep(\M',I)$.  Since this simulation
  is similar to the one presented in the first part of this proof, we just
  sketch it and point out the main differences.

  For the simulation, we use (in addition to the aforementioned auxiliary
  predicates) predicates similar to the ones used earlier for the simulation of
  the \textsc{exptime} Turing machine $\M$.  However, we only use a constant
  number of predicates.  So, rather than using, atoms $\whead_i(x)$,
  $\zero_i(x)$ and so on, we use their vectorized versions $\whead(\vv,x)$,
  $\zero(\vv,x)$ and so on, where $\vv$ is a bit vector of length $n$ that
  takes the role of an exponential index.  Thus, for example, the equivalent of
  the earlier rule
  \[
  \whead_i(X), \zero_i(X), \state_s(X), \next(X,X_1,X_2) \ra \one_i(X_1)
  \]
  is 
  \(
  g(\VV,\WW,X,X_1,X_2), \whead(\VV,X), \zero(\VV,X), \state(X,s) \ra
  \one(\VV,X_1).\) \ The earlier rule $$\whead_i(X), \zero_i(X), \state_s(X), \next(X,X_1,X_2) \ra \whead_{i-1}(X_2)$$
  becomes
  \(
  g(\VV,\WW,X,X_1,X_2), \whead(\WW,X), \zero(\WW,X), \state(X,s) \ra
  \whead(\VV,X_2).
  \)
  It is now straightforward to see how the initialization rules can be written.
  Informally, for copying the input string $I$ to the worktape, we place the
  $n$ input bits of $I$ on the tape by writing a rule for each such bit.  We
  then add rules that fill all positions from $n+1$ to $2^n$
  with blanks.  As this can be done in a similar way as in the second part of the proof of 
Theorem~\ref{theo:hardness}, we omit the details.

  The only remaining issue is the specification of the inertia rules.  These
  rules deal with \emph{pairs} $i,j$ of different, not necessary adjacent, tape
  cell positions in our earlier simulation.  Here we have only adjacent cell
  positions available so far.  The problem can be solved in different ways.
  One possibility is described below.
%

  We can simply modify the definition of the predicate $b$
  by adding a second vector of $n$ bits to the $b$-atoms so that
  $b$-atoms actually have the form $b(T_0,T_1,\vv,\uu,x,y,z)$, where $\vv$ and $\uu$
  range over all possible \emph{distinct pairs} of bit vectors of length $n$.
  This $\uu$ vector is then carried over to the $g$-atoms.  We can thus assume
  that the $g$-atoms now have the form $g(\vv,\ww,\uu,x,y,z)$.  The former
  inertia rule
  \(
  \whead_i(X), \zero_j(X), \next(X,X_1,X_2) \ra \zero_j(X_1)
  \)
  would then become
  \(
  g(\VV,\WW,\UU,X,X_1,X_2), \whead(\WW,X), \zero(\UU,X) \ra \zero(\UU,X_1).
  \)



  What remains to be defined are the configuration and the acceptance rules.
  The configuration rules are very similar to the ones used in the previous
  reduction, hence we leave them as an exercise.  The acceptance
  rules are as follows:
  \begin{eqnarray*}
    \state(X,s_a) & \ra &  \accepting(X) \\
    \existential(X), g(\VV,\WW,X,X_1,X_2), \accepting(X_1) 
    & \ra &  \accepting(X)\\
    \existential(X), g(\VV,\WW,X,X_1,X_2), \accepting(X_2) 
    & \ra &  \accepting(X)\\
    \universal(X), g(\VV,\WW,X,X_1,X_2), \accepting(X_1),
    \accepting(X_2) &\ra &  \accepting(X)\\
    \zeroone(T_0,T_1),\accepting(T_0) & \ra & \accept.
  \end{eqnarray*}

  This completes the description of the set of TGDs $\dep(\M',I)$.  Note that
  this set is guarded and has a constant number of predicates.  It can be
  obtained in~\textsc{logspace} from $I$ and the constant machine description
  of $\M$. It also faithfully simulates the behavior of the alternating
  exponential space machine $\M'$ on input $I$.  It follows that $\dep(\M',I)
  \models \accept$ iff $\M'$ accepts input $I$.  Let $Q_0$ be the BCQ defined
  as $Q_0 = \set{\accept}$.  We then have $D_{01} \cup \dep(\M',I) \models Q_0$
  iff $\M'$ accepts input $I$.  This shows that answering ground atomic queries
  on fixed databases under guarded TGDs with a fixed number of predicate
  symbols (but unbounded arity) is \textsc{2exptime}-hard.
\end{proof}


\section{Polynomial Clouds Criterion}
\label{sec:polycloud}

In the previous section we have seen that, in case of bounded arity, query
answering under weakly guarded sets of TGDs is \textsc{exptime}-complete, while
query answering under GTGDs is \textsc{np}-complete.  Note that, for
unrestricted queries and databases, \textsc{np}-completeness is the best we can
obtain. In fact, even in the absence of constraints, the BCQ answering problem
is \textsc{np}-complete~\cite{ChMe77}.

In this section, we establish a criterion that can be used as a tool for
recognizing relevant cases where query answering is in \textsc{np} even for
weakly guarded sets of TGDs that are not fully guarded.  Note that we consider
both a setting where the weakly guarded set $\dep$ of TGDs is fixed and a
setting where \emph{classes} of TGD sets are considered. For these classes, we
require \emph{uniform} polynomial bounds.

\begin{definition}[Polynomial Clouds Criterion]\label{def:pcc}
  A fixed weakly guarded set $\dep$ of TGDs 
  satisfies the \emph{Polynomial Clouds Criterion} (\emph{PCC}) if both of the
  following conditions hold:
  \begin{enumerate}\itemsep-\parsep
  \item There exists a polynomial $\pi(\cdot)$ such that for each database $D$,
    $|\clouds{D}{\dep}/_\simeq| \leq \pi(|D|)$. In other words, up to an
    isomorphism, there are only polynomially many clouds.
  \item There is a polynomial $\pi'(\cdot)$ such that, for each database
    $D$ and for each atom $\atom{a}$:
    \begin{itemize}\itemsep-\parsep
    \item if $\atom{a} \in D$ then $\cloud{D}{\dep}{\atom{a}}$ can be computed
      in time $\pi'(|D|, |\dep|)$, and
    \item if $\atom{a} \not\in D$ then $\cloud{D}{\dep}{\atom{a}}$ can be
      computed in time $\pi'(|D|, |\dep|)$ starting with $D$, $\atom{a}$,
      and $\cloud{D}{\dep}{\atom{b}}$, where $\atom{b}$ is the predecessor of
      $\atom{a}$ in $\gcf{D}{\dep}$.
    \end{itemize}
    We also say that $\dep$ satisfies the PCC with respect to $\pi$ and $\pi'$.
    Note that in the above, $|\dep|$ is constant and can be omitted.  However,
    the use of $|\dep|$ is justified by the following.  A class $\C$ of weakly
    guarded TGD sets satisfies the PCC if there are \emph{fixed} polynomials
    $\pi$ and $\pi'$ such that each TGD set in $\C$ satisfies the PCC
    \emph{uniformly} with respect to $\pi$ and $\pi'$ (i.e., each TGD set in
    this class has $\pi$, $\pi'$ as a bound).
  \end{enumerate}
\end{definition}

\begin{theorem}\label{theo:pcc}
  Let $\dep$ be a fixed weakly guarded set of TGDs over a schema $\R$, such
  that $\dep$ enjoys the Polynomial Clouds Criterion.  Then:
  \begin{itemize}\itemsep-\parsep
  \item Deciding for a database $D$ and an atomic or fixed Boolean conjunctive
    query $Q$ whether $D \cup \dep\models Q$ (equivalently, whether
    $\chase{D}{\dep} \models Q$) is in \textsc{ptime}.
  \item Deciding for a database $D$ and a general Boolean conjunctive query $Q$
    whether $D \cup \dep \models Q$ (equivalently, $\chase{D}{\dep} \models Q$)
    is in \textsc{np}.
  \end{itemize}
\end{theorem}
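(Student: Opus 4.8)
The plan is to adapt the three alternating procedures $\acheck$, $\qcheck$, and $\tcheck$ from the proof of Theorem~\ref{theo:upperbound}, using the two PCC conditions to remove exactly the sources of super-polynomial cost. Since $\dep$ is fixed over the fixed schema $\R$, the maximal arity $w$ and the number of predicates $|\R|$ are constants, so the Herbrand base $\hb(D)$ contains only polynomially many atoms. In the general WGTGD analysis the blow-up arose because a subcloud $S\subseteq\cloud{D}{\dep}{\atom{a}}$ had to be \emph{guessed} and then its derivability \emph{verified} by a massive universal branching, clouds being exponentially large. The first PCC condition removes the guessing (there are only polynomially many cloud-isomorphism types), and the second removes the verifying branch (the \emph{actual} cloud of an atom is directly computable in polynomial time from the cloud of its predecessor).

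Concretely, I would first show that the set $\mathcal{T}$ of all reachable canonical pairs $\can(\atom{a},\cloud{D}{\dep}{\atom{a}})$ (Definition~\ref{def:cancloud}) is computable \emph{deterministically} in \textsc{ptime} by saturation: initialize $\mathcal{T}$ with the types of the database atoms (clouds computable in time $\pi'(|D|,|\dep|)$); then repeatedly pick $(\atom{a},C)\in\mathcal{T}$ and a TGD $\rho\in\dep$ whose guard matches $\atom{a}$ with side-atoms in $C$, generate the child atom $\atom{b}$, compute $\cloud{D}{\dep}{\atom{b}}$ from $C$ and $\atom{b}$ (again in time $\pi'$ by the second condition, as $\atom{a}$ is the predecessor of $\atom{b}$ in $\gcf{D}{\dep}$), and add $\can(\atom{b},\cloud{D}{\dep}{\atom{b}})$. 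By the first condition $|\clouds{D}{\dep}/_\simeq|\le\pi(|D|)$, and since the root atom of a pair lies in its own polynomially-sized cloud, the number of reachable $(\text{atom},\text{cloud})$-types is also polynomial; hence the saturation halts after polynomially many steps, each of polynomial cost. Correctness---that $\mathcal{T}$ captures exactly the isomorphism types of atoms occurring in $\chase{D}{\dep}$---follows by induction on derivation length from Theorem~\ref{theo:dependcloud} and Corollary~\ref{col:sim}.

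From $\mathcal{T}$ the bounds follow. For an \emph{atomic} $Q$ I would scan $\mathcal{T}$ and test whether $Q$ is a homomorphic image of some root atom; this is in \textsc{ptime}. Computing $\chasehb{D}{\dep}$ is likewise \textsc{ptime}: there are only polynomially many ground atoms in $\hb(D)$, and membership of each is an atomic query (for a null-free atom, isomorphism-membership coincides with literal membership, as the canonical renaming fixes $\adom{D}$). For a \emph{general} $Q$ I would run $\qcheck$: guess a squid decomposition $\delta=(Q^+,h,H,T,V_\delta)$ and $\theta_0\colon V_\delta\to\adom{D}$ with $\theta_0(H)\subseteq\chasehb{D}{\dep}$---an \textsc{np} guess since $|Q^+|<2|Q|$---and verify, via the PCC-version of $\tcheck$, that each $[\adom{D}]$-acyclic component of $\theta_0(T)$ maps into $\chasefresh{D}{\dep}$. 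The decisive point is that this verification is now \textsc{ptime}: since the component is a join tree and, by Corollary~\ref{col:app-sssim}, matchability of a subtree below an atom depends only on its $(\text{atom},\text{cloud})$-type, the test reduces to a dynamic program over the polynomially many pairs (reachable type, subtree-of-$t$), each entry decided in polynomial time from the directly computed clouds. Thus $\qcheck$ runs in $\textsc{np}^{\textsc{ptime}}=\textsc{np}$ for general queries, and in $\textsc{ptime}^{\textsc{ptime}}=\textsc{ptime}$ for \emph{fixed} queries, which have only constantly many squid decompositions; overall correctness follows from Lemma~\ref{lem:squid}.

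The main obstacle I anticipate is the \textsc{ptime} bound for tentacle-matching. In the general analysis $\tcheck$ was an \emph{alternating} procedure whose exponential-size configurations only yielded $\textsc{apspace}=\textsc{exptime}$; naively replaying it with polynomial-size configurations would give $\textsc{apspace}=\textsc{exptime}$ again, not \textsc{ptime}. The real work is therefore to recast tentacle-matching as a \emph{deterministic} fixpoint/dynamic-programming computation over the polynomially many cloud-types, rather than an alternating search---mirroring, one level up, how constant-size restricted clouds turned $\rtcheck$ into an $\textsc{alogspace}=\textsc{ptime}$ computation in Theorem~\ref{theo:GTDTbottom}, here with polynomial clouds turning $\tcheck$ into a genuine \textsc{ptime} computation. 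Proving that this deterministic reformulation is both sound and complete with respect to $\chasefresh{D}{\dep}$, relying throughout on Theorem~\ref{theo:dependcloud} and Corollary~\ref{col:app-sssim}, is where the care is needed.
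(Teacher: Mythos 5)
Your proposal is correct and its overall architecture coincides with the paper's: a deterministic, blocked chase that saturates a polynomial-size ``cloud-store'' of canonical $(\text{atom},\text{cloud})$ types (your $\mathcal{T}$ is exactly the paper's cloud-store, justified in both cases by Corollary~\ref{col:sim} and PCC condition~\textit{(2)} for incremental cloud computation), followed by a scan for atomic queries and by the $\qcheck$/squid-decomposition machinery with an \textsc{np} guess for general queries. The one place where you genuinely diverge is the subroutine you correctly identify as the crux: making the tentacle check run in \textsc{ptime}. The paper keeps $\tcheckbis$ as an \emph{alternating} procedure but shrinks each configuration to logarithmic size by replacing the explicitly stored (sub)cloud with an $O(\log n)$-bit index into the cloud-store, verifying the cloud guess against the predecessor's indexed entry; this yields $\textsc{alogspace}=\textsc{ptime}$ and also lets it drop the universal verification branches and the orders $\prec$. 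You instead recast the check as a deterministic bottom-up dynamic program over (reachable type, node of the join tree $t$) pairs. These are two standard realizations of the same complexity collapse, and your version is arguably more self-contained; the only caveat is that your DP state must also record how the variables of the current query atom are bound into the canonical atom (the $\subst$ component of the paper's configurations) --- with fixed arity this only multiplies the table by a polynomial factor, but it should be stated explicitly, since without it matchability of a subtree is not determined by the type alone.
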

%
\begin{proof}
  A polynomial algorithm $\acheckbis$ for atomic queries $Q$ works as follows.
  We start to produce the chase forest $\gcf{D}{\dep}$ using the standard
  chase.  In addition, immediately after generating any node $\atom{a}$ and its
  cloud $\cloud{D}{\dep}{\atom{a}}$ (in polynomial time), we will store
  $\can_{\atom{a}}(\atom{a},\cloud{D}{\dep}{\atom{a}})$ in a buffer, which we
  call \emph{cloud-store}.  Whenever a branch of the forest reaches a vertex
  $\atom{b}$ such that $\can_{\atom{b}}(\cloud{D}{\dep}{\atom{b}})$ is already
  in the cloud-store, further expansion of that branch $\atom{b}$ is blocked.
  Since there can be only a polynomial number of pairs
  $\can_{\atom{a}}(\atom{a},\cloud{D}{\dep}{\atom{a}})$, the algorithm stops
  after a polynomial number of chase steps, each step requiring only polynomial
  time.  Now, by Corollary~\ref{col:sim}, the cloud-store already contains all
  possible atoms of $\chase{D}{\dep}$ and their clouds, up to isomorphism.  To
  check whether $\chase{D}{\dep} \models Q$ holds for an atomic query $Q$, it
  is thus sufficient to test whether every atom $\atom{c}$ that occurs in the
  cloud-store matches $Q$.  In summary, $\acheckbis$ runs in \textsc{ptime}.

  The algorithm $\qcheckbis$ for conjunctive queries works just like $\qcheck$,
  except that it calls the algorithm $\tcheckbis$ as a subroutine instead of
  $\tcheck$.  The input to $\tcheckbis$ is $D$, $Q$, and also the cloud-store
  computed by $\acheckbis$.  We further assume that this cloud-store identifies
  each entry $e = \can_{\atom{a}}(\atom{a},\cloud{D}{\dep}{\atom{a}})$ by a
  unique integer $e_\#$ using $O(\log n)$ bits only.  $\tcheckbis$ is an
  alternating algorithm that works essentially like $\tcheck$, except for the
  following modifications:
  \begin{itemize}\itemsep-\parsep
  \item $\tcheck$ always guesses the full cloud $S =
    \cloud{D}{\dep}{\atom{a}}$, instead of possibly guessing a subcloud.
  In contrast,
    $\tcheckbis$ just guesses the entry number $e_\#$ of the
    corresponding entry $\can_{\atom{a}}(\atom{a}, \cloud{D}{\dep}{\atom{a}})$
    of the cloud-store.
  \item $\tcheckbis$ verifies correctness of
    the cloud guess in \textsc{alogspace} using
    $D$, $\atom{a}$, $e_\#$, as well as
    $\atom{b}$ and $e'_\#$, where $\atom{b}$ is the main atom of the predecessor
    configuration and $e'$ is the entry in the cloud-store featuring
    $\can_{\atom{b}}(\atom{b},\cloud{D}{\dep}{\atom{b}})$.  Note that
    such verification is effectively possible due to condition \textit{(2)}
    of Definition~\ref{def:pcc}.
  \item $\tcheckbis$ only needs to compute the main configuration tree---the
    one whose configurations contain $\star$. The algorithm does \emph{not}
    compute the auxiliary branches, since they are no longer necessary, as the
    correctness check $S$ is done in a different way.
  \item The configurations of $\tcheckbis$ do not need to guess or memorize
    linear orders $\prec$ and the set $S^+$.
  \end{itemize}
  Given that $\tcheckbis$ is an \textsc{alogspace} algorithm, $\qcheckbis$ is
  an $\textsc{np}^\textsc{alogspace}$ procedure.  Since
  $\textsc{np}^\textsc{alogspace} = \textsc{np}^\textsc{ptime} = \textsc{np}$,
  query answering is in \textsc{np}.  In case of a fixed conjunctive query $Q$,
  since $Q$ has a constant number of squid decompositions, $\qcheckbis$ runs in
  $\textsc{ptime}^\textsc{ptime} = \textsc{ptime}$.
\end{proof}

Note that the Polynomial Clouds Criterion is not syntactic.  Nevertheless,
it is useful for proving that query
answering for some weakly guarded sets TGDs is in \textsc{np}, or
even in polynomial time for atomic queries.  An application of this criterion
is illustrated in Section~\ref{sec:applications}.


The following is a direct corollary of
Theorem~\ref{theo:GTDTbottom}.

\begin{theorem} (1) Every set $\dep$ of GTGDs satisfies the PCC. (2) For
  any constant $k$, the class of all GTGD sets of arity bounded by $k$ satisfies
  the PCC.
\end{theorem}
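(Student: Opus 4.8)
The plan is to verify, directly for guarded sets, the two conditions of Definition~\ref{def:pcc}---the polynomial bound on the number of clouds and their polynomial-time computability---exploiting the structural facts about the guarded chase already developed for Theorem~\ref{theo:GTDTbottom}. Throughout I fix a set $\dep$ of GTGDs over $\R$. For part~(1) the arity $w$, the number $|\R|$ of predicates, and the number of constants occurring in $\dep$ are all constant; for part~(2) only $w\leq k$ is bounded, and the remaining quantities will be shown to depend uniformly on $k$ and on the description of $\dep$, which is exactly the uniformity demanded by the class formulation of the criterion. I will treat condition~(2) first, since it is the more mechanical one, and then attack condition~(1), which is the real obstacle.

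For condition~(2) I would reason as follows. If $\atom{a}\in D$ then $\adom{\atom{a}}\subseteq\adom{D}$, hence $\cloud{D}{\dep}{\atom{a}}=\chasehb{D}{\dep}$, which by item~(1) of Theorem~\ref{theo:GTDTbottom} is computable in polynomial time whenever $w$ (and, for the \textsc{ptime} bound, also $|\R|$) is bounded. If $\atom{a}\notin D$ and $\atom{b}$ is its predecessor in $\gcf{D}{\dep}$, then the fresh labeled nulls of $\atom{a}$ are introduced by a single guarded chase step, and by the guarded analogue of Theorem~\ref{theo:dependcloud} established in the proof of Theorem~\ref{theo:GTDTbottom} the whole subtree rooted at $\atom{a}$ is determined by $\atom{a}$ together with its (restricted) cloud. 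Consequently $\cloud{D}{\dep}{\atom{a}}$ can be assembled from $\chasehb{D}{\dep}$, $\atom{a}$, and $\cloud{D}{\dep}{\atom{b}}$ by a bounded local computation that only needs to close off the atoms over $\adom{\atom{a}}\cup\adom{D}$; this produces the required polynomial $\pi'$.

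The hard part will be condition~(1): bounding the number of non-isomorphic clouds by a \emph{polynomial}. This does \emph{not} follow from the merely exponential estimate of Lemma~\ref{lem:cloudsize}(3) and genuinely requires guardedness. The key observation I would prove is that guardedness confines the constants that can ever co-occur with a given labeled null $\zeta$ to a bounded ``orbit'': when $\zeta$ is created, its host atom carries at most $w$ constants, and since by Lemma~\ref{lem:wgtgds-chase-connectedness} every later atom containing $\zeta$ is generated through a guard that itself contains $\zeta$, and the head of a guarded rule only copies arguments of its guard (plus fresh nulls and the finitely many constants occurring in $\dep$), no further constant can join $\zeta$. Hence for an atom $\atom{a}$ with at most $w$ nulls, every non-ground atom of $\cloud{D}{\dep}{\atom{a}}$ uses constants from a set $C_{\atom{a}}\subseteq\adom{D}$ of bounded size. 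Up to $D$-isomorphism a cloud is then determined by the fixed ground part $\chasehb{D}{\dep}$ together with a non-ground core over the $\leq w$ nulls and the constants of $C_{\atom{a}}$; counting the choices of $C_{\atom{a}}$ (a bounded-size subset of $\adom{D}$, hence polynomially many) and, for each, the boundedly many possible cores yields the polynomial $\pi$. Corollary~\ref{col:sim} ensures that $D$-isomorphic atom--cloud pairs fall into the same class, so the count is well defined. For part~(2) I would finally re-inspect these two estimates and check that the size of each orbit $C_{\atom{a}}$, the size of a core, and the degree of $\pi,\pi'$ depend only on $k$ and on $\dep$, giving the uniform bounds required for the class. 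I expect the orbit argument of condition~(1)---the claim that guards can never introduce a fresh constant alongside an existing null---to be the single point carrying the whole result, everything else being bookkeeping on top of Theorem~\ref{theo:GTDTbottom}.
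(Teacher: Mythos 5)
The paper offers essentially no proof of this statement---it is declared ``a direct corollary of Theorem~\ref{theo:GTDTbottom}''---so your argument is doing real work that the source does not show. For part~(1) that work is correct, and the orbit argument is indeed the crux: since every body variable of a GTGD occurs in the guard, any atom containing a non-fresh null $\zeta$ inherits all of its arguments from an earlier atom containing $\zeta$, so the constants that ever co-occur with $\zeta$ are confined to those of the atom where $\zeta$ was invented plus the constants of $\dep$; hence for fixed $\dep$ the non-ground part of every cloud lives over a domain of constant size, and (polynomially many choices of the orbit $C_{\atom{a}}\subseteq\adom{D}$) $\times$ (constantly many cores) gives condition~(1) of Definition~\ref{def:pcc}. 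You are also right that Lemma~\ref{lem:cloudsize} alone does not suffice. One caveat on your condition~(2): an atom of $\cloud{D}{\dep}{\atom{a}}$ may only be derivable via a chase path that temporarily leaves $\adom{\atom{a}}\cup\adom{D}$ (e.g.\ $f(X)\ra\exists Y\,g(X,Y)$ followed by $g(X,Y)\ra h(X)$ places $h(\zeta)$ in the cloud via the outside atom $g(\zeta,\eta)$), so ``closing off the atoms over $\adom{\atom{a}}\cup\adom{D}$'' is not a purely local saturation; for fixed $\dep$ this is repairable because such detours stay inside a constant-size fragment, but it deserves a sentence.

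The genuine gap is part~(2). You defer it to ``re-inspecting the estimates and checking that they depend only on $k$ and on $\dep$,'' but your own count of the cores is of order $2^{|\R|\cdot m^{w}}$, i.e.\ exponential in the number of predicate symbols (and in the number of constants) of $\dep$, and neither quantity is bounded over the class of all GTGD sets of arity $\leq k$. This is not a fixable bookkeeping issue: the \textsc{exptime}-hardness construction of Theorem~\ref{theo:icdt} produces arity-$3$ guarded sets $\dep(\M,I)$ of size $O(n)$ over a fixed database whose chase contains one configuration null per reachable tape content of a linear-space machine, and distinct tape contents yield pairwise non-isomorphic clouds; thus $|\clouds{D}{\dep}/_\simeq|$ can be $2^{\Theta(|\dep|)}$, which no polynomial in $|D|$ and $|\dep|$ dominates. (Indeed, if part~(2) held under your reading, the $\acheckbis$ machinery of Theorem~\ref{theo:pcc} would put atomic BCQ answering under bounded-arity GTGDs in \textsc{ptime} combined complexity, contradicting its \textsc{exptime}-hardness in Theorem~\ref{theo:GTDTbottom}.) So part~(2) goes through only if, in addition to the arity, the number of predicates and of constants occurring in $\dep$ is bounded---in which case your part-(1) argument applies verbatim and uniformly---whereas the class of all bounded-arity GTGD sets can only be expected to satisfy the Exponential Clouds Criterion of Definition~\ref{def:ecc}. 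You should flag this explicitly rather than promise that the re-inspection will succeed.
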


The following result can be obtained by a minor adaptation of the proof of
Theorem~\ref{theo:pcc}.
\begin{theorem}\label{theo:restrictedQ}
  Let $\dep$ be a fixed weakly guarded set of TGDs that enjoys the Polynomial
  Clouds Criterion, and let $k$ be a constant. Then:
  \begin{plist}\itemsep-\parsep
  \item For a database $D$ and a Boolean conjunctive query of treewidth $\leq
    k$, deciding whether $D \cup \dep\models Q$ (equivalently, $\chase{D}{\dep}
    \models Q$) is in \textsc{ptime}.
  \item The same tractability result holds for acyclic Boolean conjunctive
    queries.
  \end{plist}
\end{theorem}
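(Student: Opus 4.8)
The plan is to revisit the proof of Theorem~\ref{theo:pcc} and isolate its single source of nondeterminism. Under the PCC the algorithm $\qcheckbis$ already decides $\chase{D}{\dep}\models Q$ in \textsc{np}; the only nondeterministic step is the \emph{global} guess of a squid decomposition $\delta=(Q^+,h,H,T,V_\delta)$ of $Q$, after which the homomorphism check splits into a test of the head $H$ against the finite set $\chasehb{D}{\dep}$ and calls to $\tcheckbis$ for the $[V_\delta]$-acyclic tentacles $T$, all of which run deterministically in \textsc{ptime} (indeed in \textsc{alogspace}). For a \emph{fixed} query this guess ranges over only constantly many decompositions, which is why fixed queries are already in \textsc{ptime}. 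The goal, then, is to show that when $Q$ has treewidth at most the constant $k$, the search over squid decompositions can be carried out deterministically in polynomial time. Since $\dep$ is fixed, the maximum arity $w$ of $\R$ is a constant, and every $\alpha$-acyclic query has treewidth at most $w-1$; hence part~\textit{(2)} is the special case of part~\textit{(1)} with $k=w-1$, and it suffices to treat bounded-treewidth queries.

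First I would run $\acheckbis$ exactly as in the proof of Theorem~\ref{theo:pcc} to compute, in polynomial time, both $\chasehb{D}{\dep}$ and the \emph{cloud-store}, i.e.\ the polynomially many canonical pairs $\can(\atom{a},\cloud{D}{\dep}{\atom{a}})$ together with the polynomial-size folded chase forest obtained by blocking a branch whenever its canonical cloud reappears. By Theorem~\ref{theo:dependcloud} and Corollary~\ref{col:sim}, the full (infinite) $\chase{D}{\dep}$ is the tree unfolding of this finite object, so every homomorphic image of $Q$ already occurs in that unfolding. I would then replace the monolithic guess inside $\qcheckbis$ by a bottom-up dynamic programming along a width-$\le k$ tree decomposition $\tup{T_Q,\lambda}$ of $Q$ (computable in polynomial time for fixed $k$; for acyclic $Q$ the join tree serves directly). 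The squid-decomposition choices are made \emph{locally} at each bag rather than globally: within a bag of at most $k+1$ variables one decides which variables map to constants of $\adom{D}$ and which map to nulls, which subsumes both the partition induced by $V_\delta$ and the folding $h$ (two variables are folded precisely when the procedure sends them to the same element), while the cover atoms of $Q^+$ are merely the branching points of the acyclic tentacle forest and can be introduced locally without exceeding a width bounded in $k$ and $w$.

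The DP state at a bag is an assignment of its at most $k+1$ variables to \emph{chase positions}, where a position is either a constant in $\adom{D}$ or a null described canonically as a coordinate of one of the stored clouds; under the PCC there are only polynomially many positions, hence polynomially many states per bag and, for fixed $k$, polynomially many in total. For each state I would verify the bag-local atoms of $Q$ directly, checking ground atoms against $\chasehb{D}{\dep}$ and null-bearing atoms against membership in the appropriate stored cloud, and combine a bag with its children by the usual join condition of tree-decomposition dynamic programming, augmented by a check that null coordinates shared between a bag and a child lie on a common branch of the folded forest. Correctness follows from the Squid Lemma (Lemma~\ref{lem:squid}) together with the cloud-determinacy results (Theorem~\ref{theo:dependcloud}, Corollary~\ref{col:sim}), and the running time is polynomial because the cloud-store is polynomial, the bags are of bounded size, and the states per bag are polynomially many; this is the promised minor adaptation of $\qcheckbis$ and $\tcheckbis$.

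The hard part will be gluing the bag-local homomorphism data into a single global homomorphism into the genuinely infinite chase. Because each null occupies a connected subtree of $\rgcf{D}{\dep}$ (Lemma~\ref{lem:wgtgds-chase-connectedness}) and distinct query variables forced into a common atom must be realised by nulls that actually co-occur in one chase atom, the DP must track, across the tree decomposition of $Q$, which branch of the folded forest each null-valued variable sits on, and guarantee that these local branch choices are mutually consistent and canonicalization-stable so that only the polynomially many cloud types ever arise. This bookkeeping is exactly the generalization, from the single acyclic join tree handled by $\tcheck$/$\tcheckbis$ via its pointer $\tpoint$ and substitution $\subst$, to a width-$\le k$ tree decomposition; making this alignment precise, and verifying that it never forces an exponential blow-up in the number of distinct null-coordinate patterns remembered at a bag, is where the real work lies.
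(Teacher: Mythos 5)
For the record, the paper offers no proof of this theorem beyond the single sentence that it ``can be obtained by a minor adaptation of the proof of Theorem~\ref{theo:pcc}'', so there is no worked-out argument to measure you against. Your overall architecture --- reuse the cloud-store built by $\acheckbis$, replace the global \textsc{np} guess of a squid decomposition in $\qcheckbis$ by a dynamic program over a width-$k$ tree decomposition of $Q$, and reduce part~\textit{(2)} to part~\textit{(1)} via $k=w-1$ --- is clearly the intended route, and the reduction of the acyclic case is correct in the paper's setting, where fixing $\dep$ fixes the schema and hence the arity $w$.

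There is, however, a genuine gap at the centre of the plan; you have located it yourself but not closed it, and the DP state you propose would fail as literally described, for two concrete reasons. First, a ``null described canonically as a coordinate of one of the stored clouds'' does not determine a null of $\chase{D}{\dep}$: by Corollary~\ref{col:sim} the same canonical pair $\can(\atom{a},\cloud{D}{\dep}{\atom{a}})$ recurs at infinitely many nodes of the unfolding of the blocked forest, each occurrence introducing fresh nulls, so two bags that assign a shared variable the same cloud coordinate may be speaking about different elements; requiring that shared coordinates ``lie on a common branch of the folded forest'' does not repair this, because one branch of the folded forest corresponds to infinitely many branches of $\gcf{D}{\dep}$, and a branch of the unfolding is not a polynomially describable datum. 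Second, a bag of $k+1$ variables can have its null-valued images spread over several chase atoms that never jointly occur in any single atom and whose mutual distance in $\gcf{D}{\dep}$ is unbounded (they are linked only through chains of atoms sharing nulls, cf.\ Lemma~\ref{lem:wgtgds-chase-connectedness}); a single cloud entry describes only the atoms over $\adom{\atom{a}}\cup\adom{D}$ for one atom $\atom{a}$, so ``membership in the appropriate stored cloud'' cannot certify a bag whose atoms are not confined to one cloud. What is actually needed is to route the dynamic program through the Squid Lemma: argue that the exponentially many choices of cover $Q^+$, folding, grounding $\theta_0$ and head/tentacle split can be enumerated bag-locally along the tree decomposition, and then hand each resulting $[\adom{D}]$-acyclic tentacle to $\tcheckbis$, whose single-atom pointer and arity-$w$ substitution array are precisely what make top-down matching into the infinite forest possible with polynomial state. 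Until that locality argument is made, the claim that only polynomially many states per bag arise is unsupported --- and that argument is not bookkeeping, it is the proof.
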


\smallskip

By analogy to the PCC, one may define various other criteria based on other
bounds.
In particular, we can define the \emph{Exponential Clouds Criterion (ECC)}
for classes of TGD sets, which we will use in the next section, as follows:

\begin{definition}[Exponential Clouds Criterion]\label{def:ecc}
  Let $\C$ be a class of weakly guarded TGD sets.  $\C$ satisfies the
  Exponential Clouds Criterion (ECC) if both of the following conditions are
  satisfied:
  \begin{enumerate}\itemsep-\parsep
  \item There is a polynomial $\pi(\cdot)$ such that for every database $D$ and
    any set of TGDs $\dep$ in $\C$ of size $n$,
    $|\clouds{D}{\dep}/_\simeq| \leq 2^{\pi(|D|+n)}$.
  \item There exists a polynomial $\pi'(\cdot)$ such that for every database
    $D$, any set of TGDs $\dep$ in $C$ of size $n$, and any atom $\atom{a}$:
    \begin{itemize}\itemsep-\parsep
    \item if $\atom{a} \in D$, then $\cloud{D}{\dep}{\atom{a}}$ can be computed
      in time $2^{\pi'(|D|+n)}$, and
    \item if $\atom{a} \not\in D$, then $\cloud{D}{\dep}{\atom{a}}$ can be
      computed in time $2^{\pi'(|D|+n)}$ from $D$, $\atom{a}$, 
      and $\cloud{D}{\dep}{\atom{b}}$, where $\atom{b}$ is the predecessor of
      $\atom{a}$ in $\gcf{D}{\dep}$.
    \end{itemize}
  \end{enumerate}
\end{definition}

We have the following result on sets of TGDs enjoying the ECC:

\begin{theorem}\label{theo:ecc}
  If $\dep$ is a weakly guarded set of TGDs from a class $\C$ that enjoys the
  Exponential Clouds Criterion, then deciding for a database $D$ and a Boolean
  conjunctive query $Q$ (atomic or not) whether $D \cup \dep\models Q$ is in
  \textsc{exptime}.
\end{theorem}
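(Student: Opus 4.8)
The plan is to adapt the proof of Theorem~\ref{theo:pcc} essentially verbatim in structure, replacing every \emph{polynomial} bound used there by the \emph{exponential} bound guaranteed by the Exponential Clouds Criterion (Definition~\ref{def:ecc}), and to track how this degrades the final complexity from \textsc{np}/\textsc{ptime} down to \textsc{exptime}. The first step is to precompute a \emph{cloud-store} exactly as in Theorem~\ref{theo:pcc}: we run the standard chase, generating the guarded chase forest $\gcf{D}{\dep}$, and for each newly created node $\atom{a}$ we store its canonical pair $\can_{\atom{a}}(\atom{a},\cloud{D}{\dep}{\atom{a}})$, blocking any branch as soon as a $D$-isomorphic pair already occurs in the store. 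By condition~\textit{(1)} of the ECC there are at most $2^{\pi(|D|+n)}$ such pairs up to isomorphism (where $n=|\dep|$), and by condition~\textit{(2)} each cloud is computable in time $2^{\pi'(|D|+n)}$ from its predecessor; hence the whole cloud-store, together with the induced successor relation on cloud-entries, is computable in exponential time. We identify each entry by an integer $e_\#$ that, since there are only exponentially many entries, occupies only \emph{polynomially} many bits. By Corollary~\ref{col:sim} this store captures, up to $D$-isomorphism, every atom of $\chase{D}{\dep}$ together with its cloud, so it also yields $\chasehb{D}{\dep}$ (the analogue of Corollary~\ref{col:app-chasebottom}, now with an exponential-time bound).

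For an atomic or fixed query $Q$ the argument is then immediate, via the analogue of $\acheckbis$: having the cloud-store in hand, it suffices to test whether some atom occurring in some stored cloud is a homomorphic image of the query atom, which is an exponential-time scan. For a general conjunctive query we follow the $\qcheck$/$\tcheck$ route of Theorem~\ref{theo:upperbound} as refined in Theorem~\ref{theo:pcc}, appealing to the Squid Lemma (Lemma~\ref{lem:squid}). Since we only need an \textsc{exptime} bound rather than an \textsc{np} one, we may simply \emph{enumerate}, rather than guess, all squid decompositions $\delta=(Q^+,h,H,T,V_\delta)$ of $Q$ and all substitutions $\theta_0\colon V_\delta\ra\adom{D}$; there are at most exponentially many of each. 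For every such pair we first check $\theta_0(H)\subseteq\chasehb{D}{\dep}$ using the store, and then verify, for each $[\adom{D}]$-acyclic tentacle component $t$ of $\theta_0(T)$, that $t$ maps into $\chasefresh{D}{\dep}$ by running the $\tcheck$-style alternating subroutine.

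The point requiring care --- and the main obstacle --- is keeping this alternating subroutine within \textsc{exptime} rather than \textsc{2exptime}. Because arities are unbounded under the ECC, an individual cloud may be of exponential size, so storing clouds explicitly inside configurations (as the plain $\tcheck$ does) would make each configuration exponentially large and thus yield an alternating \textsc{expspace} $=$ \textsc{2exptime} computation. The fix is precisely the device used in $\tcheckbis$: each configuration refers to its cloud only through the polynomial-size entry number $e_\#$ of the precomputed store, so configurations stay polynomial in size, while all genuinely exponential work (computing clouds and their successors, and verifying cloud guesses) is pushed into the one-time construction of the cloud-store and its successor relation. Consequently the verification of whether a tentacle tree embeds into $\chasefresh{D}{\dep}$ reduces to an \emph{alternating reachability} problem on the explicitly constructed, exponential-size graph of cloud-entries, which is decidable in time polynomial in the size of that graph, i.e.\ in exponential time overall. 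Combining the exponential-time precomputation with the exponential-time enumeration of squid decompositions and the exponential-time alternating reachability tests gives the claimed \textsc{exptime} bound; correctness is inherited, exactly as in Theorem~\ref{theo:pcc}, from Corollary~\ref{col:sim} and Lemma~\ref{lem:squid}.
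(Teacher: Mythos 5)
Your proposal is correct and follows essentially the same route as the paper, whose proof is exactly the sketch you elaborate: rerun the Theorem~\ref{theo:pcc} argument with the \textsc{ptime} cloud-store construction upgraded to \textsc{exptime} and the \textsc{alogspace} configurations of $\tcheckbis$ upgraded to \textsc{apspace} (polynomial-size entry indices into an exponential-size store), yielding $\textsc{np}^{\textsc{apspace}}=\textsc{np}^{\textsc{exptime}}=\textsc{exptime}$. Your deterministic enumeration of squid decompositions and your view of the alternating subroutine as reachability on the explicit exponential configuration graph are just equivalent reformulations of the same bound, and your observation that configurations must reference clouds by index rather than store them explicitly is precisely the point that makes the paper's \textsc{alogspace}$\to$\textsc{apspace} substitution work.
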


\begin{proofsk} The proof is very similar to that for
  Theorem~\ref{theo:pcc}.
  The main difference is that \textsc{ptime} and \textsc{alogspace} are
  replaced by \textsc{exptime} and \textsc{apspace}, respectively.  We then get
  that query answering for atomic queries is in $\textsc{apspace} =
  \textsc{exptime}$, and that answering non-atomic queries is in
  $\textsc{np}^\textsc{apspace} =\textsc{np}^\textsc{exptime} =
  \textsc{exptime}$.  Thus, in this case, there is no difference between atomic
  and non-atomic query answering:  both are in \textsc{exptime}.
\end{proofsk}

 
\section{TGDs with Multiple-Atom Heads}
\label{sec:multiple-atoms}


As mentioned in Section~\ref{sec:preliminaries}, all complexity results
proved so far for single-headed TGDs also carry over to the
general case, where multiple atoms may appear in rule heads.  We make this
claim  more
formal here.

\begin{theorem} \label{the:multiple-atoms} All complexity results derived in
  this paper for sets TGDs whose heads are single-atoms are equally valid for
  sets of multi-atom head TGDs.
\end{theorem}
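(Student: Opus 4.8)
The plan is to reduce the multi-head case to the single-head case by a syntactic transformation, and then to handle separately the one setting (bounded arity with variable $\dep$) where this reduction is too expensive. Given a set $\dep$ of multi-head TGDs, I would replace each rule $\sigma\colon \Phi(\vett{X},\vett{Y}) \ra \exists\vett{Z}\,(\psi_1 \land \cdots \land \psi_k)$ by the single-head rules
\[
\Phi(\vett{X},\vett{Y}) \ra \exists\vett{Z}\, p_\sigma(\vett{X},\vett{Z}),
\qquad
p_\sigma(\vett{X},\vett{Z}) \ra \psi_i \quad (1 \leq i \leq k),
\]
where $p_\sigma$ is a fresh predicate carrying the frontier variables $\vett{X}$ and the existentials $\vett{Z}$; call the result $\dep'$. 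First I would prove semantic adequacy: since no $p_\sigma$ occurs in $D$ or $Q$, the restriction of $\chase{D}{\dep'}$ to the original schema is homomorphically equivalent to $\chase{D}{\dep}$, because one application of $\sigma$ is simulated by one application of the first rule (inventing a single $p_\sigma$-atom that holds \emph{all} the fresh nulls of $\vett{Z}$) followed by the $k$ decomposition rules. Hence the certain answers to $Q$, and the query-containment instances, coincide. Next I would check preservation of (weak) guardedness: the first rule has the same body as $\sigma$ and hence the same guard candidates, while each decomposition rule is trivially fully guarded by its unique body atom $p_\sigma(\vett{X},\vett{Z})$. For weak guardedness the key auxiliary lemma is that a position of an \emph{original} predicate is affected \wrt~$\dep'$ iff it is affected \wrt~$\dep$; this follows by a simultaneous induction, using that the $\vett{Z}$-positions of $p_\sigma$ are affected (base case) and its $\vett{X}$-positions are affected exactly when the corresponding frontier variable occurs in $\Phi$ only at affected positions. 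Consequently the weak guard of $\sigma$ is still a weak guard of the first rule.

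These two facts immediately transport every result whose bound is insensitive to a polynomial growth of arities. The transformation is computable in \textsc{logspace}, blows up the size by at most a constant times the total head length, and each $p_\sigma$ has arity $|\vett{X}|+|\vett{Z}| \le |\sigma|$. Thus all the \textsc{2exptime} bounds (unbounded arity) and all the \textsc{np}/\textsc{ptime} bounds for \emph{fixed} $\dep$ (where every $p_\sigma$ has constant arity) carry over verbatim through Theorems~\ref{theo:upperbound},~\ref{theo:GTDTbottom}, and~\ref{theo:icdt}, the lower bounds being inherited since single-head TGDs are a special case. The only delicate regime is \emph{bounded arity with variable} $\dep$ (the \textsc{exptime} results): there $|\vett{Z}|$, and hence the arity of $p_\sigma$, is unbounded, so the transformation would push the instance into the unbounded-arity regime and yield only a \textsc{2exptime} bound.

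For that remaining case I would not transform the rules but re-run the cloud development of Section~\ref{sec:upper} directly for multi-head TGDs. I would generalize the guarded chase forest (Definition~\ref{def:gcgraph}) so that the atoms $\atom{b}_1,\dots,\atom{b}_k$ produced by one firing of $\sigma$ on a guard image $\atom{a}$ are attached to $\atom{a}$ as a single grouped child (a ``bag''); this keeps every atom sharing a freshly invented null inside one node and thereby restores the connectedness property, so the analogues of Lemma~\ref{lem:wgtgds-chase-connectedness}, and hence of Lemmas~\ref{lem:d-acyclic} and~\ref{lem:finite-treewidth}, go through with bags in place of single atoms, the only change being that bag labels have size at most $k_{\max}\cdot w$ rather than $w$. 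I would then re-prove the subtree-dependence result (Theorem~\ref{theo:dependcloud}) for bags and re-count cloud sizes as in Lemma~\ref{lem:cloudsize}: with $w$ bounded, a cloud still has size polynomial in $|D|+|\dep|$, since the factor $(|\adom{D}|+k_{\max}w)^w$ stays polynomial because $k_{\max}\le|\dep|$ and $w$ is constant. Finally $\acheck$, $\qcheck$, and $\tcheck$ need only the cosmetic change that each existential branching now guesses and emits an entire bag, so each configuration still fits in polynomial space and the algorithms still run in \textsc{apspace} $=$ \textsc{exptime}.

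The main obstacle is exactly this bounded-arity argument: I must make sure that grouping the simultaneously generated head atoms into bags genuinely re-establishes null-connectedness — so that the treewidth and cloud-dependence proofs survive the fact that a single null may be split across several siblings — while keeping the bag size, and more importantly the cloud size, polynomial in $|D|+|\dep|$ under a fixed arity bound. The lower bounds (Theorems~\ref{theo:hardness} and~\ref{theo:icdt}) require no new work, being inherited from the single-head special case.
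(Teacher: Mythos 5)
Your proposal is correct, and its first half coincides with the paper's own argument: the same splitting of each multi-head rule through a fresh intermediate predicate carrying the head variables, the same observation that the auxiliary atoms are invisible to $D$ and $Q$ so the two chases agree on the original schema, the same check that (weak) guardedness is preserved, and the same diagnosis that the only problematic regime is bounded arity with variable $\dep$, where the auxiliary predicates acquire unbounded arity. Where you genuinely diverge is in how that remaining \textsc{exptime} case is handled. The paper \emph{keeps} the transformed single-headed set $\dep'$ and shows that the class of such sets satisfies the Exponential Clouds Criterion of Section~\ref{sec:polycloud}: every wide atom $V(\vett{Y})$ is produced by a rule whose weak guard has bounded arity, and the cloud of that guard already determines everything below it in the guarded chase forest, including the cloud of $V(\vett{Y})$; hence $|\clouds{D}{\dep'}/_\simeq|$ is only singly exponential and Theorem~\ref{theo:ecc} delivers \textsc{exptime} as a black box. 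You instead abandon the transformation in that regime and redevelop the machinery of Sections~\ref{sec:decidability} and~\ref{sec:upper} natively for multi-head rules, bundling the atoms of one rule firing into a ``bag'' node so that siblings sharing a fresh null sit in a single forest node, and then re-proving connectedness, acyclicity, the subtree-dependence theorem, and the cloud bounds for bags. Both routes succeed: your cardinality estimates are right (a bag carries $O(|\dep|\cdot w)$ values, so for constant $w$ clouds and configurations remain polynomial and the algorithms stay in \textsc{apspace} $=$ \textsc{exptime}), and your explicit lemma that affected positions of the original predicates are preserved under the rewriting is a point the paper leaves implicit. The trade-off is that the paper's route needs only one short counting argument on top of existing results, whereas yours requires re-verifying Lemmas~\ref{lem:wgtgds-chase-connectedness}, \ref{lem:d-acyclic}, \ref{lem:finite-treewidth}, Theorem~\ref{theo:dependcloud}, and the three algorithms — feasible, but substantially more bookkeeping for the same bound.
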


\begin{proofsk} It suffices to show that the upper bounds carry
  over to the setting of TGDs with multiple-atom heads.  We exhibit a
  transformation from an arbitrary set of TGDs $\dep$ over a schema $\R$ to a
  set of single-headed TGDs $\dep'$ over a schema $\R'$ that extends $\R$ with
  some auxiliary predicate symbols.
  
  The TGD set $\dep'$ is obtained from $\dep$ by replacing each rule of the
  form $r: \bodyp(\vett{X}) \ra \headp_1(\vett{Y}), \headp_2(\vett{Y}), \ldots,
  \headp_k(\vett{Y})$, where $k > 1$ and $\vett{Y}$ is the set of all the
  variables that appear in the head, with
  the following set of rules:
  \begin{eqnarray*}
    \bodyp(\vett{X}) &\ra& V(\vett{Y})\\
    V(\vett{Y}) &\ra& \headp_1(\vett{Y})\\
    V(\vett{Y}) &\ra& \headp_2(\vett{Y})\\
    &\vdots\\
    V(\vett{Y}) &\ra& \headp_k(\vett{Y}),
  \end{eqnarray*}
  where $V$ is a fresh predicate symbol, having the same arity as the number of
  variables in $\vett{Y}$. Note that, in general, neither
  $\vett{Y}$ is contained in $\vett{X}$ not the other way around.
  It is easy to see that, except for the atoms of the form
  $V(\vett{Y})$, $\chase{D}{\dep}$ and $\chase{\dep'}{D}$ coincide.  The atoms
  of the form $V(\vett{Y})$ have completely new predicates and thus
  do not match any predicate symbol in the conjunctive query $Q$. Therefore, $\chase{D}{\dep} \models Q$
  iff $\chase{\dep'}{D} \models Q$.

  Obviously, $\dep'$ can be constructed in \textsc{logspace} from $\dep$.
  Therefore, the extension of our complexity results to the general case is
  immediate, \emph{except for the case of bounded arity}.  Notice that the
  arity of each auxiliary predicate in the above construction depends on the
  number of head-variables of the corresponding transformed TGD, which, in
  general, is not bounded.

  In case of bounded-arity WGTGDs, the \textsc{exptime} upper bound can still
  be derived by the above transformation by showing that the class of TGD sets
  $\dep'$ obtained by that transformation satisfies the Exponential Clouds
  Criterion of Section~\ref{sec:polycloud}.  To see that for each database $D$
  and each such $\dep'$ there is only an exponential number of clouds, notice
  that every ``large'' atom $V(\vett{Y})$ is derived by a rule with a ``small''
  weak guard $\atom{g}$ in its body, i.e., a weak guard $\atom{g}$ of bounded
  arity.  The cloud $\cloud{D}{\dep'}{\atom{g}}$ of this weak guard $\atom{g}$
  clearly determines everything below $\atom{g}$ in the guarded chase forest;
  in particular, the cloud of $V(\vett{Y})$.  Thus the set $\clouds{D}{\dep'}$
  of all clouds of all atoms is only determined by the clouds of atoms of
  bounded arity.  For immediately verifiable combinatorial reasons, there can
  be only singly-exponentially many such clouds.  This shows that
  $|\clouds{D}{\dep'}/_\simeq|$ is singly-exponentially bounded.  Therefore,
  the first condition of Definition~\ref{def:ecc} is satisfied.  It is not too
  hard to verify the second condition of Definition~\ref{def:ecc}, too.  Thus,
  query-answering based on bounded-arity WGTGDs is in \textsc{exptime}.  Given
  that GTGDs are a subclass of WGTGDs, the same \textsc{exptime} bound holds
  for bounded-arity GTGDs, as well.
 \nop{
    Let us now show that the second condition of Definition~\ref{def:ecc} is
    satisfied, too.  Towards this aim, let us transform the initial set of
    WGTGDs $\dep$ to yet another set $\dep"$ of single-headed TGDs.  $\dep"$ is
    obtained from $\dep$ as follows:.  For each rule $body(\vett{X}) \ra
    \exists \vett{Y} head(\vett{X'},\vett{Y})$ of $\dep$ (where the elements of
    $\vett{X'}$ are among those of $\vett{X}$) and for each atom
    $\atom{a}(\vett{X_i},\vett{Y_i},\vett{Y_{ij}})$ in
    $head(\vett{X'},\vett{Y})$, where both $\vett{Y_i}$ and $\vett{Y_{ij}}$ are
    made of elements of $\vett{Y}$, do the following:
    \begin{itemize}
    \item insert into $\dep"$ the rule $body(\vett{X}) \ra
      \atom{a}(\vett{X_i},\vett{Y_i},\vett{Y_{ij}})$, and
    \item for each other atom $\atom{b}(\vett{X_j},\vett{Y_j}, \vett{Y_{ij}})$
      in the same rule head, such that the common $\vett{Y}$-variables of the
      two atoms are precisely the variables in $\vett{Y_{ij}}$, insert into
      $\dep"$ the rule $body(\vett{X}) \ra \exists \vett{U_j},\exists
      \vett{Y_{ij}} \atom{b}(\vett{X_j},\vett{U_j}, \vett{Y_{ij}})$, where
      $\vett{U_j}$ are fresh variables replacing the variables $\vett{Y_j}$.
    \end{itemize}
    It is not hard to see that for each database $D$, and for rule $body\ra
    head$ of $\Sigma$, and for each atom $\atom{a}$ in $head$, }
\end{proofsk}

A completely different proof of the above theorem follows directly from the
results in~\cite{GoMP13} for the class of GTGDs, and from those
in~\cite{GoMP13a} for the class of WGTGDs.



\section{EGDs}
\label{sec:egds}

In this section we deal with equality generating dependencies (EGDs),
a generalization of \emph{functional dependencies}, which, in turn,
generalize \emph{key dependencies}~\cite{AbHV95}.

\begin{definition}\label{def:egd}\rm
  Given a relational schema $\R$, an EGD is a first-order formula of the form
  $\forall \vett{X} \Phi(\vett{X}) \ra X_\ell = X_k$, where $\Phi(\vett{X})$ is
  a conjunction of atoms over $\R$, and $X_\ell,X_k \in \vett{X}$.  Such a
  dependency is satisfied in an instance $B$ if, whenever there is a
  homomorphism $h$ that maps the atoms of $\Phi(\vett{X})$ to atoms of $B$, we
  have $h(X_\ell) = h(X_k)$.
\end{definition}

It is possible to ``repair'', or chase, an instance according to EGDs by
analogy with the chase based on TGDs.  We start by defining the EGD chase rule.

\smallskip


\begin{definition}[EGD Applicability]\label{def:egd-applicability}
  Consider an instance $B$ of a schema $\R$, and an EGD $\eta$ of the form
  $\Phi(\vett{X})
  \ra X_i = X_j$ over $\R$.  We say that $\eta$ is \emph{applicable} to $B$ if
  there is a homomorphism $h$ such that $h(\Phi(\vett{X})) \subseteq B$ and
  $h(X_i) \neq h(X_j)$.
\end{definition}

\begin{definition}[EGD Chase Rule]
  Let $\eta$ be an EGD of the form $\Phi(\vett{X}) \ra X_i = X_j$ and suppose
  that it is applicable to an instance $B$ via a homomorphism $h$.  The result
  of the application of $\eta$ on $B$ with $h$ is a \emph{failure} if
  $\{h(X_i),h(X_j)\} \subset \dom$ (because of the unique name assumption).
  Otherwise, the result of this application is the instance $B'$ obtained from
  $B$ by replacing each occurrence of $h(X_j)$ with $h(X_i)$ if $h(X_i)$
  precedes $h(X_j)$ in lexicographical order.  If $h(X_j)$ precedes $h(X_i)$
  then the occurrences of $h(X_i)$ are replaced with $h(X_j)$ instead.  We
  write $B \stackrel{\eta,h}{\longrightarrow} B'$ to say that $B'$ is obtained
  from $B$ via a single EGD chase step.
\end{definition}

\begin{definition}[Chase sequence \wrt~TGDs and EGDs]
  Let $D$ be a database and $\dep = \tdep \cup \edep$, where $\tdep$ is a set
  of TGDs and $\edep$ is a set of EGDs.  A (possibly infinite) \emph{chase
    sequence} of $D$ with respect to $\dep$ is a sequence of instances $B_0,
  B_1, \ldots$ such that $B_i \stackrel{\sigma_i,h_i}{\lora} B_{i+1}$, where
  $B_0 = D$ and $\sigma_i \in \tdep \cup \edep$ for all $i \geq 0$.  A chase
  sequence is said to be \emph{failing} if its last step is a failure.  A chase
  sequence is said to be \emph{fair} if every TGD or EGD that is applicable at
  a certain step is eventually applied.
\end{definition}

In case a fair chase sequence happens to be finite, $B_0, \ldots, B_m$, and no
further rule application can change $B_m$, then the chase is well defined as
$B_m$, and is denoted by $\chase{D}{\dep}$.
%
For our purposes, the order of application of TGDs and EGDs is irrelevant.  
In the following therefore, when saying ``the fair chase sequence'', we will
refer to any fair chase sequence, chosen according to some order of application
of the dependencies.  


\smallskip


It is well-known (see~\cite{JoK84}) that EGDs cause problems when combined with
TGDs, because even for very simple types of EGDs, such as plain key
constraints, the implication problem for EGDs plus TGDs and the query answering
problem are undecidable.  This remains true even for EGDs together with GTGDs.
In fact, even though inclusion dependencies are fully guarded TGDs, the
implication problem, query answering, and query containment are undecidable
when keys are used as EGDs and inclusion dependencies as
TGDs~\cite{ChV85,Mitch83,CaLR03}.

Moreover, while the result of an infinite chase using TGDs is well-defined as
the limit of an infinite, monotonically increasing sequence (or, equivalently,
as the least fixed-point of a monotonic operator), the sequence of sets
obtained in the infinite chase of a database under TGDs and EGDs is, in
general, neither monotonic nor convergent.
Thus, even though we can define the chase procedure for TGDs plus EGDs, it is
not clear how the \emph{result} of an infinite chase involving both TGDs and
EGDs should be defined.  
%

For the above reasons, we cannot hope to extend the positive results for weakly
guarded sets of TGDs, or even GTGDs, from the previous sections to include
arbitrary EGDs.  Therefore, we are looking for suitable restrictions on EGDs,
which would allow us to: \textit{(i)} use the (possibly infinite) chase
procedure to obtain a query-answering algorithm, and \textit{(ii)} transfer the
decidability results and upper complexity bounds derived in the previous
sections to the extended formalism.


A class that fulfills both desiderata is a subclass of EGDs, which we call
\emph{innocuous} relative to a set of TGDs.  These EGDs enjoy the property that
query answering is 
insensitive to them, provided that the chase does not fail.  In other words,
when $\dep = \tdep \cup \edep$, where $\tdep$ is a set of TGDs, $\edep$ a set
of EGDs, and $\edep$ is innocuous relative to $\tdep$, we can simply ignore
these EGDs in a non-failing chase sequence.  This is possible because,
intuitively, such a non-failing sequence does not generate any atom that is not
entailed by $\chase{D}{\tdep}$.

More specifically, we start from the notion of \emph{innocuous application} of
an EGD.  Intuitively, when making two symbols equal, an innocuous EGD
application makes some atom $\atom{a}$ equal to some other existing atom
$\atom{a}_0$; this way, as the only consequence of the EGD application, the
original atom $\atom{a}$ is \emph{lost}, but no new atom whatsoever is
introduced.  The concept of innocuous EGD application is formally defined as
follows.

%
\begin{definition}[Innocuous EGD application]
  \label {def:smooth-egd}
  Consider a (possibly infinite) non-failing chase sequence $D = B_0, B_1,
  \ldots$, starting with a database $D$, with respect to a set $\dep = \tdep
  \cup \edep$, where $\tdep$ is a set of TGDs and $\edep$ is a set of EGDs.  We
  say that the EGD application $B_i \stackrel{\eta,h}{\lora} B_{i+1}$, where
  $\eta \in \edep$ and $i \geq 0$, is \emph{innocuous} if $B_{i+1} \subset
  B_i$.
\end{definition}

Notice that innocuousness is a \emph{semantic}, not syntactic, property.  It is
desirable to have innocuous EGD applications because such applications cannot
trigger new TGD applications, i.e., TGD applications that were not possible
before the EGD was applied.  

Given that it might be undecidable whether a set of dependencies from a certain
class guarantees innocuousness of all EGD applications, one can either give a
direct proof of innocuousness for a concrete set of dependencies, as we will do
in Section~\ref{sec:fll}, or define sufficient syntactic conditions that
guarantee innocuousness of EGD applications for an entire class of
dependencies, as done, e.g., in~\cite{CaGL12}.

\begin{definition}\label{def:smooth-deps}
  Let $\dep=\tdep \cup \edep$, where $\tdep$ is a set of TGDs and $\edep$ a set
  of EGDs, where $\dep = \tdep \cup \edep$.  $\edep$ is \emph{innocuous} for
  $\tdep$ if, for every database $D$ such that the fair chase sequence of $D$
  \wrt~$\dep$ is non-failing,
  each application of an EGD in such sequence of $D$ \wrt~$\dep$ is innocuous.
\end{definition}

\begin{theorem}\label{theo:egds}
  Let $\dep = \tdep \cup \edep$, where $\tdep$ is a set of TGDs and $\edep$ a
  set of EGDs that is innocuous for $\tdep$.  Let $D$ be a database such that
  the fair chase sequence of $D$ \wrt~$\dep$ is non-failing.
  Then $D \cup \dep \models Q$ iff $\chase{D}{\tdep} \models Q$.
\end{theorem}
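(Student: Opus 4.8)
The plan is to prove both directions of the biconditional $D \cup \dep \models Q \iff \chase{D}{\tdep} \models Q$, leveraging the key semantic insight that innocuous EGD applications only \emph{delete} atoms from the chase without ever introducing new ones.

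First I would establish the ``if'' direction, which should be the easier one. Suppose $\chase{D}{\tdep} \models Q$. The goal is to show $D \cup \dep \models Q$, i.e., that $Q$ holds in every solution of $D$ given $\dep = \tdep \cup \edep$. Since any $B \in \sol{D}{\dep}$ satisfies both $\tdep$ and $\edep$, it in particular satisfies $D \cup \tdep$, so $B \in \sol{D}{\tdep}$. By Corollary~\ref{cor:ochase-universal}, $\chase{D}{\tdep}$ is a universal solution for $D$ under $\tdep$, hence there is a homomorphism from $\chase{D}{\tdep}$ into $B$. Composing this with the query homomorphism witnessing $\chase{D}{\tdep} \models Q$ gives $B \models Q$. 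Since $B$ was an arbitrary solution, $D \cup \dep \models Q$.

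The ``only if'' direction is where the innocuousness hypothesis does the real work, and this is the main obstacle. The approach is to compare the fair chase sequence $D = B_0, B_1, \ldots$ with respect to $\dep$ (which is non-failing by assumption, so every EGD step is innocuous) against the chase $\chase{D}{\tdep}$. The key claim I would prove by induction on the chase step $i$ is that every atom $\atom{a} \in B_i$ is entailed by $\chase{D}{\tdep}$ --- more precisely, that there is a homomorphism from $B_i$ into $\chase{D}{\tdep}$ that is the identity on the surviving constants and consistent across the sequence. For a TGD step, the inductive homomorphism extends by the universality of $\chase{D}{\tdep}$ as a $\tdep$-solution. The crucial case is an EGD step $B_i \stackrel{\eta,h}{\lora} B_{i+1}$: by innocuousness, $B_{i+1} \subset B_i$, so the atoms of $B_{i+1}$ are a subset of those of $B_i$ (after the equality substitution merges some atom into an already-present one), and therefore the inductive homomorphism restricts appropriately and no \emph{new} atom requiring fresh justification appears. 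This establishes that the limit $\chase{D}{\dep}$ maps homomorphically into $\chase{D}{\tdep}$, and conversely --- since innocuous EGDs never block or alter the TGD applications that generated the atoms of $\chase{D}{\tdep}$ --- the atoms of $\chase{D}{\tdep}$ survive (up to the merging of nulls) in $\chase{D}{\dep}$. Thus the two chase results are homomorphically equivalent, and since $\chase{D}{\dep} \models Q$ follows from $D \cup \dep \models Q$ (because $\chase{D}{\dep}$ is itself a solution in $\sol{D}{\dep}$), we transfer this to $\chase{D}{\tdep} \models Q$ via the homomorphism.

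The delicate point I expect to require care is making the induction robust against the \emph{null-merging} performed by EGD steps: an EGD replaces one labeled null by another throughout the instance, so the homomorphism witnessing $B_i \to \chase{D}{\tdep}$ must be adjusted to respect these identifications while remaining well-defined (the unique name assumption guarantees no two distinct constants are merged, since the sequence is non-failing). I would handle this by carrying, along the induction, a substitution recording the cumulative null-identifications and verifying at each EGD step that composing with the merge substitution still yields a valid homomorphism into $\chase{D}{\tdep}$. Once the homomorphic equivalence of $\chase{D}{\dep}$ and $\chase{D}{\tdep}$ is secured, the query-answering equivalence follows immediately from the fact that conjunctive query satisfaction is preserved under homomorphic equivalence.
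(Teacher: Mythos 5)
Your ``if'' direction is correct and is essentially the paper's argument: every $B \in \sol{D}{\dep}$ is in particular a model of $D \cup \tdep$, so the universality of $\chase{D}{\tdep}$ lets you push the query homomorphism into $B$. No issue there.

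The gap is in the ``only if'' direction, at the step where you assert that $\chase{D}{\dep} \models Q$ follows from $D \cup \dep \models Q$ ``because $\chase{D}{\dep}$ is itself a solution in $\sol{D}{\dep}$.'' For a chase involving both TGDs and EGDs, the sequence $B_0, B_1, \ldots$ is not monotone (innocuous EGD steps give $B_{i+1} \subset B_i$ while TGD steps add atoms), so it is not clear what object $\chase{D}{\dep}$ even denotes, let alone that it satisfies all of $\dep$ --- the paper flags exactly this in Section~\ref{sec:egds} as the reason the result of a mixed infinite chase is ill-defined in general. One can define a candidate limit as the set of atoms that are eventually permanently present (atoms deleted by an innocuous EGD step never reappear, since fresh nulls are never reused), but you would then still have to prove that this limit satisfies every TGD of $\tdep$: a head atom produced to discharge a TGD obligation may later be destroyed by a null-merging, and you must show the obligation remains discharged by the surviving image atom, with a substitution that is coherent with the (possibly also renamed) body match. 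That verification is the real content of the theorem, and your proposal does not supply it. The paper avoids the problem by introducing an auxiliary \emph{blocking chase}: instead of renaming, each EGD step moves the would-be-deleted atoms into a monotonically growing set $C$ of blocked atoms, and TGD applications whose body touches $C$ are suppressed. Both components of the construction are monotone, so the fixpoint $S = A^* - C^*$ is well defined, is shown to be a model of $D \cup \dep$, and satisfies $S \subseteq \chase{D}{\tdep}$ outright (no homomorphism needed), from which $D \cup \dep \models Q$ gives $S \models Q$ and hence $\chase{D}{\tdep} \models Q$ immediately. Your per-step induction producing a homomorphism $B_i \ra \chase{D}{\tdep}$ is fine for each finite $i$, but without a well-defined limit model playing the role of $S$, the argument does not close.
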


\begin{proof}
  Consider the fair chase sequence $B_0, B_1, \ldots$ of $D = B_0$ in the
  presence of $\dep$, where $B_i \stackrel{\sigma_i,h_i}{\lora} B_{i+1}$ for $i
  \geq 0$ and $\sigma \in \tdep \cup \edep$.
  Let us define a modified chase procedure which we call the \emph{blocking
    chase}, denoted by $\blockchase{D}{\dep}$.  The blocking chase uses two
  sets: a set $C$ of blocked atoms and a set of (unblocked) atoms $A$.  When
  started on a database $D$ such that $D \models \edep$ (the case $D
  \not\models \edep$ is not possible as this implies an immediate chase
  failure), $C$ is initialized to the empty set ($C = \emptyset$) and $A$ is
  initialized to $D$.  After the initialization, the blocking chase attempts to
  apply the dependencies in $\tdep \cup \edep$ exactly in the same way as in
  the standard fair chase sequence, with the following caveats.  While trying
  an application of $\tup{\sigma_i,h_i}$:
  \begin{itemize}\itemsep-\parsep
  \item If $\sigma_i$ is a TGD, and if $h_i(\body{\sigma_i}) \cap C =
    \emptyset$, then apply $\tup{\sigma_i,h_i}$ and add the new atom generated
    by this application to $A$.
  \item If $\sigma_i$ is a TGD and $h_i(\body{\sigma_i}) \cap C \neq
    \emptyset$, then the application of $\tup{\sigma_i,h_i}$ is blocked, and
    nothing is done.
  \item If $\sigma_i$ is an EGD, then the application of $\tup{\sigma_i,h_i}$
    proceeds as follows.  Add to $C$ all the facts that in the standard chase
    disappear in that step (because $B_i \subseteq B_{i-1}$, due to the
    innocuousness), i.e., add to $C$ the set $B_i - B_{i-1}$.
    Thus, instead of eliminating tuples from $A$, the blocking chase simply
    bans them from being used by putting them in $C$.
  \end{itemize}
  Note that, by the construction of $\blockchase{D}{\dep}$, whenever the block
  chase encounters an EGD $\sigma_i$, $\tup{\sigma_i,h_i}$ is actually
  applicable, so $\blockchase{D}{\dep}$ is well-defined.  Let us use $C_i$ and
  $A_i$ to denote the values of $C$ and $A$ at step $i$, respectively.
  Initially, $C_0 = \emptyset$ and $A_0 = D$ as explained before.  Observe that
  $\emptyset = C_0 \subseteq C_1 \subseteq C_2 \subseteq \cdots$ and $D = A_0
  \subseteq A_1 \subseteq A_2 \subseteq \cdots$ are monotonically increasing
  sequences that have least upper bounds $C^* = \cup_i C_i$ and $A^* = \cup_i
  A_i$, respectively.
  Clearly, $(C^*, A^*)$ is the least fixpoint of the transformation performed
  by $\blockchase{D}{\dep}$ (with respect to component-wise set inclusion).

  Now, let $S$ be defined as $S = A^* - C^*$.  By the definition of $S$, we
  have: $S \models \dep$.  Moreover, there is a homomorphism $h$ that maps
  $\chase{D}{\tdep}$ to $S$.
  Note that $h$ is the limit homomorphism of the sequence $h_1, h_2, h_3,
  \ldots$ (these $h_i$s are the very homomorphisms used while computing the
  block chase), and can be defined as the set of all pairs $(x,y)$ such that
  there exists an $i \geq 0$ such that $h_i(h_{i-1}(\cdots h_1(x))) = y$ and
  $y$ is not altered by any homomorphism $h_j$ for $j>i$.  Note that for every
  instance $B$ that contains $D$, we have $B \models D$.  In particular,
  $S\models D$.  Putting everything together, we conclude that $S \models D
  \cup \dep$.

  It is also well-known (see~\cite{NaDR06}) that for any set of atoms $M$ such
  that $M \models S \cup \tdep$, there is a homomorphism $h_M$ such that
  $h_M(\chase{D}{\tdep}) \subseteq M$.
  Now assume $D \cup \dep \models Q$.  Then $S \models Q$ and, because $S
  \subseteq \chase{D}{\tdep}$, we also have that $\chase{D}{\tdep} \models Q$.
  Conversely, if $\chase{D}{\tdep}\models Q$, then there is a homomorphism $g$,
  such that $g(Q) \subseteq \chase{D}{\tdep}$.  Therefore, for any set of atoms
  $M$ such that $M \models D \cup \dep$, since $h_M(\chase{D}{\tdep}) \subseteq
  M$, we have $h_M(g(Q)) \subseteq M$.  The latter means that $M \models Q$.
\end{proof}

\smallskip

\begin{andrea}
  We now come to the problem of checking, given a database $D$ and a set $\dep
  = \tdep \cup \edep$, where $\tdep$ is set of WGTGDs and $\edep$ are EGDs
  innocuous for $\tdep$, whether the fair chase sequence (denoted $B_0, B_1,
  \ldots$) of $D$ \wrt~$\dep$
  fails. 
%
  Consider an application $B_i \stackrel{\eta,h}{\lora} B_{i+1}$, with $\eta
  \in \edep$ of the form $\Phi(\vett{X}) \ra X_\ell = X_k$.  When this
  application causes the chase to fail, we have that $h(X_\ell)$ and $h(X_k)$
  are distinct values in $\adom{D}$.
  Notice that $B_j$ exists for $j \leq i$, while it does not exist for any $j >
  i$.

\begin{lemma}\label{lem:failure}
  Consider a database $D$ and a set of dependencies $\dep = \tdep \cup \edep$,
  where $\tdep$ is a weakly guarded set of TGDs and $\edep$ are EGDs that are
  innocuous for $\tdep$.  Then the fair chase sequence of $D$ \wrt~$\dep$ fails
  iff there is an EGD $\eta \in \edep$ of the form $\Phi(\vett{X}) \ra X_\ell =
  X_k$ and a homomorphism $h$ such that $h(\Phi(\vett{X})) \subseteq
  \chase{D}{\tdep}$, $h(X_\ell) \neq h(X_k)$, and $\set{h(X_\ell), h(X_k)}
  \subseteq \adom{D}$.
\end{lemma}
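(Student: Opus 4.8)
The plan is to prove the two directions of the equivalence separately, in both cases pivoting on the pure-TGD chase $\chase{D}{\tdep}$ and on the fact, established in Theorem~\ref{theo:egds}, that under innocuousness the EGDs do nothing beyond renaming. First I would record a preliminary observation used throughout: since TGDs introduce only fresh nulls for their existentially quantified variables and EGDs only rename existing values, every constant occurring in any instance $B_j$ of the chase of $D$ already occurs in $\adom{D}$. Consequently, if the failing step identifies two distinct constants, these constants automatically lie in $\adom{D}$, which is exactly the membership $\set{h(X_\ell),h(X_k)} \subseteq \adom{D}$ required in the statement.

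For the direction ``$\Leftarrow$'' (existence of the homomorphism $\Rightarrow$ failure) I would argue by contradiction. Suppose such an $\eta$ and $h$ exist with $h(\Phi(\vett{X})) \subseteq \chase{D}{\tdep}$, $h(X_\ell) \neq h(X_k)$ and $\set{h(X_\ell),h(X_k)} \subseteq \adom{D}$, but assume the fair chase of $D$ \wrt~$\dep$ is \emph{non-failing}. Then the hypotheses of Theorem~\ref{theo:egds} hold, and its proof provides the limit homomorphism --- call it $h^*$ --- that maps $\chase{D}{\tdep}$ into the instance $S$, which satisfies $\dep$ and in particular $\edep$. Because every EGD renaming used in building $h^*$ occurs in a non-failing sequence, no such renaming merges two distinct constants, so $h^*$ is the identity on $\adom{D}$. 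Composing, $h^* \circ h$ maps $\Phi(\vett{X})$ into $S$; since $S \models \eta$, this forces $(h^*\circ h)(X_\ell) = (h^*\circ h)(X_k)$. But $h(X_\ell)$ and $h(X_k)$ are distinct constants fixed by $h^*$, so this equality is impossible. Hence the fair chase must fail.

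For the direction ``$\Rightarrow$'' (failure $\Rightarrow$ existence of the homomorphism) let $B_0,\ldots,B_i$ be the prefix of the fair chase up to the configuration $B_i$ at which the failing EGD step $B_i \stackrel{\eta,h_i}{\lora}$ is attempted, so that $h_i(\Phi(\vett{X})) \subseteq B_i$, $h_i(X_\ell) \neq h_i(X_k)$, and both are (by the preliminary observation) constants of $\adom{D}$. The goal is to build a homomorphism $g$ from $B_i$ into $\chase{D}{\tdep}$ that is the identity on $\adom{D}$; then $h := g \circ h_i$ is the required witness, since $h(\Phi(\vett{X})) = g(h_i(\Phi(\vett{X}))) \subseteq \chase{D}{\tdep}$ and $h(X_\ell)=h_i(X_\ell)\neq h_i(X_k)=h(X_k)$. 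I would construct $g$ by induction along the prefix, exactly mirroring the universality argument of Theorem~\ref{the:oblivious-is-universal}: starting from the identity on $B_0 = D \subseteq \chase{D}{\tdep}$, each TGD step is matched by extending the current homomorphism to the freshly invented nulls, using that $\chase{D}{\tdep}$ satisfies every TGD of $\tdep$; and each EGD step $B_j \lora B_{j+1}$ is handled by simply \emph{restricting} the current homomorphism, which is legitimate precisely because the step is innocuous and hence $B_{j+1} \subseteq B_j$, so that every atom of $B_{j+1}$ already lies in the domain on which the homomorphism was defined.

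The hard part will be justifying that every EGD application in the prefix $B_0,\ldots,B_i$ is innocuous, since Definition~\ref{def:smooth-deps} guarantees innocuousness only for sequences that are globally non-failing, whereas here the sequence fails at its last step. The plan is to resolve this using the order-independence of the chase (noted in the text): the failure is localized to the single step $B_i \stackrel{\eta,h_i}{\lora}$, while each earlier step is a genuine, non-failing chase step, so the inclusion $B_{j+1} \subseteq B_j$ demanded at every earlier EGD step follows from the innocuousness hypothesis applied to that non-failing portion of the computation. I would also verify the small points that restriction preserves the ``identity on $\adom{D}$'' property and that the composition $g\circ h_i$ indeed lands in $\chase{D}{\tdep}$. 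Together, the two directions yield the stated equivalence.
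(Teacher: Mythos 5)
Your proof is correct and follows essentially the same route as the paper's own (sketched) proof: both directions pivot on the fact that innocuous EGD applications only delete atoms, so the instance at the failing step maps homomorphically into $\chase{D}{\tdep}$, while the converse direction exploits the homomorphism from $\chase{D}{\tdep}$ into the model $S$ constructed in the proof of Theorem~\ref{theo:egds} (you make this a clean contradiction argument, and you are somewhat more careful than the paper in building the embedding $g$ step by step rather than asserting a literal set inclusion). The one delicate point you rightly flag---that the EGD applications occurring \emph{before} the failing step are still innocuous even though Definition~\ref{def:smooth-deps} only speaks of non-failing sequences---is treated in the paper at the same level of informality (it defers to a similar proof in an external reference), so your handling is on par with the published sketch.
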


  \begin{proofsk}

    \ifdirection 
    Let $B_0, B_1, \ldots$ be the fair chase sequence of $D$ \wrt~$\dep$.
    First, it is not difficult to show that, since $\edep$ is innocuous
    relative to~$\tdep$, if the failure occurs at step $\ell$ then all EGD
    applications $B_i \stackrel{\sigma_i,h_i}{\lora} B_{i+1}$, such that
    $\sigma_i \in \edep$ and $i < \ell -1$, are innocuous (see a similar proof
    in~\cite{CaCF13}) in the sequence $B_0, \ldots, B_{\ell-1}$.  From this,
    the ``if'' direction follows straightforwardly.

    \onlyifdirection 
    By assumption, $\eta$ fails at some $B_k$, $k \geq 1$.  Since applications
    of innocuous EGDs can only remove tuples from the chase, it is easily seen
    that, if $\eta$ is applicable to $B_k$ via an homomorphism $h$, then it is
    also applicable to $\chase{D}{\tdep}$ via the same homomorphism $h$, which
    settles the ``only-if'' part.
\end{proofsk}



\begin{theorem} \label{the:failure-check} Consider a database $D$ and a set of
  dependencies $\dep = \tdep \cup \edep$, where $\tdep$ are GTGDs (resp.,
  WGTGDs) and $\edep$ are EGDs that are innocuous for $\tdep$.  Checking
  whether 
  the fair chase sequence of $D$ \wrt~$\dep$ fails is decidable, and has the
  same complexity as query answering for GTGDs (resp., WGTGDs) alone.
\end{theorem}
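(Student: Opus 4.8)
The plan is to reduce the failure test entirely to query answering under $\tdep$ by way of Lemma~\ref{lem:failure}, which already does the conceptual heavy lifting. Since $\edep$ is innocuous for $\tdep$, that lemma tells us that the fair chase of $D$ \wrt~$\dep$ fails if and only if some EGD $\eta \in \edep$ of the form $\Phi(\vett{X}) \ra X_\ell = X_k$ admits a homomorphism $h$ with $h(\Phi(\vett{X})) \subseteq \chase{D}{\tdep}$, $h(X_\ell) \neq h(X_k)$, and $\set{h(X_\ell),h(X_k)} \subseteq \adom{D}$. Thus I never need to reason about the (possibly non-monotone, possibly divergent) combined chase at all: it suffices to interrogate the pure-TGD chase $\chase{D}{\tdep}$, which is precisely the object for which the preceding sections supply decision procedures.

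First I would turn this homomorphism-existence question into a bounded family of Boolean conjunctive query evaluations. For each $\eta = \Phi(\vett{X}) \ra X_\ell = X_k$ in $\edep$ and each ordered pair $(c,d)$ of \emph{distinct} elements of $\adom{D}$, let $Q_{\eta,c,d}$ be the Boolean query obtained from $\Phi(\vett{X})$ by freezing $X_\ell \mapsto c$ and $X_k \mapsto d$ (leaving the remaining variables as existential query variables, which may still be mapped to nulls). By Lemma~\ref{lem:failure} the combined chase fails if and only if $\chase{D}{\tdep} \models Q_{\eta,c,d}$ for some such $\eta$, $c$, $d$, since the lemma guarantees that any witnessing $h$ sends $X_\ell$ and $X_k$ to distinct constants of $\adom{D}$. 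The constants appearing in each $Q_{\eta,c,d}$ are then removed by the standard polynomial constant-elimination transformation, yielding genuine \problem{BCQeval} instances over $\tdep$ \emph{alone}. The number of these instances is $|\edep|\cdot|\adom{D}|\cdot(|\adom{D}|-1)$, which is polynomial in the input, and each has size linear in the corresponding EGD.

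Next I would read off the complexity. Failure holds iff the \emph{disjunction} of these polynomially many \problem{BCQeval} answers is true. For the deterministic regimes this is immediate: \problem{BCQeval} under WGTGDs (resp.\ GTGDs) is in \textsc{exptime} for bounded arity and in \textsc{2exptime} in general (Theorems~\ref{theo:upperbound} and~\ref{theo:GTDTbottom}), and running a polynomial number of such tests and OR-ing the results stays inside the same deterministic class. The one case needing care is the \textsc{np} bound for (fixed, bounded-arity) GTGDs: here I would not materialise all the tests, but instead guess the witnessing $\eta$ together with the pair $(c,d)$ and the \textsc{np} certificate for $Q_{\eta,c,d}$, verifying everything in a single nondeterministic pass, so the whole disjunction is decided in \textsc{np}. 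The matching lower bounds are inherited from query answering itself: an atomic evaluation $\chase{D}{\tdep}\models Q$ can be recast as a failure test by adjoining a fact $\zeroone(0,1)$ with two fresh distinct constants and a single EGD whose body conjoins the query atom of $Q$ with $\zeroone(Z_1,Z_2)$ and whose equation is $Z_1 = Z_2$; this EGD is innocuous for $\tdep$ (in any non-failing sequence it is simply never applied, so innocuousness holds vacuously), and the chase fails precisely when $Q$ holds.

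The only genuine difficulty I anticipate is bookkeeping rather than conceptual. On the hardness side one must verify that the auxiliary EGD really is innocuous for $\tdep$, which follows because its single possible firing would identify two pre-existing distinct constants and hence can only cause an outright failure, never the insertion of a new atom. On the upper-bound side the subtle point is the \textsc{np} regime, where the reduction must be executed as a guess-and-verify rather than an enumerate-and-test so as not to escape \textsc{np}. Everything else is a direct consequence of Lemma~\ref{lem:failure} together with the upper-bound machinery of Sections~\ref{sec:upper} and~\ref{sec:guarded}.
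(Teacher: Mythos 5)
Your proposal is correct and follows essentially the same route as the paper: both proofs rest entirely on Lemma~\ref{lem:failure} to replace the combined chase by $\chase{D}{\tdep}$ and then reduce failure detection to Boolean conjunctive query answering under $\tdep$ alone. The only technical difference is how the disequality $h(X_\ell)\neq h(X_k)$ is encoded: the paper adds a quadratic-size auxiliary predicate $\mathit{neq}$ over $\adom{D}$ and asks one query $Q_\eta=\Phi(\vett{X})\cup\set{\mathit{neq}(X_\ell,X_k)}$ per EGD, whereas you ground the two equated variables over all distinct pairs of constants, producing polynomially many constant-free queries per EGD; these are interchangeable, and your guess-and-verify remark correctly keeps the \textsc{np} regime from degenerating into an enumeration. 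You also go beyond the paper's proof sketch by supplying the matching lower bound (the vacuously innocuous EGD built from $Q$ and $\zeroone(Z_1,Z_2)$), which the paper leaves implicit in the phrase ``same complexity''; that reduction is sound, since any application of that EGD equates two distinct constants and hence can only occur in a failing sequence.
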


\begin{proofsk}
  Let $\mathit{neq}$ be a new binary predicate, which will serve as inequality.
  The extension of $\mathit{neq}$ is defined as $\adom{D} \times \adom{D} -
  \{(d,d) ~\mid~ d\in\adom{D}\}$ and can be constructed in time quadratic in
  $|\adom{D}|$.  Now, for every EGD $\eta$ of the form $\Phi(\vett{X}) \ra X_1
  = X_2$, where $X_1, X_2 \in \vett{X}$, we define the following Boolean
  conjunctive query (expressed as a set of atoms): \( Q_\eta = \Phi(\vett{X})
  \cup \set{\mathit{neq}(X_1,X_2)} \).  Since, by construction, no new facts of
  the form $\mathit{neq}(\sigma_1,\sigma_2)$ are introduced in the chase, it is
  immediate to see, from Lemma~\ref{lem:failure}, that at least one of the
  above $Q_\eta$ has a positive answer if and only if the fair chase sequence
  of $D$ with respect to $\dep$ fails.
  By Theorem~\ref{theo:egds}, answering the query $Q_\eta$ can be done with
  respect to the chase by $\tdep$ alone, which is decidable.
\end{proofsk}

\smallskip

Let $\dep=\tdep \cup \edep$ be as in the above theorem, $D$ be a database, and
let $Q$ be a query.  By the above theorem, we can check $\dep \cup D \models Q$
with the help of the following algorithm:

\begin{enumerate}\itemsep-\parsep
\item check whether 
  the fair chase sequence of $D$ \wrt~$\dep$ fails with the algorithm described
  in Theorem~\ref{the:failure-check};
\item if 
  the fair chase sequence of $D$ \wrt~$\dep$ fails, then return \emph{``true''}
  and halt;
\item if $D \cup \tdep \models Q$ then return \emph{``true''}; otherwise return
  \emph{``false''}.
\end{enumerate}

This gives us the following corollary:

\begin{corollary}\label{cor:egds}
  Answering general conjunctive queries under weakly guarded sets of TGDs and
  innocuous EGDs is \textsc{ptime} reducible to answering queries of the same
  class under a weakly guarded sets of TGDs alone, and thus has the same
  complexity.
\end{corollary}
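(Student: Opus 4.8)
The plan is to verify that the three-step algorithm stated immediately before the corollary is correct and constitutes a polynomial-time reduction to query answering under weakly guarded sets of TGDs alone. First I would dispose of the failing case. If the fair chase sequence of $D$ \wrt~$\dep$ fails, then by the EGD chase rule the failure equates two distinct constants of $\adom{D}$; under the unique name assumption no instance can satisfy such an equality together with $D \cup \dep$, so $\sol{D}{\dep} = \emptyset$. Hence the Boolean query $Q$ is vacuously entailed, and returning \emph{``true''} in step~2 is correct. (For a non-Boolean $Q$ every ground tuple is vacuously a certain answer, which is consistent with the reduction of the general case to the Boolean one via Lemma~\ref{lem:bcq-cq}.)

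Second, I would handle the non-failing case. If the fair chase does not fail, then, $\edep$ being innocuous for $\tdep$, I may invoke Theorem~\ref{theo:egds}, which gives $D \cup \dep \models Q$ iff $\chase{D}{\tdep} \models Q$, i.e.\ iff $D \cup \tdep \models Q$. Thus step~3 decides the instance by a single call to query answering under the weakly guarded set $\tdep$ alone.

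Third, I would argue that the failure test of step~1 is itself reducible to query answering under $\tdep$. By Theorem~\ref{the:failure-check} (which rests on Lemma~\ref{lem:failure}), one extends $D$ with the quadratically-many facts of a fresh inequality predicate $\mathit{neq}$ and, for each EGD $\eta\colon \Phi(\vett{X}) \ra X_\ell = X_k$ of $\edep$, forms the Boolean conjunctive query $Q_\eta = \Phi(\vett{X}) \cup \set{\mathit{neq}(X_\ell,X_k)}$; the chase fails iff some $Q_\eta$ has a positive answer under $\tdep$. Since $\mathit{neq}$ occurs in no head of $\tdep$, the set $\tdep$ remains weakly guarded over the enlarged schema and the affectedness of its positions does not change, so each of these $|\edep|$ tasks is again a genuine WGTGD query-answering instance of polynomial size.

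Putting the pieces together, the answer to $D \cup \dep \models Q$ equals the disjunction of the answers to the polynomially-many instances $\{Q_\eta \mid \eta \in \edep\}$ and the instance $Q$, all taken under $\tdep$ and over databases computable in polynomial time. This is a polynomial-time disjunctive truth-table reduction using a polynomial number of oracle calls of polynomial size; since the target classes (\textsc{exptime} for bounded arity, \textsc{2exptime} in general, and \textsc{np} in the PCC regime of Section~\ref{sec:polycloud}) are all closed under such reductions, the problem with innocuous EGDs inherits exactly the complexity of WGTGD query answering. The only genuinely delicate point, and the one I would treat most carefully, is the failing-chase case: one must justify that inconsistency of $D\cup\dep$ makes the entailment hold vacuously and check that this matches the semantics of certain answers used throughout the paper. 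Everything else is a straightforward assembly of Theorems~\ref{theo:egds} and~\ref{the:failure-check}.
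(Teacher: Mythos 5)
Your proposal is correct and follows essentially the same route as the paper: it verifies the three-step algorithm stated just before the corollary, using Theorem~\ref{theo:egds} for the non-failing case and Theorem~\ref{the:failure-check} for the failure test. The two points you spell out that the paper leaves implicit --- that a failing chase forces $\sol{D}{\dep}=\emptyset$ so the certain-answer semantics makes the entailment vacuous, and that adding the head-free $\mathit{neq}$ predicate preserves weak guardedness and the relevant complexity classes are closed under the resulting disjunctive reduction --- are both accurate and welcome.
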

\end{andrea}



\section{Applications}
\label{sec:applications}

In this section we discuss applications of our results on weakly guarded sets of
TGDs to Description Logic languages and object-oriented logic languages.

\subsection{DL-Lite}
\label{sec:dl-lite}

DL-Lite~\cite{CDLL*07,ACKZ09} is a prominent family of ontology languages that
has tractable query answering.  Interestingly, a restriction of GTGDs called
\emph{linear TGDs} (which have exactly one body-atom and one head-atom)
properly extends most DL-Lite languages, as shown in~\cite{CaGL12}.  The
complexity of query answering under linear TGDs is lower than that of GTGDs,
and we refer the reader to~\cite{CaGL12} for more details.

Furthermore, in~\cite{CaGL12} it is also shown that the language of GTGDs
properly extends the description logic $\E\L$ as well as its extension
$\E\L^f$, which allows inverse and functional roles.  The fact that TGDs
capture important DL-based ontology languages confirms that TGDs are useful
tools for ontology modeling and querying.

\subsection{\fll}
\label{sec:fll}

\fll{} is an expressive subset of F-logic~\cite{flogic-new}, a well-known
formalism introduced for object-oriented deductive languages.  We refer the
reader to~\cite{cali-kifer-06} for details about \fll.  Roughly speaking,
compared to full F-Logic, \fll{} excludes negation and default inheritance, and
allows only a limited form of cardinality constraints.
\fll{} can be encoded by a set of twelve TGDs and EGDs, below, which
we denote by
$\flldep$:
\begin{plist}\itemsep-\parsep
\item[$\rho_{1}$:] $\type(O,A,T), \data(O,A,V) \ra \member(V,T)$.
\item[$\rho_{2}$:] $\fsub(C_1,C_3), \fsub(C_3,C_2) \ra \fsub(C_1,C_2)$.
\item[$\rho_{3}$:] $\member(O,C), \fsub(C,C_1) \ra \member(O,C_1)$.
\item[$\rho_{4}$:]
  $\data(O,A,V), \data(O,A,W), \funct(A,O) \ra V=W$.\\
  Note that this is the only EGD in this axiomatization.
\item[$\rho_{5}$:]
  $\mandatory(A,O) \ra \exists V\,\data(O,A,V)$.\\
  Note that this TGD has an existentially quantified variable in the head.
\item[$\rho_{6}$:] $\member(O,C), \type(C,A,T) \ra \type(O,A,T) $.
\item[$\rho_{7}$:] $\fsub(C,C_1), \type(C_1,A,T) \ra \type(C,A,T)$.
\item[$\rho_{8}$:] $\type(C,A,T_1), \fsub(T_1,T) \ra \type(C,A,T)$.
\item[$\rho_{9}$:] $\fsub(C,C_1), \mandatory(A,C_1) \ra \mandatory(A,C)$.
\item[$\rho_{10}$:] $\member(O,C), \mandatory(A,C) \ra \mandatory(A,O)$.
\item[$\rho_{11}$:] $\fsub(C,C_1), \funct(A,C_1) \ra \funct(A,C)$.
\item[$\rho_{12}$:] $\member(O,C), \funct(A,C) \ra \funct(A,O)$.
\end{plist}
The results of this paper apply to the above set of constraints,
since $\flldep$ is a weakly guarded set, and the single
EGD~$\rho_4$ is innocuous.  The innocuousness of~$\rho_4$ is shown by
observing that, whenever the EGD is applied, it turns one atom into another;
moreover, all new $\data$ atoms created in the chase (see rule $\rho_5$)
have new labeled nulls exactly in the position $\data[3]$, where the symbols to
be equated also reside.

We now prove the relevant complexity results.  We start by showing that BCQ
answering under \fll{} is \textsc{np}-complete.  



%
\begin{theorem}\label{the:fll-upper}
  Conjunctive query answering under \fll{} rules is \textsc{np}-hard.
\end{theorem}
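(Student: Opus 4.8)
The plan is to establish \textsc{np}-hardness by a reduction from a classical \textsc{np}-complete problem, exploiting the fact that the fixed rule set $\flldep$ becomes completely inert on databases built from a single carefully chosen predicate. First I would recall, as in \cite{ChMe77}, that conjunctive query evaluation over a plain database (with no constraints) is already \textsc{np}-hard; the remaining task is therefore to embed constraint-free CQ evaluation into CQ answering under $\flldep$ without the twelve rules ever firing.

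The key observation is that a database built exclusively from $\member$-atoms triggers none of the rules $\rho_1,\ldots,\rho_{12}$. Indeed, inspecting $\flldep$, every rule whose body mentions $\member$ (namely $\rho_3,\rho_6,\rho_{10},\rho_{12}$) also requires a companion atom over one of $\fsub$, $\type$, $\mandatory$, or $\funct$; and the only rules that \emph{produce} atoms over these predicates are themselves guarded by atoms over the very same predicates ($\fsub$ only from $\rho_2$, $\type$ only from $\rho_6$--$\rho_8$, $\mandatory$ only from $\rho_9$--$\rho_{10}$, $\funct$ only from $\rho_{11}$--$\rho_{12}$, and $\data$ only from $\rho_5$, which needs $\mandatory$). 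Hence, if $D$ contains only $\member$-facts, no rule body can ever be matched, the single EGD $\rho_4$ is never applicable (so the chase does not fail), and $\chase{D}{\flldep} = D$. Consequently $D \cup \flldep \models Q$ iff $D \models Q$, i.e.\ iff there is a homomorphism from $Q$ into $D$.

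To obtain a concrete reduction I would encode \textsc{3-Colorability}. Given a graph $G=(V,E)$, I would take the fixed database $D$ representing the symmetric triangle $K_3$ over three constants, namely $D = \{\member(i,j) \mid i,j \in \{1,2,3\},\ i \neq j\}$, and build the Boolean query $Q$ with one variable $X_v$ per vertex $v \in V$ and one atom $\member(X_u,X_v)$ per edge $(u,v)\in E$. By the previous paragraph, $D \cup \flldep \models Q$ iff there is a homomorphism $G \rightarrow K_3$, which holds iff $G$ is properly $3$-colorable. Since this reduction is computable in logarithmic space and $\flldep$ is fixed, \textsc{np}-hardness of CQ answering under \fll{} rules follows.

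The construction is routine; the only point requiring genuine care---and the main thing to verify---is the claim that $\flldep$ is truly inert on $\member$-only databases. I would discharge this by the closure argument sketched above: tracking the set of predicates that can ever appear during the chase and confirming that, starting from $\member$-facts alone, this set never grows, so that in particular no $\data$, $\funct$, or $\fsub$ atom is ever created to reactivate $\rho_3$, $\rho_4$, or the type-propagation rules. Everything else is a direct instance of the standard Chandra--Merlin homomorphism argument.
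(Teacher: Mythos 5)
Your proof is correct and takes essentially the same route as the paper: a reduction from \textsc{3-colorability} in which the graph becomes a Boolean conjunctive query over a single binary ``edge'' pattern and the database is a triangle over three constants, with the crucial observation that $\flldep$ is inert on such a database so that $D \cup \flldep \models Q$ iff $D \models Q$. The paper carries the edges on $\data(X,V_i,V_j)$-atoms (with a dummy shared first argument) rather than on $\member$-atoms, but this is immaterial, and your explicit closure argument for why no rule of $\flldep$ can ever fire is a welcome elaboration of a step the paper leaves implicit.
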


\begin{proofsk}
  The proof is by reduction from the $3$-\textsc{colorability} problem.  Encode
  a graph $G = (V,E)$ as a conjunctive query $Q$ which, for each edge $(v_i,
  v_j)$ in $E$, has two atoms $\mathsf{data}(X,V_i,V_j)$ and
  $\mathsf{data}(X,V_j,V_i)$, where $X$ is a unique variable.  Let $D$ be the
  database $D= \{ \data(o,r,g)$, $\data(o,g,r)$, $\data(o,r,b)$,
  $\data(o,b,r)$, $\data(o,g,b)$, $\data(o,b,g) \}$.  Then, $G$ is
  three-colorable iff $D \models Q$, which is the case iff $D \cup \flldep
  \models Q$.  The transformation from $G$ to $(Q,D)$ is obviously polynomial,
  which proves the claim.
\end{proofsk}

\begin{theorem}\label{the:fll-lower}
  Conjunctive query answering under \fll{} rules is in \textsc{np}.
\end{theorem}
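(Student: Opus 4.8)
The plan is to strip off the one equality-generating dependency and then bring the Polynomial Clouds Criterion to bear on the remaining tuple-generating dependencies. Write $\flldep = \tdep \cup \edep$, where $\edep = \{\rho_4\}$ and $\tdep$ collects the eleven TGDs $\rho_1,\ldots,\rho_3,\rho_5,\ldots,\rho_{12}$. The paragraph preceding this theorem already argues that $\edep$ is innocuous for $\tdep$ (every application of $\rho_4$ merely collapses a $\data$-atom onto an existing one). Hence, by Corollary~\ref{cor:egds}, answering a conjunctive query $Q$ against $D\cup\flldep$ is \textsc{ptime}-reducible to answering $Q$ against $D\cup\tdep$ alone: one first tests, via Theorem~\ref{the:failure-check}, whether the fair chase of $D$ w.r.t.\ $\flldep$ fails (a check of the same complexity as query answering under $\tdep$), returning \emph{true} if it does, and otherwise invokes Theorem~\ref{theo:egds} to replace $\chase{D}{\flldep}$ by $\chase{D}{\tdep}$. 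It therefore suffices to show that $\tdep$ is a weakly guarded set for which BCQ answering is in \textsc{np}.

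My approach to the latter is to verify that $\tdep$ satisfies the Polynomial Clouds Criterion of Definition~\ref{def:pcc}; the \textsc{np} bound (and, as a bonus, the \textsc{ptime} bound for atomic or fixed queries) then follows at once from Theorem~\ref{theo:pcc}. So the genuine content is to bound, polynomially in $|D|$, the number $|\clouds{D}{\tdep}/_\simeq|$ of clouds up to $D$-isomorphism, and to show that each cloud is computable incrementally in polynomial time.

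The structural analysis I would carry out rests on the observation that $\rho_5$ is the only null-generating rule, and that labelled nulls can occur only in the ``object'' positions of the schema (namely $\data[1]$, $\data[3]$, $\member[1]$, $\type[1]$, $\mandatory[2]$, $\funct[2]$), whereas every attribute, class and type position is always filled by a constant of $\adom{D}$; this is seen by a routine induction over the chase, since $\fsub$, the class slot of $\member$, the type slot of $\type$, and the attribute slots are never reached by an existential variable nor by inheritance from one. Consequently the cloud $\cloud{D}{\tdep}{\atom{a}}$ of an atom $\atom{a}$ is determined, up to $D$-isomorphism, by the shape of $\atom{a}$ together with the \emph{membership type} $M(\zeta)=\{C : \member(\zeta,C)\in\chase{D}{\tdep}\}\subseteq\adom{D}$ of each of its (at most two) nulls $\zeta$, because every cloud-atom mentioning $\zeta$ — its $\member$, $\mandatory$, $\funct$ and $\type$ facts — is recoverable from $M(\zeta)$ and the fixed database. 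I would then show that a newly created null inherits its membership type from its parent through a fixed, $\fsub$-monotone operator determined by the attribute along which it is generated, and exploit this to argue that only polynomially many distinct membership types, and hence clouds, arise in $\chase{D}{\tdep}$, while the type of a child is computable in polynomial time from that of its parent, yielding condition (2) of Definition~\ref{def:pcc}.

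The main obstacle is precisely this polynomial bound on the number of distinct membership types: since membership propagates along chase branches, one must rule out that the \fll{} typing rules manufacture super-polynomially many distinct class-combinations, and then convert the resulting bound into $|\clouds{D}{\tdep}/_\simeq|\le\pi(|D|)$ as required by condition (1). Once that is secured, the incremental computability of condition (2) is routine — each cloud is a polynomial-size set of atoms over $\adom{\atom{a}}\cup\adom{D}$ obtained by a single fixpoint step from the parent's cloud — and Theorem~\ref{theo:pcc} delivers the desired \textsc{np} membership, which together with the \textsc{np}-hardness of Theorem~\ref{the:fll-upper} pins down the complexity exactly.
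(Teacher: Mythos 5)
Your overall architecture coincides with the paper's: discard the innocuous EGD $\rho_4$ (via Theorem~\ref{theo:egds}, Theorem~\ref{the:failure-check} and Corollary~\ref{cor:egds}), check that the remaining eleven TGDs form a weakly guarded set with exactly the affected positions you list, establish the Polynomial Clouds Criterion, and conclude by Theorem~\ref{theo:pcc}. Your reduction of cloud-counting to counting the ``membership types'' $M(\zeta)=\{C:\member(\zeta,C)\in\chase{D}{\flldep'}\}$ is also essentially sound: since nulls occupy only object positions, the $\type$, $\mandatory$ and $\funct$ facts about a null are generated from its $\member$ facts by $\rho_6$, $\rho_{10}$ and $\rho_{12}$, and the only atom linking two nulls is the single birth atom produced by $\rho_5$.

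However, the proposal stops exactly at the step that carries all the difficulty, so it contains a genuine gap: you never prove that only polynomially many membership types arise, you only observe that this is ``the main obstacle.'' This is not a routine verification. A child null $\zeta$ born from parent $O$ along attribute $A$ receives $M(\zeta)=\{T:\type(O,A,T)\}=\bigcup_{C\in M(O)}\{T:\type(C,A,T)\}$, i.e.\ the new type is a \emph{union} of fixed sets indexed by the old type; iterating such union operators along chase branches of unbounded depth could, a priori, manufacture exponentially many distinct sets, which is precisely what condition~(1) of Definition~\ref{def:pcc} must exclude. The paper closes this gap with a concrete invariant: it first saturates the existential-free rules to obtain $D^+=\chase{D}{\flldepfull}$, defines $\Omega$ as the family of all selection--projection reducts of the relations of $D^+$ (polynomially many, because the arity is at most $3$), and then proves, by induction on the order in which nulls are invented, that every ``trace'' of a null --- in particular every set $M(\zeta)$ --- already belongs to $\Omega$, essentially because the set of $\adom{D}$-values attached to a newly invented null was already present, as a reduct, at the parent's level. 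Without this (or some equivalent) preservation argument, condition~(1) of the PCC is not established, Theorem~\ref{theo:pcc} cannot be invoked, and the \textsc{np} upper bound does not yet follow from your proposal.
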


\begin{proofsk}
  As mentioned before, we can ignore the only EGD in $\flldep$ since, being
  innocuous, it does not interfere with query answering.
  Let $\flldep'$ denote the set of TGDs resulting from $\flldep$ by
  eliminating rule $\rho_4$, i.e., let $\flldep' = \flldep - \set{\rho_4}$.
  To establish membership in \textsc{np}, it is sufficient to show that:
  \textit{(1)} $\flldep'$ is weakly guarded;
  \textit{(2)} $\flldep'$ enjoys the PCC (see Definition~\ref{def:pcc}).
Under the above condition, the membership in \textsc{np} can be proved by
exhibiting the following. \textit{(i)} An algorithm, analogous to $\acheck$,
that constructs \emph{all} ``canonical'' versions of the atoms of the chase and
their clouds (which are stored in a ``cloud store''), in polynomial time. Then
the algorithm should check whether an atomic (Boolean) query is satisfied by an
atom in the cloud store.  \textit{(ii)} An algorithm, analogous to $\qcheck$,
that guesses (by calling an analogous version of $\tcheck$) entire clouds
through guessing the cloud index (a unique integer) in the cloud store. Then
the algorithm should check, in alternating logarithmic space
(\textsc{alogspace}), the correctness of the cloud guess. In that check, it can
use only the cloud of the main atom of the predecessor configuration.  The
complexity of running this algorithm is shown to be
$\textsc{np}^{\textsc{alogspace}} = \textsc{np}$.

  \textit{(1)} is easy: the affected positions are $\data[3]$, $\member[1]$,
  $\type[1]$, $\mandatory[2]$, $\funct[2]$ and $\data[1]$.  It is easy to see
  that every rule of $\flldep'$ is weakly guarded, and thus $\flldep$ is weakly
  guarded.

  Now let us sketch \textit{(2}).  We need to show that $\flldep'$ satisfies
  the two conditions of Definition~\ref{def:pcc}.  We prove that the first
  condition holds for $\flldep'$ as follows.  Let $\flldepfull = \flldep' -
  \{\rho_5\}$.  These are all full TGDs (no existentially-quantified variables)
  and their application does not alter the domain.  We have
  $\chase{D}{\flldep'} = \chase{\chase{D}{\flldepfull}}{\flldep'}$.
  Let us now have a closer look at $D^+ = \chase{D}{\flldepfull}$.  Clearly,
  $\adom{D^+} = \adom{D}$.  For each predicate symbol $p$, let $\Rel(p)$ denote
  the relation consisting of all the $p$-atoms in $D^+$.  Let $\Omega$ be the
  family of all the relations that can be obtained from any of the relations
  $\Rel(p)$ by performing an arbitrary selection followed by some projection
  (we forbid disjunctions in the selection predicate).  For example, let $c,d
  \in \adom{D}$. Then $\Omega$ will contain the relations
  $\pi_{1,2}(\select{1=c} \Rel(\data))$, $\pi_{2}(\select{1=d\wedge 3=c}
  \Rel(\data))$, and so on, where the numbers represent the attributes to which
  selection is applied.  Given that $D^+$ is of size polynomial in $D$ and that
  the maximum arity of any relation $\Rel(p)$ is~$3$, the set $\Omega$ is of
  size polynomial in $D^+$ and thus polynomial in $D$.
  It can now be shown that $\Omega$ is preserved in a precise sense, when going
  to the final result $\chase{D^+}{\flldep'}$: for each relation $\Rel'(p)$
  corresponding to predicate $p$ in the final chase result, when performing a
  selection on values outside of $\adom{D}$ and projecting on the columns not
  used in the selection, the set of all tuples of $\adom{D}$-elements in the
  result is a relation in $\Omega$.  For example, if $v_5$ is a labeled null,
  then the set of all $T \in \adom{D}$, such that $\mathsf{member}(v_5,T)$ is
  an element of the final result, is a relation in $\Omega$.  Similarly, if
  $v_7$ and $v_8$ are new values, the set of all values $A$, such that
  $\mathsf{data}(v_7,A,v_8)$ is in the chase, is a relation in $\Omega$.  From
  this it follows that $\flldep'$ satisfies~\textit{(2)}.  In fact, all
  possible clouds are determined by the polynomially many ways of choosing at
  most three elements of $\Omega$ for each predicate.  The proof of the
  preservation property can be done by induction on the $i$-th new labeled null
  added.  Roughly, for each such labeled null, created by rule $\rho_5$, we
  just analyze which sets of values (or tuples) are attached to it via rules
  $\rho_4$, then $\rho_6$, $\rho_7$, $\rho_8$, $\rho_{10}$, and so on, and
  conclude that these sets were already present at the next lower level, and
  thus, by induction hypothesis, are in $\Omega$.

  The second condition of Definition~\ref{def:pcc} is proved
  by similar arguments.
\end{proofsk}

From Theorems~\ref{the:fll-upper} and~\ref{the:fll-lower} we immediately get
the following result.

\begin{corollary}\label{cor:fll-complexity}
  Conjunctive query answering under \fll{} rules is \textsc{np}-complete for
  general conjunctive queries, and in \textsc{ptime} for fixed-size or atomic
  conjunctive queries.
\end{corollary}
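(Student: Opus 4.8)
The plan is to assemble the corollary from Theorems~\ref{the:fll-upper} and~\ref{the:fll-lower} together with the Polynomial Clouds machinery of Section~\ref{sec:polycloud}. For the first claim, namely \textsc{np}-completeness for general conjunctive queries, I would simply combine the two cited theorems: Theorem~\ref{the:fll-upper} supplies \textsc{np}-hardness via the reduction from $3$-\textsc{colorability}, and Theorem~\ref{the:fll-lower} supplies the matching \textsc{np} upper bound. These bounds meet, so \textsc{np}-completeness for the general case is immediate.

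For the second claim, membership in \textsc{ptime} for atomic or fixed-size queries, I would exploit a property that is already established inside the proof of Theorem~\ref{the:fll-lower}: the purely TGD fragment $\flldep'=\flldep-\set{\rho_4}$ is a weakly guarded set of TGDs that satisfies the Polynomial Clouds Criterion (Definition~\ref{def:pcc}). Given this, I would invoke Theorem~\ref{theo:pcc}, which guarantees that for any fixed weakly guarded TGD set enjoying the PCC, deciding $\chase{D}{\dep}\models Q$ for an atomic or fixed Boolean conjunctive query $Q$ is in \textsc{ptime}. Applied to $\dep=\flldep'$, this yields the \textsc{ptime} bound for atomic and fixed queries over the TGD part of \fll.

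It then remains to discharge the single EGD $\rho_4$ and confirm it does not break the \textsc{ptime} bound. Here I would reuse the innocuousness argument: every new $\data$-atom produced in the chase (by $\rho_5$) carries its fresh null exactly in position $\data[3]$, which is the position equated by $\rho_4$, so each application of $\rho_4$ collapses one atom onto an already-present one and is innocuous in the sense of Definition~\ref{def:smooth-egd}. By Theorem~\ref{theo:egds}, an innocuous EGD can be ignored for query answering, so $D\cup\flldep\models Q$ iff $\chase{D}{\flldep'}\models Q$ whenever the fair chase does not fail; the failure test itself is in \textsc{ptime} for these query classes by Theorem~\ref{the:failure-check} and Corollary~\ref{cor:egds}. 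Assembling these pieces gives \textsc{ptime} membership for atomic and fixed queries.

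The hard part will not be any single deduction in this assembly but rather the claim it rests on---that $\flldep'$ genuinely satisfies the PCC. That in turn depends on the preservation argument in the proof of Theorem~\ref{the:fll-lower}, showing by induction on the fresh nulls that the relations attached to each null always come from the polynomially sized family $\Omega$ of selection--projection views of $\chase{D}{\flldepfull}$. Since this verification is carried out there, the corollary reduces to a clean combination of Theorems~\ref{the:fll-upper}, \ref{the:fll-lower}, \ref{theo:pcc}, \ref{theo:egds}, and~\ref{the:failure-check}.
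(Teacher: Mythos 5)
Your proof is correct and follows exactly the route the paper intends: the paper derives the corollary ``immediately'' from Theorems~\ref{the:fll-upper} and~\ref{the:fll-lower}, with the \textsc{ptime} claim for atomic/fixed queries resting on the PCC property of $\flldep'$ established inside the proof of Theorem~\ref{the:fll-lower} together with Theorem~\ref{theo:pcc}, and the EGD discharged by innocuousness. You have merely made explicit the chain of citations that the paper leaves implicit, including the observation that the failure check for $\rho_4$ reduces to answering fixed queries and is therefore also in \textsc{ptime}.
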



\section{Conclusions and Related Work}
\label{sec:conclusions}

In this paper we identified a large and non-trivial class of
\emph{tuple-generating} and \emph{equality-generating} dependencies for which
the problems of conjunctive query containment and answering are decidable, and
provided the relevant complexity results.  Applications of our
results 
span databases and knowledge representation.  In particular, we have shown that
this class of constraints subsumes the classical work by Johnson and
Klug~\cite{JoK84} as well as more recent results~\cite{cali-kifer-06}.
Moreover, we are able to capture relevant ontology formalisms in the
Description Logics (DL) family, in particular DL-Lite and $\E\L$.

The problem of query containment for non-terminating chase was addressed in the
database context in~\cite{JoK84}, where the ontological theory contains
inclusion dependencies and key dependencies of a particular form.  The
introduction of the \emph{DL-Lite} family of description logics
in~\cite{CDLL*07,ACKZ09} was a significant leap forward in ontological query
answering due to the expressiveness of DL-Lite languages and their tractable
data complexity.  Conjunctive query answering in DL-Lite has the advantage of
being \emph{first-order rewritable}, i.e., any pair $\tup{Q,\dep}$, where $Q$
is a CQ and $\dep$ is a DL-Lite ontology (TBox), can be rewritten as a
first-order query $Q_\dep$ such that, for every database (ABox) $D$, the answer
to $Q$ against the logical theory $D \cup \dep$ coincides with the answer to
$Q_\dep$ against $D$.  Since each first-order query can be written in SQL, in
practical terms this means that a pair $\tup{Q,\dep}$ can be rewritten as an
SQL query over the original database $D$.

Rewritability is widely adopted in ontology querying.  The works
\cite{CCDL01,CaLR03a} present query rewriting techniques that deal with
Entity-Relationship schemata and inclusion dependencies, respectively.
The work~\cite{PeMH10} presents a Datalog rewriting algorithm for the
expressive DL $\E\L\H\I\O^\neg$ , which comprises a limited form of concept and
role negation, role inclusion, inverse roles, and nominals, i.e., concepts that
are interpreted as singletons. Conjunctive query answering in $\E\L\H\I\O^\neg$
is \textsc{ptime}-complete in data complexity, and the proposed algorithm is
also optimal for other ontology languages such as DL-Lite.
Optimizations of rewriting under linear TGDs (TGDs with exactly one atom in the
body) are presented in~\cite{GoOP11,OrPi11}.  In~\cite{GoSc12} it is shown that
the rewriting of a conjunctive query under a set of linear TGDs can be of
polynomial size in the query and the TGD set.

Other rewriting techniques for \textsc{ptime}-complete languages (in data
complexity) have been proposed for the description logic
$\E\L$~\cite{Rosa07,LuTW09,KrRu07}.
Another approach worth mentioning is a combination of rewriting 
and of the chase 
(see~\cite{KLTW*10}); this technique was introduced for DL-Lite in order to
tackle the performance problems that arise when the rewriting according to the
ontology is too large.

Recent works concentrate on semantic characterization of sets of TGDs under
which query answering is decidable~\cite{BLMS11}.  The notion of first-order
rewritability is tightly connected to that of \emph{finite unification set}
(FUS).  A FUS is semantically characterized as a set of TGDs that enjoy the
following property: for every conjunctive query $Q$, the rewriting $Q_\dep$ of
$Q$ obtained by backward-chaining through unification, according to the rules
in $\dep$, terminates.  Another semantic characterization of TGDs is that of
\emph{bounded treewidth set} (BTS), i.e., a set of TGDs such that the chase
under such TGDs has bounded treewidth.  As seen in
Section~\ref{sec:decidability}, every weakly guarded set of TGDs is a BTS.  A
\emph{finite expansion set} (FES) is a set of TGDs that guarantees, for every
database, the termination of the \emph{restricted} chase, and therefore the
decidability of query answering.


The \datalogpm~family~\cite{CaGP11} has been proposed with the purpose of
providing tractable query answering algorithms for more general ontology
languages.  In \datalogpm, the fundamental constraints are TGDs and EGDs.
Clearly, TGDs are an extension of Datalog rules. The absence of value invention
(existential quantification in the head), thoroughly discussed
in~\cite{PaHo07}, is the main shortcoming of plain Datalog in modeling
ontologies
and even conceptual data formalisms such as the Entity-Relationship
model~\cite{er-chen-tods76}.  Sets of GTGDs or WGTGDs are \datalogpm{}
ontologies.  \datalogpm{} languages easily extend the most common tractable
ontology languages; in particular, the main DL-Lite languages
(see~\cite{CaGL12}).
The fundamental decidability paradigms in the \datalogpm{} family are the
following:

\begin{itemize} \itemsep-\parsep
\item \textit{Chase termination.}  When the chase terminates, a finite instance
  is produced; obviously, by Theorem~\ref{the:ans-by-chase}, query answering in
  such a case is decidable.  The most notable syntactic restriction
  guaranteeing chase termination is \emph{weak acyclicity} of TGDs, for which
  we refer the reader to the milestone paper~\cite{FKMP05}.  More general
  syntactic restrictions are studied
  in~\cite{DeNR08,Marn09,GrST11,BLMS11,CHKM*12}.  A semantic property of TGDs,
  called \emph{parsimony}, is introduced in~\cite{LMTV12}.  Parsimony ensures
  decidability of query answering by termination of a special version of chase,
  called \emph{parsimonious} chase.
\item \textit{Guardedness.}  This is the paradigm we studied in this paper.  A
  thorough study of the data complexity of query answering under GTGDs and
  \emph{linear TGDs}, a subset of the guarded class, is found in~\cite{CaGL12}.
  The interesting classes of \emph{frontier guarded (FGTGDs)} and \emph{weakly
    frontier-guarded TGDs (WFGTGDs)} were considered and studied
  in~\cite{BLMS11,BMRT11,KrRu11}.  The idea underlying these classes is that,
  to obtain decidability, it is sufficient to guard only \emph{frontier
    variables}, that is, variables that occur both in the body and in the head
  of a rule.\footnote{FGTGDs were independently discovered by Mantas \v{S}imkus
    while working on his doctoral thesis.}  WFGTGDs are syntactically more
  liberal and more succinct than WGTGDs, but conjunctive query answering under
  WFGTGDs is computationally more expensive in case of bounded arities.  It can
  be seen that querying under WFGTGDs is no more expressive than querying under
  WGTGDs.  In fact, for every WFGTGD set $\dep$ and CQ $Q$, there exists a
  WGTGD set $\dep'$ and a CQ $Q'$ such that for every database $D$,
  $D\cup\dep\models Q$ iff $D\cup \dep'\models Q'$.
  A generalization of WFGTGDs, called \emph{greedy bounded-treewidth TGDs}, was
  proposed in~\cite{BMRT11}, together with a complexity analysis.  The
  guardedness paradigm has been combined with acyclicity in~\cite{KrRu11},
  where a generalization of both WFGTGDs and weakly acyclic TGDs is proposed.
\item \textit{Stickiness.}  The class of \emph{sticky sets of TGDs} (or
  \emph{sticky \datalogpm}, see~\cite{CaGP12}) is defined by means of syntactic
  restriction on the rule bodies, which ensure that each sticky set of TGDs is
  first-order rewritable, being a FUS, according to~\cite{BLMS11}.  The
  work~\cite{CiRo12} has proposed an extension of sticky sets of TGDs.
\end{itemize}

The interaction between equality generating dependencies and TGDs has been the
subject of several works, starting from~\cite{JoK84}, which deals with
functional and inclusion dependencies, proposing a class of inclusion
dependencies called \emph{key-based}, which, intuitively, has no interaction
with key dependencies thanks to syntactic restrictions.  The absence of
interaction between EGDs and TGDs is captured by the notion of
\emph{separability}, first introduced in~\cite{CaLR03} for key and inclusion
dependencies, and also adopted, though sometimes not explicitly stated, for
instance, in~\cite{CaGP12a,ACKZ09,CDLL*07}---see~\cite{CGOP12} for a survey on
the topic.

As shown in~\cite{CaGL12}, stratified negation can be added straightforwardly
to \datalogpm{}.  More recently, guarded \datalogpm{} was extended by two
versions of well-founded negation
(see~\cite{gottlob-hernich-etal-12,hernich-etal-13}).

In ontological query answering, normally both finite and infinite models of
theories are considered.  In some cases, restricting the attention to finite
solutions (models) only is not always equivalent to the general approach.
The property of equivalence between query answering under finite
models 
and query answering under arbitrary models (finite and infinite) is called
\emph{finite controllability}, and it was proved for restricted classes of
functional and inclusion dependencies in~\cite{JoK84}.  Finite controllability
was proved for the class of arbitrary inclusion dependencies in a pioneering
work by Rosati~\cite{Rosa11}.  An even more general result appears in the
work~\cite{BaGO10}, where it is shown that finite controllability holds for
guarded theories.

A related previous approach to guarded logic programming is \emph{guarded open
  answer set programming}~\cite{HeNV05}.  It is easy to see that a set of GTGDs
can be interpreted as a guarded answer set program, as defined
in~\cite{HeNV05}, but guarded answer set programs are more expressive than
GTGDs because they allow negation.

Implementations of ontology-based data access systems take advantage of query
answering techniques for tractable ontologies; in particular, we mention
DLV$^\exists$~\cite{LMTV12}, Mastro~\cite{SLLP*10} and NYAYA~\cite{DOTT12}.

\medskip

\noindent
\textbf{Acknowledgments}\\[.45em]  
This is the extended version of results by the same authors, published in the
KR~2008 Conference and in the DL~2008 Workshop.  
Andrea Cal\`\i{} acknowledges support by the EPSRC project ``Logic-based
Integration and Querying of Unindexed Data'' (EP/E010865/1).
Georg Gottlob acknowledges funding from the
European Research Council under the European Community's Seventh Framework
Program (FP7/2007-2013) / ERC grant agreement DIADEM no.~246858.
Michael Kifer was partially supported by the NSF grant 0964196.
The authors are grateful to Andreas Pieris, Marco Manna and Michael Morak for
their valuable comments and suggestions to improve the paper.


\vskip 0.2in
\bibliographystyle{plain}
\bibliography{new-main-bib}

\end{document}
